\definecolor{cb-black}      {RGB}{  0,   0,   0}
\definecolor{cb-blue-green} {RGB}{  0,  073,  073}
\definecolor{cb-green-sea}  {RGB}{  0, 146, 146}
\definecolor{cb-rose}       {RGB}{255, 109, 182}
\definecolor{cb-salmon-pink}{RGB}{255, 182, 119}
\definecolor{cb-purple}     {RGB}{ 73,   0, 146}
\definecolor{cb-blue}       {RGB}{ 0, 109, 219}
\definecolor{cb-lilac}      {RGB}{182, 109, 255}
\definecolor{cb-blue-sky}   {RGB}{109, 182, 255}
\definecolor{cb-blue-light} {RGB}{182, 219, 255}
\definecolor{cb-burgundy}   {RGB}{146,   0,   0}
\definecolor{cb-brown}      {RGB}{146,  73,   0}
\definecolor{cb-clay}       {RGB}{219, 209,   0}
\definecolor{cb-green-lime} {RGB}{ 36, 255,  36}
\definecolor{cb-yellow}     {RGB}{255, 255, 109}
\let\OldStatex\Statex
\renewcommand{\Statex}[1][3]{%
  \setlength\@tempdima{\algorithmicindent}%
  \OldStatex\hskip\dimexpr#1\@tempdima\relax}
\algnewcommand{\Inp}{\textbf{Input:}\space}
\algnewcommand{\Out}{\textbf{Output:}\space}
\newcommand{\Input}{\Statex[-1] \Inp }
\newcommand{\Output}{\Statex[-1] \Out }
\newcommand{\Blank}{\Statex[-1]}
\providecommand\theHALG@line{\thealgorithm.\arabic{ALG@line}}
\newcommand{\loadfig}[1]{\includegraphics{figures/tikz-#1}} 
\newtheorem{theorem}{Theorem}
\newtheorem{proposition}[theorem]{Proposition}
\newtheorem{lemma}[theorem]{Lemma}
\newtheorem{corollary}[theorem]{Corollary}
\newtheorem{definition}[theorem]{Definition}
\newtheorem{problem}[theorem]{Problem}
\theoremstyle{remark}
\declaretheoremstyle[
  notefont=\mdseries, notebraces={(}{)},
  bodyfont=\normalfont,
  postheadspace=0em,
  headpunct=
]{algostyle}
\theoremstyle{algostyle}
\newtheorem{procedure}[theorem]{Procedure}
\newtheorem{algorithm}[theorem]{Algorithm}
\newcommand{\ket}[1]{|#1\rangle}
\newcommand{\bra}[1]{\langle #1|}
\newcommand*{\f}{\mathbb{F}}
\newcommand*{\ip}[1]{ \langle #1 \rangle } 
\newcommand*{\Tr}{\mathrm{Tr}} 
\newcommand{\comm}[1]{ \llbracket #1 \rrbracket } 
\renewcommand*{\P}{\mathcal{P}} 
\newcommand*{\nmax}{n_{\scriptscriptstyle  \updownarrow}} 
\renewcommand*{\ni}{{n_{\inn}}} 
\newcommand*{\no}{{n_{\out}}} 
\newcommand*{\ki}{\ni} 
\newcommand*{\ko}{\no} 
\newcommand*{\nr}{n_r} 
\newcommand*{\nm}{n_m} 
\newcommand*{\nM}{n_M} 
\newcommand*{\Nu}{N_\mathrm{u}} 
\newcommand*{\Ncnd}{N_\mathrm{cnd}} 
\newcommand*{\Ci}{L} 
\newcommand*{\Co}{R} 
\newcommand*{\tki}{{\tilde n_{\inn}}} 
\newcommand*{\tko}{{\tilde n_{\out}}} 
\newcommand*{\tCi}{\tilde B} 
\newcommand*{\tCo}{\tilde D} 
\newcommand*{\bCi}{\bar B} 
\newcommand*{\bCo}{\bar D} 
\newcommand*{\Z}{\mathcal{Z}} 
\newcommand*{\X}{\mathcal{X}} 
\newcommand*{\Zs}{\mathcal{Z}^S} 
\newcommand*{\Xs}{\mathcal{X}^S} 
\newcommand*{\Zm}{\mathcal{Z}^M} 
\newcommand*{\Xm}{\mathcal{X}^M} 
\newcommand*{\Zcap}{\mathcal{Z}^\cap} 
\newcommand*{\Xcap}{\mathcal{X}^\cap} 
\newcommand*{\Zdelta}{\mathcal{Z}^\Delta} 
\newcommand*{\Xdelta}{\mathcal{X}^\Delta} 
\newcommand*{\Lin}{\mathcal{L}_{\comm{n,k_\inn,C^\ast_\inn}}}
\newcommand*{\Lout}{\mathcal{L}_{\comm{n,k_\out,C_\out}}}
\newcommand*{\inn}{\mathrm{in}}
\newcommand*{\out}{\mathrm{out}}
\newcommand*{\runtime}{run time}
\newcommand*{\runtimes}{run times}
\newcommand*{\bitstring}{bit string}
\newcommand*{\bitstrings}{bit strings}
\newcommand*{\bitruntime}{bit-string run time}
\newcommand*{\ca}{\color{cb-blue}}
\newcommand*{\cb}{\color{cb-brown}}
\newcommand*{\cc}{\color{cb-green-sea}}
\newcommand{\algemph}[1]{\colorbox{cb-blue-light!35!white}{#1}}
\newcommand{\emphspecific}[1]{\colorbox{cb-green-lime!10!white}{#1}}
\def\smalloverbrace#1{\mathop{\vbox{\m@th\ialign{##\crcr\noalign{\kern3\p@}%
  \tiny\downbracefill\crcr\noalign{\kern3\p@\nointerlineskip}%
  $\hfil\displaystyle{#1}\hfil$\crcr}}}\limits}
\title{
Stabilizer circuit verification
}
\author{Vadym Kliuchnikov, Michael Beverland, Adam Paetznick}
\begin{document}

\maketitle

\begin{abstract}
The ubiquity of stabilizer circuits in the design and operation of quantum computers makes techniques to verify their correctness essential.
The simulation of stabilizer circuits, which aims to replicate their behavior using a classical computer, is known to be efficient and provides a means of testing correctness.
However, simulation is limited in its ability to examine the exponentially large space of possible measurement outcomes.
We propose a comprehensive set of efficient classical algorithms to fully characterize and exhaustively verify stabilizer circuits with Pauli unitaries conditioned on parities of measurements.
We introduce, as a practical characterization, a general form for such circuits and provide an algorithm
to find a general form of any stabilizer circuit.
We then provide an algorithm for checking the equivalence of stabilizer circuits.
When circuits are not equivalent our algorithm suggests modifications for reconciliation.
Next, we provide an algorithm that characterizes the logical action of a (physical) stabilizer circuit on an encoded input.
All of our algorithms provide relations of measurement outcomes among corresponding circuit representations.
Finally, we provide an analytic description of the logical action induced by measuring a stabilizer group, 
with application in correctness proofs of code-deformation protocols including lattice surgery and code switching.
\end{abstract}

\newpage

\tableofcontents

\newpage
\section{Motivation and main results}
\label{sec:intro}

It is widely appreciated that future generations of large-scale quantum computers will require Quantum Error Correction (QEC) techniques to correct hardware faults. 
However it is less well appreciated that techniques to identify and correct \emph{software} faults will also be necessary. 
At scale, quantum computers will be extremely complex systems. 
They will require testing and debugging of high-level quantum algorithms, intermediate-level instruction sets,
and low-level physical circuits. 
Moreover, at each level, exploration of new approaches and optimization of existing approaches is ongoing.
Classically tractable techniques to test quantum circuits are needed now to enable this design and research process.
While several methods have been proposed to verify universal quantum circuits~\cite{Yamashita2010,burgholzer2020,Viamontes2007,niemann2014,wang2008,smith2019,amy2018,hong2022,Burgholzer2022,sundaram2022rich,Rand2018Qwire}, this problem is QMA complete~\cite{janzing2005} and so these approaches are expected to break down for large circuits.
We seek efficient verification methods for important but non-universal quantum circuits.


One extensively studied approach for testing and debugging quantum circuits involves quantum simulation, i.e., replicating the behavior of a quantum circuit by using a classical computer.
Simulation of general quantum circuits is widely believed to be intractable.
Fortunately, efficient simulation is possible for an important subclass known as stabilizer circuits~\cite{Gottesman1998, AaronsonGottesman2004}.
Testing via simulation is simple: simulate a stabilizer circuit and check if the observed outcomes match expectations.
This method of testing has a substantial drawback; measurement outcomes of a quantum circuit are non-deterministic.
Moreover, gates in a circuit may depend on those outcomes so that the gate sequence itself is non-deterministic, requiring an exponential number of simulations to prove that results hold for all pathways through a circuit.

For limited classes of stabilizer circuits, such as sequences of Clifford unitaries and deterministic measurements, random outcomes do not appear and a single simulation captures the full behavior.
Increasingly, however, stabilizer circuits proposed for large-scale quantum computers
rely on interpreting random outcomes.
Leading approaches for operations on encoded qubits prescribe an intermediate-level instruction set entirely based on measurements with random outcomes~\cite{Horsman2012, Beverland2022Disjoint, haner2022spacetime, fowler2019low, Litinski2019gameofsurfacecodes}.
New high-performing codes are defined by low-level sequences of measurements that also have random outcomes, even in the absence of errors~\cite{Hastings2021dynamically, Haah2022boundarieshoneycomb,davydova2022floquet,Aasen2022Adiabatic,Gidney2021faulttolerant,Gidney2022benchmarkingplanar, Paetznick2023Floquet}.
Some types of physical qubits such as those based on photons~\cite{bombin2021interleaving} or Majorana wires~\cite{Bonderson2008MeasurementOnly} rely on measurements rather than unitaries as entangling operations.

\begin{figure}[h]
\begin{minipage}[c][6cm][t]{0.36\textwidth}
     \includegraphics[scale=0.375]{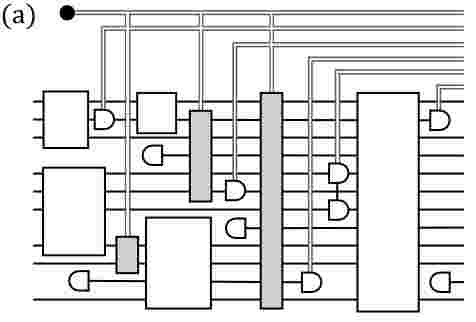}
     \includegraphics[scale=0.375]{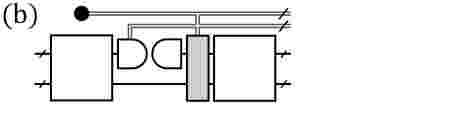}
\end{minipage}
\hfill 
\begin{minipage}[c][6cm][t]{0.59\textwidth}
 \caption[Stabilizer and general form circuits (high-level)]{
         \label{fig:stabilizer-circ-and-gen-form-circ}
         {(a)} Stabilizer circuits in this work are built from: allocations and deallocations of zero-state qubits (starting and ending wires next to \rotatebox[origin=c]{180}{D}- and D-like symbols), allocations of random bit outcomes (black circle), Clifford unitaries (white rectangles), non-destructive Pauli measurements (D-like symbol with continued wire), Pauli unitaries conditioned on parities of earlier measurement and random bit outcomes (gray rectangles).
         \\\hspace{\textwidth}
         {(b)} 
         General form circuit.
         For any stabilizer circuit, there is a general form circuit (as depicted) that matches the action on any input state for every possible path through the circuit.
         This forms the basis of our algorithms.
         }
\end{minipage}
\end{figure}

In this work we introduce classical tests of stabilizer circuits for all input states and outcome \bitstrings{}.
We choose the definition of stabilizer circuits in \cref{fig:stabilizer-circ-and-gen-form-circ}(a) for complexity-theoretic reasons as discussed later in this section.
In \cref{sec:stabilizer-circuit-general-form} we prove one of our central results: a \emph{general form circuit} illustrated in \cref{fig:stabilizer-circ-and-gen-form-circ}(b) captures the action of any stabilizer circuit. 
We leverage this to build the algorithms in \cref{fig:paper-plan} to efficiently solve the following problems.
\newpage
\begin{enumerate}[topsep=8pt,itemsep=4pt,partopsep=4pt, parsep=4pt]
    \item[\textbf{1.}] \textbf{What is the action of a stabilizer circuit?} 
\end{enumerate}
We would like to have a compact description for any stabilizer circuit that characterizes its action on the input qubits given any circuit outcome. 
This is directly achieved by our general form algorithm (\cref{fig:paper-plan}) which provides an equivalent general form circuit, which for any outcome of the original circuit has the same action on every input state for a corresponding outcome of the general form circuit.
We find the general form much easier interpret and compose than the typical characterization of quantum channels in terms of Choi states.

\begin{enumerate}[topsep=8pt,itemsep=4pt,partopsep=4pt, parsep=4pt]
    \item[\textbf{2.}] \textbf{Do two stabilizer circuits have the same action?} 
\end{enumerate}
To verify that a circuit achieves a desired goal requires more still - we need to compare against a reference circuit.
Given two stabilizer circuits, and a \textit{specific} outcome bitstring for each, 
verifying that circuits have the same action be accomplished using existing methods.
One option is to perform stabilizer simulation of Choi circuits and then compare the output stabilizer states~\cite{GMC2014}.
A second option is to write ZX diagrams for the Choi circuits and reduce them to a simplified pair of rGS–LC diagrams~\cite{Backens2014}.

However, the action of both the circuit and the reference will depend on their respective outcomes.
We would like to know if two circuits are have the same action for \emph{all} outcomes, and if they do, which outcomes of the two circuits correspond to the same action.
By comparing general forms, our verification algorithm (\cref{fig:paper-plan}) achieves this while existing methods do not.

\begin{enumerate}[topsep=8pt,itemsep=4pt,partopsep=4pt, parsep=4pt]
    \item[\textbf{3.}] \textbf{What is the action of a logical operation circuit on the encoded qubits?} 
\end{enumerate}
Often a circuit is constructed in order to implement some logical operation on encoded information which can be used to protect against the effects of noise.  
To characterize such a circuit we need to identify its \emph{logical action}.
We propose doing so with our logical action algorithm (\cref{fig:paper-plan}) to find a general form circuit equivalent to the action on the encoded information.

\begin{figure}[ht]
    \centering
    \includegraphics[width=0.9\linewidth]{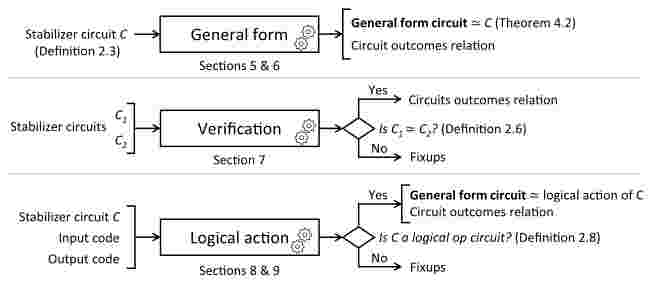}
    \caption[Stabilizer circuits and general form]{\label{fig:paper-plan}
    Algorithms to test stabilizer circuits for all input states and outcome \bitstrings{}.
    }
\end{figure}

Let us briefly sketch how the algorithms in \cref{fig:paper-plan}, which solve these three problems, work.
The general form algorithm (\cref{sec:general-form-algo}) combines two key ingredients: an enhanced stabilizer simulation of Choi states, and a generalization of a bipartite norm form for stabilizer states~\cite{Audenaert2005}.
General form circuits are then used as the basis of the verification and logical action algorithms.
Our verification algorithm (\cref{sec:equality-of-general-forms}) maps both input circuits to general forms, and checks if those general forms are equivalent.
The general form is not unique; the bulk of the algorithm involves checking the equivalence of the general form circuits and computing the relation between circuit outcomes.
The logical action algorithm (\cref{sec:logical-general-form}) first forms a composite by sandwiching the input circuit with encoding and unencoding circuits of the input and output codes, respectively.
It then obtains a general form for the composite circuit, which characterizes the logical action.
Validating that the circuit is indeed a logical operation uses the relation between outcomes of the composite circuit and its general form.

As noted above, a key ingredient of our work is an enhanced version of stabilizer simulation. The stabilizer simulation problem solved in the existing literature is, roughly, to find the probability of a specific subset of outcome bitstrings of a circuit and to provide an efficient description of the output state~\cite{AaronsonGottesman2004,Gidney2021,GraphStateSim, guan2019stabilizer,Beaudrap2022faststabiliser}.
We formalize this problem in \cref{sec:stab-simulation} as \emph{outcome-specific} stabilizer simulation.
By considering the Choi states of a stabilizer circuit, outcome-specific simulation can be used to efficiently characterize the circuit action over all possible input states.

Dealing with all possible measurement outcomes requires more care.
Pauli operations controlled on random bits along with the probabilistic outcomes of quantum measurements cause each circuit simulation to take a different pathway.
Accordingly, we introduce the \emph{outcome-complete} stabilizer simulation problem, and show that it can be efficiently reduced to the outcome-specific stabilizer simulation. 
In \cref{sec:stab-simulation-all}, we show that outcome-complete stabilizer simulation can be performed by using any outcome-specific stabilizer simulation as a subroutine which is called many times.
We also find a new algorithm to solve the outcome-complete case which is more efficient than using outome-specific subroutines.

Our outcome-complete simulation algorithm and the three algorithms in \cref{fig:paper-plan} are computationally efficient in that their \runtime{}s scale polynomially with the parameters of the input circuits.
Throughout this paper we present detailed fine-grained analyses of the \runtime{} of each algorithm. 
Here it is convenient to consider families of circuits with growing maximum width $n$ and depth $D$, and where all stabilizer operations appear with constant non-zero density. 
Then, the \runtime{} of the aforementioned algorithms for outcome-complete stabilizer simulation, general form, verification and logical action, are all $O(n^3 D^2)$.
For context, previously known outcome-specific simulations have complexity $O(n^3 D)$ and using them for outcome-complete simulation results in an $O(n^4 D^2)$ \runtime{}.

In some important special cases, the general form can be used to \emph{analytically} characterize the logical action. 
In \cref{sec:code-deformation-action} we consider the logical action of circuits which measure the generators of a stabilizer code when the initial state is encoded in a different stabilizer code~(code deformation). We find a general form and outcome relation for the logical action, expressed in terms of a common symplectic basis of the two stabilizer groups. 
This basis may be of independent interest.
We illustrate our analytic results with an example of logical $XX$ measurement on two copies of the repetition code via lattice surgery.

Lastly, our techniques require a restriction on the classical conditions of stabilizer circuits: we only allow Pauli unitaries conditioned on previous outcome parities (\cref{fig:stabilizer-circ-and-gen-form-circ}(a)).
This restriction is motivated by the computational complexity of simulation; efficient simulation is necessary for efficient characterization.
As a consequence of the Gottesman-Knill theorem~\cite{Gottesman1998},
a broader class of stabilizer circuits than we consider, that also includes Clifford gates conditioned on previous outcomes, can be sampled efficiently.
Characterization requires a notion of simulation stronger than sampling, one that can determine the probability distribution of circuit outcomes.
Somewhat surprisingly, this stronger simulation of stabilizer circuits is \#P-hard~\cite{ExtendedCliffordCircuits}.
Pauli gates conditioned on the AND of previous outcomes can encode arbitrary Boolean function evaluation.
Limiting to Pauli gates conditioned on parities admits a reformulated Gottesman-Knill theorem that can be leveraged into an efficient algorithm for characterization.

Despite the restriction to partity conditions, circuits from this set describe important components of high-level quantum algorithms, such as QROM lookups~\cite{Babbush2018Electronic}.
They also include intermediate-level implementations that, for example, transform algorithms according to architectural constraints~\cite{Beverland2022Disjoint}. 
Of course, such circuits also describe the vast majority of logical operations in existing QEC schemes.
We propose that future design of quantum circuits should adhere to this restriction, where possible, in order to permit characterization and verification.

\newpage
\section{Mathematical preliminaries}\label{sec:preliminaries}

Here we review some relevant mathematical concepts and specify notation.

\subsection{Pauli and Clifford unitaries, stabilizer codes and Bell states}
\label{sec:Paulis-Cliffords-Stabilizers-Bell}

The \emph{Pauli unitaries} on $n$ qubits are $i^k \{I,X,Y,Z\}^{\otimes n}$ for $k=0,1,2,3$, where $I,X,Y,Z$ are the single-qubit Pauli matrices.
We call any group of Pauli operators \emph{a Pauli group}, while the set of all Pauli unitaries on $n$ qubits forms a group known as \emph{the Pauli group} $\mathcal{P}_n$.
We also find it useful to label the subset of Hermitian Pauli unitaries $\pm \{I,X,Y,Z\}^{\otimes n}$, which we call the \emph{Pauli observables}.


A \emph{stabilizer group} is a group of commuting Pauli observables not containing $-I$.
We say a state is stabilized by a set of Pauli observables if it is a +1 eigenstate of all elements of the set.
A \emph{stabilizer code} is a sub-space of a Hilbert space consisting of all vectors which are stabilized by a stabilizer group $S$.
A \emph{stabilizer state} is a state on $n$ qubits which is stabilized by a set of $n$ independent Pauli observables.

For any two Pauli unitaries, we write that $\comm{P,Q} = 0 \in \f_2$ if $P,Q$ commute and $\comm{P,Q} = 1 \in \f_2$ otherwise:
$$
    PQ = (-1)^{\comm{P,Q}}QP = (-1)^{\comm{Q,P}}QP.
$$

We will find use for the following identity,
\begin{align}
\label{eq:exponentiated-pauli-image}
e^{i \frac{\pi}{4} P} Q e^{-i \frac{\pi}{4} P} = (iP)^{\comm{P,Q}}Q.
\end{align}

For any subgroup $G$ of the $n$-qubit Pauli group $\P_n$, we define $G^\perp$ as 
$$
 G^\perp = \{ P \in \P_n :  \comm{ P, g } = 0, \forall g \in G \}.
$$
Note that $G^\perp$ is a group because $\comm{PQ,R} = \comm{R,PQ} = \comm{P,R} + \comm{Q,R}$.

An $n$-qubit Clifford unitary is any $n$-qubit unitary $C$ such that $\forall P \in \P_n : CPC^\dagger \in \P_n$.  We refer to $CPC^\dagger$ as the \emph{image} of $P$ under $C$ and $C^\dagger P C$ as the \emph{preimage} of $P$ under $C$.

For computational basis states we define 
$$
 \left(|a_1\rangle \otimes \ldots \otimes |a_{j-1}\rangle  \otimes |a_{j+1}\rangle \otimes \ldots \otimes |a_{n}\rangle \right) \otimes_j |b\rangle = |a_1\rangle \otimes \ldots \otimes |a_{j-1}\rangle 
 \otimes |b\rangle \otimes |a_{j+1}\rangle \ldots \otimes |a_{n}\rangle.
$$
We then extend $\otimes_j$ to a linear map from $\mathbb{C}^{2^{n-1}} \times \mathbb{C}^2$
into $\mathbb{C}^{2^n}$ by linearity. Similarly we define $\otimes_j$ from 
$n-1$ and $1$ qubit linear operators into $n$ qubit linear operators as 
$$
 (A \otimes_j B) (|\phi\rangle \otimes_j |\psi\rangle) = (A|\phi\rangle) \otimes_j (B|\psi \rangle).
$$

We define the \emph{controlled-Pauli operator} for any commuting Pauli observables $P_1,P_2$ as: 
\begin{equation}
\label{eq:controlled-pauli}
\Lambda(P_1,P_2)
=
\frac{I+P_1}{2} +  \frac{I-P_1}{2} \cdot P_2.
\end{equation}
The requirement that $\comm{P,Q} = 0$ ensures that the matrix defined by the equation above is unitary and 
Hermitian.
We will make use of the following relations:
\begin{align}
\label{eq:controlled-pauli-hermitian}
\left(\Lambda(P_1,P_2) \right)^\dagger & = \Lambda(P_1,P_2) ,\\
\label{eq:controlled-pauli-order}
\Lambda(P_1,P_2) & = \Lambda(P_2,P_1),\\
\label{eq:controlled-pauli-stabilized}
\Lambda(P_1,P_2) \ket{\psi} & = \ket{\psi} ~~\text{if}~ P_1 \ket{\psi} = \ket{\psi} ~\text{or}~ P_2 \ket{\psi} = \ket{\psi},\\
\label{eq:controlled-pauli-image}
\Lambda(P_1,P_2)~Q~\Lambda(P_1,P_2) & = P_1^{\comm{P_2,Q}}~Q~P_2^{\comm{P_1,Q}},
\end{align}
where any operator $P$ raised to the power zero is interpreted as the identity.

We say that a stabilizer code on $n$ qubits and with $k$ logical qubits is specified by
an $n$ qubit Clifford unitary $C$,
if images $C Z_1 C^\dagger, \ldots, C Z_{n-k} C^\dagger$ are generators of the code's stabilizer group, 
$C Z_{n-k+1}C^\dagger, \ldots, C Z_{n}C^\dagger$  are representatives of logical $Z$ operators 
for logical qubits $1,\ldots,k$ and $C X_{n-k+1}C^\dagger, \ldots, C X_{n}C^\dagger$ are representatives of logical $X$ operators for logical qubits $1,\ldots,k$. 
We call such $C$ an \emph{encoding unitary} of the stabilizer code.
Note that $C$ is not unique; images $C X_1 C^\dagger, \ldots, C X_{n-k} C^\dagger$ are arbitrary.
We say that such code is \emph{$\comm{\mathbf{n,k,C}}$ code}.
Code $\comm{n,k,C}$ is a subspace of code $\comm{n,k',C'}$
if stabilizer group of $\comm{n,k',C'}$ is a subgroup of $\comm{n,k,C}$.
We say that a state is \emph{encoded in $\comm{n,k,C}$ with syndrome $s$} if the state is stabilized by 
$(-1)^{s_1} C Z_1 C^\dagger, \ldots, (-1)^{s_{n-k}} C Z_{n-k} C^\dagger$. Syndrome $s$ is given by $n-k$
dimensional binary vector.

A \emph{symplectic basis} of a Pauli group is a choice of generators that mimics the commutation relations of pairwise $X$ and $Z$ operators.  Such a basis is useful for selecting $X$ and $Z$ images of a Clifford unitary.

\begin{definition}[Symplectic basis]
\label{def:symplectic-basis}
A sequence of $2m$ Pauli operators form a symplectic basis if they have the same commutation relations as 
$X_1,Z_1,\ldots,X_m,Z_m,$.
\end{definition}

The $2n$-qubit \emph{Bell state} is defined as
\begin{equation}
\label{eq:bell-state-dfn}
 |\mathrm{Bell}_n\rangle = \frac{1}{\sqrt 2^n} \sum_{k \in \{0,1\}^{n}} |k\rangle \otimes |k\rangle.
\end{equation}

The two-qubit Bell state $|\mathrm{Bell}_1\rangle$ is stabilized by 
$X\otimes X, 
-Y\otimes Y, 
Z\otimes Z$
Using the notation $Y^* = -Y$ for element-wise complex conjugation of a matrix, 
the stabilizers of  $|\mathrm{Bell}_n\rangle$ are 
$P^*\otimes P
=
P\otimes P^*$
for all Pauli matrices $P \in \{I,X,Y,Z\}^{\otimes n}$.
This stabilizer group is generated by $Z_j \otimes Z_{n+j}$ and $X_j \otimes X_{n+j}$ for $j = 1, ..., n$.
Recall also that for any $n$-qubit unitary we have $(U\otimes I_n) |\mathrm{Bell}_n\rangle = (I_n \otimes U^T) |\mathrm{Bell}_n\rangle$, where $U^T$ is the transpose of $U$.

We use $n(U)$ to denote the number of qubits $n$ for which a unitary $U$ is defined, $\text{supp}(U)$ to denote the set of qubits on which $U$ acts non-trivially, and $|U|$ to denote the size of $\text{supp}(U)$. 

It can be useful to build a Clifford unitary from a Pauli measurement as follows.

\begin{proposition}[Measurement as unitary] 
\label{prop:measure-as-exp}
Let $|\psi\rangle$ be a state stabilized by a Pauli observable $(-1)^b Q$ for $b \in \{0,1\}$,
and let $P$ be a Pauli observable that anti-commutes with $Q$. 
The probability of outcome zero of measuring $P$ is $1/2$. 
For measurement outcome $r$, the resulting state is $Q^{r+b} e^{i\frac{\pi}{4}(iQP)}|\psi\rangle$.
\end{proposition}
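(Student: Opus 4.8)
The plan is to work directly with the projective-measurement formalism and then simplify the claimed unitary. Measuring the Pauli observable $P$ applies one of the orthogonal projectors $\Pi_r = \tfrac12\big(I + (-1)^r P\big)$ for outcome $r\in\{0,1\}$; the normalized post-measurement state for outcome $r$ is $\Pi_r|\psi\rangle / \big\|\Pi_r|\psi\rangle\big\|$, and $\Pr[r] = \langle\psi|\Pi_r|\psi\rangle = \tfrac12\big(1 + (-1)^r\langle\psi|P|\psi\rangle\big)$. So the first step is to show $\langle\psi|P|\psi\rangle = 0$: since $|\psi\rangle$ is stabilized by $(-1)^b Q$ with $Q$ Hermitian, replace both $|\psi\rangle$ and $\langle\psi|$ by their stabilized forms and use $\{P,Q\}=0$ together with $Q^2=I$ to get $\langle\psi|P|\psi\rangle = \langle\psi|QPQ|\psi\rangle = -\langle\psi|P|\psi\rangle$. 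Hence each outcome has probability $1/2$ and $\big\|\Pi_r|\psi\rangle\big\| = 1/\sqrt2$, which settles the probability claim and fixes the normalization.

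Next I would rewrite the exponential. Set $R := iQP$. From $\{P,Q\}=0$, $P^2=Q^2=I$, and Hermiticity of $P,Q$, one checks that $R$ is itself a Hermitian Pauli observable: $R^\dagger = R$ and $R^2 = -QPQP = I$. Consequently $e^{i\frac{\pi}{4} R} = \cos\tfrac{\pi}{4}\,I + i\sin\tfrac{\pi}{4}\,R = \tfrac1{\sqrt2}(I + iR) = \tfrac1{\sqrt2}(I - QP)$, so the claimed state is $Q^{r+b}\cdot\tfrac1{\sqrt2}(I - QP)\,|\psi\rangle$. Evaluating on $|\psi\rangle$: commuting $Q$ past $P$ and using $Q|\psi\rangle = (-1)^b|\psi\rangle$ gives $QP|\psi\rangle = -(-1)^b P|\psi\rangle$, hence $\tfrac1{\sqrt2}(I - QP)|\psi\rangle = \tfrac1{\sqrt2}\big(I + (-1)^b P\big)|\psi\rangle = \sqrt2\,\Pi_b|\psi\rangle$, precisely the normalized outcome-$b$ post-measurement state.

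It remains to handle the prefactor $Q^{r+b}$. When $r=b$ it equals $Q^{2b}=I$ and we are done. When $r\neq b$ it equals $Q$; using $Q\Pi_b = \Pi_{r}Q$ (again from $\{P,Q\}=0$) together with $Q|\psi\rangle=(-1)^b|\psi\rangle$, the vector $Q\sqrt2\,\Pi_b|\psi\rangle$ becomes $(-1)^b\sqrt2\,\Pi_r|\psi\rangle$, i.e.\ the normalized outcome-$r$ state up to the physically irrelevant global phase $(-1)^b$. Assembling these cases yields the statement.

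This argument is essentially bookkeeping; the one point worth getting right — and the main ``obstacle'' — is recognizing that $iQP$ (rather than $QP$) is the genuine Pauli observable, so that the half-angle exponential collapses to $\tfrac1{\sqrt2}(I-QP)$ instead of leaving an honest series. An alternative is to invoke \eqref{eq:exponentiated-pauli-image} to show that $e^{i\frac{\pi}{4}(iQP)}$ conjugates $(-1)^b Q$ to $(-1)^b P$, which identifies the output as a $(-1)^b$-eigenstate of $P$; but one still needs the explicit form above to pin down the normalization and the $Q^{r+b}$ branch, so I would keep the direct computation.
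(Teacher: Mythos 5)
Your proof is correct and follows essentially the same route as the paper's: both establish $\Pr[r]=\tfrac12$ by conjugating with $(-1)^bQ$, and both reduce $e^{i\frac{\pi}{4}(iQP)}$ to $\tfrac{1}{\sqrt2}(I-QP)$ and then use anticommutation together with $Q\ket{\psi}=(-1)^b\ket{\psi}$ to identify the output with the normalized projected state. The only difference is organizational --- you split into the cases $r=b$ and $r\neq b$, whereas the paper tracks the signs for general $r$ in a single formula --- which does not change the substance of the argument.
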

\begin{proof}
The probability of outcome zero is equal to the probability of outcome one because: 
$$
\langle \psi | \frac{I+P}{2} | \psi \rangle
=
\langle \psi | (-1)^b Q\frac{I+P}{2} (-1)^b Q | \psi \rangle
=
\langle \psi | \frac{I-P}{2} | \psi \rangle.
$$
For outcome $r$, the state is equal to $\nicefrac{(I+(-1)^r P)}{\sqrt{2}} |\psi\rangle$ and proportional to
$$
Q^{r+b} e^{i\frac{\pi}{4}(iQP)}|\psi\rangle
=
\frac{Q^{r+b}}{\sqrt{2}}|\psi\rangle - \frac{Q^{r+b} Q P}{\sqrt{2}}|\psi\rangle
=
\frac{(-1)^{b(r+b)}}{\sqrt{2}}|\psi\rangle + \frac{ (-1)^{b+r + b(r+b+1)} P }{\sqrt{2}}|\psi\rangle,
$$
where we used that $Q^{r+b}\ket{\psi}=(-1)^{b(r+b)}\ket{\psi}$ and $Q^{r+b}P = (-1)^{r+b}P Q^{r+b}$.
\end{proof}

We include additional useful mathematical material in \cref{app:mathematical-material}.

\subsection{Stabilizer circuits, quantum instruments and Choi states}
\label{sec:circuits-and-choi-states}

Here we introduce several definitions that help to classify quantum circuits and identify sets of circuits which can be considered equivalent.
We are particularly interested in a particular subclass of circuits which we define as follows.

\begin{definition}[Stabilizer circuit]
\label{def:stabilizer-circuit}
A \emph{stabilizer circuit} is any sequence of the following elementary operations,
that we call \emph{stabilizer operations},
\begin{itemize}[noitemsep]
    \item allocations of qubits initialized to zero states,
    \item allocation of classical random bits distributed as fair coins, 
    \item Clifford and Pauli unitaries,
    \item non-destructive Pauli measurements,
    \item deallocation of qubits in zero states,
    \item Pauli unitaries conditioned on the parity of sets of measurement outcomes and classical random bits from earlier in the sequence.
\end{itemize}
Any stabilizer circuit starts with qubits in an arbitrary state that we call \emph{input qubits}. 
Qubits that remain after executing the circuit are \emph{output qubits}.
We call the sequence of all measurement outcomes and classical random bits allocated by the circuit the \emph{circuit outcome vector}.
\end{definition}

Note that destructive Pauli measurements, while not explicitly in this list, can be formed by following a single-qubit non-destructive Pauli $Z$ measurement with a conditional Pauli unitary and a de-allocation of that qubit.

It is classically efficient to calculate the state output by a stabilizer circuit for a particular input stabilizer state and a given circuit outcome vector~\cite{Gottesman1998, AaronsonGottesman2004}.
Note that due to the inclusion of destructive qubit measurements and qubit allocations,
a stabilizer circuit can have a different number of input and output qubits.
We find it convenient to explicitly include the allocation of classical bits in our definition, although the same effect could be created by allocating a qubit to the zero state and then immediately measuring it in the X basis.

We classify circuit outcomes into different types.
We call an outcome \emph{input dependent}, if the probability it is zero (conditioned on any previous outcomes) depends on the circuit input. 
We distinguish two classes of outcomes which are not input dependent: \emph{random} and \emph{redundant}. 
We call an outcome redundant if the probability that it is zero is either zero or one (conditioned on any sequence of previous outcomes), and we call the outcome random otherwise.

The \emph{action} of any stabilizer circuit (and indeed a much broader class of quantum channels) can be faithfully captured by a \emph{quantum instrument}~\cite{RudingerRibeill2022,Davies1970}, defined as follows.

\begin{restatable}[Quantum instrument\label{def:instrument}]{definition}{instrument}
A quantum instrument consists of a set $R$ and an associated set of completely positive non-zero maps 
$\{ \mathcal{Q}_r : r \in R \}$ where $\sum_{r \in R} \mathcal{Q}_r$ is a trace preserving map. 
The quantum instrument maps a density matrix $\rho$ to a joint quantum-classical state $\{ r, \rho_r \}$,
where $p_r = \Tr(\mathcal{Q}_r[\rho])$ is the probability of observing the outcome $r$ 
and $\rho_r = \mathcal{Q}_r[\rho] / p_r$ is the output state conditioned on observing that outcome.
\end{restatable}

The action of a stabilizer circuit is formally identified by assigning a label $r$ to each of the possible sets of outcomes (measurements and random classical bits)
that can be observed when running the circuit, 
and specifying the map $\mathcal{Q}_r$ that is enacted on the input qubits when $r$ is observed.
In our context, the instrument is defined by a stabilizer circuit in which all outcomes are retained. For each $r$, the map therefore has the form $\mathcal{Q}_r(\rho) = Q_r \rho Q_r^\dagger$ for linear operator $Q_r$.  We refer to a quantum instrument defined by a stabilizer circuit as a \emph{stabilizer instrument}. We also refer to $Q_r$ as the linear map enacted by the instrument for outcome $r$.

We can classify sets of stabilizer circuits by identifying whether or not they have equal action.
In certain cases it can be useful not just to identify if the actions of two stabilizer circuits are equal, but to consider a more general notion of equivalence.
Informally, we consider two quantum instruments to be equivalent if their actions are equal up to a regrouping and relabeling of their outcomes.
We make this more formal with the following two definitions.

\begin{definition}[Outcome compression map]
\label{def:outcome-compression}
Consider an instrument consisting of a set of outcomes $R$ and completely positive maps $\{ \mathcal{Q}_r : r \in R \}$.
Define a new quantum instrument with a set of outcomes $S$ and completely positive maps $\{ \mathcal{Q}'_s : s \in S \}$,
where each element of $s$ consists of a disjoint subset of elements of $R$, such that $S$ forms a partition of $R$ and $\mathcal{Q}_r = \alpha_{r,r'} \mathcal{Q}_{r'}$ if and only if $r,r'$ in $S$,
and $\mathcal{Q}'_s$ is then defined as $\mathcal{Q}'_s = \sum_{r \in s} \mathcal{Q}_r$. 
The map from $R$ to $S$, that maps element $r$ to $s$ such that $r$ is in $s$ is called the \emph{outcome compression map}.
The map that takes instrument $\mathcal{Q}$ and outputs instrument $\mathcal{Q}'$ is called the \emph{compression map}.
\end{definition}

\begin{definition}[Quantum instrument equivalence]
\label{def:instrument-equality}
Consider two quantum instruments, which after the application of the compression map have sets of outcomes $R$ and $R'$ and completely positive maps $\{ \mathcal{Q}_r : r \in R \}$, and $\{ \mathcal{Q}'_r : r \in R' \}$.
We say that the instruments are \emph{equivalent} if there is a bijection $f:R \rightarrow R'$ such that for all $r \in R$, $\mathcal{Q}_r = \mathcal{Q}'_{f(r)}$.
\end{definition}

Note that if for two quantum instruments  $\{ \mathcal{Q}_r : r \in R \}$, and $\{ \mathcal{Q}'_r : r \in R' \}$ there is bijection $f:R \rightarrow R'$ such that for all $r \in R$, $\mathcal{Q}_r = \mathcal{Q}'_{f(r)}$, then theses quantum instruments are equivalent. 

We also find it useful to define the Choi states.

\begin{definition}[Choi states]
\label{def:instrument-choi}
Consider a quantum instrument with the set $R$ and completely positive maps $\{  \mathcal{Q}_r : r \in R \}$,
where $\mathcal{Q}_r(\rho) = Q_r \rho Q_r^\dagger$ for some linear operator $Q_r$ 
from $\ki$ qubits into $\ko$ qubits.
Then, the Choi state given $r$ is the $(\ko+\ki)$-qubit state
$$
 (Q_r \otimes I) |\mathrm{Bell}_{\ki}\rangle = \frac{1}{\sqrt{2}^\ki} \sum_{k \in \{0,1\}^{\ki}} \underbrace{ Q_r |k\rangle}_{\ko} \otimes \underbrace{|k\rangle}_{\ki}.
$$
\end{definition}

As noted above, the condition $\mathcal{Q}_r(\rho) = Q_r \rho Q_r^\dagger$ is satisfied for any stabilizer instrument.
Furthermore, each Choi state of a stabilizer instrument is a stabilizer state, allowing it to be efficiently represented (by providing a set of generators of it's stabilizer group).

Given a circuit $\mathcal{C}$ for a quantum instrument, a Choi state can be prepared by initializing an appropriate number of Bell pairs and then applying $\mathcal{C}$ to the first half of each pair, as illustrated in~\cref{fig:choi-state-circuit}.  We call this the \emph{Choi circuit} $\text{Choi}(\mathcal{C})$ of $\mathcal{C}$.  
While the circuit $\mathcal{C}$ has $\ki$ input qubits and $\ko$ output qubits, the corresponding Choi circuit has zero input qubits and $(\ko + \ki)$ output qubits.

\begin{figure}[h]
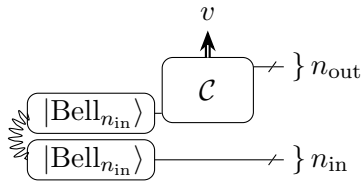

\centering
  \loadfig{fig-choi-state-circuit}
  \caption[Choi state of a circuit]{
  \label{fig:choi-state-circuit}
The Choi circuit $\text{Choi}(\mathcal{C})$ of a circuit $\mathcal{C}$ with $\ni$ input qubits, $\no$ output qubits and outcome vector $v$.
We use the labels ``$\}\,\ni$'' and ``$\}\,\no$'' to indicate multi-qubit registers of $\ni$, $\no$ qubits, and a double wire for a multi-bit classical register.
The first operation in the circuit initializes qubits into a $2\ki$-qubit Bell state~(defined in \cref{eq:bell-state-dfn}).}
\end{figure}

While the Choi state fully specifies the action of the circuit for the fixed outcome vector, it does not typically provide a very intuitive operational interpretation of the action. 
In \cref{sec:stabilizer-circuit-general-form} we describe a circuit-based representation of the action which provides a more natural interpretation.
It will be fruitful to analyze Choi states as bipartite states with respect to bipartition of $(\ko + \ki)$ qubits into first $\ko$ and last $\ki$ qubits.
We discuss properties of stabilizer states related to such a bipartition in the next subsection.

\subsection{Encoding circuits, unencoding circuits and logical action}
\label{sec:encoding-circuits}

We make frequent use of small explicit circuits that take an input state and encode into, or unencode out of a quantum error correcting code (see \cref{fig:encoding-and-unencoding}).
For each, we provide an explicit encoding Clifford unitary $C$, which along with an integer $k$ defines a stabilizer code that we call $\comm{n,k,C}$ as described in \cref{sec:Paulis-Cliffords-Stabilizers-Bell}.
Specifically, the stabilizer generators are $C Z_i C^\dagger$ for $i \in [n-k]$, and $C Z_{n - k +j}C^\dagger, C X_{n - k +j}C^\dagger$ are the $j$th logical qubits' $Z$ and $X$ operators for $j \in [k]$.

When we write $\mathcal{E}(k,C)$ or ``encode $\comm{n,k,C}$'', we mean the circuit in \cref{fig:encoding-and-unencoding}(a) and when we write $\mathcal{E}^\dagger(k,C)$ or ``unencode $\comm{n,k,C}$'', we mean the circuit in \cref{fig:encoding-and-unencoding}(b).
Note that the circuit $\mathcal{E}(k,C)$ encodes with respect to a bitstring $m$, which is typically generated as random bits.
The circuit encodes in the stabilizer code with generators $(-1)^{m_i}C Z_i C^\dagger$ for $i \in [n-k]$.
Similarly, the circuit $\mathcal{E}^\dagger(k,C)$ unencodes any stabilizer code with generators given by $(-1)^{m_i}C Z_i C^\dagger$ for $i \in [n-k]$ for some $m_i =0,1$, and outputs the syndrome as the circuit's outcome vector $m$.

\begin{figure}[h]
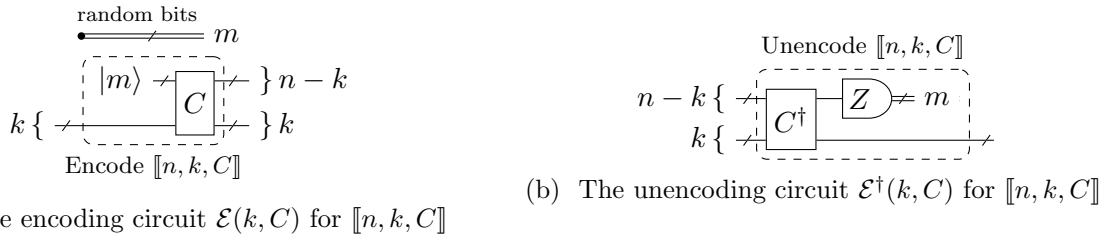

\centering
 \begin{subfigure}[c]{0.48\textwidth}
    \centering
    \loadfig{fig-encoding-circuit}
    \caption{\label{fig:encoding-circuit} The encoding circuit $\mathcal{E}(k,C)$ for $\comm{n,k,C}$ }
\end{subfigure}
\hfill
\begin{subfigure}[c]{0.48\textwidth}
    \centering
    \loadfig{fig-unencoding-circuit}
    \caption{\label{fig:unencoding-circuit} The unencoding circuit $\mathcal{E}^{\dagger}(k,C)$ for $\comm{n,k,C}$}
\end{subfigure}
\caption[Encoding and unencoding circuits]{
A unitary $C$ and integers $n,k$ define encoding and unencoding circuits for $\comm{n,k,C}$.
In this and other circuit diagrams, measuring $Z$ on an $(n-k)$-qubit register implies destructively measuring each qubit in the $Z$ basis and recording the results as an $(n-k)$-bit vector $m$.
}
\label{fig:encoding-and-unencoding}
\end{figure}

We can use encoding and unencoding circuits to formally define the concept of a logical operation circuit and its logical action as follows.

\begin{definition}[Logical operation circuit]
\label{def:logical-operation-circuit}
A stabilizer circuit $\mathcal{C}$
is a \emph{logical operation circuit} with \emph{input code} $\comm{\ni,k_\inn,C_\inn}$\footnote{When considering a code $\comm{n,k,C}$ we assume that we are given integers $n$, $k$ and \emph{fixed} Clifford unitary $C$}
and \emph{output code} $\comm{\ko,k_\out,C_\out}$ if applying circuit $\mathcal{C}$ to any input state encoded in $\comm{\ni,k_\inn,C_\inn}$
results in an output state encoded in $\comm{\no,k_\out,C_\out}$.
\end{definition}

\begin{definition}[Logical action]
\label{def:logical-action}
Given a logical operation circuit~(\cref{def:logical-operation-circuit}) $\mathcal{C}$ with input code $\comm{n_\inn,k_\inn,C_\inn}$
and output code $\comm{n_2,k_\out,C_\out}$ its \emph{logical action} is the action of the circuit given in~\cref{fig:logical-instrument-conjugation}
when input code syndrome $s_\inn$ is zero.
\end{definition}

\begin{figure}[ht]
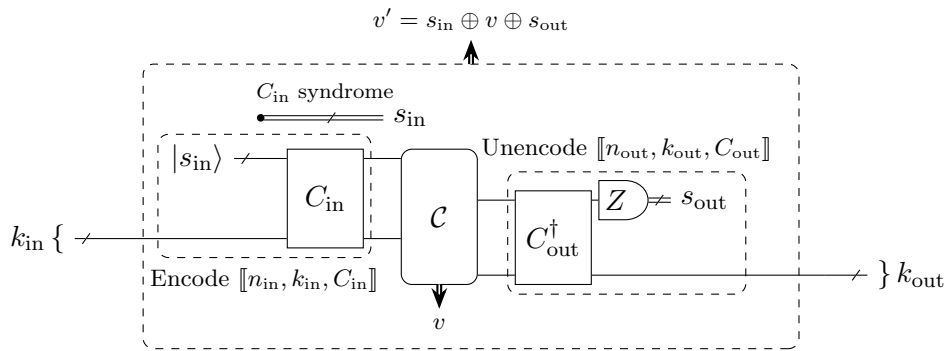

    \centering
    \loadfig{fig-logical-instrument-conjugation}
    \caption[Circuit defining logical action]{\label{fig:logical-instrument-conjugation}
    The circuit $\mathcal{E}(k_\inn,C_\inn) \circ \mathcal{C} \circ \mathcal{E}^\dagger(k_\out,C_\out)$ with outcome vector $v'$ used for defining the logical action of a logical operation circuit $\mathcal{C}$ 
    with outcome vector $v$, input code $\comm{\ni, k_\inn, C_\inn}$ and output code $\comm{\no, k_\out, C_\out}$.
    }
\end{figure}

Note that according to our definition of a logical operation circuit and its logical action, the outcome $v' = 0^{\ni-k_\inn} \oplus v \oplus 0^{\no -k_\out}$
when $s_\inn = 0^{\ni-k_\inn}$ and so the outcome vector of the circuit in \cref{fig:logical-instrument-conjugation}
and the outcome vector $v$ of its sub-circuit $\mathcal{C}$ are directly related.

It can be useful to explicitly define a map from logical operator representatives to elements of the logical Pauli group of a stabilizer code.
Recall, that given a code $\comm{n,k,C}$ with stabilizer group $S$,  $n$-qubit Pauli unitary $P$ is in $S^\perp$
if and only if preimage $C^\dagger P C = Z^a \otimes Q$ for some Pauli unitary $Q$ from $\mathcal{P}_k$ and bit string $a$ from $\f^{n-k}_2$.
Using this fact, we define 
$
\mathcal{L}_{(k,C)} : S^{\perp} \rightarrow \mathcal{P}_k 
$
and 
$
\mathcal{G}_{(k,C)} : S^{\perp} \rightarrow \f^{n-k}_2
$
via the following equation:
\begin{equation}
\label{eq:logical-operator-map}
 C^\dagger P C = Z^{\mathcal{G}_{(k,C)}} \otimes (\mathcal{L}_{(k,C)}(P)).
\end{equation}
It is also useful to note that upon syndrome $s \in \f_2^{n-k}$ the
unencoding circuit $\mathcal{E}^\dagger(k,C)$ implements a linear map:
\begin{equation}
\label{eq:encoding-conjugation-action}
\mathcal{E}^\dagger(k,C):~ P ~ \xrightarrow[]{~~\text{outcome}~s~~}~ (-1)^{\ip{s,\mathcal{G}_{(k,C)}}} \mathcal{L}_{
(k,C)}(P).
\end{equation}

\subsection{Bipartite normal form of a stabilizer state}
\label{sec:bipartite}

Bipartite entanglement of stabilizer states is useful for reconstructing the action of a Clifford channel given the Choi state of the channel.
This connection is first clarified in \cref{sec:stabilizer-circuit-general-form}
and used for algorithmic purposes in \cref{sec:general-form-algo}.
We find it useful to restate a result from \cite{Haah2017} (which is a special case of a more general result in \cite{Audenaert2005}\footnote{Theorem~1 in \cite{Audenaert2005} applies not only to pure stabilizer states, but to mixed stabilizer states proportional to projectors on stabilizer codes, but we find the result in \cite{Haah2017} more convenient here.}):
\begin{theorem}[Theorem~II.24 from~\cite{Haah2017}]
\label{thm:bipartition}
Given any bipartition $M \sqcup M^c$ of $n$ qubits, and a stabilizer state $\ket{\psi}$ on $n$ qubits, there exists a tensor product of two Clifford operators (supported on $M$ and $M^c$ correspondingly) that transforms $\ket{\psi}$
into Bell pairs and zero states.
\end{theorem}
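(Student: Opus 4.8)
The plan is to induct on the number of qubits, peeling off one Bell pair (or one product-state qubit) at each step, and to work throughout in the standard $\f_2$-symplectic description of stabilizer groups so that phase bookkeeping is deferred to the end. Identify the stabilizer group $S$ of $\ket\psi$ with a Lagrangian subspace $V\subseteq\f_2^{2n}=\f_2^{2a}\oplus\f_2^{2b}$ (where $a=|M|$, $b=|M^c|$) for the symplectic form $\omega$, and let $V_M=V\cap(\f_2^{2a}\oplus 0)$ and $V_{M^c}=V\cap(0\oplus\f_2^{2b})$ be the purely-local parts. If $V=V_M\oplus V_{M^c}$, then $\ket\psi=\ket{\psi_M}\otimes\ket{\psi_{M^c}}$ is a product state across the cut; each factor is a stabilizer state, hence Clifford-equivalent to an all-zero computational state (carry its generators to $Z_1,Z_2,\dots$), so a tensor product of one Clifford on $M$ and one on $M^c$ gives a product of zero states --- the degenerate case of the claim, and the base of the induction.

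For the inductive step, suppose $V\neq V_M\oplus V_{M^c}$ and pick $v=(v_M,v_{M^c})\in V$ not of that form; then, after possibly interchanging $M$ and $M^c$, one has $v_M\neq 0$ and $(v_M,0)\notin V$. Since $V$ is Lagrangian, $(v_M,0)\notin V=V^{\perp_\omega}$ produces a $w=(w_M,w_{M^c})\in V$ with $\omega_M(v_M,w_M)=1$, and isotropy of $V$, i.e.\ $\omega(v,w)=0$, then forces $\omega_{M^c}(v_{M^c},w_{M^c})=1$ as well. Thus $(v_M,w_M)$ is a symplectic pair on $M$ and $(v_{M^c},w_{M^c})$ is one on $M^c$. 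Choosing qubits $q\in M$ and $q'\in M^c$ and using that Clifford unitaries realize every element of the symplectic group on the relevant $\f_2$-spaces, pick local Cliffords $U_M$ on $M$ and $U_{M^c}$ on $M^c$ carrying $v_M\mapsto X_q$, $w_M\mapsto Z_q$, $v_{M^c}\mapsto X_{q'}$, $w_{M^c}\mapsto Z_{q'}$. After conjugating $S$ by $U_M\otimes U_{M^c}$ --- and, once phases are restored, one Pauli on $q$ to fix signs --- the stabilizer group contains $X_qX_{q'}$ and $Z_qZ_{q'}$, i.e.\ qubits $q$ and $q'$ carry a Bell pair (cf.\ the stabilizers $X\otimes X$, $Z\otimes Z$ of $\ket{\mathrm{Bell}_1}$).

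To decouple this pair, observe that any $f$ in the conjugated stabilizer group satisfies $\omega(f,X_qX_{q'})=\omega(f,Z_qZ_{q'})=0$, which forces the restriction of $f$ to $\{q,q'\}$ to have the form $P\otimes P$ for a single-qubit Pauli $P$ --- hence to be a product of $X_qX_{q'}$ and $Z_qZ_{q'}$. Multiplying a generating set by suitable elements of $\langle X_qX_{q'},Z_qZ_{q'}\rangle$ therefore yields generators $X_qX_{q'}$, $Z_qZ_{q'}$, $f_1,\dots,f_{n-2}$ with each $f_i$ supported on the remaining $n-2$ qubits; these $f_i$ pairwise commute, contain no $-I$, and are independent, so they generate a stabilizer state $\ket{\psi'}$ on those qubits, and the conjugated (sign-fixed) $\ket\psi$ equals $\ket{\mathrm{Bell}_1}_{q,q'}\otimes\ket{\psi'}$. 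Applying the induction hypothesis to $\ket{\psi'}$ with the inherited bipartition $(M\setminus\{q\})\sqcup(M^c\setminus\{q'\})$ and composing the resulting local Cliffords with $U_M\otimes U_{M^c}$ and the Pauli corrections completes the step.

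\textbf{Main obstacle.} The heart of the matter is the first half of the inductive step: extracting from a non-product bipartite stabilizer state a single pair of stabilizers whose \emph{separate} restrictions to $M$ and to $M^c$ are simultaneously anticommuting (symplectic) pairs --- this is exactly where the Lagrangian property $V=V^{\perp_\omega}$ (equivalently, that the stabilizer group equals its own symplectic centralizer) is essential, together with the observation that every remaining generator can then be cleared off the two Bell-pair qubits. The remaining ingredients --- that Cliffords act transitively on symplectic pairs and realize the full symplectic group, that residual phases are fixable by Paulis, and that a lone stabilizer state is Clifford-equivalent to $\ket0^{\otimes m}$ --- are routine.
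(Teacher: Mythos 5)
Your proof is correct. Note that the paper itself gives no proof of this statement: it is imported as Theorem~II.24 of Haah (2017) (a special case of Theorem~1 of Audenaert--Plenio (2005)), with only the remark that the proofs in those references are constructive. So there is no in-paper argument to compare against; judged on its own, your induction is sound and is essentially the standard argument underlying those references. The two points that carry the weight are both handled correctly: (i) from $v=(v_M,v_{M^c})\in V\setminus(V_M\oplus V_{M^c})$ you get $(v_M,0)\notin V=V^{\perp_\omega}$, hence a partner $w\in V$ with $\omega_M(v_M,w_M)=1$, and isotropy of $V$ forces $\omega_{M^c}(v_{M^c},w_{M^c})=1$, giving simultaneously symplectic pairs on both sides of the cut; and (ii) the centralizer of $\{X_qX_{q'},Z_qZ_{q'}\}$ restricted to $\{q,q'\}$ is exactly $\langle X_qX_{q'},Z_qZ_{q'}\rangle=\{II,XX,ZZ,-YY\}$ (the Bell stabilizer, consistent with the paper's convention that $\ket{\mathrm{Bell}_1}$ is stabilized by $XX$, $-YY$, $ZZ$), so every remaining generator can be cleared off the extracted pair. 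Two cosmetic remarks: the interchange of $M$ and $M^c$ in the inductive step is unnecessary, since $v\notin V_M\oplus V_{M^c}$ already forces $v_M\neq 0$, $v_{M^c}\neq 0$, and $(v_M,0)\notin V$; and in the base case it is worth stating explicitly that $\dim V_M\le|M|$, $\dim V_{M^c}\le|M^c|$ together with $\dim V_M+\dim V_{M^c}=n$ forces both local parts to be Lagrangian in their respective factors, which is what makes each tensor factor a genuine stabilizer state. Your argument is also effectively constructive (row reduction to find $V_M$, $V_{M^c}$, and the partner $w$), matching the paper's later algorithmic use of this normal form.
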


\begin{figure}[h]
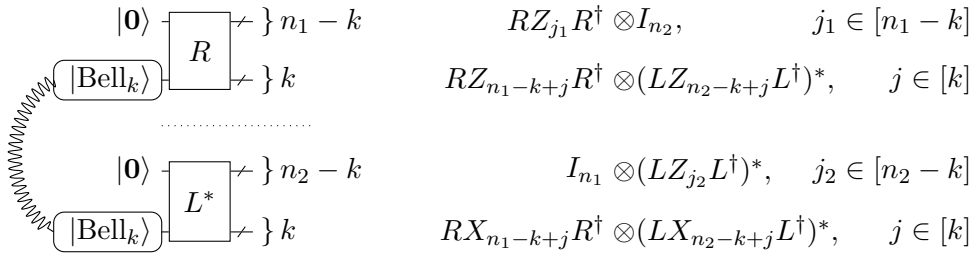

\centering
  \loadfig{fig-bipartition}
  \caption[Stabilizer state bipartition]{
  \label{fig:bipartition}
Bipartite normal form of a stabilizer state $\ket{\psi}$ on $n_1 + n_2$ qubits; a corollary of \cref{thm:bipartition}.
On the left, the circuit diagram shows creation of a stabilizer state $\ket{\psi}$ starting from $k$ Bell pairs and using Clifford unitaries $\Co$ and $\Ci^\ast$ acting on, respectively, the first $n_1$ and last $n_2$
qubits. 
On the right, generators of the stabilizer group of $\ket{\psi}$ that follow from the circuit diagram.
}
\end{figure}

An immediate corollary of the above theorem is that any stabilizer state on $n_1 + n_2$ qubits can be prepared by a circuit as shown in~\cref{fig:bipartition}.
The relevant proofs in~\cite{Audenaert2005} and~\cite{Haah2017} are constructive and include polynomial time algorithms for constructing the circuit 
by solving the following:

\begin{problem}[Bipartite normal form of a stabilizer state]
\label{prob:partition-state}
Let $\ket{\psi}$ be the stabilizer state stabilized by a set of $n_1 + n_2$ commuting Pauli operators acting on $n_1 +n_2$ qubits.
Find an integer $k$ and Clifford unitaries $\Ci$ and $\Co$ such that the circuit in~\cref{fig:bipartition} produces the state $\ket{\psi}$.
\end{problem}

Moreover, the stabilizer group 
of any stabilizer states can be written as a product of three stabilizer groups 
\begin{equation} \label{eq:three-stab-groups}
\left( S_\Co \otimes I_{n_2} \right)
\cdot
\left\{ \Co (I_{n_1 - k } \otimes P) \Co^\dagger \otimes  (\Ci (I_{n_2 - k } \otimes P) \Ci^\dagger)^\ast : P \in \{I,X,Y,Z\}^{\otimes k} \right\}
\cdot
\left( I_{n_1} \otimes  S_\Ci^\ast \right),
\end{equation}
where groups $S_\Ci$ and $S_\Co$ are defined as: 
$$
S_\Co = \Co  \ip{Z_1,\ldots,Z_{n_1 - k}} \Co^\dagger, ~~~ S_\Ci = \Ci \ip{Z_1,\ldots,Z_{n_2 - k}}  \Ci^\dagger.
$$
Groups $S_\Co$ and $S_\Ci$ do not depend on the particular choice of Clifford unitaries $\Co$ and $\Ci$.
This is because $S_\Co \otimes I_{n_2}$ and $I_{n_1} \otimes S_\Ci^\ast$ are exactly the maximal subgroups of the stabilizer group of $\ket{\psi}$ supported on the first $n_1$ and the last $n_2$ qubits.

Another corollary of the constructive proofs in~\cite{Audenaert2005} and \cite{Haah2017} is an 
efficient algorithm for computing a Clifford unitary, given its Choi state. 
The algorithm is to first compute $\Ci$ and $\Co$ in~\cref{fig:bipartition} for the Choi state of a Clifford unitary $C$ considered as a bipartite state between the first and the last $n$-qubits.
Then return $C = \Co \Ci^\dagger$. 
Correctness of the algorithm follows from the following two observations.
First, for an $n$-qubit Clifford unitary its Choi state $(C \otimes I_n) \ket{\mathrm{Bell}_n}$ is locally equivalent to $n$ Bell pairs, and therefore $k=n$ in~\cref{fig:bipartition}.
Second, equality $(I \otimes \Ci^\ast) \ket{\mathrm{Bell}_n} = (\Ci^\dagger \otimes I) \ket{\mathrm{Bell}_n}$ implies that the Choi states of $C$ and $\Co \Ci^\dagger$ are equal and therefore 
$C = \Co \Ci^\dagger$.

\subsection{Linear algebra notation}
\label{sec:linear-algebra}

Here we provide some common linear algebra notation we use. 
All vectors we consider are column vectors such that an $n$-dimensional vector can also be thought as an $n\times 1$ matrix.
For matrices $A,B$ with the same number of rows, we use $(A|B)$ to denote a matrix with columns consisting of the columns of $A$ followed by the columns of $B$.
Similarly, for matrices  $A,B$ with the same number of columns, $(\frac{A}{B})$ is the matrix with rows consisting of the rows of $A$ followed by the rows of $B$.
In particular, for $n_1$- and $n_2$-dimensional vectors $v_1$ and $v_2$, the vector $(\frac{v_1}{v_2})$ is an $(n_1 + n_2)$-dimensional vector.
We use $A_n$ to refer to the $n$-th row of $A$, $A_{[n]}$ to refer to the matrix consisting of the first $n$ rows of $A$, and 
$A_{[n,\cdot]}$ to refer to the matrix consisting of all rows of $A$ starting from row $n$.
We use $[n]$ to denote a sequence of integers $1,\ldots,n$ and $[m,n]$ to denote the sequence $m,m+1,\ldots,n$.
More generally, for any sequences of integers $J$ and $I$, $A_{J,I}$ is the $|J|\times |I|$  matrix with entries $(A_{J,I})_{j,i} = A_{J_j,I_i}$ for $j \in [J], i \in [I]$.
Similarly, we use $A_{*,n}$ to refer to the $n$-th column of $A$, $A_{\ast,[n]}$ to refer to the matrix consisting of the first $n$ columns of $A$ and
$A_{\ast,[n,\cdot]}$ to refer to the matrix consisting of all columns of $A$ starting from column $n$.

The inner product $\ip{v_1,v_2} = v_1^T v_2$ acts on column vectors. 
The direct sum of two vectors $v_1, v_2$ is $v_1 \oplus v_2 = (\frac{v_1}{v_2})$.
When considering matrices with entries in $\f_2 = \{0,1\}$, the matrix $I_n$ is the $n\times n$ identity matrix and
$\mathbf{0}_{n_1\times n_2}$ is the $n_1\times n_2$ matrix with all zero entries.
For $n_1\times n_2 $ and $m_1 \times m_2$ matrices $A$ and $B$, the direct sum is defined as
$$
 (A \oplus B)(v_1 \oplus v_2) = (A v_1) \oplus (B v_2),~~
 A\oplus B = \left( \begin{array}{c|c}
    A & \mathbf{0}_{n_1 \times m_2}  \\ \hline
    \mathbf{0}_{m_1\times n_2} & B
\end{array} \right).
$$
When considering matrices with entries in $\mathbb{C}$, $I_n$ is the $2^n\times 2^n$ identity matrix.

We use $A^{-1}$ for the inverse of a square matrix.
We use $B^{(-1)}$ for the inverse of a rectangular matrix with full row or column rank (this is a right inverse when row rank is full, and a left inverse when column rank is full).
The \emph{row rank profile} of a rank-$r$ matrix $A$ is the lexicographically smallest sequence of $r$ row indices
$i_1 < i_2 < \ldots < i_{r-1}$ such that the corresponding rows of $A$ are linearly independent. 
When $A^T$ is in reduced row echelon form, the row rank profile is the sequence of columns of $A^T$ with leading ones.

\newpage
\section{Algorithmic preliminaries}

Here we cover some basic algorithmic concepts and definitions that underpin our main results.

\subsection{Stabilizer circuit parameters}
\label{sec:circuit-parameters}

Throughout this work, unless otherwise specified, we consider the \runtime{} of algorithms given input stabilizer circuits as defined in \cref{def:stabilizer-circuit} parameterized in terms of the following quantities:
\begin{itemize}[noitemsep]
    \item $\nmax$ -- the maximum number of qubits used throughout the circuit,
    \item $\nM$ -- the number of circuit outcomes,
    \item $\Nu$ -- the number of unitary operations ($\exp(P)$, $\Lambda(P,P')$),
    \item $\Ncnd$ -- the number of conditional Pauli unitary operators,
    \item $\nr$ -- the number of random outcomes,
    \item $\nm$ -- the number of input-dependent outcomes.
\end{itemize}

We further assume that conditional unitary and measurement operations are parameterized by Pauli operators with weight bounded by some fixed constant and that the weight of any bitstring indicating outcomes upon which Pauli operators are conditioned on is at most $\nmax$.

In \cref{sec:intro} we considered families of circuits with growing maximum width $n = \nmax$ and depth $D$, and where all stabilizer operations appear with constant non-zero density.
More specifically this assumes that $\nM$, $\Nu$, $\Ncnd$, $\nr$ are all non-zero and proportional to $n D$, and $\nm$ is zero for stabilizer simulation (which has no input qubits), but non-zero and proportional to $n$ for the other cases analyzed in \cref{sec:intro}.

\subsection{Two algorithmic cost models}

We consider two cost models when analyzing the \runtime{} complexity of algorithms in this paper.
The first is the standard sequential model in which which each bit-level operation has cost O(1).
In the second model, \bitstring{} level operations have a cost O(1). For example, the bit-wise XOR of two length-$n$ \bitstrings{} has cost O($n$) in the sequential model and cost O(1) in the second model.
More precisely, we consider bit-wise NOT, AND, OR, XOR, Hamming weight calculation and initialising the all-zero \bitstring{} $0^n$ to be $O(1)$ in the second model.
To distinguish the costs of these two models, we simply write \emph{`\runtime'} when referring to the \runtime{} in the standard sequential cost model, and \emph{`\bitruntime{}'} in the second model.
The second model is motivated by the fact that modern computing hardware can quickly perform
bit-wise operations on \bitstrings{} of length $32$, $64$, $128$, $256$ and even $512$.

When considering \bitstring{} \runtime{} it is important to keep in mind
how a given data-structure is organized into \bitstrings. 
For example, consider an $n\times n$ matrix with $\{0,1\}$ entries 
that is stored as an array of \bitstrings. 
We say that we store a matrix in a row-major order if each row corresponds to a bitstring. 
In this case, computing a sum of two rows of the matrix has \bitstring{} \runtime{} $O(1)$.
However, computing a sum of two columns has \bitstring{} \runtime{} $O(n)$, 
because it requires modifying individual bits of $n$ \bitstrings.

\subsection{Data structures for Pauli and Clifford unitaries}
\label{sec:data-structures}

Here we define a set of data structures to represent Pauli and Clifford unitaries, which can be used to build algorithms for stabilizer circuits.
We briefly mention alternative data structures that can be used and give some justification for the choices we have made.
However, our main algorithms do not require use of these specific data structures, provided the alternative data structures offer an equivalent set of procedures with the given time complexities (as described in the next subsection).

\begin{figure}[h]
    \centering
    \includegraphics[width=\textwidth]{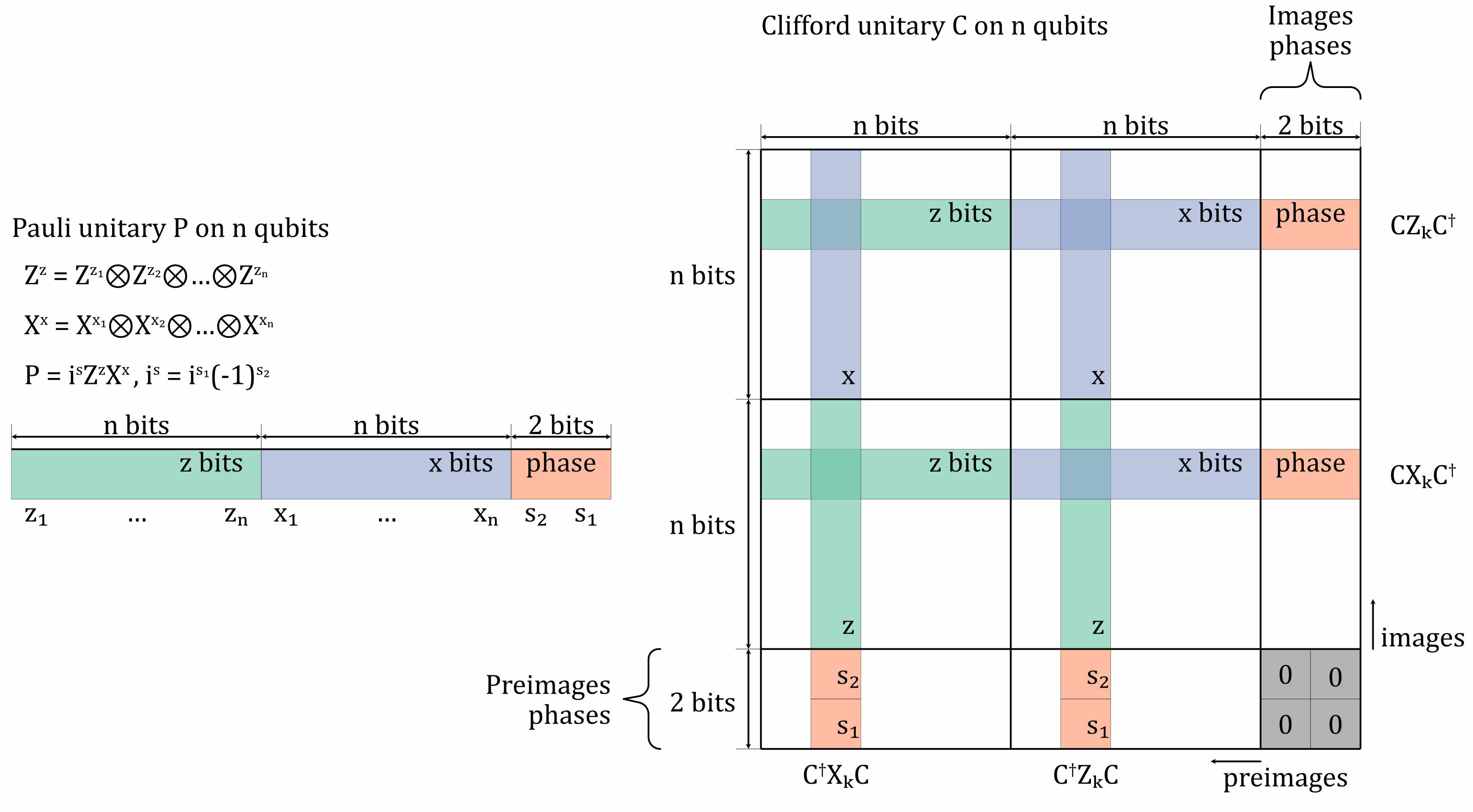}
    \caption[Description of Pauli and Clifford unitaries]{Description of Pauli unitary in terms of bit vectors as in \cref{def:desc-pauli} 
    and of a Clifford unitary using a binary matrix as in \cref{def:desc-clifford}.
    }
    \label{fig:desc-pauli-and-clifford}
\end{figure}

There are two common ways of encoding a one-qubit Pauli unitary using 
two bits \cite{AaronsonGottesman2004, DehaeneMoor2003}. 
The first encoding maps bits $(z,x)$ to powers $Z^{z} X^{x}$, and the second encoding maps all two-bit combinations to Pauli observables (Hermitian Pauli unitaries): 
$(0,0) \rightarrow I$, 
$(0,1) \rightarrow X$, 
$(1,0) \rightarrow Z$,
$(1,1) \rightarrow Y$.
We call the first encoding the \emph{power encoding} and the second encoding the \emph{Hermitian encoding}.

These encodings can be readily extended to form $n$-qubit Pauli unitary encodings using $2n$ bits, and there are two standard approaches to encode those bits.
The first, and most commonly used approach~\cite{AaronsonGottesman2004, DehaeneMoor2003}, is to encode an $n$-qubit Pauli 
operator into a \bitstring{} of length $2n$~(or two \bitstrings{} of length $n$) with a batch of all $z$ bits followed by a batch of all $x$ bits $(z_1,\ldots,z_n,x_1,\ldots,x_n)$.
We call this approach the \emph{batch encoding}.
The second approach is to encode a Pauli unitary on each qubit serially $(z_1,x_1,\ldots,z_n,x_n)$; we call this the \emph{serial encoding}.

To fully specify an $n$-qubit Pauli unitary, we must include a phase, which requires two bits to encode the power of $i$.
To encode an $n$-qubit Pauli observable, this cost can be reduced when a Hermitian encoding is used, because the phase can only be $\pm 1$.
On the other hand, two bits are still needed to encode the phase for the power encoding of an $n$-qubit Pauli observable. 
Ref.~\cite{AaronsonGottesman2004} uses Hermitian batch encoding of $n$-qubit Pauli observables, while Ref.~\cite{DehaeneMoor2003} uses power batch encoding. 
In what follows we use the batch power encoding of Pauli unitaries, see \cref{def:desc-pauli}.

\begin{definition}[Pauli unitary data structure]
\label{def:desc-pauli}
Two length-$n$ binary vectors $x,z$ and one length-2 binary vector $s=(s_1,s_2)$  describe a
Pauli unitary~(see~\cref{fig:desc-pauli-and-clifford}),
$$
 P =  i^{ s_1 + 2 s_2}\cdot Z^z X^x.
$$
The vectors $x(P)$, $z(P)$ and $s(P)$ are, respectively, the $\boldsymbol{x}$ \emph{bits}, the $\boldsymbol{z}$ \emph{bits}, and the \emph{phase} of $P$.
It is convenient to also interpret $s(P)$ as integer $s_1 + 2 s_2$ and write $s(P)/2$ for $s_1$.
\end{definition} 

Given this notation, we 
define $\text{supp}(z(P))$ and $\text{supp}(x(P))$ as the indices where $z(P)$ and $x(P)$, respectively, are non-zero.

A primary advantage of using the batch power encodings over the alternatives for our Pauli unitary data structure is that it is known to admit composition of two Clifford unitaries by way of matrix-matrix and matrix-vector multiplications, thus allowing fast matrix multiplication algorithms~\cite{Kerzner2020} to be used.
There are however some things which can be less efficient with this encoding.
For example, checking if a Pauli unitary is Hermitian requires computing $\ip{z(P),x(P)}$ with \runtime{} $O(n)$ and \bitstring{} \runtime{} $O(1)$.
This is because inner product $\ip{\cdot,\cdot}$ can be calculated by bit-wise AND, followed by the Hamming weight calculation.
By contrast, with any Hermitian encoding this check can be done in \runtime{} $O(1)$ (by checking the sign).
However we find that in our algorithms, checking if a Pauli unitary is Hermitian is infrequent, and only used as a test.
Another standard operation is to compute the commutator of two $n$-qubit Pauli unitaries $\comm{P,Q} = \ip{x(P),z(Q)} +  \ip{z(P),x(Q)}$, which with the batch power encoding data structure has
\runtime{} $O(n)$ and \bitruntime{} $O(1)$.

We will occasionally use another data-structure for the sparse encoding of Pauli unitaries.
This data structure consists of a pair, with the first part a length-$n$ ordered list\footnote{We require the list to be ordered, so that the support of the product of two sparse Pauli observables can be computed in linear time} of integers and the second part an $n$-qubit Pauli observable data structure. 
The first part identifies the qubits on which the Pauli observable acts non-trivially.

There are two common approaches to represent $n$-qubit Clifford unitaries compactly. 
The first approach is to store the images $C X_k C^\dagger$, $C Z_k C^\dagger$ of the generators of the Pauli group $X_k$, $Z_k$ for $k$ in $[n]$ \cite{AaronsonGottesman2004,DehaeneMoor2003}.
The second approach is to store preimages $C^\dagger X_k C$, $C^\dagger Z_k C$~\cite{Gidney2021}, which is advantageous when the Clifford unitary data structure is used for stabilizer simulation~\cite{Gidney2021},
as we discuss in \cref{sec:stab-simulation}.
In this paper we propse a third approach, in which both images and preimages are stored simultaneously, see \cref{def:desc-clifford}.

\begin{definition}[Clifford unitary data structure]
\label{def:desc-clifford}
A $(2n+2)\times(2n+2)$ matrix $M(C)$ over $\f_2$ describes an $n$-qubit Clifford
unitary $C$~(see~\cref{fig:desc-pauli-and-clifford}), where:
\begin{enumerate}
    \item for rows $k \in [n]$ of $M(C)$, $M(C)^T$, $Z$ images $P_k = C Z_k C^\dagger$ and $X$ preimages $Q_k = C^\dagger X_k C$,
    \begin{align}
        M(C)_k &= z(P_k)\oplus x(P_k) \oplus s(P_k)\label{eq:clifford-row} \\
        M(C)^T_k &= x(Q_k)\oplus z(Q_k) \oplus s(Q_k)\label{eq:clifford-column},
    \end{align}
    \item and rows $k \in [n+1,2n]$ of $M(C)$, $M(C)^T$ are defined analogously for $X$ images $P_k = C X_k C^\dagger$ and $Z$ preimages $Q_k = C^\dagger Z_k C$.
\end{enumerate}
\end{definition}


We can see that this is a consistent definition as follows.
The top-left $2n\times2n$ sub-matrix of $M(C)$ is a binary symplectic matrix.
Using the expression for the inverse of a binary-symplectic matrix 
\begin{equation}
\label{eq:symplectic-matrix-inverse}
\left(
\begin{array}{c|c}
      A_{z,x} & A_{x,x} \\ \hline
      A_{z,z} & A_{x,z}
\end{array}
\right)^{-1}
=
\left(
\begin{array}{c|c}
      A_{x,z}^T & A_{x,x}^T \\ \hline
      A_{z,z}^T & A_{z,x}^T
\end{array}\right)
, \text{ where } A_{z,x},A_{x,x},A_{z,z},A_{x,z} \text{ are } n \times n \text{ matrices over } \f_2 
\end{equation}
one can see that 
column $k$ of $M(C)$ encodes the preimage $C^\dagger X_k C$ with first $n$ bits 
corresponding to the $x$ bits, the next $n$ bits corresponding to the $z$ bits and the
last two bits corresponding to the phase~(see~\cref{fig:desc-pauli-and-clifford}).
Similarly, column $n+k$ encodes preimage
$C^\dagger Z_k C$ with the first $n$ bits corresponding to the $x$ bits, the next $n$ bits 
corresponding to the $z$ bits and the last two bits corresponding to the phase.
Therefore,~\cref{eq:clifford-row,eq:clifford-column} are mutually consistent.

Given an $n$-qubit Pauli observable supported on $k$ qubits, \cref{def:desc-clifford} admits calculation of the image $C P C^\dagger$ and preimage $C^\dagger P C$ with \runtime{} $O(kn)$.  
If $C$ is stored in column-major (row-major) order, then calculating the preimage (image) has \bitruntime{} $O(k)$.
 
In the algorithmic context, we say that the Pauli observable $P$ or Clifford unitary $C$ are given 
when descriptions satisfying \cref{def:desc-pauli} or \cref{def:desc-clifford}, respectively, are provided.
Similarly when we say that algorithm finds Pauli observable $P$ or Clifford unitary $C$, 
it means that algorithm computes descriptions that satisfy \cref{def:desc-pauli} or \cref{def:desc-clifford}.

\subsection{Procedures for Pauli and Clifford unitary manipulation}
\label{sec:pauli-and-clifford-procedures}

Here, in \cref{tab:data-structure-requirements}, we list the mathematical action and \runtime{} complexities for a set of key procedures.
For the purposes of \bitruntime{} analysis, we assume that each column of the Clifford unitary data structure \cref{def:desc-clifford} is stored as three \bitstrings{} corresponding to $z$, $x$ and sign bits.
Similarly, Pauli unitaries are stored as three \bitstrings{}.
We use an asterisk to mark \emph{basic procedures}, namely those which make explicit use of our data structures for Pauli and Clifford unitaries that can be achieved using data structures \cref{def:desc-pauli} and \cref{def:desc-clifford}.
All other procedures and algorithms in this paper are built using these basic procedures in \cref{tab:data-structure-requirements} without explicit reference to the underlying data structures.
Therefore, if one prefers alternative data structures to those we propose here, and provides alternative basic procedures using their preferred data structures which have the same action and \runtime{} as those in \cref{tab:data-structure-requirements}, then they can use our proposed algorithms with their preferred data structures and will achieve the same \runtime{} guarantees by calling their versions of the procedures in \cref{tab:data-structure-requirements}.

\begin{table}[p]
\centering
    \begin{tabular}{|c|c|c|c|}
    \hline 
    \multirow{2}{*}{Procedure}  & \multirow{2}{*}{Mathematical action} & \multirow{2}{*}{Run time} & Bit-string \\
    & & & \runtime{} \\
    \hline 
    \hline 
    \multirow{2}{*}{\texttt{init\_pauli$^*$(}$a,b,c$\texttt{)}} &  $P$, where $x(P)=a$, & \multirow{2}{*}{$O(n)$} & \multirow{2}{*}{$O(1)$} \\
     & $z(P)=b$ and $s(P)=c$ && \\
    \hline
    \texttt{x\_bits$^*$(}$P$\texttt{)} & $x(P)$ & $O(1)$ & $O(1)$  \\
    \hline 
    \texttt{z\_bits$^*$(}$P$\texttt{)} & $z(P)$ & $O(1)$ & $O(1)$  \\
    \hline 
    \texttt{xz\_phase$^*$(}$P$\texttt{)} & $s(P)$ & $O(1)$  & $O(1)$ \\
    \hline 
    \texttt{set\_x\_bits$^*$(}$P,a$\texttt{)} & $x(P) \leftarrow a$  & $O(n)$ & $O(1)$  \\
    \hline 
    \texttt{set\_z\_bits$^*$(}$P,a$\texttt{)} & $z(P) \leftarrow a$  & $O(n)$ & $O(1)$  \\
    \hline 
    \texttt{is\_hermitian$^*$(}$P$\texttt{)} & Is $P = P^\dagger$  & $O(n)$ & $O(1)$  \\
    \hline
    \texttt{commutator$^*$(}$P,Q$\texttt{)}  & $\comm{P,Q}$  & $O(n)$ & $O(1)$  \\
    \hline
    \texttt{prod\_pauli$^*$(}$P,Q$\texttt{)} & $PQ$  & $O(n)$ & $O(1)$  \\
    \hline
    \texttt{mult\_phase$^*$(}$P,j$\texttt{)} & $P \leftarrow i^j P$  & $O(1)$ & $O(1)$  \\
    \hline
    \hline 
    \multirow{2}{*}{\texttt{init\_cliff$^*$(}} &  $C$, where $\forall j \in [n]$:  & &
    \\
    \multirow{2}{*}{$P_1,Q_1,\dots P_n,Q_n$\texttt{)}} & $C X_j C^\dagger =P_j$,  & $O(n^\omega)$ & $O(n^2)$ \\
     & and $C Z_j C^\dagger =Q_j$  &  & \\
    \hline 
     & $U_A$, where $\forall j \in [n]$:  & &
    \\
    \texttt{init\_cliff\_css$^*$(}$A,B$\texttt{)} & $U_A X_j U_A^\dagger =X^{A_j}$, & $O(n^2)$ & $O(n^2)$   \\
    where $B^T = A^{-1}$ &  and $U_A Z_j U_A^\dagger =Z^{B_j}$  &  & 
    \\
    \hline 
    \texttt{copy\_cliff$^*$(}$C$\texttt{)} & $C,C$ & $O(n^2)$ & $O(n)$ \\
    \hline 
    \texttt{num\_qubits$^*$(}$C$\texttt{)} & $n$ & $O(1)$ & $O(1)$ \\
    \hline 
    \texttt{image$^*$(}$C,P$\texttt{)} &  $CPC^\dagger$ & $O(n \cdot |P|))$ & $O(n \cdot |P|)$ 
    \\
    \hline 
    \texttt{preimage$^*$(}$C,P$\texttt{)}  & $C^\dagger P C$ & $O(n \cdot |P|)$   & $O(|P|)$  
    \\
    \hline 
    \texttt{inverse\_cliff$^*$(}$C$\texttt{)} & $C^\dagger$ & $O(n^2)$ & $O(n^2)$ \\
    \hline 

    \texttt{prod\_cliff$^*$(}$C,C'$\texttt{)} & $C C'$ & $O(n^\omega)$ & $O(n^2)$ \\
    \hline 
    \texttt{left\_mult\_exp$^*$(}$C,P$\texttt{)} & $C \leftarrow \exp{\!(P)}~C$ & $O(n \cdot |P|)$ & $O(|P|)$ \\

    \hline 
    \texttt{left\_mult\_swap(}$C,i,j$\texttt{)} & $C \leftarrow \text{SWAP}_{i,j} ~ C$ & $O(n)$ & $O(1)$ \\
    \hline 
    \texttt{right\_mult\_swap$^*$(}$C,i,j$\texttt{)} & $C \leftarrow  C ~ \text{SWAP}_{i,j}$ & $O(n)$ & $O(n)$ \\
    \hline 
    \texttt{tensor\_prod$^*$(}$C,C''$\texttt{)} & $C \otimes C''$ & $O((n+l)^2)$ & $O((n+l)^2)$ \\
    \hline
    \texttt{add\_qubits$^*$(}$C,m$\texttt{)} & $C \leftarrow C \otimes I_m$ &  $O((n+m)^2)$  &  $O((n+m)^2)$
    \\
    \hline 
    \texttt{remove\_qubits$^*$(}$C,m$\texttt{)} & $C \otimes I_m \leftarrow C$ & $O(n^2)$  & $O(n^2)$   
    \\
    \hline 
    \texttt{left\_mult\_pauli(}$C,P$\texttt{)} & $C \leftarrow PC$ & $O(n \cdot |P|)$ & $O(|P|)$ \\
    \hline 
    \texttt{left\_mult\_ctrl\_pauli(} & \multirow{2}{*}{$C \leftarrow \Lambda(P,Q)C$} & \multirow{2}{*}{$O(n \cdot(|P|+|Q|))$} & \multirow{2}{*}{$O(|P|+|Q|)$} \\
    $C,P,Q$\texttt{)}   & &  &  \\
    \hline
    {\texttt{disentangle(}$C,j$\texttt{)}} & $C \leftarrow C' \otimes_j I_1$, where:  & \multirow{2}{*}{$O(n^2)$} & \multirow{2}{*}{$O(n)$} \\
    {where $Z_j C |0^n\rangle = C |0^n\rangle$} &  $(C'\otimes_j I_1)|0^n\rangle = C |0^n\rangle$ &&  \\
    \hline 
    \end{tabular}
    \caption[Data structure requirements]{
    Key procedures for Pauli and Clifford unitary manipulation.
    Asterisks mark basic procedures which can be implemented using the data structures in \cref{def:desc-pauli} and \cref{def:desc-clifford}.
    All procedures and algorithms with no asterisk (here and throughout this work) are expressed only in terms of asterisked procedures, independently of whichever underlying data structures are used.
    The procedure inputs are as follows:
    $P,Q, P_j$ and $Q_j$ are $n$-qubit Pauli observables;
    $a$ and $b$ are length-$n$ \bitstrings, while $c$ is a length-2 \bitstring;
    $A$ and $B$ are $n \times n$ binary matrices (with $A_j$ being the $j$th row of $A$ etc);
    $C$ and $C'$ are $n$-qubit Clifford unitaries, while
    $C''$ is an $l$-qubit Clifford unitary.
    By $\alpha \leftarrow \beta$, we mean that object $\alpha$ is replaced by object $\beta$.
    We define $I_m$ to be the $m$-qubit identity operator, $\Lambda(P,Q)$ as the generalized controlled Pauli from \cref{eq:controlled-pauli}, and $\text{SWAP}_{i,j}$ to be the operator that swaps the locations of qubits $i$ and $j$; $\omega$ is the matrix-multiplication exponent~\cite{ABH}.
    The procedures' details and \runtime{} are discussed in \cref{app:pauli-and-clifford-manipulation}. }
    \label{tab:data-structure-requirements}
\end{table}

In the remainder of this subsection we provide a high-level algorithmic description and an explanation of the time complexity of a few illustrative procedures.
For readability here and elsewhere in this paper, we describe algorithms at a high level in terms of the mathematical action of procedures. 
We include explicit checks in some of our algorithms which are always `true' in correct implementations, but may be useful for testing and debugging. 

First we describe the preimage procedure, which makes explicit reference to the data structures in \cref{def:desc-pauli} and \cref{def:desc-clifford}.

\begin{procedure}[\texttt{preimage$^*$}] \label{proc:image} 
\begin{algorithmic}[1]
\Blank
\Input Pauli observable $P$, Clifford unitary $C$. 
\Output $C^\dagger P C$.
\State Initialize Pauli $Q = \prod_{ k\,:\, z(P)_k \ne 0 } C^\dagger Z_k C \prod_{ k \,:\, x(P)_k \ne 0 } C^\dagger X_k C \cdot i^{s(P)}$.
\State Check $Q$ is Hermitian.
\State \Return $Q$.
\end{algorithmic}
\end{procedure}

In the \texttt{preimage$^*$} procedure, we assume that the Pauli and Clifford unitaries $P$ and $C$ are specified using the data structures in \cref{def:desc-pauli} and \cref{def:desc-clifford} respectively.
Let $n$ be the number of qubits that $P$ and $C$ act on.
Note that $x(P)$ and $z(P)$ both have at most $|P|$ non-zero entries.
The procedure first initializes a trivial $n$-qubit Pauli observable $Q$.
Next, the procedure walks through $j \in \text{supp}(z(P))$, and iteratively sets $Q \leftarrow Q~(C^\dagger Z_j C)$, where the Pauli preimage $C^\dagger Z_j C$ is read out from $j$th column of $M(C)$, that is $C$ stored using the data structure \cref{def:desc-clifford}.
Then this process is continued for $X$, by walking through $j \in \text{supp}(x(P))$, and iteratively setting $Q \leftarrow Q~(C^\dagger X_j C)$, where the Pauli preimage $C^\dagger X_j C$ is read out from $j+n$th column of $M(C)$.
The resulting Pauli $Q = C^\dagger P C$ is returned as the output of the algorithm.
For both $X$ and $Z$, each of the $O(|P|)$ Pauli multiplications takes $O(n)$ standard \runtime{} and $O(1)$ bit-string \runtime, such that the overall algorithm \runtime{} is $O(n|P|)$ or $O(|P|)$ in the standard or bit-string cost models respectively. 

Next we consider the \texttt{disentangle} procedure, which is expressed only in terms of asterisked procedures (which are referenced here by their mathematical action) in \cref{tab:data-structure-requirements}, with no explicit dependence on the underlying data structures.

\begin{procedure}[\texttt{disentangle}]\label{alg:disentanlgle}
\begin{algorithmic}[1]
\Blank
\Require $n$-qubit Clifford unitary $C$, integer $j \in [n]$ such that $Z_j C |0^n\rangle = C |0^n\rangle$.
\Ensure Replace $C$ by $C'\otimes_j I$ such that $(C'\otimes_j I)|0^n\rangle = C |0^n\rangle$.
\Statex[-1] $\triangleright$ First modify Clifford $C$ such that $C|0^n\rangle$ is unchanged, and $C Z_j C^\dagger = Z_j$.
\State Let $j'$ be the index of the first non-zero entry of $z(C^\dagger Z_j C)$. \label{line:ds:non-zero-entry}
\State $C \leftarrow \Lambda(CX_{j'}C^\dagger,  C Z_{j'} C^\dagger ~ Z_j) ~ C ~ \mathrm{SWAP}_{j,j'}$ .  \label{line:ds:z-image-fix}
\Statex[-1] $\triangleright$ Second modify Clifford $C$ such that $C|0^n\rangle$ is unchanged, and $C X_j C^\dagger = X_j$.
\If{bit $j$ of $z(C^\dagger X_j C)$ is zero} \label{line:ds:condition}
\State $C \leftarrow \Lambda(C Z_j C^\dagger, X_j ~ C X_j C^\dagger) ~ C$. \label{line:ds:x-image-fix-1}
\Else
\State $C \leftarrow \Lambda(C Z_j C^\dagger, X_j ~ C ~iZ_j X_j ~ C^\dagger)~ e^{i \frac{\pi}{4} C Z_j C^\dagger} ~ C$. \label{line:ds:x-image-fix-2}
\EndIf
\end{algorithmic}
\end{procedure}

We provide the correctness proof of \cref{alg:disentanlgle} in the appendix in~\cref{prop:disentangle}.
In what follows, we argue that its \runtime{} is $O(n^2)$ and its bit-string \runtime{} is $O(n)$ (as in \cref{tab:data-structure-requirements}) and also highlight the correspondence between the mathematical action and names of procedures.
In \cref{line:ds:non-zero-entry} the \runtime{} of \texttt{preimage} is $O(n)$ and finding a non-zero entry is $O(n)$.
In \cref{line:ds:z-image-fix} the \runtime{} of \texttt{right\_mul\_swap} is $O(n)$, two uses of \texttt{image} has \runtime{} $O(n)$,
\texttt{prod\_pauli} has \runtime{} $O(n)$ and \texttt{left\_mul\_ctrl\_pauli} has \runtime{} $O(n^2)$.
The \runtime{} of the first part of the algorithm is dominated by \texttt{left\_mul\_ctrl\_pauli} and is $O(n^2)$.
Similarly, the \runtime{} of the second part of the algorithm is dominated by \texttt{left\_mul\_ctrl\_pauli}, \texttt{left\_mul\_ctrl\_exp} and is $O(n^2)$.
A similar analysis shows that the bit-string \runtime{} is $O(n)$.

Our choice of the Clifford unitary data structure has two important features. 
The first feature is the preimage $C^\dagger P C$ calculation of a Pauli operator $P$ with bit-string \runtime{} $O(|P|)$. 
This is ensured by storing the preimage signs and storing the matrix $M(C)$ in column-major order.
This choice is motivated by the fact that \texttt{preimage} is a very common procedure in the algorithms considered in this paper.
The second feature is the fast calculation of the inverse of a Clifford unitary $C$.
This is ensured by storing both images and preimages. 
If only images or preimages are stored,
the inverse would require $O(n^\omega)$ (and would require $O(n^3)$ if we had chosen a Hermitian rather than power encoding).
Keeping standard and \bitstring{} \runtime{}s low requires a careful design of procedures that update the data structure for a Clifford unitary $C$.
This is illustrated by the pseudo-code of \texttt{left\_mul\_exp} and \texttt{left\_mul\_ctrl\_pauli} procedures 
discussed in \cref{app:pauli-and-clifford-manipulation}.

\subsection{Outcome-specific stabilizer simulation}
\label{sec:stab-simulation}

It is well known that stabilizer circuits can be efficiently simulated, including the original Gottesman-Knill simulation algorithm~\cite{Gottesman1998}, the improved version of this algorithm Gottesman-Aaronson~\cite{AaronsonGottesman2004}, and further practical improvements in~\cite{Gidney2021}. 
A precise statement of the problem that these existing stabilizer circuit algorithms solve is stated in \cref{def:specific-outcome-stab-sim}.

\begin{problem}[Outcome-specific stabilizer circuit simulation]
\label{def:specific-outcome-stab-sim}
Consider any stabilizer circuit with no input qubits and a length-$n_M$ outcome vector, along with any bit-string $\tilde v \in \{ 0,1\}^{n_M}$.
Find the vector of non-zero conditional probabilities $\overrightarrow{p} \in \{1,1/2\}^{n_M}$,
a Clifford unitary $\Co$ and
the unique outcome vector $v\in \{ 0,1\}^{n_M}$
which satisfy the following properties:
\begin{itemize}[noitemsep]
    \item for each $\overrightarrow{p_l} = 1/2$, the corresponding element $v_l$ is equal to $\tilde v_l$,
    \item $\overrightarrow{p_l}$ is the probability of obtaining outcome $v_l$ given 
previous outcomes $v_1,\ldots,v_{l-1}$,
    \item $\Co|0^{n(\Co)}\rangle$ is the output state of the circuit given the outcome $v$.
\end{itemize} 
\end{problem}

The fact that the conditional probabilities are always either 1/2 or 1 is a consequence of the fact that at each point in time the state is a stabilizer state, and measuring any Pauli operator either has a definite or a uniformly random outcome.
The fact that only those bits of the input \bitstring{} $\tilde v$ which correspond to conditional probabilities of 1/2 correspond to observed measurement outcomes in the \bitstring{} $v$ is then required to ensure that $v$ occurs with non-zero probability.
Note that any algorithm that solves this outcome-specific stabilizer circuit simulation problem can be used to sample faithfully from the quantum circuit, by choosing the input \bitstring{} $\tilde v$ uniformly  at random.

\begin{figure*}[p]
\begin{algorithm}[\texttt{Outcome-specific stabilizer circuit simulation}]
\label{alg:outcome-specific-stab-sim} 
\begin{algorithmic}[1]
\Blank
\Input 
\begin{itemize}[noitemsep,topsep=0pt]
\item a stabilizer circuit $\mathcal{C}$ with no input qubits, $n$ output qubits, and a length-$n_M$ outcome vector,
\item a length-$n_M$ vector $\tilde v$. 
\end{itemize}
\Output 
\begin{itemize}[noitemsep,topsep=0pt]
    \item a vector $\overrightarrow{p} \in \{1,1/2\}^{n_M}$ of conditional probabilities,
    \item a Clifford unitary $\Co$,
    \item and length-$n_M$ outcome vector $v$,
\end{itemize}
that satisfy the conditions of~\cref{def:specific-outcome-stab-sim}.

\State Initialize an empty vector $\overrightarrow{p}$, a zero-qubit Clifford unitary $\Co$, and a \emphspecific{vector $v = 0^{n_M}$}.

\For{$g$ in $\mathcal{C}$}

\hrulefill\Comment{allocation}
\If{$g$ allocates qubit $j$,\label{line:outcome-specific-allocate}}
  \State replace $\Co \leftarrow \Co \otimes_j I_2$.
\ElsIf{$g$ deallocates qubit $j$, where \emphspecific{$Z_j \Co |0\rangle = \Co |0\rangle$} ,\label{line:outcome-specific-deallocate}}
  \State \texttt{disentangle}(\Co,j),
  \State remove qubit $j$ from $\Co$.
\ElsIf{$g$ allocates a random bit\label{line:outcome-specific-random}}
  \State append $1/2$ to $\overrightarrow{p}$,
  \State\emphspecific{replace $v_{l} \leftarrow \tilde{v}_{l}$, where $l = |\overrightarrow{p}|$}.

\hrulefill\Comment{unitaries}
\ElsIf{$g$ is a unitary $U$\label{line:outcome-specific-unitary}} 
  \State replace $\Co \leftarrow U \Co$.
\ElsIf{$g$ applies a Pauli unitary $P$ if $\langle c\rangle = c_0$, \\$\quad\quad\quad\quad\quad$ where $\langle c\rangle$ is the parity of outcomes indicated by $c\in \mathbb{F}_2^{n_M}$,\label{line:outcome-specific-conditional}}
  \State replace $\Co \leftarrow P^{c_0 + 1} \Co$ \emphspecific{ followed by $\Co \leftarrow P^{\ip{c}} \Co$}.

\hrulefill\Comment{measurements}
\ElsIf{\label{line:outcome-specific-fast-measure}$g$ measures Pauli $P$ given a hint Pauli $P'$ such that $\comm{P,P'}=1$, \\$\quad\quad\quad\quad\quad$ and  $P'\Co|0^{n(\Co)}\rangle = \pm \Co|0^{n(\Co)}\rangle$}
  \State find $b'$ and $\alpha$ such that preimage $ \Co^\dagger P'\Co = \alpha Z^{b'}$,
  \State replace $\Co\leftarrow e^{(\alpha\pi /4) PP'} \Co$,
  \State allocate a random bit according to \cref{line:outcome-specific-random},
  \State \emphspecific{apply $P'$ conditioned on the value of the random bit}.
\ElsIf{$g$ measures Pauli $P$,}
\State find the preimage $Q = \Co^\dagger P \Co$. 
\If{$x(Q)=0$ (deterministic measurement)\label{line:outcome-specific-deterministic-measure}} 
  \State append $1$ to $\overrightarrow{p}$,
  \State\emphspecific{$v_l \leftarrow s(Q)/2$, where $l = |\overrightarrow{p}|$}.
\Else{ (uniformly random measurement)\label{line:outcome-specific-random-measure}}
  \State let $j$ be the position of first non-zero bit of $x(Q)$, 
  \State find image $P' = \Co Z_j \Co^\dagger$ (which anticommutes with $P$), \label{line:outcome-specific-image}
  \State measure $P$ with assertion that $P'$ is a hint Pauli according to \cref{line:outcome-specific-fast-measure}.
\EndIf
\EndIf
\EndFor
\State \Return vector $\overrightarrow{p}$, Clifford unitary $\Co$, vector $v_0$
\end{algorithmic}
\end{algorithm}
\end{figure*}

In what follows, we provide \cref{alg:outcome-specific-stab-sim} that solves the outcome-specific stabilizer circuit simulation problem.
Our algorithm uses some key procedures listed in \cref{tab:data-structure-requirements}.
It is also desirable to solve a stronger stabilizer circuit simulation problem which accounts for all possible measurement outcomes.
At first, such a problem may seem like it could not have an efficient solution; a naive specification of the output of the simulation would involve enumeration of all possible sequences of outcomes, which can grow exponentially with the circuit size.
Nonetheless, in \cref{sec:stab-simulation-all} we leverage a general form (introduced in \cref{sec:stabilizer-circuit-general-form}) to define an efficiently specified outcome-complete stabilizer circuit simulation problem, and provide an algorithm to efficiently solve that stronger simulation problem.

Our presentation~(\cref{alg:outcome-specific-stab-sim}) of the well-known outcome-specific stabilizer algorithm serves two goals.
First it clarifies how to use the Clifford unitary data structure as a ``black box'', including how the simulation complexity follows from the complexity of the procedures for Clifford and Pauli unitaries manipulation in \cref{tab:data-structure-requirements}.
Second, using this presentation, it is easy to see the differences and similarities between this outcome-specific and 
the outcome-complete stabilizer simulation algorithm we provide later. 
The differences between outcome-specific and 
the outcome-complete algorithms are indicated by highlights.

We conclude with a few remarks about correctness and \runtime{} of our presentation of the outcome-specific stabilizer simulation \cref{alg:outcome-specific-stab-sim}.
First note that the preimage calculation $\Co^\dagger P \Co$ is commonly used throughout the algorithm. 
The only place where the image calculation $\Co P \Co^\dagger$ is explicitly used is in~\cref{line:outcome-specific-image}.
However, \cref{line:outcome-specific-fast-measure} where the image is used requires knowing the image only up to a sign.
For this reason, in practice, one can use Clifford unitary data structure without image signs as in~\cite{Gidney2021}.
Another place where the image calculation is used is within procedure \hyperref[alg:disentanlgle]{$\texttt{disentangle}$}, but the number of
image calculations associated with it can also be reduced as we discuss next.

The \runtime{} of this simulation algorithm (and also the simulation algorithm introduced in \cref{sec:stab-simulation-all}) can be significantly reduced by reducing the number of qubit allocations and deallocations.
First it is beneficial to calculate the maximum number of qubits needed and allocate all of them in the beginning to reduce allocation costs.
Second we can keep track of which qubits were deallocated, but keep themaround and use them next time a qubit allocation is requested.
It is then sufficient to apply any deallocations at the very end of the simulation. 
We also discuss a more efficient way of deallocating sets of qubits together in~\cref{app:bulk-deallocation-of-qubits}.

In our presentation of \cref{alg:outcome-specific-stab-sim} we see that a non-deterministic measurement during stabilizer simulation reduces to the application of a Pauli exponential $e^{i \pi Q/4}$, which follows from \cref{prop:measure-as-exp}.
This ensures that we can simulate these measurements treating the Clifford unitary which tracks the state as a ``black box'' by referring only to the basic procedures in \cref{tab:data-structure-requirements}, and crucially without making any reference to the Clifford unitary's underlying data structure.
Note that having a `hint' Pauli $P'$, which anticommutes with the Pauli $P$ being measured and which has the system's state as an eigenstate (which triggers \cref{line:outcome-specific-fast-measure}), can reduce the \runtime{} to simulate a measurement. 
This improvement comes into play because the exponential $e^{(\alpha\pi /4) PP'}$ which must be applied to simulate the measurement can be applied more efficiently if $PP'$ is sparse, and when no hint Pauli $P'$ is provided, that which is found by the algorithm is typically dense. 
For example, if both $P$ and $P'$ have constant weight, then such a measurement can be simulated in $O(n)$ with a hint Pauli, compared with $O(n^2)$ without a hint.
For the simulation of deterministic measurements, we achieve the same $O(n)$ \runtime{} as is achieved in~\cite{Gidney2021}.

The \runtime{} of the outcome-specific simulation algorithm per Clifford operation is stated and compared to 
the \runtime{} of the outcome-complete simulation algorithm in~\cref{tab:outcome-complete-simulation}. 
It is informative to state the simulation \runtime{} per operation because the simulation proceeds through the circuit 
operation by operation. This way the reader can easily derive the \runtime{} of the simulation for specific applications,
taking into account the number of occurrences of each type of operation in the circuit.
\newpage
\section{General form for stabilizer circuits}
\label{sec:stabilizer-circuit-general-form}

The goal of this section is 
to demonstrate that the action of any stabilizer circuit is equivalent to that of a circuit of the form shown in \cref{fig:general-form}, with equivalence as in \cref{def:instrument-equality}.
Having a simple general form circuit that can capture the action of any stabilizer circuit forms a useful tool that will later allow us to build efficient algorithms to check circuit equivalence, analyze the logical action of circuits on information stored in a stabilizer code and more.
Note that general form stabilizer circuits are not unique, since distinct general form stabilizer circuits can have equivalent action. 
However, in \cref{sec:equality-of-general-forms} we present an efficient algorithm for checking the equivalence of two general forms.

\begin{figure}[h]
\centering
\loadfig{fig-general-form}
\caption[General form]{
\label{fig:general-form}
A general form stabilizer circuit with $\ki$ input qubits and $\ko$ 
output qubits and $n_{r}$ classical random bits.
$\Ci$ and $\Co$ are the \emph{left} and \emph{right Clifford unitaries}, $A$, $A_x$, $A_z$ are the \emph{condition matrices} over $\f_2$, 
$r$ is an $n_{r}$-dimensional \emph{random bits vector} over $\f_2$, $k$ is the number of \emph{inner qubits}.
$A_x$ and $A_z$ are $k \times n_O$ matrices, while $A$ is an $(\ko - k) \times n_O$ matrix, where $n_O= (\ki-k + n_{r})$.
We call the length-$(\ki-k)$ vector $m$ the \emph{measurement outcome vector},
and the length-$n_O$ vector $o = r \oplus m$
the circuit's \emph{outcome vector}.
Applying $X^{v_x},Z^{v_z}$ with $k$-bit vectors $v_x = A_x o, v_z = A_z o$ to a $k$-qubit register corresponds to applying $X_j$ conditioned on $(v_x)_j = 1$ and $Z_j$ conditioned on $(v_z)_j = 1$ for $j \in [k]$.
}
\end{figure}

To describe the relation between a stabilizer circuit's outcome vector and an equivalent general from circuit's outcome vector we will need the following definition:
\begin{definition}[Split (reduced) echelon form]
\label{def:split-echelon-form}
A matrix $M$ of size $m\times n$ is in \emph{$(n_r,n_m)$-split echelon form} if $n_r + n_m = n$, $M$ has full column rank and 
the following conditions hold~(\cref{fig:outcome-mapping-structure}): 
\begin{itemize}[noitemsep]
    \item the \emph{left part} $M^T_{[n_r]}$\footnote{$M^T_{[n_r]}$ refers to the first $n_r$ rows of $M^T$, as opposed to the transposed of the first $n_r$ rows of $M$.} of $M$  is in reduced row echelon form,
    \item the \emph{right part} $M^{T}_{[n_r+1,n_r + n_m]}$ of $M$ is in row echelon form,
    \item for any index $j$ of a leading column of $M^T_{[n_r]}$ row $M_{j,[n_r+1,n_r + n_m]}$ is equal to $0^{n_m}$.
\end{itemize}
Given a matrix $M$ which is in $(n_r,n_m)$-split echelon form, if the right part $M^{T}_{[n_r+1,n_r + n_m]}$ is in reduced row echelon form, we further say that $M$ is in \emph{$(n_r,n_m)$-split reduced echelon form}.
\end{definition}

\begin{theorem}[General form stabilizer circuit]
\label{thm:general-form}
Given a stabilizer circuit $\mathcal{C}$(\cref{def:stabilizer-circuit}),
an equivalent
stabilizer circuit $\mathcal{C}_\mathrm{gen}$ can be written in the general form in \cref{fig:general-form}.
The outcome vector $v$ of the circuit $\mathcal{C}$ is related to the outcome vector $o$ of the general form circuit $\mathcal{C}_\mathrm{gen}$ by an invertible affine function. 
That is, 
there there exist a vector $v_0$ and a matrix $M$ such that when the relation $v = v_0 + Mo$ is satisfied, the circuits have the same action on any input state.
Additionally, $M$ is in $(n_r,n_m)$-split echelon form, where $n_r$ is the number of random bits of $\mathcal{C}_\mathrm{gen}$
and $n_m$ is the size of the measurement outcome vector of $\mathcal{C}_\mathrm{gen}$,
The entries of $v_0$ corresponding to the row rank profile (see \cref{sec:linear-algebra}) of the left part of $M$ are zero.
\end{theorem}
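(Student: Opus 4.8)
The plan is to build $\mathcal{C}_\mathrm{gen}$ by running the outcome-specific-style simulation on the \emph{Choi circuit} $\mathrm{Choi}(\mathcal{C})$, then reorganizing the result into the bipartite normal form of \cref{thm:bipartition}. First I would process the operations of $\mathcal{C}$ one at a time (treating the $\ki$ input qubits as one half of $\ki$ Bell pairs), maintaining: a Clifford unitary $C$ tracking the current stabilizer state on the live qubits, and an affine bookkeeping of how each new measurement/random outcome depends on the previously-seen outcomes. Crucially, since every conditional operation in a stabilizer circuit (\cref{def:stabilizer-circuit}) is a \emph{Pauli} conditioned on a \emph{parity} of earlier outcomes, and since a random measurement can be replaced by a Pauli exponential followed by a Pauli correction conditioned on a fresh random bit (\cref{prop:measure-as-exp} and \cref{line:outcome-specific-fast-measure} of \cref{alg:outcome-specific-stab-sim}), the entire circuit can be pushed into the shape: allocate fresh randomness, apply a fixed Clifford, apply Pauli corrections each conditioned on a linear function of the outcomes. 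All the conditional Paulis commute with the ``record'' of the measurements, so they can be collected and commuted to the end; their conditions, being $\f_2$-linear in the measurement and random bits, assemble into matrices $A, A_x, A_z$.

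Next I would invoke \cref{thm:bipartition} on the resulting stabilizer state, bipartitioned into the $\ko$ output qubits and the $\ki$ reference qubits: this gives the integer $k$ (number of inner qubits / Bell pairs between the two halves) and the left/right Clifford unitaries $\Ci, \Co$ of \cref{fig:bipartition}. Composing the correction Paulis through $\Co$ (they act on the output half) turns them into the conditional $X^{v_x}, Z^{v_z}$ on the $k$-qubit inner register plus conditional Paulis on the remaining $\ko-k$ output qubits — which, after the identification via the Bell state, are exactly the structure drawn in \cref{fig:general-form}. At this point equivalence as in \cref{def:instrument-equality} follows because the Choi state is the same for every outcome branch and the outcome-to-outcome map is a bijection on the support; the compression map collapses redundant (probability-$0$ or $1$) outcomes, which correspond to rows forced to be constant.

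The remaining, and I expect most delicate, step is establishing that the affine relation $v = v_0 + Mo$ can be put in $(n_r, n_m)$-split echelon form with the stated zero pattern on $v_0$. The matrix $M$ records how the original outcome vector $v$ is expressed in terms of the general form's outcomes $o = r \oplus m$ (random bits first, then measurement outcomes). Full column rank holds because $o$ is recoverable from $v$ (no two distinct general-form branches give the same original branch). To get the split echelon structure I would perform Gaussian elimination on $M^T$ in two stages: first reduce the $n_r$ columns corresponding to random bits to reduced row echelon form; then, \emph{using only operations that do not disturb those pivots}, reduce the measurement-outcome columns to row echelon form — this is possible precisely because each random bit and each measurement outcome is itself a free/new degree of freedom that can be relabeled (an invertible affine change of the general form's own outcome vector), so row operations on $M^T$ correspond to legitimate redefinitions of $o$. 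The third split-echelon condition (rows at pivot positions of the left part vanish on the right block) is arranged by back-substitution, clearing the random-bit pivots out of the measurement columns; this uses that a random bit can be XORed into a later measurement-outcome definition without changing the instrument. Finally, setting $v_0$'s entries at the row rank profile of the left part of $M$ to zero is achieved by absorbing those constants into the definitions of the corresponding random bits (a random bit $r_i$ and $r_i \oplus 1$ have the same distribution, so this is a valid relabeling). The main obstacle is verifying carefully that each elimination / relabeling step is a genuine equivalence of instruments — i.e., that it only ever reparametrizes \emph{fresh} randomness or the harmless measurement record — rather than silently altering the channel; once that invariant is maintained throughout, the echelon normalization is routine linear algebra.
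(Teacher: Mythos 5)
Your route is genuinely different from the paper's. The paper proves \cref{thm:general-form} by induction on the operations of $\mathcal{C}$: \cref{lem:general-form-induction} shows that appending any one stabilizer operation to a circuit that already has a general form yields a circuit with a general form, via an explicit case analysis (unitary, allocation, deallocation, conditional Pauli, random bit, and the three measurement subcases), and the split echelon structure of $M$ and the zero pattern of $v_0$ are maintained as invariants at every step. What you propose instead — simulate $\mathrm{Choi}(\mathcal{C})$ to get a family $C\ket{A'r}$, apply \cref{thm:bipartition} across the output/reference bipartition, and read off $\Ci,\Co,k$ and the condition matrices — is essentially the paper's \emph{algorithm} (\cref{alg:general-from-choi}), not its proof. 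The inductive proof buys a self-contained argument in which each new outcome of $\mathcal{C}$ appends one row to $M$ with an explicitly controlled shape; the Choi route buys a cleaner global picture and a faster algorithm, but pushes all the difficulty into one step.

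That step is where your proposal has a genuine gap. After the bipartite normal form, the reference half of the Choi state carries conditional Paulis controlled by the Choi circuit's random bits, and to recover a general form \emph{circuit} (\cref{fig:general-form,fig:general-form-choi}) you must split those bits into a $(\ki-k)$-bit block that becomes the destructive measurement outcome $m$ and a residual block that remains genuine randomness $r$. This requires that the restriction of $F^{-1}A'$ to the rows indexing qubits $[\ko+1,\ko+\ki-k]$ have full row rank $\ki-k$ — equivalently, that the phases of the input-half stabilizers sweep out all of $\f_2^{\ki-k}$ as the randomness varies. This is exactly the phase-completeness property (\cref{def:phase-complete}, used via \cref{prop:relabelling}), it is not automatic, and in the paper it is \emph{derived from} \cref{thm:general-form} itself (\cref{lem:choi-state-criteria}); citing it here would be circular. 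You would need an independent argument, e.g.\ from trace preservation of the instrument ($\sum_r Q_r^\dagger Q_r = I$ forces the averaged Choi state's reduced density matrix on the reference half to be maximally mixed) or by tracking through the simulation that every input-dependent measurement contributes an independent phase degree of freedom. Your phrase ``the outcome-to-outcome map is a bijection on the support'' silently assumes this. A second, smaller caution: in your echelon normalization, the random-bit columns of $M$ admit arbitrary invertible relabelings, but the measurement columns only admit relabelings of the form $m\mapsto A_\mu m$ (realized by changing $\Ci$) plus absorbing $m$-dependence into fresh random bits; you must check that the operations you use to reach $(n_r,n_m)$-split echelon form stay inside this restricted set, which is precisely why the right part of $M$ is only required to be in echelon rather than reduced echelon form.
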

\begin{proof}
The result follows by induction on the elementary operations contained in 
the stabilizer circuit. The base case is the empty stabilizer circuit, which is trivially in the general form. 
The induction step follows from~\cref{lem:general-form-induction} below.
\end{proof}

\begin{figure}
    \centering
    \includegraphics[scale=0.5]{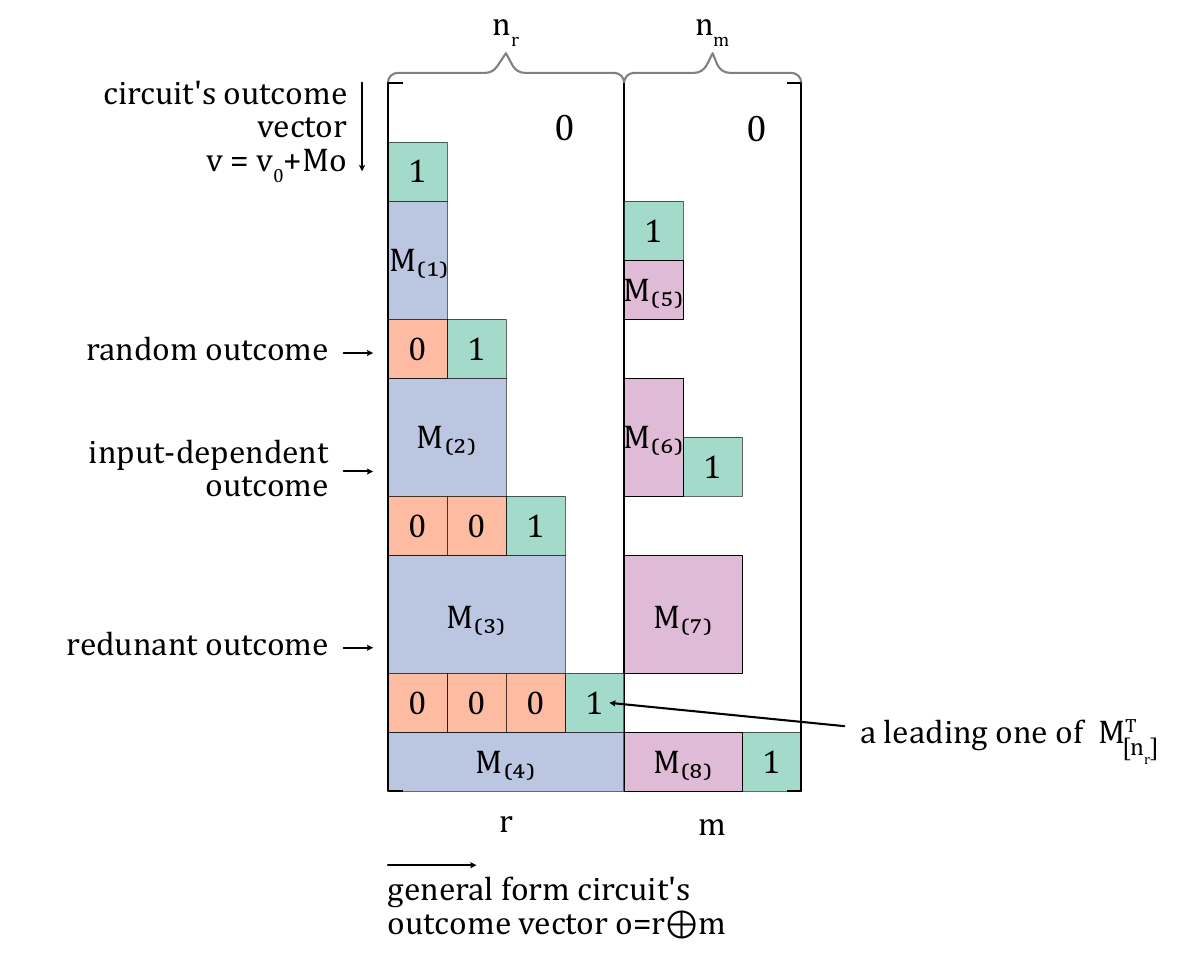}
    \caption[Split echelon form]{Matrix $M$ in $(n_r,n_m)$-split echelon form~(\cref{def:split-echelon-form}), which relates a stabilizer circuit $\mathcal{C}$ outcome vector $v$ to
    the general form circuit $\mathcal{C}_\mathrm{gen}$ outcome vector $o$ as described in \cref{thm:general-form}.
    We label rows corresponding to different kinds of stabilizer circuit outcomes.
    The indices of rows in the left (right) block containing a leading one, marked in green, form the row rank profile of the left (right) part of $M$ and correspond to random (input-dependent) outcomes.
    The remaining rows correspond to redundant outcomes.
    }
    \label{fig:outcome-mapping-structure}
\end{figure}

The structure of the matrix $M$ which appears in the outcome relation between a circuit and its equivalent general form circuit in \cref{fig:outcome-mapping-structure} and \cref{thm:general-form} reflect
the three kinds of stabilizer circuit outcomes which we defined in \cref{sec:circuits-and-choi-states}. 
(1) If the probability of an outcome being zero (conditioned on the previous circuit outcomes) is $\nicefrac{1}{2}$ for all input states, then the outcome is a \emph{random outcome}.
(2) If the probability of an outcome being zero (conditioned on the previous circuit outcomes) depends only on the input state, then it is an \emph{input-dependent outcome}.
(3) If the probability of an outcome being zero (conditioned on the previous circuit outcomes) is either zero or one for all input states, then it is a \emph{redundant outcome}.
Any row of $M$ that contains a leading one of $M^T_{[n_r]}$, i.e., which is in the row rank profile (see \cref{sec:linear-algebra}) of the left part of $M$, corresponds to a random outcome.
Any row of $M$ that contains a leading one of $M^{T}_{[n_r+1,n_r + n_m],J}$, i.e., which is in the row rank profile of the right part of $M$, corresponds to an input-dependent outcome.
The remaining rows correspond to redundant outcomes.

\begin{lemma}[Induction step for \cref{thm:general-form}]
\label{lem:general-form-induction}
Let $\mathcal{C}$ be a stabilizer circuit~(\cref{def:stabilizer-circuit}) which is equivalent to a general form circuit.
More specifically, suppose that $\mathcal{C}$ with outcome vector $v$ has the same action on any input state as a general form circuit $\mathcal{C}_\text{gen}$ with outcome vector $o$, where $v = v_0 + M o$ for some vector $v_0$ and matrix $M$,
and where the sets of outcomes of $\mathcal{C}_\text{gen}$ and $\mathcal{C}$ are $\f_2^{n_O}$ and $v_0 + M \f_2^{n_O}$ respectively. 

Then the circuit $\mathcal{C}' = \mathcal{C} \circ g$, consisting of $\mathcal{C}$ followed 
by a stabilizer operation $g$~(\cref{def:stabilizer-circuit}), is also equivalent to a general form circuit.
In particular, $\mathcal{C}'$ with outcome vector $v'$ has the same action on any input state as a general form circuit $\mathcal{C}'_\text{gen}$ with outcome vector $o'$, where $v' = v'_0 + M' o'$ for some vector $v_0'$ and matrix $M'$,
and the sets of outcomes of $\mathcal{C}'_\text{gen},\mathcal{C}'$ are $\f_2^{n'_O}$, $v'_0 + M' \f_2^{n'_O}$.
Moreover the stabilizer group of $\comm{\ni,k,\Ci}$ is a subgroup of the stabilizer group of $\comm{\ni,k',\Ci'}$.
Additionally, the split echelon form property of $M$ is inherited by $M'$.
Specifically, $M'$ is in $(n'_r,n'_m)$-split echelon form, where $n'_r$ and $n'_m$ are the number of random bits of $\mathcal{C}'_\mathrm{gen}$
and the size of the measurement outcome vector of $\mathcal{C}_\mathrm{gen}$ 
when 
$M$ is in $(n_r,n_m)$-split echelon form, where $n_r$ and $n_m$ are the number of random bits of $\mathcal{C}_\mathrm{gen}$
and the size of the measurement outcome vector of $\mathcal{C}_\mathrm{gen}$.
The entries of $v'_0$ corresponding to the row rank profile of the left part of $M'$ are zero 
when entries of $v_0$ corresponding to the row rank profile of the left part of $M$ are zero.
\end{lemma}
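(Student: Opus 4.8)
The plan is to prove \cref{lem:general-form-induction} by a case analysis over the type of the appended stabilizer operation $g$, which by \cref{def:stabilizer-circuit} is one of: (i) a qubit allocation in the zero state, (ii) a classical random-bit allocation, (iii) a Clifford or Pauli unitary, (iv) a non-destructive Pauli measurement, (v) a zero-state qubit deallocation, (vi) a Pauli unitary conditioned on the parity of earlier outcomes. In each case I would take the given general form circuit $\mathcal{C}_\text{gen}$ with parameters $(\ki,k,\Ci,\Co,A,A_x,A_z,r)$ equivalent to $\mathcal{C}$, compose it with $g$, and push $g$ through the right Clifford $\Co$ (and, where needed, through the conditional-Pauli layer) using the conjugation identities collected in \cref{sec:preliminaries}: \cref{eq:exponentiated-pauli-image} for Pauli exponentials, \cref{eq:controlled-pauli-image} for controlled Paulis, and the basic fact that $C P C^\dagger$ is Pauli for Clifford $C$. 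The goal of each case is to re-absorb everything into an updated tuple $(\ki',k',\Ci',\Co',A',A_x',A_z',r')$, producing $\mathcal{C}'_\text{gen}$, together with the updated outcome relation $v' = v'_0 + M' o'$.

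The routine cases are (i), (ii), (iii): a unitary $g=U$ is absorbed by $\Co' = U\Co$ with $M'=M$, $v'_0=v_0$ (for conditioned Paulis (vi), $g$ acts on a parity of outcomes $o$, so conjugating through $\Co$ turns it into a Pauli conditioned on a linear function of $o$, which is absorbed into $\Co$ up to updating $A_x,A_z$ by adding the corresponding rows — using that $X^{v_x}Z^{v_z}$ on the inner register is exactly what $A_x,A_z$ encode); a qubit allocation enlarges $\Co$ by a tensor factor; a random-bit allocation appends a column to $A,A_x,A_z$ and a row to $M$ — this is where $n_r$ (and hence the left block of $M$) grows. The structurally interesting cases are the non-destructive Pauli measurement (iv) and the zero-state deallocation (v). For the measurement of a Pauli $P$: I would compute the preimage $Q = \Co^\dagger P \Co$ (composed through the conditional layer), and split into the deterministic subcase ($x(Q)=0$, giving a redundant outcome — a new row of $M$ lying in the span of previous rows, with a possibly nonzero entry of $v'_0$) and the random subcase ($x(Q)\ne 0$), which via \cref{prop:measure-as-exp} reduces to applying a Pauli exponential (absorbed into $\Co$ as in \cref{alg:outcome-specific-stab-sim}) plus a fresh random bit; but now the measured parity may be correlated with earlier outcomes, so the new row of $M$ must be reduced against the left block — this is precisely the step that must preserve and can extend the split echelon structure. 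The deallocation case invokes \cref{thm:bipartition}/\texttt{disentangle} to peel the target qubit off as a zero state, decrementing $\ki$ (if the deallocated qubit traces back to an input) or leaving it unchanged, and correspondingly adjusting $k$; here I must verify that the stabilizer group of the input code $\comm{\ni,k,\Ci}$ only grows (equivalently $k$ does not increase in a way that removes stabilizers), which is the claim "stabilizer group of $\comm{\ni,k,\Ci}$ is a subgroup of $\comm{\ni,k',\Ci'}$" — for the non-input-reducing operations this subgroup claim is either equality or a strict enlargement of the stabilizer generators, and only the measurement/deallocation cases need care.

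The main obstacle I expect is bookkeeping the split echelon form of $M'$ in the measurement case. When a new measurement produces a new bottom row of $M$, I must (a) decide whether its restriction to the already-determined random columns is already spanned — if so it is a redundant or input-dependent outcome and goes into the "right part" $M^T_{[n_r+1,n_r+n_m]}$, requiring me to maintain that block in row echelon form and to enforce the cross-condition that rows hitting a leading column of the left part are zero on the right block; (b) if a genuinely new random bit is introduced, a column is prepended/appended and a new row enters the left block, which I must bring to reduced row echelon form by row operations that also act on $v_0$ — I then need to check the invariant that $v'_0$ vanishes on the row rank profile of the left part of $M'$, which follows because the row-reduction that clears the left block also clears the corresponding entries of $v_0$ (each such reduction is exactly subtracting a row whose $v_0$-entry is $0$ by the inductive hypothesis, or pivoting to make it $0$). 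Making these row/column manipulations explicit — and checking they commute with the affine relabelling $o \mapsto v_0 + Mo$ so that $\mathcal{C}'$ and $\mathcal{C}'_\text{gen}$ genuinely have matched actions per \cref{def:instrument-equality} — is the technical heart; everything else is a direct application of the conjugation identities and \cref{prop:measure-as-exp}.
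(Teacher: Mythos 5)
Your overall strategy --- induction by case analysis on the appended operation $g$, absorbing unitaries and allocations into the general-form data and treating measurement and deallocation as the hard cases --- is the same as the paper's, and your handling of the unitary, allocation, random-bit, conditional-Pauli and deallocation cases is essentially right (modulo a slip: deallocating a zero-state qubit decrements $\no$, not $\ki$, and leaves $k$ unchanged).

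There is, however, a genuine gap in your measurement case. You split on whether the preimage $Q=\Co^\dagger P\Co$ has $x(Q)=0$ (``deterministic'') or $x(Q)\ne 0$ (``random, handled by \cref{prop:measure-as-exp}''). That two-way split is correct only for circuits with \emph{no input qubits}, which is why it appears in \cref{alg:outcome-specific-stab-sim}; here the inner $k$-qubit register carries an arbitrary input state, and the correct classification is three-way, according to how $\pm P$ sits relative to the stabilizer group of $\comm{\no,k,\Co}$: (F1) $\pm P$ in the group, so $Q=\pm Z^b\otimes I_k$ and the outcome is redundant; (F2) $\pm P$ commutes with the group but is not in it, so $Q=Z^b\otimes Q'$ with $Q'$ nontrivial and the outcome is \emph{input-dependent}; (F3) $\pm P$ anticommutes with some element, and only then is the outcome uniformly random and reducible via \cref{prop:measure-as-exp} to a Pauli exponential plus a fresh random bit. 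Your criterion misclassifies both directions: $x(Q)=0$ with $Q=\pm Z^b\otimes Z^c$, $c\ne 0$, is input-dependent rather than redundant, and $x(Q)\ne 0$ supported only on the inner register gives a $P$ that commutes with all of $\comm{\no,k,\Co}$, so no stabilizer of the output state anticommutes with it for \emph{all} inputs and \cref{prop:measure-as-exp} does not apply. Case (F2) needs a different mechanism entirely --- choose a $k$-qubit Clifford $C'$ with $Q'=C'Z_1(C')^\dagger$, replace the measurement by $(C')^\dagger$, a $Z_1$ measurement, and $C'$, and set $k'=k-1$, $\Ci'=\Ci C'$, $\Co'=\Co C'$. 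This is the only case in which the left Clifford changes, the stabilizer group of $\comm{\ni,k,\Ci}$ becomes a \emph{proper} subgroup of $\comm{\ni,k',\Ci'}$, and the right (measurement) block of the split echelon form of $M'$ acquires a new leading column; without it you cannot establish the subgroup claim or the growth of $n_m$, so the structural heart of the lemma is missing from your argument.
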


\begin{proof}
Recall that according to \cref{def:stabilizer-circuit}, operation $g$ can be one of the following:
\begin{itemize}[noitemsep]
\item[\textbf{(A)}] Clifford or Pauli unitary,
\item[\textbf{(B)}] Allocation of a qubit initialized to the zero state,
\item[\textbf{(C)}] Deallocation of a qubit in the zero state,
\item[\textbf{(D)}] Pauli unitary conditioned on the parity of a set of measurement outcomes and classical random bits,
\item[\textbf{(E)}] Allocation of a classical random bit distributed as a fair coin,
\item[\textbf{(F)}] Pauli measurement.
\end{itemize}
We will prove that the lemma holds in each of these cases.
We have reordered the cases from \cref{def:stabilizer-circuit} for convenience.
Our proof strategy is to use the fact that the action of the circuit $\mathcal{C}' = \mathcal{C} \circ g$ is equivalent to the circuit $\mathcal{C}_\text{gen} \circ g$ (where $\mathcal{C}_\text{gen}$ is a general form circuit equivalent to $\mathcal{C}$)
and to transform $\mathcal{C}_\text{gen} \circ g$ into a general form circuit via a series of simple action-preserving circuit transformations. 
In cases \textbf{(A)} - \textbf{(D)}, appending $g$ to the circuit does not modify the outcome vector and nor do the transformations that take $\mathcal{C}_\text{gen} \circ g$ to the general form circuit $\mathcal{C}'_\text{gen}$, implying that $M' = M$ and $m = m'$.
In cases \textbf{(A)} - \textbf{(E)}, we will see that, given a general form circuit $\mathcal{C}_\text{gen}$ for $\mathcal{C}$ as described in the lemma, we can construct a general form circuit $\mathcal{C}'_\text{gen}$ for $\mathcal{C}'$ such that the left Cliffords of the two general forms are equal, i.e., $\Ci = \Ci'$, and therefore so too are the stabilizer groups of $\comm{\ni,k,\Ci},\comm{\ni,k',\Ci'}$.
In what follows, we attach the prime symbol ``$\,'\,$'' to all of the parameters of the general form circuit $\mathcal{C}'_\text{gen}$, and when constructing it starting from $\mathcal{C}_\text{gen}$, we explicitly describe how primed objects are obtained, omitting those which are the same for both $\mathcal{C}_\text{gen}$ an $\mathcal{C}'_\text{gen}$.

\textbf{(A)} Operation $g$ is a Clifford or Pauli unitary $U$. 
In this case, the transformation consists of removing $\Co$ and $U$ at the end of $\mathcal{C}_\text{gen} \circ g$ and adding the unitary $\Co' = U \Co$ to form $\mathcal{C}'_\text{gen}$. 

\textbf{(B)} Operation $g$ is the allocation of a qubit initialized to the zero state.
Consider the case when we allocate a qubit which is placed before the first qubit, and initialize it to the zero state. 
The general case reduces to this case by applying SWAP gates, which is a Clifford unitary considered in \textbf{(A)}.
In this case, the transformation simply involves removing the Clifford $\Co$ at the end of $\mathcal{C}_\text{gen} \circ g$ and adding the unitary $\Co' = I \otimes \Co$, and defining $A'$ by appending a row of zeros at the beginning of the rows of $A$ to form a general form circuit with $\ko' = \ko + 1$.

\textbf{(C)} Operation $g$ is the deallocation of a qubit in the zero state.
Consider the case when we deallocate the first qubit, which is assumed to be in the zero state prior to the application of $g$. 
The general case again reduces to this case by applying SWAP gates.
Our strategy is to show that we can modify $\Co$ and $A$ in $\mathcal{C}_\text{gen}$ without changing the circuit's action such that $A$ has a zero first row and $\Co$ maps states $\ket{0}\otimes \ket{\phi}$ to states $\ket{0}\otimes \ket{\psi}$. 
When this is the case, we can form $\mathcal{C}'_\text{gen}$ by setting $A'$ to be $A$ without the first row, 
and obtaining $\Co'$ from $\Co$ defined by $(\Co' \ket{\psi}) \otimes \ket{0} = \Co (\ket{\psi}\otimes \ket{0})$.
The unitary $\Co'$ defined in this way has a Choi state that is a stabilizer state and is therefore a Clifford unitary according to
\href{https://arxiv.org/pdf/1904.01124.pdf\#lem.A.13}{Lemma~A.13} in \cite{Beverland2020}. 
Moreover, the unitary $\Co'$ can be efficiently computed given its Choi state as discussed in \cref{sec:bipartite}.

First we specify the modifications of $A$ and $\Co$.
The fact that the qubit is guaranteed to be in the zero state prior to the application of $g$ implies that $Z_1$ stabilizes the output state of $\mathcal{C}$ (and therefore also the output state of $\mathcal{C}_\text{gen}$).
Given that $Z_1$ stabilizes the output of $\mathcal{C}_\text{gen}$,
it must be that $Z_1$ is in the stabilizer group of $\comm{\no,k,\Co}$ and as such there must be some length-$(\ko - k)$ bit string $b$ such that $Z_1 = \Co (Z^b \otimes I_k) \Co^\dagger$. 
Let $E$ be an invertible $(\ko - k)$ by $(\ko - k)$ matrix that maps the first basis vector $e_1$ to $b$.
$\Co$ is modified to $\Co (U_{E^{-T}} \otimes I_k)$ 
(where $E^{-T}$ is the inverse transpose of $E$ and $U_{E^{-T}}$ is the unitary defined by $U_{E^{-T}}\ket{a} = \ket{E^{-T}a}$ for every length-$(\ko - k)$ bit string $a$) and $A$ is modified to $E^T A$.

Next we confirm that the action of $\mathcal{C}_\text{gen}$ on the input state is preserved by this modification of $A$ and $\Co$.
Indeed, insert $U^\dagger_{E^{-T}} U_{E^{-T}}$ in the original general form circuit following $X^{Ao}$,
and insert $U_{E^{-T}}$ before $X^{Ao}$. 
Applying $U_{E^{-T}}$ to the all zero state leaves it unchanged, 
$U^\dagger_{E^{-T}} U_{E^{-T}} = I$, and 
$U^\dagger_{E^{-T}} X^{Ar} U_{E^{-T}} = X^{E^T A r}$ (using \cref{prop:linear-reversible-clifford}), 
so the circuits are equivalent.

Lastly, we confirm that the modified $A$ and $\Co$ have the claimed form.
Indeed, $Z_1$ conjugated by $U_{E^{-T}}$ is equal to $Z^{E e_1} = Z^b$ (using \cref{prop:linear-reversible-clifford}),
and so $\Co U_{E^{-T}}$ maps $Z_1$ to $Z_1$, which implies that it maps $\ket{0}\otimes \ket{\phi}$ to states $\ket{0}\otimes \ket{\psi}$ as required.
The first row of $E^T A$ must be zero since otherwise there would be some outcome vector $o$ for which the the first qubit is not output in state $\ket{0}$. 
This completes the deallocation case.

\textbf{(D)} Operation $g$ is a Pauli unitary conditioned on the parity of a subset of outcomes of the outcome vector $v$ of circuit $\mathcal{C}$.
Specifically, $g$ is specified by a Pauli $P$, a bit string $c$ with the same length as $v$ and a bit $c_0$, such that $P$ is applied when $\ip{v,c} = c_0$. 
Expressed in terms of the outcome vector $o$ of the general form circuit $\mathcal{C}_\text{gen}$, the parity $\ip{v,c}$ is $\ip{Mo + v_0, c} = \ip{v_0,c} + \ip{ o, M^T c }$. 
The action of $\mathcal{C} \circ g$ is equivalent to applying $\Co^\dagger P \Co$ before $\Co$
in $\mathcal{C}_\text{gen}$, 
conditioned on the parity $\ip{o, M^T c}$ being equal to $\ip{v_0,c} + c_0$. 
We take the case of $\ip{v_0,c} + c_0 = 1$ (the case of $\ip{v_0,c}  + c_0 = 0$ is reduced to this by replacing $\Co$ with $P \Co$ -- crucially this replacement has no dependence on the outcome vector).
The conditional Pauli can be written as $(\Co^\dagger P \Co)^{\ip{o,M^T c}}$, which we will now show can be represented by modifying the matrices $A,A_x,A_z$. 
This follows from the observation that $u \ip {v,w}= (u v^T) w$, where $u$, $v$ and $w$ are column vectors (with $v$ and $w$ having the same dimension), and where $u v^T$ is a matrix obtained by the outer product of $u$ and $v$.
Indeed, if we write $(\Co^\dagger P \Co)$ as $X^{a_x \oplus b_x} Z^{a_z \oplus b_z}$ for $a_x,a_z$ being $\ko - k$ dimensional vectors 
over $\f_2$ and $b_x, b_z$ being $k$ dimensional vectors over $\f_2$, then we have
$$
(\Co^\dagger P \Co)^{\ip{o,M^T c}} =
(X^{(a_x c^T M) o} \otimes X^{(b_x c^T M)o})
(Z^{(a_z c^T M) o} \otimes Z^{(b_z c^T M)o}).
$$
Therefore, $A' = A + a_x c^T M$, $A'_z = A_z + b_z c^T M$, 
and $A'_x = A_x + b_x c^T M$. 
Note that $Z^{a_z}$ can be commuted past $X^{Ax}$ in the general form and so is applied to the all zero state and can be removed.

\textbf{(E)} Operation $g$ is the allocation of a classical random bit distributed as a fair coin.
Starting with $\mathcal{C}_\text{gen}$, we set $n'_r=n_r+1$ , and add the additional random bit to the end of $r$.
The matrices $A',A'_x,A'_z$ can be obtained from matrices $A,A_x,A_z$ by inserting a zero column after column $n_r$. 
The matrix $M'$ is obtained from $M$
by adding a zero column after column $n_r$ and appending row $e_{n'_r}$, where $e_{n'_r}$ is $n'_r$-th standard basis vector.
The vector $v'_0$ is obtained from $v_0$ by appending a $0$.
Note that $M'$ and $,v'_0$ have all the required properties when $M$ and $v_0$ do.

\textbf{(F)} Operation $g$ is a measurement of a Pauli operator $P$, which we separate into three cases:
\begin{itemize}[noitemsep]
    \item[\textbf{(F1)}] $\pm P$ belongs to the stabilizer group of $\comm{\no,k,\Co}$ (redundant outcome).
    \item[\textbf{(F2)}] $\pm P$ does not belong to the stabilizer group of $\comm{\no,k,\Co}$, but commutes with all of its elements (input-dependent outcome).
    \item[\textbf{(F3)}] $\pm P$ anti-commutes with at least one element of the stabilizer group of $\comm{\no,k,\Co}$ (random outcome).
\end{itemize}
Let the preimage be $Q = \Co^\dagger P \Co$.

In case \textbf{(F1)}, $Q$ must be equal to $(-1)^{b_0} Z^b \otimes I_k$ for 
a non-zero vector $b$ over $\f_2$ of dimension $\ko - k$, where $b_0 \in \{0,1\}$. 
In this case, the general form circuits $\mathcal{C}_\text{gen}$ and $\mathcal{C}'_\text{gen}$ are the same, 
however the parameters $v_0',M'$ of the outcome maps differ from $v_0,M$. 
The vector $v_0'$ equals $v_0$ with $b_0$ appended to the end, 
and the matrix $M'$ is equal to $M$ with a row $A^T b$ appended to the end,
because the result of measuring $P$ is equal to $b_0 + \ip{b,Ao} = b_0 + \ip{A^T b, o}$. 
Note that $M'$ has all the required properties when $M$ does.

In case \textbf{(F2)}, the operator $Q$ must be equal to $Z^b \otimes Q'$,
where $Q'$ is a non-trivial $k$-qubit Pauli operator and 
$b$ is a vector over $\mathbb{F}_2$ of dimension $\ko - k$. 
The action of $\mathcal{C}'$ is equivalent to measuring $Q'$ after applying $X^{A_x o}$ in $\mathcal{C}_\text{gen}$, with outcome equal to the outcome of measuring $P$ plus $\ip{Ao,b}$.
Let $C'$ be any $k$-qubit Clifford unitary such that $Q' = C' Z_1 (C')^\dagger$. 
This means that measuring $Q'$ is equivalent to applying $(C')^\dagger$, measuring $Z_1$, and then applying $C'$.
This equivalence implies that the general form circuit $\mathcal{C}'_\text{gen}$ has $\Co' = \Co C'$,
$\Ci' = \Ci C'$, and $k' = k-1$.

To derive $A'_x,A'_z,M',A'$ for the general form circuit $\mathcal{C}'_\text{gen}$, we follow several steps.
First, we note that $X^{A_x o} Z^{A_z o}$ followed by the measurement of $Q'$
is equivalent to applying $C'^\dagger$, followed by $X^{\tilde A_x o} Z^{\tilde A_z o}$, followed by
measuring $Z_1$, and finally applying $C'$, for some $\tilde A_x, \tilde A_z$ calculated using the identity:
$$
C' X^{ A_x o} Z^{ A_z o} C'^\dagger \simeq X^{\tilde A_x o} Z^{\tilde A_z o}.
$$
If we move the measurement of $Z_1$ before $X^{\tilde A_x o}$, it will add
$\langle \tilde A_x o, e_1\rangle$ to the measurement outcome.
Therefore, we can conclude that $X^{A_x r} Z^{A_z r}$ followed by the measurement of $Q'$ is equivalent to the following sequence of operations:
\begin{enumerate}[noitemsep]
\item $C'^\dagger$,
\item destructive measurement of $Z_1$ with outcome $o_{0}$,
\item applying $X^{A'_x o} Z^{A'_z o}$ on the last $k-1$ qubits, where $A'_x$, $A'_z$ are $\tilde A_x, \tilde A_z$ with the first row removed,
\item allocating a new first qubit in the zero state,
\item applying $X^{o_0 + \langle\tilde A_x o,e_1\rangle}$ to the allocated qubit,
\item applying $C'$.
\end{enumerate}
The outcome of measuring $Q'$ is $o_0 + \langle\tilde A_x o,e_1\rangle$, and outcome of measuring $Q$ is  $o_0 + \langle\tilde A_x o,e_1\rangle + \langle A o,b\rangle$.
The new outcome vector $o'$ is obtained by concatenating $o$ with $o_0$,
$o' = o \oplus o_0$, $M'$ is obtained by adding a zero column to $M$ and then adding row $(\tilde A_x^T e_1 + A^T b) \oplus 1$, and $v_0' = v \oplus 0$.
Note that $M'$ has all the required properties when $M$ does.
Finally, $A'$ is obtained by adding a zero column to $A$ and then adding row $(\tilde A_x^T e_1) \oplus 1$.
In this case, the stabilizer group of $\comm{\ni,k,\Ci}$ is a proper subgroup of the stabilizer group of $\comm{\ni,k',\Ci'}$.

In case \textbf{(F3)}, let $P'$ be a Pauli operator from the stabilizer group of $\comm{\no,k,\Co}$ that anti-commutes with $P$. 
In this case, the measurement outcome $r_0$ is a uniformly distributed random bit. 
Let $b'_0,b'$ be defined using the preimage of $P'$ as $(-1)^{b'_0} Z^{b'} \otimes I_k = \Co^\dagger P' \Co$.
The output state of $\mathcal{C}$ (which is the same as the output state of $\mathcal{C}_\text{gen}$) is stabilized by $(-1)^{\ip{Ao,b'}} P'$.
Using \cref{prop:measure-as-exp}, we see that measuring $P$ is equivalent to applying $(P')^{r_0 + \ip{A o,b'}} e^{i\frac{\pi}{4}(iP'P)}$.
For this reason, the case \textbf{(F3)} reduces to 
applying unitary $e^{i\frac{\pi}{4}(iP'P)}$ as in case \textbf{(A)}, 
allocating a classical random bit $r_0$ as in case \textbf{(E)}, 
and applying Pauli unitary $P'$ conditioned on the parity of the set of measurement outcomes indicated by $A^T b'$ and the random bit $r_0$ as in case \textbf{(D)}. 
As such the form of $\mathcal{C}'_\text{gen}$ is constructed from $\mathcal{C}_\text{gen}$ using those cases.
In this case, $\comm{\ni,k,\Ci}$ and $\comm{\ni,k',\Ci'}$ are the same.
\end{proof}

We say that a general form circuit $\mathcal{C}_\text{gen}$ with equivalent action to a stabilizer circuit $\mathcal{C}$ is a \emph{general form of} the circuit $\mathcal{C}_\text{gen}$.
Similarly, we say that a general form circuit $\mathcal{C}_\text{gen}$ with equivalent action to a given quantum instrument is a \emph{general form of} that quantum instrument.

The proof of \cref{lem:general-form-induction} can be reformulated into a polynomial time algorithm for computing the general form of a stabilizer circuit, however the step \textbf{(F2)} requires a somewhat involved computation that scales as $O(n^\omega)$ where $n$ is the number of qubits and $\omega = \log_2 7$ in practice~\cite{Albrecht2011}. 
Instead, in \cref{sec:general-form-algo} we show an algorithm that requires at most $O(n^2)$ 
runtime per gate.
Our algorithm takes advantage of the relation between a general form stabilizer circuit and a Choi circuit as shown in \cref{fig:general-form-choi}.

\begin{figure}[h]
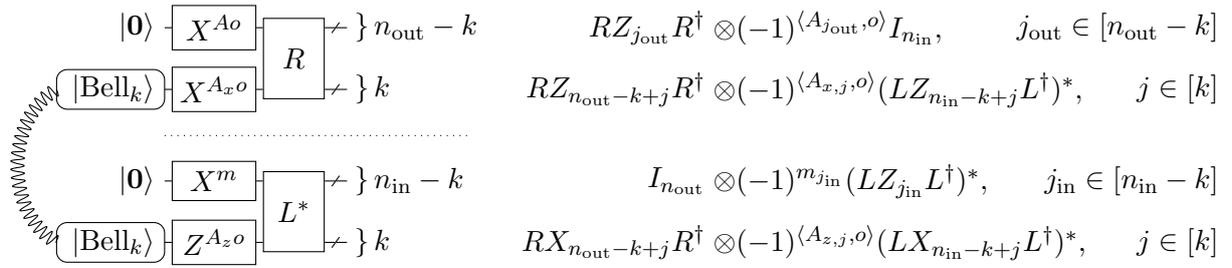

\centering
  \loadfig{fig-general-form-choi}
\caption[General form's Choi state]{
\label{fig:general-form-choi}
The circuit and the stabilizer group generators of the Choi state of the general form of a stabilizer instrument with $\ki$ input qubits, $\ko$ 
output qubits and $n_r$ classical random bits.
Clifford unitaries $\Ci$ and $\Co$ act on $\ki$ and $\ko$ qubits, 
$A$, $A_x$, $A_z$ are matrices over $\f_2$, $r$ is an $n_r$ dimensional random vector over $\f_2$.
The general form outcome vector $o = r \oplus m$ has dimension $n_O= (\ki-k + n_r)$, matrix $A$ is $(\ko - k) \times n_O$ matrix, 
$A_x, A_z$ are $k \times n_O$ matrices.
}
\end{figure}

The circuit in \cref{fig:general-form-choi}, which produces the Choi state of a general form circuit $\mathcal{C}_\text{gen}$ for each outcome vector $o$, is derived in a few steps. 
First, recall that Choi states can be produced using the Choi circuit in \cref{fig:choi-state-circuit}, which involves first initializing $2\ki$ qubits in  $\ket{\mathrm{Bell}_{\ki}}$ 
followed by applying the general form circuit in \cref{fig:general-form} to the top $\ki$ qubits.
Second, we use the identity $(\Ci^\dagger \otimes I_{\ki})\ket{\mathrm{Bell}_{\ki}} = (I_{\ki} \otimes \Ci^\ast)\ket{\mathrm{Bell}_{\ki}}$ 
to remove $\Ci^\dagger$ acting on the first $\ki$ qubits and add $\Ci^\ast$ acting on the last $\ki$ qubits.
Third we observe that destructively measuring qubits $\ki-k$ through $\ki$ with outcome $m$ is the same as 
initialising qubits $\ki + 1,\ldots, 2\ki - k$ in the zero state before applying $X^m$ to them.
Finally, we remove $Z^{Ao}$ acting on qubits $\ki - k$ through $\ki$ and add $Z^{Ao}$ acting on qubits $2\ki - k$ through $2\ki$
right before $\Ci^\ast$. 
This is similar to removing $\Ci^\dagger$ form the first $\ki$ qubits and adding $\Ci^\ast$ to last $\ki$.

We conclude with a couple observations that may aid the conceptual understanding of the general form and the next sections of the paper.
The first observation concerns learning some parameters of a general form of a given circuit $\mathcal{C}$.
Note that for all measurement outcomes, the Choi states in \cref{fig:general-form-choi} are the same up to Pauli corrections, 
or, in other words, up to the signs of the Choi states' stabilizer group generators. 
This leads to a simple way of finding $k$ and also learning $\Ci$ and $\Co$ up to Pauli corrections as follows.
First, compute the circuit's Choi state $\ket{\Psi(o)}$ using \cref{alg:outcome-specific-stab-sim} for any outcome vector $o$. 
Second, apply \cref{thm:bipartition} and \cref{fig:bipartition} to identify the $k$, $\Ci(o)$ and $\Co(o)$ of that Choi state $\ket{\Psi(o)}$. 
One can see that when $o$ is the zero vector, the variables $\Ci(o)$ and $\Co(o)$ coincide with $\Ci$ and $\Co$ in \cref{fig:general-form-choi}, which implies that $\Ci(o)$ and $\Co(o)$ coincide with $\Ci$ and $\Co$ up to Pauli corrections for any $o$.
In \cref{sec:general-form-algo} we extend this approach to form an algorithm to efficiently fully compute a general form of a stabilizer circuit.

The second observation is about interpreting the general form circuit from the perspective of quantum error correction.
As already highlighted in \cref{fig:general-form} the general form circuit consists of three steps.
The first step measures the generators of a stabilizer code $\comm{\ni,k,\Ci}$, and then unencodes the code that was projected onto into the bottom $k$ qubits.
The second step applies Pauli operators to the bottom $k$ qubits conditioned on both the syndrome of $\comm{\ni,k,\Ci}$ measured in the first step and random bits $r$. 
The third step encodes the state of the $k$ bottom qubits into a new stabilize code $\comm{\no,k,\Co}$, with syndrome conditioned on both the syndrome of $\comm{\ni,k,\Ci}$ and the random bits $r$.

In summary, the general form provides a convenient description of the action of a circuit: the input observables that it measures, the output observables that it stabilizes, and a mapping of the remaining observables.
The input observables that the circuit measures are captured by the stabilizer group of $\comm{\ni,k,\Ci}$.
The output observables that the circuit stabilizes are captured by the stabilizer group of $\comm{\no,k,\Co}$ up to a sign.
The mapping of the remaining observable is captured by the left and right Clifford unitaries $\Ci$ and $\Co$ up to signs as:
$$
 \Ci Z_{\ni - k + j} \Ci^\dagger \rightarrow \pm \Co Z_{\no - k + j} \Co^\dagger,~~\Ci X_{\ni - k + j} \Ci^\dagger \rightarrow \pm \Co X_{\no - k + j} \Co^\dagger,~~j\in[k].
$$
The condition matrices $A,A_x,A_z$ and $M$ capture the dependence of the signs of the observables of the output state 
on the circuit outcomes.

\newpage
\section{Outcome-complete stabilizer circuit simulation}
\label{sec:stab-simulation-all}

In \cref{sec:stab-simulation} we provided an algorithm, similar to those already known in the literature~\cite{AaronsonGottesman2004}, 
to solve the stabilizer circuit simulation problem given a specific outcome vector.
Here we consider a stronger stabilizer circuit simulation problem which accounts for all possible outcome vectors.
To efficiently encode the output of this outcome-complete stabilizer simulation problem \cref{def:outcome-complete-stab-sim} (which naively would involve enumeration of an exponential number of possible outcome vectors) we leverage the general form in \cref{sec:stabilizer-circuit-general-form}.

\begin{problem}[Outcome-complete stabilizer circuit simulation]
\label{def:outcome-complete-stab-sim}
Consider any stabilizer circuit with no input qubits, $n$ output qubits, and a length-$n_M$ outcome vector. 
Find the vector of non-zero conditional probabilities $\overrightarrow{p} \in \{1,1/2\}^{n_M}$,
a Clifford unitary $\Co$,
matrices $A$ and $M$ and a vector $v_0$ with entries in $\f_2$
that satisfy the following properties:
\begin{itemize}[noitemsep]
    \item each possible outcome vector $v$ is an element of the set $\{M r : r \in \{0,1\}^{n_r} \}$, where $n_r =|\{ \overrightarrow{p}_k =1/2 : k \in [n_M] \}|$, 
    \item for any outcome vector, $\overrightarrow{p}_j$ is the probability of obtaining outcome $v_j$ given previous outcomes $v_1,\ldots,v_{j-1}$,
    \item $\Co | A r \rangle $ is the output state of the circuit given the outcome vector $v = v_0 + M r$.
\end{itemize} 
\end{problem}

This outcome-complete \cref{def:outcome-complete-stab-sim} differs from the outcome-specific \cref{def:specific-outcome-stab-sim} in that the random outcomes of the circuit are no longer selected ahead of time.  
Instead, those outcomes are represented by a bit-vector $r$ of arbitrary value.
Note that when a circuit has no input qubits ($\ki = 0$), its general form (\cref{fig:general-form}) simplifies to include only random bits $r$, state preparation based on those bits $X^{Ar}|0^n\rangle = |Ar\rangle$, and a subsequent Clifford unitary $\Co$, which sets the form of the output of \cref{def:outcome-complete-stab-sim}.
Before we consider algorithms to solve \cref{def:outcome-complete-stab-sim}, we discuss various aspects of the problem structure and explain how a solution to \cref{def:outcome-complete-stab-sim} for a given circuit can be used to solve any problem instance of \cref{def:specific-outcome-stab-sim} for the same circuit.

A number of degrees of freedom exist for solutions to \cref{def:outcome-complete-stab-sim} arising from the fact that the general form itself is not unique for a given circuit.
First we describe a degree of freedom is for the matrix $M$.
Note that given an invertible $n_r \times n_r$ matrix $B$, we can write set $\{ r : r \in \{0,1\}^{n_r} \}$ as $\{ Br' : r' \in \{0,1\}^{n_r} \}$.
Therefore for any such $B$, matrices $M' = MB$, $A' = AB$ and $\Co' = \Co$ satisfy the conditions in \cref{def:outcome-complete-stab-sim} when matrices $M$, $A$ and $\Co$ do.
Moreover, for any matrix $M^T$ there exists an invertible matrix $B$ such that $(M')^T = B^T M^T$ is in reduced row echelon form.
This allows us, without loss of generality, to focus on solutions to \cref{def:specific-outcome-stab-sim} for which the transpose of the matrix $M$ is in reduced row echelon form, which simplifies a number of linear-algebraic computations involving $M$.
Another way to see that it is sufficient to consider $M^T$ in reduced row echelon form, is to recall that \cref{thm:general-form} implies that $M$ is in $(n_r,0)$-split echelon form.

It is informative to consider the structure of the matrix $M$ in the solution to~\cref{def:outcome-complete-stab-sim}, where $M^T$ is in row-echelon form, in more detail (see \cref{fig:reduce-row-echelon-form}).
The number of rows and columns of $M$ is $n_M$ and $n_r$ respectively.
Since $M$ has full column rank (as shown in \cref{lem:general-form-induction}), for each $j \in [n_r]$ there is a row in $M$ with the form $0^{j-1}10^{n_r-j}$.
These leading-zero rows appear in order of increasing $j$, and the sequence $(l_1,\ldots,l_{n_r} )$ of row indices is the row rank profile of $M$.

\begin{figure}[ht]
    \centering
    \includegraphics[scale=0.5]{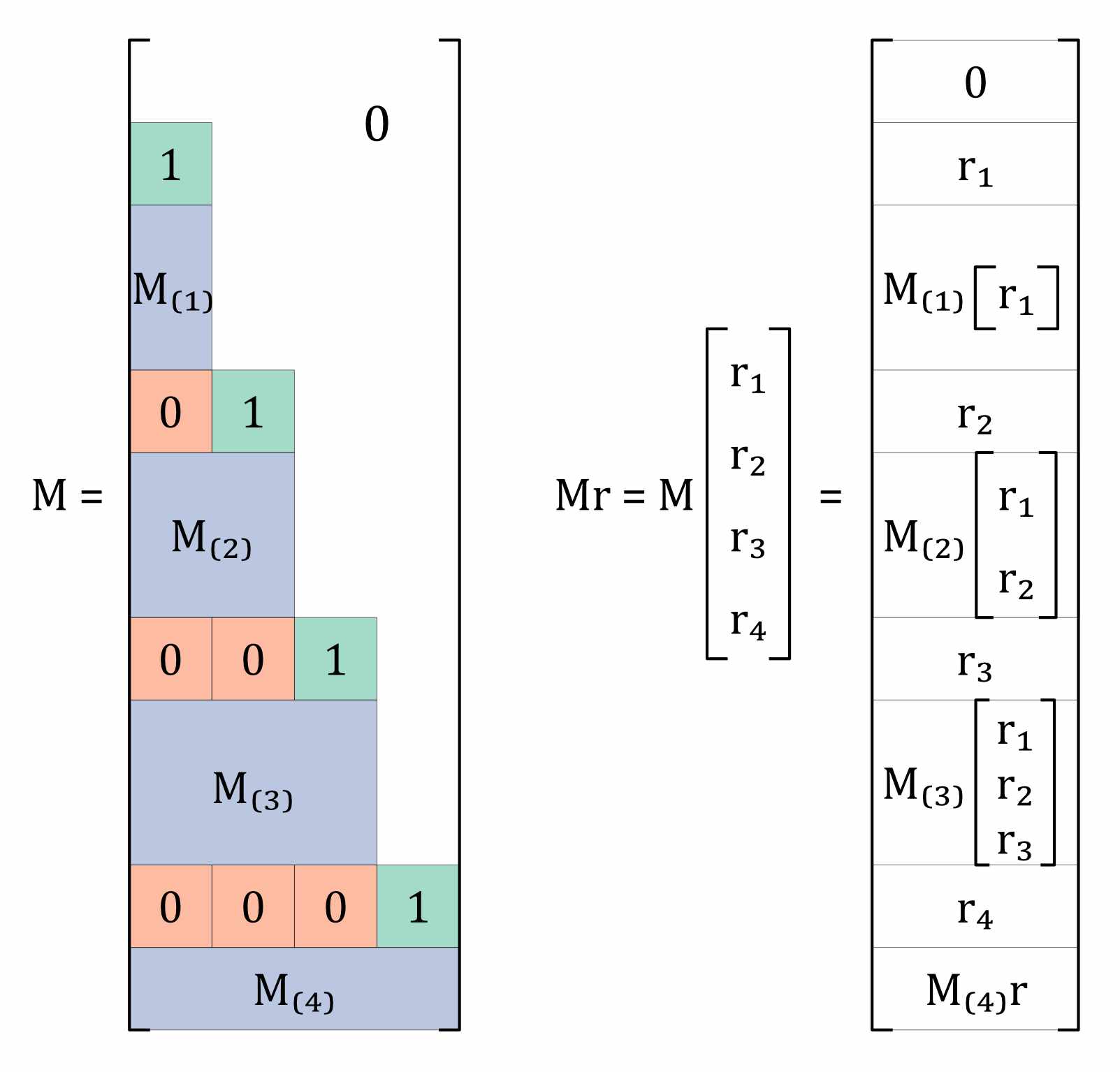}
    \caption[Outcome mapping example for outcome-complete simulation]{
    An illustration of an $n_M \times n_r$ matrix $M$ that specifies the output of an instance of the outcome-complete \cref{def:outcome-complete-stab-sim}, where $M^T$ is in reduced row-echelon form.
    For any such matrix $M$, for each $j \in [n_r]$ the $l_j$th row of $M$ is $0^{j-1}10^{n_r-j}$, where $l_j$ is the $j$th entry in the row rank profile of $M$.
    This structure ensures that when $M$ acts upon any length-$n_M$ \bitstring{} $r$, the $j$th entry of $r$ appears as the $l_j$th entry of $(Mr)$.
    }
    \label{fig:reduce-row-echelon-form}
\end{figure}

Next we see that the vector $\overrightarrow{p}$ in the solution to~\cref{def:outcome-complete-stab-sim} is redundant and can be constructed using the matrix $M$.
Specifically the entry $\overrightarrow{p}_l$ is equal to 1/2 for each $l$ in the row rank profile of $M$, and is equal to 1 for each $l$ not in the row rank profile of $M$. 
To see this, note that $r \in \{0,1\}^{n_r}$ is a uniformly random vector and 
consider the probability $\overrightarrow{p}_l$ of obtaining the outcome $v_l$ given previous outcomes $v_1,\ldots,v_{l-1}$, given the equation $v = v_0 + Mr$.
We see that $\overrightarrow{p}_l = 1/2$ when $l$ in the row rank profile of $M$ since the $l$th row of $M$ then has the form $0^{j-1}10^{n_r-j}$, and therefore the entry $v_l$ is set by $r_j$ (which is uniformly random), since $v_l = (v_0)_l+r_j$~(see~\cref{fig:reduce-row-echelon-form})
and $r_j$ did not appear in $v_1,\ldots,v_{l-1}$.
When $l$ not in the row rank profile of $M$ on the other hand, the $l$th row of $M$ is not equal to $0^{j-1}10^{n_r-j}$, and the entry $v_l$ equals $(v_0)_l$ plus a linear function of entries of $r$ that have already appeared in $v_1,\ldots,v_{l-1}$~(see~\cref{fig:reduce-row-echelon-form}), and therefore since $\overrightarrow{p}_l$ is the \textit{conditional} probability of $v_l$ given $v_1,\ldots,v_{l-1}$, we see that it must be $\overrightarrow{p}_l = 1$.

Another freedom in the solution to~\cref{def:outcome-complete-stab-sim} allows us to assume without loss of generality that the vector $(v_0)_l=0$ for $l$ in the row rank profile of $M$.
We can make this assumption because one can always replace $r$ by $r'=r+r_\text{shift}$ and $v_0$ by $v_0'=v_0+Mr_\text{shift}$ in $v=Mr +v_0$, giving $v = Mr'+v_0'$ for any $r_\text{shift}$.
Choosing $(r_\text{shift})_j= (v_0)_{l_j}$, where $l_j$ is the $j$th entry in row rank profile of $M$, we see that $(v_0')_l =0$ for all $l$ in the row rank profile of $M$.

We conclude with an explanation of how to solve the outcome-specific simulation~\cref{def:specific-outcome-stab-sim} using 
a solution to outcome-complete simulation \cref{def:outcome-complete-stab-sim}. 
As before, we assume that $M^T$ is in reduced row echelon form.
We first select the $r$ that corresponds to the specific outcome for \cref{def:specific-outcome-stab-sim} as follows.
For each $l$ such that the $l$th row of $M$ is $0^{j-1}10^{n_r-j}$, set $r_j = \tilde v_l + (v_0)_l$.
The solution to outcome specific simulation problem then has conditional probabilities $\overrightarrow{p}$, Clifford unitary $\Co X^{Ar}$ and outcome vector $v = Mr + v_0$.
This reduction from the outcome-complete simulation \cref{def:outcome-complete-stab-sim} to the outcome-specific simulation~\cref{def:specific-outcome-stab-sim} is 
instructive for understanding the reduction in the opposite direction discussed in \cref{sec:outcome-compelete-using-outcome-specific}.

\subsection{Solution with outcome-specific algorithm as a subroutine}
\label{sec:outcome-compelete-using-outcome-specific}

One approach to solve the outcome-complete \cref{def:outcome-complete-stab-sim} is~\cref{alg:outcome-complete-stab-sim-via-os} which calls the outcome-specific \cref{alg:outcome-specific-stab-sim} $n_r + 1$ times. 

\begin{figure*}[h]
\begin{algorithm}[\texttt{Outcome-complete via outcome-specific stabilizer simulation}] \label{alg:outcome-complete-stab-sim-via-os}
\begin{algorithmic}[1]
\Statex[-1] \Input A stabilizer circuit $\mathcal{C}$ with no input qubits, $n$ output qubits, and $\nM$ outcomes. 
\Output 
\begin{itemize}[noitemsep,topsep=0pt]
    \item a vector $\overrightarrow{p} \in \{1,1/2\}^{n_M}$ of conditional probabilities,
    \item a Clifford unitary $\Co$,
    \item matrices $A, M$ and vector $v_0$ with entries in $\f_2$,
\end{itemize}
that satisfy conditions of~\cref{def:outcome-complete-stab-sim}, additionally $M^T$ is in reduced row echelon form
and entries of $v_0$ corresponding to the row rank profile of $M$ are zero.
\State Let $\overrightarrow{p}^{(0)}$, $\Co^{(0)}$, $v^{(0)}$ be results of \hyperref[def:specific-outcome-stab-sim]{outcome-specific stabilizer simulation} of $\mathcal{C}$ with $\tilde v = 0^{n_M}$
\State Set $\Co \leftarrow \Co^{(0)}$, $v_0 \leftarrow v^{(0)}$, $ \overrightarrow{p} \leftarrow \overrightarrow{p}^{(0)}$, $j \leftarrow 1$
\For{$l \in [n_M]$ such that $\overrightarrow{p}_l=1/2$} \Comment $n_r$ loop iterations
    \State Let $\overrightarrow{p}^{(j)}$, $\Co^{(j)}$, $v^{(j)}$ be results of \hyperref[def:specific-outcome-stab-sim]{outcome-specific stabilizer simulation} of $\mathcal{C}$ with $\tilde v = e_l$
    \State Set column $j$ of $M$ to $v^{(j)} + v_0$
    \State Find \bitstring{} $a$ from equality $\ket{a} = \Co^\dagger \Co^{(j)}\ket{0^n}$ and set column $j$ of $A$ to $a$
    \State $j \leftarrow j + 1$ \Comment Ensures $M_l = 0^{j-1} 1 0^{n_r - j}$
\EndFor
\State \Return vector $\overrightarrow{p}$, Clifford unitary $\Co$,  matrices $A, M$ and vector $v_0$
\end{algorithmic}
\end{algorithm}
\end{figure*}

\cref{alg:outcome-complete-stab-sim-via-os} constructs a solution to \cref{def:outcome-complete-stab-sim} defined by $M$, $v_0$, $\Co$ and $A$, where $M^T$ is in reduced row echelon form, and $(v_0)_l=0$ for all $l$ in the row rank profile of $M$.
Such a solution always exists, as discussed above.
The algorithm begins by calling the outcome-specific \cref{alg:outcome-specific-stab-sim} with $\tilde v = 0^{n_M}$, obtaining a Clifford $\Co^{(0)}$, an outcome vector $v^{(0)}$
and a conditional probability vector $\overrightarrow{p}^{(0)}$.
From this, one can obtain some information about the outcome-complete solution, for example, we can straightforwardly identify $\overrightarrow{p} = \overrightarrow{p}^{(0)}$ since the conditional probability vector is independent of $\tilde v$.
The vector $\overrightarrow{p}$ allows us to identify the row rank profile of $M$, formed by sorting the set $\{l~|~\overrightarrow{p}_l=1/2,l \in [n_M] \}$ into ascending order, and $n_r$ is the size of the row rank profile of $M$.

Next, note that when we execute outcome-specific simulation algorithm for circuit $\mathcal{C}$ with input vector $\tilde v$, the outcome vector $v$ satisfies equality $v_{l_j} = (\tilde v)_{l_j}$ for $j \in [n_r]$. 
This is because $p_l = 1/2$ if and only if $l = l_j$ for some $j \in [n_r]$.
For input vector $\tilde v$, the outcome-specific simulation algorithm returns unitary $\Co^{(\tilde v)}$ and vector $v^{(\tilde v)}$ such that the output state upon outcome $v^{(\tilde v)}$ is $\Co^{(\tilde v)}|0^n\rangle = \Co|Ar^{(\tilde v)}\rangle$, $r^{(\tilde v)}$ must satisfy equation $v^{(\tilde v)} = Mr^{(\tilde v)} + v_0$ and $v^{(\tilde v)}_{l_j} = (\tilde v)_{l_j}$.
We have $\tilde v_{l_j} = v^{(\tilde v)}_{l_j} = (v_0)_{l_j} + M r^{(\tilde v)}_{l_j} = r^{(\tilde v)}_j$ by definition of $l_j$, because $M^T$ is in reduced row-echelon form and $(v_0)_{l_j}=0$.

Setting $\tilde v = 0$ implies that we compute action corresponding to $r=0$, $\Co^{(0)}|0^n\rangle = \Co|0^n\rangle$.
Setting $\tilde v = e_{l_j}$ gives us $\Co^{(e_{l_j})}|0^n\rangle = \Co|A e_j\rangle$, $v^{(e_{l_j})} = M e_j + v_0$ which allows us to compute columns $j$ of $A$ and $M$.
In total then we have used $n_r$ calls of the outcome-specific algorithm.

\subsection{Solution with explicit outcome-complete algorithm}

\begin{figure*}[p]
\begin{algorithm}[\texttt{Outcome-complete stabilizer circuit simulation}] \label{alg:outcome-complete-stab-sim}
\begin{algorithmic}[1]
\Blank
\Input A stabilizer circuit $\mathcal{C}$ with no input qubits, $n$ output qubits, and $\nM$ outcomes. 
\Output 
\begin{itemize}[noitemsep,topsep=0pt]
    \item a vector $\overrightarrow{p} \in \{1,1/2\}^{n_M}$ of conditional probabilities,
    \item a Clifford unitary $\Co$
    \item matrices $A, M$ and vector $v_0$ with entries in $\f_2$
\end{itemize}
that satisfy conditions of~\cref{def:outcome-complete-stab-sim}, additionally $M^T$ is in reduced row echelon form
and the entries of $v_0$ corresponding to the row rank profile of $M$ are zero.

\State Initialize an empty vector $\overrightarrow{p}$, zero-qubit Clifford unitary $\Co$, dimension-zero matrices $A, M$, vector $v_0$.

\For{$g$ in $\mathcal{C}$}

\hrulefill\Comment{allocation}
\If{$g$ allocates qubit $j$,\label{line:outcome-complete-allocate}}
  \State replace $\Co \leftarrow \Co \otimes_j I_2$,
  \State \algemph{insert a zero row into $A$ after row $j-1$.}
\ElsIf{$g$ deallocates qubit $j$, where \algemph{$Z_j \Co |Ar\rangle = \Co |Ar\rangle$ for all $r$},\label{line:outcome-complete-deallocate}}
  \State \algemph{let $j'$ be the position of the first non-zero bit of $z(\Co^\dagger Z_j \Co)$}
  \State \algemph{swap rows $j,j'$ of $A$ and remove row $j$ from $A$}
  \State \texttt{disentangle}($\Co,j$)
  \State remove qubit $j$ from $\Co$, 
\ElsIf{$g$ allocates a random bit\label{line:outcome-complete-random}}
  \State append $1/2$ to $\overrightarrow{p}$,
  \State \algemph{add a zero column to $A$,} 
  \State \algemph{append zero to the end of $v_0$},
  \State \algemph{add a zero column and row to $M$ and set the last bit in the row to $1$.}

\hrulefill\Comment{unitaries}
\ElsIf{$g$ is a unitary $U$\label{line:outcome-complete-unitary}} 
  \State replace $\Co \leftarrow U \Co$.
\ElsIf{$g$ applies a Pauli unitary $P$ if $\langle c\rangle = c_0$, \\$\quad\quad\quad\quad\quad$ where $\langle c\rangle$ is the parity of outcomes indicated by $c\in \mathbb{F}_2^{n_M}$,\label{line:outcome-complete-conditional}}
  \State replace $\Co \leftarrow P^{c_0 + 1} \Co$ \algemph{followed by $\Co \leftarrow P^{\ip{v_0,c}} \Co$ followed by $a \leftarrow M^T c$}
  \State \algemph{find preimage $\tilde{P}=\Co^\dagger P\Co$ } \label{line:outcome-complete-conditional-r} \Comment{apply $P$ conditioned on bits of $r$ indicated by $a$}
  \State \algemph{for all $j$ such that $x(\tilde{P})_j=1$, replace row $A_{j} \leftarrow A_{j} + a$.}

\hrulefill\Comment{measurements}
\ElsIf{$g$ measures Pauli $P$ given a hint Pauli $P'$ such that $\comm{P,P'}=1$ \\$\quad\quad\quad\quad\quad$ and  $P'\Co|0^{n(\Co)}\rangle = \pm \Co|0^{n(\Co)}\rangle$\label{line:outcome-complete-fast-measure}}
  \State find $b'$ and $\alpha$ such that preimage $ \Co^\dagger P'\Co = \alpha Z^{b'}$,
  \State replace $\Co\leftarrow e^{(\alpha^\ast\pi /4) PP'} \Co$ \algemph{followed by $a \leftarrow A^T b' \oplus 1$ }
  \State allocate a random bit according to \cref{line:outcome-complete-random},
  \State \algemph{apply $P'$ conditioned on bits of $r$ indicated by $a$, according to \cref{line:outcome-complete-conditional-r}.}\label{line:outcome-complete-conditional-in-m}
\ElsIf{$g$ measures Pauli $P$,}
\State find preimage $Q = \Co^\dagger P \Co$. 
\If{$x(Q)=0$ (deterministic measurement)\label{line:outcome-complete-deterministic-measure}} 
  \State append $1$ to $\overrightarrow{p}$,
  \State \algemph{add row $A^T z(Q)$ to $M$ and append $s(Q)/2$ to the end of $v_0$. }
\Else{ (uniformly random measurement)\label{line:outcome-complete-random-measure}}
  \State let $j$ be the position of first non-zero bit of $x(Q)$, 
  \State find image $P' = \Co Z_j \Co^\dagger$ (which anticommutes with $P$).\label{line::outcome-complete-image}
  \State measure $P$ with assertion $P'$ according to \cref{line:outcome-complete-fast-measure}.
\EndIf
\EndIf
\EndFor
\State \Return vector $\overrightarrow{p}$, Clifford unitary $\Co$,  matrices $A, M$ and vector $v_0$
\end{algorithmic}
\end{algorithm}
\end{figure*}

Here we offer an alternative approach to solve the outcome-complete \cref{def:outcome-complete-stab-sim} in the form of \cref{alg:outcome-complete-stab-sim}, which is more efficient than the iterative approach.

The outcome-complete \cref{alg:outcome-complete-stab-sim} is structured similarly to the outcome-specific \cref{alg:outcome-specific-stab-sim}.
Each operation in the circuit is handled sequentially and the quantum state of the system is tracked as the algorithm proceeds.
Most information about the state is encoded using a Clifford $\Co$, which is updated to reflect the corresponding action.
In addition, the state depends on the random outcomes in the circuit -- the effect of all possible outcomes are encoded by the the matrices $A$ and $M$, which are updated incrementally as the algorithm proceeds. 
For qubit and bit allocation, the additional steps serve to properly adjust the dimensions of $A$ and $M$. 
For random bit allocation in the outcome-specific simulation retrieves the value from $\tilde v$, 
while outcome-complete simulation adjusts $M$ so that last entry of $Mr$ is equal to the last entry of $r$ for all $r$.

For qubit deallocation, additional steps make sure that matrix $A$ is consistent with the modification of $\Co$ by the \texttt{disentangle} procedure.
In addition, in outcome complete-simulation it is required that the deallocated qubit is in the zero state prior to deallocation for all possible outcomes.
Unconditional unitary operations are handled identically in both simulation algorithms.
The most significant difference between the outcome-specific \cref{alg:outcome-specific-stab-sim} and the outcome-complete \cref{alg:outcome-complete-stab-sim} is in the handling of a conditional Pauli unitary, \cref{line:outcome-specific-conditional,line:outcome-complete-conditional}, respectively. Note that conditional Paulis are also used to handle measurements with random outcomes.

In the outcome-specific case, values of outcomes on which the Pauli $P$ is conditioned are known and phases of the Clifford $\Co$ can be adjusted by direct application of $P$.  In the outcome-complete case, $P$ is first commuted through $\Co$ by calculating the preimage $\Co^\dagger P \Co$. The $z$-part of this preimage commutes with input $|0^n\rangle$.  The $x$-part of this preimage identifies inputs that can be flipped in order to induce the correct phases on the output.

Both algorithms differentiate between random and deterministic measurements in the same way.
For deterministic measurement, the outcome-complete algorithm determines its dependence on the values of random bits,
while outcome-specific algorithm simply calculates the deterministic outcome value.
The random measurement is reduced to the measurement with a hint in both algorithms.
In outcome-specific case the conditional Pauli operator only depends on the random bit corresponding to the random outcome,
while in the outcome-complete case the conditional Pauli can depend on multiple random bits.
This is because the phase of the hint Pauli can depend on multiple random bits. 
See \cref{prop:measure-as-exp} for more details on conditional Pauli use in the random measurement.
The full set of differences between
\cref{alg:outcome-specific-stab-sim,alg:outcome-complete-stab-sim} is indicated by highlights.

We omit a detailed discussion of the correctness of \cref{alg:outcome-complete-stab-sim} because the algorithm mirrors the proof of 
\cref{lem:general-form-induction} for the special case where the number of input qubits is zero. 
There are two subtle details in the algorithm design that deserve a highlight.
First, in \cref{line:outcome-complete-conditional-in-m} we apply Pauli $P'$ conditioned on bits $r$
indicated by $a$.
This is different from applying $P$ conditioned on the outcome vector bits indicated by $c$ in \cref{line:outcome-complete-conditional}, 
which is reduced to applying $P$ conditioned on bits $r$ indicated by $a$ in \cref{line:outcome-complete-conditional-r}.
Second, we perform swap rows of $A$ during deallocation in \cref{line:outcome-complete-deallocate}. 
The goal here is to make modification of $A$ consistent with the modification of $\Co$ by the procedure \texttt{disentangle}.
Correctness of this step relies on the implementation details of the procedure \texttt{disentangle}
and the step might need to be modified if a different implementation of procedure \texttt{disentangle} is used.
The correctness of this step follows from propagating $X^{Ar}$ (representing current simulation state $\Co X^{Ar}\ket{0^n}$)
through Clifford unitaries applied within procedure \texttt{disentangle} using \cref{prop:pauli-power-conjuagtion-by-ctrl-pauli}.
Finally we note that $M^T$ constructed by \cref{alg:outcome-complete-stab-sim} is in reduced row-echelon normal form, 
which ensures that it has full column rank. This simplifies further linear-algebraic operations with $M$, such as computing left inverse of $M$.

\subsection{Comparison of solutions}

In \cref{tab:outcome-complete-simulation} we summarize the \runtime{} complexities of the outcome-specific \cref{alg:outcome-specific-stab-sim} and the outcome-complete \cref{alg:outcome-complete-stab-sim}.

\begin{table}[htp]
\small{
\centering
    \begingroup
    \setlength{\tabcolsep}{4.5pt}
    \begin{tabular}{|l|r|c|c|r|c|c|}
    \hline 
    \multirow{3}{*}{Stabilizer operation}  & \multicolumn{3}{c|}{ Outcome-complete~(\cref{alg:outcome-complete-stab-sim})} & \multicolumn{3}{c|}{ Outcome-specific~(\cref{alg:outcome-specific-stab-sim})} \\
    \cline{2-7}
     &\parbox[t]{2mm}{\multirow{2}{*}{\rotatebox[origin=c]{-90}{Line}}} & \multirow{2}{*}{Run time} & \bitstring{} & \parbox[t]{2mm}{\multirow{2}{*}{\rotatebox[origin=c]{-90}{Line}}} & \multirow{2}{*}{Run time} & \bitstring{} \\
     & &                           & \runtime{}  &  &                           & \runtime{}\\
    \hline
    \hline 
    Qubit allocation   & \ref{line:outcome-complete-allocate}&  \algemph{$O(\nmax ^2 + \nr \nmax )$} & $O(\nmax )$ 
                       & \ref{line:outcome-specific-allocate}&  \emphspecific{$O(\nmax ^2) $}    & $O(\nmax )$ \\
    \hline 
    Qubit deallocation & \ref{line:outcome-complete-deallocate}& \algemph{$O(\nmax ^2 + \nr \nmax )$} & $O(\nmax )$ 
                       & \ref{line:outcome-specific-deallocate}& \emphspecific{$O(\nmax ^2)$}         & $O(\nmax )$ \\
    \hline
    Random bit & \multirow{2}{*}{\ref{line:outcome-complete-random}} &  \multirow{2}{*}{\algemph{$O(n_M \nmax  + \nr \nmax )$}} & \multirow{2}{*}{\algemph{$O(\nmax )$}}
               & \multirow{2}{*}{\ref{line:outcome-specific-random}} &  \multirow{2}{*}{\emphspecific{$O(1)$}}             & \multirow{2}{*}{\emphspecific{$O(1)$}} \\
    allocation & & & & & & \\
    \hline 
    \hline 
    Unitary $P$, $\exp(P)$ & \multirow{2}{*}{\ref{line:outcome-complete-unitary}} & $O(\nmax |P|)$ & $O(|P|)$
                             & \multirow{2}{*}{\ref{line:outcome-specific-unitary}} & $O(\nmax |P|)$ & $O(|P|)$ \\ 
    \cline{1-1}
    \cline{3-4}
    \cline{6-7}
    Unitary $\Lambda(P,Q)$ & & $O(\nmax (|P|+|Q|))$ & $O(|P|+|Q|)$ & & $O(\nmax(|P|+|Q|))$ & $O(|P|+|Q|)$ \\
    \hline
    Pauli $P$ conditional& \multirow{3}{*}{\ref{line:outcome-complete-conditional}}& \algemph{$O(\nmax |P| + \nr|c|+$} & \algemph{{$O(|c|+$}} 
                             & \multirow{3}{*}{\ref{line:outcome-specific-conditional}}& \multirow{3}{*}{\emphspecific{$O(\nmax|P| + |c|)$}} & \multirow{3}{*}{\emphspecific{$O(|P|+|c|)$}}\\
    on outcomes  & & \algemph{$\nr \nmax )$} & \algemph{$\nmax )$} & & & \\
    indicated by $c$ & & & & & & \\
    \hline
    \hline 
    Measure Pauli $P$ & \multirow{2}{*}{\ref{line:outcome-complete-fast-measure}} & \algemph{$O(\nmax (|P|+|P'|)+$} & \multirow{2}{*}{\algemph{$O(\nmax )$}}
                      & \multirow{2}{*}{\ref{line:outcome-specific-fast-measure}} & \multirow{2}{*}{\emphspecific{$O(\nmax(|P|+|P'|)) $}} & \multirow{2}{*}{\emphspecific{$O(|P|+|P'|)$}}\\
    with a hint Pauli $P'$ & & \algemph{$\nr \nmax )$} & & & & \\
    \hline
    Measure Pauli $P$ & \multirow{2}{*}{\ref{line:outcome-complete-deterministic-measure}}&  \multirow{2}{*}{\algemph{$O(\nmax |P|+\nr \nmax )$}} &  \multirow{2}{*}{\algemph{$O(\nmax )$}}
                      & \multirow{2}{*}{\ref{line:outcome-specific-deterministic-measure}}&  \multirow{2}{*}{\emphspecific{$O(\nmax |P|)$}} &  \multirow{2}{*}{\emphspecific{$O(|P|)$}} \\
    (deterministic)   & & & & & & \\
    \hline
    Measure Pauli $P$ & \multirow{2}{*}{\ref{line:outcome-complete-random-measure}} & \multirow{2}{*}{\algemph{$O(\nmax ^2 + \nr \nmax )$}} & \multirow{2}{*}{$O(\nmax )$} 
                      & \multirow{2}{*}{\ref{line:outcome-specific-random-measure}} & \multirow{2}{*}{\emphspecific{$O(\nmax ^2)$}} & \multirow{2}{*}{$O(\nmax )$} \\
    (random)          & & & & & & \\
    \hline
    \end{tabular}
    \endgroup
}
    \caption[Complexity of outcome-complete stabilizer simulation]{
    Complexity of outcome-complete~(\cref{alg:outcome-complete-stab-sim}) and outcome-specific~(\cref{alg:outcome-specific-stab-sim}) stabilizer simulation for each type of stabilizer operation~(\cref{def:stabilizer-circuit}). 
    The integers $\nmax$, $n_r$ and $n_M$ are the maximum number of qubits, the number of random outcomes, and the total number of outcomes in the circuit respectively.
    We neglect the \runtime{} required to increase the sizes of matrices $A$ and $M$, or of vector $v_0$ in 
    the \runtime{} of measurement operations.  
    }
    \label{tab:outcome-complete-simulation}
\end{table}

In order to take advantage of constant-time bit string operations, we assume that each row of matrices $A$ and $M$ is represented by a bit string.  
In other words, matrices $A$ and $M$ are contiguous row-major arrays.
To compute matrix-vector products $M^T a$ and $A^T b$ it is sufficient to compute the sum of rows of $M$ and $A$ indicated by $a$ and $b$. 
This choice justifies the \bitruntime{} of conditional Paulis and measurements.

This choice of representation has tradeoffs. 
In particular, allocation and deallocation operations are limited by re-writing the corresponding arrays with new dimensions.
A linked-list implementation would yield lower \runtime{} complexities of those operations, but increase the \bitruntime{} complexities of measurement operations.

Workarounds for allocation and deallocation costs are available.
For example, maximum array sizes can be determined by pre-processing the circuit and re-using qubits when possible.
As a practical matter, this can reduce cost by allocating memory as a single step up-front.  
It also reduces the \runtime{} complexity by eliminating the need to re-dimension matrices.
If pre-processing is undesirable, one can use a qubit and random bit allocation strategy similar to those used for contiguous dynamically sized arrays, such as \texttt{std::vector} in C++.
Consequently, we exclude random bit allocation from measurement operation \runtime{}s in \cref{tab:outcome-complete-simulation}.

The \runtime{} per unitary operation is faster than the \runtime{} per measurement in \cref{alg:outcome-complete-stab-sim}.
One potential way to make these \runtimes{} more balanced is to represent the state during outcome complete simulation as 
$X^{A_x r} Z^{A_z r} \Co\ket{0^n}$ as opposed to  $\Co X^{A r} \ket{0^n} = \Co \ket{Ar}$.
We leave the investigation of this option for future work.

For our \runtime{} analysis we rely on both the image $\Co P \Co^\dagger$ and preimage $\Co^\dagger P \Co$ calculation being possible in $O(|P|\nmax)$. 
The only line where we calculate the image is~\cref{line::outcome-complete-image}, where it is sufficient to calculate image up-to a phase because it is used to derive a hint Pauli $P'$.
Omitting the phase calculation can further speed-up the implementation of outcome-complete simulation by a constant factor.

Lastly, we compare the \runtime{} (in terms of the circuit parameters defined in \cref{sec:circuit-parameters}) of the outcome-complete \cref{alg:outcome-complete-stab-sim} with that of \cref{alg:outcome-complete-stab-sim-via-os}, which is built from $n_r+1$ calls of the outcome-specific \cref{alg:outcome-specific-stab-sim}.
Comparing \cref{eq:outcome-complete-runtime,eq:outcome-complete-runtime-reduction} below, 
we find that outcome-complete simulation via~\cref{alg:outcome-complete-stab-sim} saves a factor of $\nr + 1$ over \cref{alg:outcome-complete-stab-sim-via-os} for contributions from unitaries, measurements with random outcomes and qubit deallocations.

The \runtime{} of the outcome-complete simulation \cref{alg:outcome-complete-stab-sim} is 
\begin{equation}\label{eq:outcome-complete-runtime}
O\left(\nmax\cdot\left(\Nu+(\Ncnd+\nM)(\nr+1) + \nmax(\nmax + \nr) \right)\right).
\end{equation}
This formula is a sum of contributions from \cref{tab:outcome-complete-simulation} including:
$O(\nmax \cdot \Nu)$ for unitaries,
$O(\nmax \cdot \Ncnd \cdot (\nr + 1))$ for conditional Paulis, and  $O(\nM(\nmax\nr+\nmax) + \nr\cdot\nmax^2)$ for measurements.
For this analysis we assume that matrices $M$, $A$ and $R$, are pre-allocated with \runtime{} costs $O(\nM \cdot \nr)$, $O(\nmax \cdot \nr)$ and $O(\nM^2)$, which are less than the other stabilizer operation costs.
Additionally we use a qubit deallocation strategy described in~\cref{app:bulk-deallocation-of-qubits} that contributes \runtime{} $O(\nmax^2 \cdot (\nmax + \nr))$.

An analogous analysis of the outcome-specific simulation \cref{alg:outcome-specific-stab-sim} yields a \runtime{} of
\begin{equation}\label{eq:outcome-specific-runtime}
O\left(\nmax\cdot\left(\Nu+\Ncnd+\nM + \nmax(\nmax + \nr) \right)\right).
\end{equation}
Using $n_r+1$ calls of outcome-specific \cref{alg:outcome-specific-stab-sim}, the \runtime{} of \cref{alg:outcome-complete-stab-sim-via-os} is then
\begin{equation}\label{eq:outcome-complete-runtime-reduction}
O\left(\nmax\cdot\left(\Nu(\nr+1)+(\Ncnd+\nM)(\nr+1) + \nmax(\nmax + \nr)(\nr+1) \right)\right).
\end{equation}
\cref{alg:outcome-complete-stab-sim-via-os} includes $n_r$ Clifford unitary multiplications, each with \runtime{} $O(\nmax^3)$. But that
contribution is less than the cost of $n_r+1$ outcome-specific simulations.


\newpage
\section{Computing a general form of a stabilizer circuit}
\label{sec:general-form-algo}

In \cref{thm:general-form} in \cref{sec:stabilizer-circuit-general-form} we showed that every stabilizer circuit, no matter how long and complex, has equivalent action to some circuit of the simple class of \emph{general form circuits} shown in~\cref{fig:general-form}.
In this section we present \cref{alg:general-from-choi}, which efficiently finds an explicit general form circuit with equivalent action for any input stabilizer circuit.
This algorithm is built upon the outcome-complete stabilizer simulation \cref{alg:outcome-complete-stab-sim} presented in the previous section.

Our general form \cref{alg:general-from-choi} produces a general form circuit $\mathcal{C}_\text{gen}$ with equivalent action to the input stabilizer $\mathcal{C}$ in two stages.
First, the algorithm computes a general form circuit $\mathcal{C}'_\text{gen}$ for the Choi circuit $\mathcal{C}'$ of $\mathcal{C}$~(\cref{fig:choi-state-circuit}).
Second, the algorithm computes the general form circuit $\mathcal{C}_\text{gen}$ from the Choi circuit's general form $\mathcal{C}'_\text{gen}$.
The first stage performs an outcome-complete simulation of the Choi circuit $\mathcal{C}'$, which can be done since the Choi circuit has no input qubits.
The description of the circuit can be made more compact by using an outcome compression procedure, which also ensures that the \runtime{} of the second stage depends only on the number of input and output qubits in $\mathcal{C}$, and not on its length.
The second stage relies on a subroutine to compute the bipartite normal form of the family of stabilizer states corresponding to the Choi circuit general form.

In the remainder of this section, we first specify the bipartite normal form \cref{prob:partition} for a family of stabilizer states.
This is a generalization of the bipartite normal form \cref{prob:partition-state} for a single stabilizer state, and its structure is crucial to understand our general form \cref{alg:general-from-choi}.
We then present \cref{alg:general-from-choi}, discuss its correctness and analyze its \runtime{}.
Lastly, in \cref{sec:partition} we provide the procedure which solves \cref{prob:partition}, and in \cref{sec:outcome-compression-map} we provide the compression map subroutine which can be optionally used following \cref{alg:general-from-choi} to make the output more compact.



\begin{problem}[Bipartite normal form of a family of stabilizer states]
\label{prob:partition}
Consider an $n$-qubit Clifford $C$ that defines a family of stabilizer states $\{ C|a\rangle \}_{a \in \{0,1\}^n }$ and an integer $n_1 < n$ that defines a bipartition into the first $n_1$ qubits and the remaining $n_2 = n-n_1$ qubits. 
Find an $n\times n$ matrix $F$ over $\f_2$, an integer $k$ and Clifford unitaries $\Ci,\Co$ such that
the circuit identity in~\cref{fig:bipartition-family-circuit} holds for all $a \in \{0,1\}^n$.
\end{problem}



\begin{figure}[h]
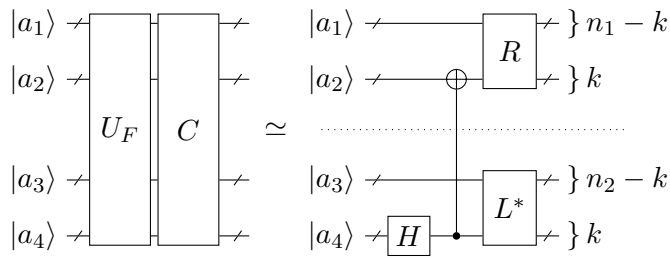

\centering
\loadfig{fig-bipartition-family}
\caption[Bipartite normal form for a family of stabilizer states]{
For any Clifford unitary $C$ there exist an $n\times n$ invertible matrix $F$ such that
the states produced by applying $C$ to computational basis states $\ket{Fa}$ can instead be produced (up to a global phase) by applying the sequence of smaller unitaries to $\ket{a}$ as shown.
Note that the Hadamards followed by CNOTs produce $k$ Bell states (up to Paulis) across the bipartition $[n_1],[n_1+1,n_1+n_2]$ of the $n_1+n_2$ qubits, and the Clifford unitaries $\Co$, $\Ci^\ast$ which follow act on qubits $[n_1],[n_1+1,n_1+n_2]$.
Also note that this figure coincides with \cref{fig:bipartition} when $a = 0^n$.
}
\label{fig:bipartition-family-circuit}
\end{figure}

Given a procedure that solves \cref{prob:partition}, the following \cref{alg:general-from-choi} finds the general form of a stabilizer circuit $\mathcal{C}$.

\begin{algorithm}[\texttt{Stabilizer circuit general form}]
\label{alg:general-from-choi}
\begin{algorithmic}[1]
\Blank
\Input
A Clifford circuit $\mathcal{C}$ with $\ki$ input qubits, $\ko$ output qubits, $n_M$ outcomes.
\Output
\begin{itemize}
\item A general form $\mathcal{C}_\text{gen}$ of $\mathcal{C}$ (i.e. of the form shown in \cref{fig:general-form}) with left Clifford unitary $\Ci$, right Clifford unitary $\Co$, and condition matrices $A,A_x,A_z$, $n_r$ random bits, $n_m$ measurement outcomes,
\item A matrix $M$ and vector $v_0$ over $\f_2$ defining the outcome relation, that is action of $\mathcal{C}_\text{gen}$ upon outcome $o$ is equivalent to the action of $\mathcal{C}$ upon outcome $v = v_0 + Mo$.
Additionally $M$ is in $(n_r,n_m)$-split reduced echelon form, 
and entries of $v_0$ corresponding to the row rank profiles of the left and right parts of $M$ are zero.
\end{itemize}

\Blank\label{line:choi-circuit-general-form} \hrulefill \Comment Find general form $\mathcal{C}'_\text{gen}$ of the Choi circuit of $\mathcal{C}$
\State\label{line:general-from-choi-circuit}Let $\mathcal{C}'$ be the Choi circuit of $\mathcal{C}$ (see~\cref{sec:circuits-and-choi-states}).
\State\label{line:general-form-simulation}Find the Clifford unitary $C$, matrices $A', M'$ and the vector $v_0$ over $\f_2$ corresponding to outcome-complete simulation of $\mathcal{C}'$ (\cref{alg:outcome-complete-stab-sim}), defining its general form $\mathcal{C}'_{\mathrm{gen}}$.
\Blank\label{line:choi-circuit-general-form-to-general-from-circuit}  \hrulefill \Comment Find a general form $\mathcal{C}_\text{gen}$ of $\mathcal{C}$ from $\mathcal{C}_\text{gen}'$
\State\label{line:general-form-partition}Find the integer $k$, the basis-change matrix $F$ that brings $C\ket{A'r'}$ (for all $r'$) into bipartite normal form with unitaries $\Ci,\Co$ over the qubit bipartition $[\ko],[\ko+1,\ko+\ki]$ (\cref{fig:bipartition-family-circuit}, \cref{prob:partition}, \cref{proc:partition}).
\State\label{line:general-form-basis-change}Let $\tilde C = CU_F$, set $k_1 = \ko - k$, $k_2 = \ki - k$, $n'_r$ to the number of columns of $A'$
\State\label{line:general-form-outcome-relation}Find an $n'_r\times n'_r$ invertible matrix $B$ and a $k_2 \times k_2$ matrix $A_\mu$ such that \Comment \cref{proc:block-reshape}
$$(F^{-1}A')_{[\ko + 1, \ko + k_2]} B = \left( \mathbf{0}_{k_2 \times (n'_r - k_2)} | A_\mu \right),$$
and such that $B$ is in $(n'_r - k_2,k_2)$-split reduced echelon form.
\State\label{line:general-form-left-clifford-update}$\Ci \leftarrow \Ci(U_{A_\mu} \otimes I_k)$
\State\label{line:general-form-matrix-assignment}Assign matrices $A, A_x, A_z, M$ as follows:
\begin{itemize}[noitemsep]
    \item $A = (F^{-1}A' B)_{[k_1]}$. 
    \item $A_x = (F^{-1}A' B)_{[k_1 + 1, \ko]}$.
    \item $A_z = (F^{-1}A' B)_{[\ko + k_2 + 1, \ko + \ki]}$.
    \item $M = M' B$.
\end{itemize}
\State \Return $\Ci,\Co,A,A_x,A_z$ defining $\mathcal{C}_\textrm{gen}$, and $M,v_0$ defining outcome map
\end{algorithmic}
\end{algorithm}

To verify the correctness of \cref{alg:general-from-choi}, it can be interpreted as a sequence of transformations.
In the first stage of the algorithm (consisting of \cref{line:general-from-choi-circuit,line:general-form-simulation}) we construct the Choi circuit $\mathcal{C}'$ of the input circuit and then find a general form $\mathcal{C}'_\text{gen}$ of $\mathcal{C}'$. 
In more detail, \cref{line:general-form-simulation} replaces the Choi circuit $\mathcal{C}'$ of $\mathcal{C}$ with an equivalent general form circuit $\mathcal{C}'_\text{gen}$ that 
allocates qubits $\ket{0^{\ko+\ki}}$, 
allocates random bits $r$,
applies Pauli unitaries $X^{A'r}$ conditioned on $r$, and then applies a Clifford unitary $C$,
by applying the outcome-complete stabilizer simulation~\cref{alg:outcome-complete-stab-sim}.

Now we enter the second stage of the algorithm (consisting of \cref{line:general-form-partition} to \cref{line:general-form-matrix-assignment}), in which the circuit $\mathcal{C}'_\text{gen}$ is further transformed into the Choi circuit $(\mathcal{C}_\text{gen})'$ of the general form (\cref{fig:general-form-choi}), from which the parameters of the desired general form $\mathcal{C}_\text{gen}$ of $\mathcal{C}$ can be read off.
To do so we begin by inserting $I = U_F U_{F^{-1}}$ between $X^{A'r}$ and $C$, after which $U_F$ is absorbed into $C$ (\cref{line:general-form-basis-change}) and $F^{-1}$ is absorbed into $A'$ via  $U_{F^{-1}} X^{A'r}\ket{0^{\ko+\ki}} = |F^{-1}A'r\rangle$ (\cref{line:general-form-outcome-relation}).
The basis change $F$ lets us replace the Clifford unitary $C$ with a circuit that creates $k$ Bell pairs from $\ket{0^{2k}}$ followed by $D\otimes B^\ast$, as in~\cref{fig:bipartition-family-circuit}. 
By the end of \cref{line:general-form-basis-change}, the transformation to $(\mathcal{C}_\text{gen})'$ is nearly complete, except that unitary $X^{F^{-1}A'r}$ is parameterized by $r$, the random outcomes of $\mathcal{C}'$, and not by $o$, the outcomes of the general form $\mathcal{C}_\text{gen}$.
~\cref{line:general-form-outcome-relation} finds a relation between the two, $r = B o$, which is then used to make the necessary adjustments on~\cref{line:general-form-matrix-assignment}. 
The relation matrix $B$ is obtained by using~\cref{proc:block-reshape} to ensure that $M = M'B$ is in $(n'_r-k_2,k_2)$-split reduced echelon form.
It follows from \cref{prop:matrix-structure-propagation} that $M$ is in $(n'_r-k_2,k_2)$-split reduced echelon form since $(M')^T$ is in $(n'_r,0)$-split echelon form and $B$ is in $(n'_r-k_2,k_2)$-split reduced echelon form.
We can successfully find $B$ in \cref{line:general-form-outcome-relation} according to~\cref{prop:relabelling}.
The proposition applies because according to \cref{lem:choi-state-criteria} the set of stabilizer states $\{ \tilde C|F^{-1} A'r'\rangle \}_{r \in \f_2^{n'_r}}$ 
has a certain property, called phase-completeness, defined in \cref{app:mathematical-material}~(\cref{def:phase-complete}).
In \cref{line:general-form-left-clifford-update} we use the identity $U_{A_\mu} X^m U^\dagger_{A_\mu} = X^{A_\mu m}$~(\cref{prop:css-cliiford-action}), note that $U^\dagger_{A_\mu}\ket{0^{k_2}} = \ket{0^{k_2}}$,
and absorb $U_{A_\mu}$ into $\Ci$.
The last circuit transformation step to get to $(\mathcal{C}_\text{gen})'$ is to propagate $X^{A_x o}, X^{A_z o}$ through the unitary that creates $k$ Bell pairs
in~\cref{fig:bipartition-family-circuit}. Then we have $a_1 = Ao$, $a_2 = A_x o$,  $a_3 = {( 0 | I_{\ki - k}) o}$,  $a_4 = A_z o$.
The entries of $v_0$ corresponding to the row rank profile of $M'$ are exactly the entries of $v_0$ corresponding to the row rank profiles of the left and the right parts of $M'$,
and are therefore zero.

We conclude with a \runtime{} analysis. 
In terms of the circuit parameters defined in \cref{sec:circuit-parameters}, the \runtime{} of the general form \cref{alg:general-from-choi} is 
\begin{equation}\label{eq:general-form-runtime}
O\left(\nmax\cdot\left(\Nu+(\Ncnd+\nM)(\nr+\nm + 1) + \nmax(\nmax + \nr)\right)\right).
\end{equation}
This expression is equal to \cref{eq:outcome-complete-runtime} (the \runtime{} of the outcome-complete simulation \cref{alg:outcome-complete-stab-sim}), but with $\nr$ replaced by $\nm + \nr$. 
This is because the \runtime{} of the general form \cref{alg:general-from-choi} is dominated by the outcome-complete simulation of the Choi circuit $\mathcal{C}'$ of the input circuit in \cref{line:general-form-simulation}.
The Choi circuit $\mathcal{C}'$ has most of the same parameters as the input circuit $\mathcal{C}$ except the number of random outcomes $\nr' = \nr + \nm$,
, the number of unitary operations $\Nu' \le \Nu + 2 \nmax$ and the maximum number of qubits used throughout the circuit $\nmax' \le 2\nmax$.

Let us now discuss the \runtime{} for the rest of the steps of \cref{alg:general-from-choi} and establish that they are all less than the \runtime{} of the outcome-complete simulation \cref{line:general-form-simulation} (and can therefore be neglected).
The bipartite normal form computation in \cref{line:general-form-partition} has \runtime{} $O((\ni+\no)^3)$ which is $O(\nmax^3)$~(\cref{eq:partition-runtime}).
The \runtime{} of finding the Clifford $\tilde C$ in \cref{line:general-form-basis-change} and of updating the Clifford $\Ci$ in \cref{line:general-form-left-clifford-update} is $O(\nmax^3)$.
The \runtime{} of computing the outcome relabelling matrix $B$ in \cref{line:general-form-outcome-relation} is $O(\nm(\nr+\nm))$~(\cref{eq:block-reshape-runtime}).
The \runtime{} of updating matrices $A$,$A_x$ and $A_z$ in \cref{line:general-form-matrix-assignment} is $O(\nmax \nm (\nr + 1))$~(\cref{eq:split-reduced-echelon-mult-complexity}),
and the \runtime{} of updating the matrix $M$ is $O(\nM\nm(\nr+1)))$~(\cref{eq:reduced-split-reduced-echelon-mult-complexity}).
The \runtime{} of optionally finding a more compact representation of the general form circuit using \cref{alg:compress-randomness} is $O((\nr+\nmax)\nmax^2)$~(\cref{eq:compression-map-runtime}).
The bound $\nm \le \nmax$ then implies that \runtime{} of the general form algorithm is dominated by the outcome-complete simulation of the Choi circuit $\mathcal{C}'$.

\subsection{Bipartite normal form of a family of stabilizer states}
\label{sec:partition}

Here we introduce the bipartite normal form \cref{proc:partition}, which forms a key subroutine of our stabilizer circuit general form \cref{alg:general-from-choi}.
This procedure solves the bipartite normal form \cref{prob:partition} for a family of stabilizer states, which was stated at the beginning of \cref{sec:general-form-algo} and which generalizes the previously studied bipartite normal form \cref{prob:partition-state} for a single stabilizer state stated in \cref{sec:bipartite}. 

To solve \cref{prob:partition}, the bipartite normal form
\cref{proc:partition} must find a basis-change matrix $F$ that ensures the following properties of the matrix $\tilde C = C U_F$:
\begin{itemize}[noitemsep]
    \item[(A)] images $\tilde C Z_1 \tilde C^\dagger,~\ldots,~\tilde C Z_{n_1-k} \tilde C^\dagger$ are supported on $[n_1]$,
    \item[(B)] images $\tilde C Z_{n_1+1} \tilde C^\dagger,~\ldots,~\tilde C Z_{n-k} \tilde C^\dagger$ are supported on the complement $[n_1 + 1,n]$,
    \item[(C)] images $\tilde C Z_{n_1-k+j} \tilde C^\dagger,\tilde C Z_{n-k+j} \tilde C^\dagger$ for $j \in [k]$ form a symplectic basis when restricted to $[n_1]$, and also form a symplectic basis when restricted to $[n_1 + 1,n]$.
\end{itemize}
To construct such a basis-change matrix $F$, \cref{proc:partition} uses two subroutines.
The first subroutine, \cref{proc:support-subgroup}, helps us find blocks of $F$ that ensure properties (A) and (B) hold, while the second subroutine, \cref{proc:symplectic-basis-for-group}, helps fill in the remaining parts of $F$ to ensure property (C) holds.
As an abstract algorithm,
\cref{proc:support-subgroup} computes the subgroup of a stabilizer group $G = \langle P_1,\ldots,P_m\rangle$ for which the support of every element is contained in some subset of qubits $K$.
We will see later how this helps to ensure (A) and (B) hold within \cref{proc:partition}.

\begin{procedure}[\texttt{support-restricted subgroup}] 
\label{proc:support-subgroup}
\begin{algorithmic}[1]
\Blank
\Input
\begin{itemize}
    \item A sequence of $n$-qubit commuting Pauli unitaries $P_1,\ldots,P_m$,
    \item a set $K \subset [n]$.
\end{itemize}
\Output
An $m'\times m$ full-row-rank matrix $F$ over $\f_2$ such that 
$$
    \langle P_1^{F_{j,1}}\ldots P_m^{F_{j,m}} : j\in [m'] \rangle = \{ P : P \in \langle P_1,\ldots, P_m \rangle,~\mathrm{supp}(P) \subseteq K\}. 
$$
\State Define $K^c = \{j_1, \ldots, j_{|K^c|}\}$, the complement of $K$.
\State\label{line:complement-support}Initialize an $m \times 2|K^c|$ matrix $A$ over $\f_2$ with rows
$$
    A_i = (z(P_i)_{j_1},\ldots,z(P_i)_{j_{|K^c|}},x(P_i)_{j_1},\ldots,x(P_i)_{j_{|K^c|}}), \text{ for } i \in [m].
$$
\State Find $F$ such that rows of $F$ form a basis of kernel of $A^T$ and \Return $F$.
\end{algorithmic}
\end{procedure}

\cref{proc:support-subgroup} works by noting that the desired products of $P_1,\ldots,P_m$ are those that have trivial support on the \emph{complement} of $K$. 
The matrix $A$, constructed on~\cref{line:complement-support}, identifies the support of $G$ on the complement.  
Then, the kernel of $A^T$ corresponds to indicator vectors $v$ for which the product $A^T v = 0$ has trivial support on $K^c$.
That is, the support of $P_1^{v_1}\ldots P_m^{v_m}$ is a subset of $K$.
Using a row reduced echelon form to compute the kernel basis~(\cref{eq:kernel-basis-complexity}), the run-time complexity of~\cref{proc:support-subgroup} is $O(n m^{\omega-1})$.
We call $F$ an \emph{indicator matrix}, and rows of $F$ \emph{indicator vectors}, because they indicate how to construct new Pauli unitaries 
$P_1^{F_{j,1}}\ldots P_m^{F_{j,m}}$
out of Pauli unitaries $P_1,\ldots,P_m$.

Next we provide \cref{proc:symplectic-basis-for-group}, which finds a symplectic basis of a Pauli group. 
We will see later how this helps to ensure property (C) holds within \cref{proc:partition}.

\begin{procedure}[\texttt{symplectic basis}] 
\label{proc:symplectic-basis-for-group}
\begin{algorithmic}[1]
\Blank
\Input Pauli unitaries $P_1,\ldots,P_{2m}$ such that $B_{i,j} = \comm{P_i,P_j}$ 
is an invertible matrix.
\Output
A $2m \times 2m$ matrix $F$ over $\f_2$ such that generators $\{P_1^{F_{j,1}} \ldots P_{2m}^{F_{j,2m}} : j\in [2m]\}$ form a symplectic basis~(\cref{def:symplectic-basis}).
\State Initialize $F = I_{2m}$, and Pauli unitaries $Q_j = P_j$ for $j\in[2m]$.
\For{$k \in [m]$}
  \State\label{line:anticommuting-partner}Find the minimum $j > 2k-1$ such that $\comm{Q_{2k-1}, Q_j} = 1$.
  \State Swap rows $F_{2k}$ and $F_j$, $Q_{2k}$ and $Q_j$
  \For {$i \in [2k+1,2m]$}
    \State\label{line:anticommutation-bits}let $a = \comm{Q_{2k-1}, Q_{i}}$ and $b = \comm{Q_{2k}, Q_{i}}$, 
    \State\label{line:row-adjustment}replace row $F_{i} \leftarrow F_{i} + a \cdot F_{2k-1} + b \cdot F_{2k}$,
    \State\label{line:commutation-adjustment}replace $Q_{i} \leftarrow Q_{i} Q_{2k-1}^a Q_{2k}^b$.
  \EndFor
\EndFor
\State \Return $2m \times 2m$ matrix $F$
\end{algorithmic}
\end{procedure}

\cref{proc:symplectic-basis-for-group} works by sequentially selecting each anticommuting pair (\cref{line:anticommuting-partner}), which always succeeds because we require that matrix $B$ is invertible.
Once a pair is selected, the remaining generators are adjusted in order to commute with both elements of the pair (\cref{line:commutation-adjustment}).
The complexity of \cref{line:commutation-adjustment} (which is $O(n)$ for \runtime{} and $O(1)$ for bit-string \runtime{}) limits the overall complexity, which is $O(nm^2)$ for \runtime{} and $O(m^2)$ for bit-string \runtime{}.  
If the $2m\times 2m$ matrix $B_{i,j} = \comm{P_i,P_j}$ is known ahead of time then \cref{line:anticommuting-partner,line:anticommutation-bits,line:commutation-adjustment} can be replaced with cheaper operations on $B$, reducing the \runtime{} to $O(m^3)$.
Finally, note that it is easy to modify the algorithm to simultaneously compute $F$ and $F^{-1}$.
This is because adding row $2k-1$ to row $i$ of $F$ in \cref{line:row-adjustment} is the same as $F \leftarrow (I + e_i e_j^T)F$,
and so $F^{-1}$ can be updated as $F^{-1} \leftarrow F^{-1}(I + e_i e_j^T)^{-1}$.

With~\cref{proc:support-subgroup} and \cref{proc:symplectic-basis-for-group} in hand we are in position to solve~\cref{prob:partition}.
\cref{proc:partition} outputs a basis-change matrix $F$ for a Clifford $C$ such that the $Z$ images of the Clifford $C U_F$ are partitioned into three subsets with specific properties~(see~\cref{fig:bipartition-family-blocks}).

\begin{procedure}[\texttt{Bipartite normal form of a family of stabilizer states}] 
\label{proc:partition}
\begin{algorithmic}[1]
\Blank
\Input An $n$-qubit Clifford $C$, integer $n_1$.
\Output An $n\times n$ matrix $F$ over $\f_2$, integer $k$ and Clifford unitaries $\Ci$, $\Co$ that satisfy conditions of \cref{prob:partition}.
\State\label{line:partition-supp1}Find a $k_1\times n$ indicator matrix $F_1$ for
$\langle C Z_1 C^\dagger,~\ldots,~C Z_n C^\dagger\rangle$ 
restricted to $[n_1]$ (\cref{proc:support-subgroup}).
\State\label{line:partition-supp2}Find a $k_2 \times n$ indicator matrix $F_2$ for
$\langle C Z_1 C^\dagger,~\ldots,~C Z_n C^\dagger\rangle$ 
restricted to $[n_1+1,n]$ (\cref{proc:support-subgroup}).
\State Set $k = n_1 - k_1$, set $n_2 = n - n_1$.
\State\label{line:partition-full-rank-completion}Let 
$F_{12} = \left(\begin{array}{c} F_1 \\ \hline F_2\end{array}\right)$,
the concatenation of rows from $F_1$ and $F_2$.
\State Complete $F_{12}$ into a full-rank $n\times n$ invertible matrix $\hat F$.
\State\label{line:partition-symplectic-basis}Let $\hat C = C U_{\hat F^{-1}}$, find a matrix $\tilde F$ for transforming
$\hat C Z_{k_1 + k_2 + 1} \hat C^\dagger,~\ldots,~\hat C Z_{n} \hat C^\dagger$ restricted to first $n_1$ qubits
into a symplectic basis (\cref{proc:symplectic-basis-for-group}).
\State\label{line:paritioned-basis} Let $A = ( I_{k_1 + k_2} \oplus \tilde F) \hat F$, set\footnote{We use notation $A_{[k:2:l]}$ for rows $k,k+2,\ldots$ of matrix $A$ } 
$$F = \left(\begin{array}{c} A_{[k_1]} \\ \hline  A_{[n-2k+1:2:n]} \\ \hline  A_{[k_1 +1, k_1 + k_2]} \\ \hline  A_{[n-k+2:2:n]} \end{array}\right)^{-1}.$$
\State\label{line:paritioned-matrix}Set $\tilde C = C U_F$.
\State\label{line:partition-right-clifford}Construct $\Co$ acting on $n_1$ qubits as follows~(see~\cref{fig:bipartition-family-blocks}):
    \begin{enumerate}[noitemsep]
    \item Set images $\Co Z_j \Co^\dagger$ so that $\Co Z_j \Co^\dagger \otimes I_{n_2} = \tilde C Z_j \tilde C^\dagger$, for $j \in [k_1]$.
    \item Set images $\Co Z_{k_1 + j} \Co^\dagger$ to $\tilde C Z_{k_1 + j} \tilde C^\dagger$ restricted to $K$, for $j \in [k]$.
    \item Set images $\Co X_{k_1 + j} \Co^\dagger$ to $\tilde C Z_{n_2 + k_1 + j} \tilde C^\dagger$ restricted to $K$, for $j \in [k]$.
    \item Complete $\Co$ to a full Clifford by setting (arbitrarily) images $\Co X_j \Co^\dagger$ for $j\in[k_1]$.
    \end{enumerate}
\State\label{line:partition-left-clifford}Construct $\Ci$ acting on $n_2$ as follows~(see~\cref{fig:bipartition-family-blocks}):
    \begin{enumerate}[noitemsep]
    \item Set images $\Ci^\ast Z_j (\Ci^\ast)^\dagger$ so that $I_{n_1} \otimes \Ci^\ast Z_{j} (\Ci^\ast)^\dagger  = \tilde C Z_{n_1 + j} \tilde C^\dagger$, for $j \in [k_2]$.
    \item Set images $\Ci^\ast Z_{k_2 + j} (\Ci^\ast)^\dagger$ so that 
    $$\Co Z_{k_1 + j} \Co^\dagger ~\otimes~ \Ci^\ast Z_{k_2 + j} (\Ci^\ast)^\dagger = \tilde C Z_{k_1 + j} \tilde C^\dagger, j \in [k].$$
    \item Set images $\Ci^\ast X_{k_2 + j} (\Ci^\ast)^\dagger$ so that 
    $$\Co X_{k_1 + j} \Co^\dagger ~\otimes~ \Ci^\ast X_{k_2 + j} (\Ci^\ast)^\dagger = \tilde C Z_{n_1 + k_2 + j} \tilde C^\dagger, j \in [k].$$
    \item Complete $\Ci$ to a full Clifford by setting (arbitrarily) images $\Ci^\ast X_j (\Ci^\ast)^\dagger$ for $j\in[k_2]$.
    \end{enumerate}
\State \Return matrix $F$, integer $k$, Clifford unitaries $\Ci,\Co$
\end{algorithmic}
\end{procedure}

\begin{figure}[h]
\centering
 \begin{subfigure}[c]{0.39\textwidth}
    \centering
    \loadfig{fig-bipartition-family-equality}
    \caption{\label{fig:bipartition-family-equality} A Clifford unitary $\tilde C$ such that states $\tilde C \ket{a_1} \otimes \ldots \otimes \ket{a_4}$ can be produced by a simpler circuit on the right (up to a global phase) for some $k,\Ci,\Co$. }
\end{subfigure}
\hfill
\begin{subfigure}[c]{0.57\textwidth}
    \centering
    \includegraphics[width=\textwidth]{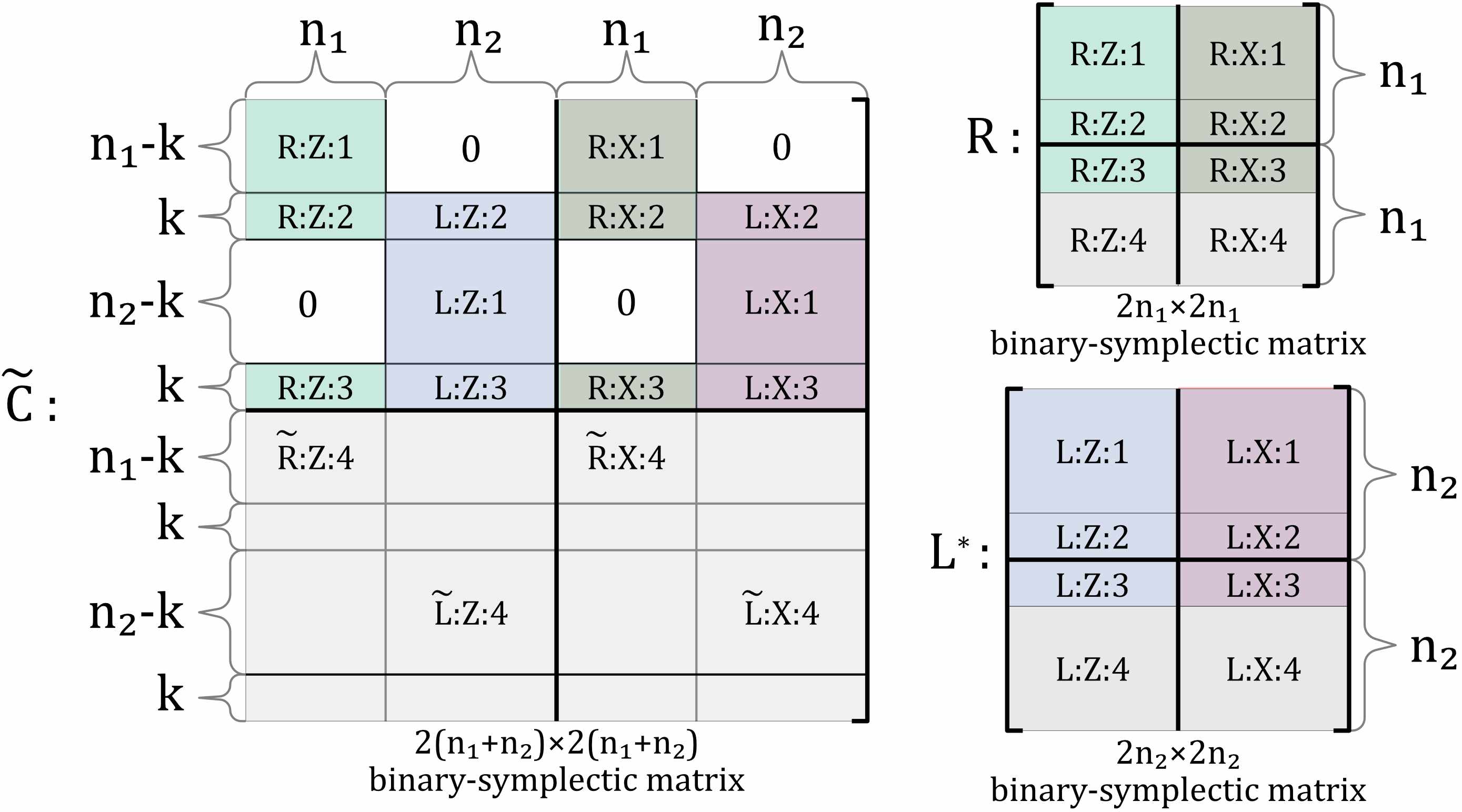}
    \caption{\label{fig:bipartition-family-blocks} Blocks of binary-symplectic representations of $\tilde C$, $\Ci$, $\Co$}
\end{subfigure}
\caption[Block structure of a Clifford unitary related to bipartite normal form]{
Given any Clifford unitary $\tilde C$, such that the circuit identity (a) holds for 
all computational basis states $\ket{a_1},\ldots,\ket{a_4}$,
Clifford unitaries $\Ci^\ast$, $\Co$ can be constructed using blocks of $\tilde C$ as shown in (b).
Blocks $\tilde{\text{\Co}}$:Z:4, $\tilde{\text{\Co}}$:X:4, $\tilde{\text{\Ci}}$:X:4, $\tilde{\text{\Ci}}$Z:4 do not appear in blocks of $D,B^\ast$,
however can be used to compute blocks ${\text{\Co}}$:Z:4, ${\text{\Co}}$:X:4, ${\text{\Ci}}$:X:4, ${\text{\Ci}}$:Z:4 that appear in $\Co,\Ci^\ast$ more efficiently~(see discussion around~\cref{eq:tilde-block-commutation}).
}
\end{figure}

Next we briefly discuss the correctness of \cref{proc:partition}.
The matrix $\tilde C$ in~\cref{line:paritioned-matrix} satisfies properties in \cref{prob:partition} since, by~\cref{prop:linear-reversible-clifford},
$$(C U_F) Z_j ( U^\dagger_F C^\dagger) = C Z^{(F^{-1})^T e_j} C^\dagger = C Z^{F^{-1}_j} C^\dagger = \prod_{i=1}^n (C Z_i C^\dagger)^{(F^{-1})_{j,i}}.$$
It is then sufficient to check that rows of $F^{-1}$ assigned in \cref{line:paritioned-basis} are ordered as required by \cref{prob:partition}
and also satisfy the required properties.
Note that $F^{-1}$ is matrix $A$ with permuted rows.
We constructed matrix $A$ such that first $k_1$ rows indicate products of $C Z_1 C^\dagger, \ldots,  C Z_n C^\dagger$
supported on first $n_1$ qubits~(\cref{line:partition-supp1}),
next $k_2$ rows indicate the products supported on the last $n_2$ qubits~(\cref{line:partition-supp2}), 
and the rest of rows indicate the products that form a symplectic basis when restricted to first $n_1$ qubits~(\cref{line:partition-symplectic-basis}).
\cref{fig:bipartition-family-blocks} shows how binary-symplectic parts of $\Ci,\Co$ can be constructed from $\tilde C$. 
We need to make sure that image and preimage phases of $\Ci,\Co$ are assigned so the that circuit in~\cref{fig:bipartition-family-circuit} is correct.
One way to ensure this is shown in steps 2 and 3 within~\cref{line:partition-left-clifford}.

For step 4 in \cref{line:partition-right-clifford,line:partition-left-clifford} we note that 
images  $\Co X_j \Co^\dagger$ for $j\in[k_1]$, $\Ci^\ast X_j (\Ci^\ast)^\dagger$ for $j\in[k_2]$ 
can be set arbitrarily. 
We can use blocks of binary-symplectic matrix of $\tilde C$ to make these steps more efficient. 
In \cref{fig:bipartition-family-blocks} we can set blocks $\Co$:$Z$:$4$, $\Co$:$X$:$4$ of $\Co$ to 
blocks $\tilde \Co$:$Z$:$4$, $\tilde \Co$:$X$:$4$ of $\tilde C$. 
This ensures that images $\Co X_j \Co^\dagger$ for $j\in[k_1]$ have correct commutation relations with
images of $\Co Z_j \Co^\dagger$ for $j\in[k_1]$:
\begin{equation}
\label{eq:tilde-block-commutation}
    \comm{ \Co X_j \Co^\dagger, \Co Z_{j'} \Co^\dagger } = \delta_{j,j'},~j,j'\in[k_1].
\end{equation}
It remains then to adjust images 
$\Co X_j \Co^\dagger$ for $j\in[k_1]$ so that they have correct commutation relations with 
$\Co X_j \Co^\dagger$, $\Co Z_j \Co^\dagger$ for $j\in[n_1-k+1,n_1]$
which has \runtime{} $O(kn^2)$.
Similarly we can set $\Ci^\ast X_j (\Ci^\ast)^\dagger$ for $j\in[k_2]$ more efficiently.

We discuss \runtime{} of \cref{proc:partition} in terms of parameters $n$ and $k$. 
The second parameter $k$ is computed by the procedure, however it is frequently known ahead of time in practice. 
The matrix multiplications, matrix inversions, full-rank completions and kernel basis calculations are all $O(n^\omega)$~(\cref{eq:kernel-basis-complexity},\cref{eq:full-rank-completion-complexity}).
Finding a symplectic basis in \cref{line:partition-symplectic-basis} is $O(nk^2)$ and constructing unitaries $\Ci$, $\Co$ is $O(k n^2)$.
We conclude that \runtime{} of \cref{proc:partition} is
\begin{equation} \label{eq:partition-runtime}
    O(n^\omega + kn^2).
\end{equation}
\subsection{Outcome compression}
\label{sec:outcome-compression-map}

The general form circuit in \cref{fig:general-form-choi} is not unique, in part, because there are degrees of freedom in the use of the random vector $r$. 
For example, two general form circuits may be equal up to remapping of random vectors.
More precisely, given any invertible square matrix $F$, we can produce a new general form circuit equivalent to a given general form circuit by substituting $r = Fr'$ and changing matrices $A, A_x, A_z$
to $A(F \oplus I_{\ki-k})$, $A_x(F \oplus I_{\ki-k})$, $A_z(F \oplus I_{\ki-k})$.
Additionally, a general form circuit may contain redundancy in $r$.
That is, for two different values $r_1$, $r_2$ of $r$ the actions of the two corresponding circuits may be identical.
This happens when $(r_1-r_2) \oplus 0^{\ki - k}$ belongs to the kernel of each of $A, A_x, A_z$.
To summarize these degrees of freedom, one can replace the first $n_r$ columns of  $A, A_x, A_z$
with another $n'_r$ columns, as long as the linear span is the same.
We can eliminate these degrees of freedom by ``compressing'' $r$ with the following algorithm.
This algorithm corresponds to constructing an outcome compression map for the quantum instrument corresponding to the general form circuit in~\cref{def:outcome-compression}.
This is also useful for computing an outcome compression map for the action of arbitrary stabilizer circuits.


\begin{procedure}[\texttt{compression map}] 
\label{alg:compress-randomness}
\begin{algorithmic}[1]
\Blank
\Input A general form circuit $\mathcal{C}_{\mathrm{gen}}$ in \cref{fig:general-form-choi}, with $n_r$ random bits, $\ki$ input qubits, $\ko$ output qubits,
$k$ inner qubits, condition matrices $A,A_x,A_z$ 
\Output
\begin{itemize}
    \item Integer $n'_r$, condition matrices $A',A'_x,A'_z$ that define a general form circuit $\mathcal{C}'_{\mathrm{gen}}$, equivalent to  $\mathcal{C}_{\mathrm{gen}}$, with $n'_r \le n_r$ random bits, same left Clifford unitary, same right Clifford unitary.
    \item Compression matrix $F_{o\rightarrow o'}$ over $\f_2$, embedding matrix $F_{o'\rightarrow o}$ over $\f_2$.
\end{itemize}
such that general form circuits $\mathcal{C}_{\mathrm{gen}}$ and $\mathcal{C}'_{\mathrm{gen}}$ have the same action for outcome vectors $o$ and $o' = F_{o\rightarrow o'} o$.
Furthermore, first $n'_r$ rows of $((A')^T | (A'_x)^T | (A'_z)^T)$ are in reduced row echelon form and have full rank, $F_{o\rightarrow o'} F_{o'\rightarrow o} = I$.
\State\label{line:compression-kernels-basis} Let $\tilde A$ be the first $n_r$ columns of
$$
\left(\begin{array}{c} A \\ \hline  A_x \\ \hline  A_z \end{array}\right).
$$
\State\label{line:compression-map}Find matrices $F,F^{-1}$ such that $\hat A^T = F^T \tilde A^T$ is in reduced row echelon form.
\State\label{line:compression-num-bits}Set $n'_r$ to be the number of non-zero columns of $\hat A$.
\State Set matrices
\begin{align*}
F_{o\rightarrow o'} \leftarrow & \left( \left(I_{n'_r} | 0_{ n'_r \times (n_r - n'_r)} \right) F^{-1} \right) \oplus I_{\ki - k}, \\
F_{o'\rightarrow o} \leftarrow &  \left( F \left(\begin{array}{c} I_{n'_r}, \\ \hline  0_{ (n_r - n'_r) \times n'_r} \end{array}\right) \right) \oplus I_{\ki - k}, \\
A'  \leftarrow & A F_{o'\rightarrow o},~~A'_x  \leftarrow A_x F_{o'\rightarrow o},~~A'_z  \leftarrow A_z F_{o'\rightarrow o}.
\end{align*}
\State \Return $n'_r$, condition matrices $A',A'_x,A'_z$, compression and embedding matrices $F_{o\rightarrow o'}$,$F_{o'\rightarrow o}$
\end{algorithmic}
\end{procedure}

The goal of \cref{alg:compress-randomness} is to first identify all random bit vectors 
that result in the same linear map being applied by the general form circuit~(\cref{line:compression-map}).
All vectors from the kernel of $\tilde A = \hat A F^{-1}$ lead to the same map being applied.
It is easy to identify the kernel of $\hat A$, those are exactly its zero columns,
that is columns from $n'_r + 1$ to $n_r$~(\cref{line:compression-num-bits}). 
The basis of kernel of $\tilde A$ is then $F e_{n'_r + 1}, \ldots, F e_{n_r}$.

Vectors from within each coset in $\f_2^{n_r} / \mathrm{ker}(A)$ also lead to the same map being applied.
Vectors from the distinct cosets result in the distinct maps applied by the general form circuit.
The basis $b_1 = F e_1,\ldots,b_{n'_r} = F e_{n'_r}$ corresponds 
to a basis $b_1 + \mathrm{ker}(\tilde A), \ldots, b_{n'_r} + \mathrm{ker}(\tilde A)$ of the coset vector space $\f_2^{n_r} / \mathrm{ker}(\tilde A)$.
Matrix $F$ defined in \cref{line:compression-map} maps standard basis vectors $e_1,\ldots,e_{n'_r}$ 
to the coset-basis vectors $b_1,\ldots,b_{n'_r}$, and the rest $e_{n'_r + 1},\ldots,e_{n_r}$ to the kernel basis. 
For this reason, columns $n'_{r}+1, \ldots, n_r$ of the products $A (F \oplus I_{\ki-k})$, $A_x (F \oplus I_{\ki-k})$,$A_z (F \oplus I_{\ki-k})$ are zero and so we can remove 
those columns and shrink the random bit vector of the general form to $n'_r$.
This is exactly the effect of matrix $F_{o\rightarrow o'}$.
Distinct values of $r'$ correspond to distinct cosets of $r$ and so the outcome space of 
an the quantum instrument corresponding to general from circuit is compressed as in~\cref{def:outcome-compression}.
Finally we notice that the first $n'_r$ rows of $((A')^T | (A'_x)^T | (A'_z)^T)$ are equal to first $n'_r$ rows of $\hat A$,
which ensures required properties of the rows.

An important edge case of the compression map~\cref{alg:compress-randomness} is when $n'_r = 0$ and $\ki = k$, that is, when the general form circuit's 
action is independent on the random bits and no observables of the input are measured.
In this case matrix $F_{o\rightarrow o'}$ is of size $n_r \times 0$ and matrices $A',A'_x,A'_z$ are of size $(\ko-k)\times 0$,$k\times 0$, $k\times 0$.
We use the convention that an $n_r \times 0$ matrix takes $n_r$-dimensional vectors as input and returns nothing as the output,
similarly $k\times 0$ takes nothing as an input and returns a zero vector of dimension $k$.
When zero-dimensional matrices are used in direct sum operation they pad the other matrix with rows or columns full of zeros. 
For example, if $F$ is $n_r \times 0$ matrix, then $F \oplus I_k = (0_{k\times n_r} | I_k)$. 


Additionally, a circuit's general form \cref{alg:general-from-choi} together with randomness compression~\cref{alg:compress-randomness}
can be used to find an outcome compression map for the quantum instrument corresponding to any stabilizer circuit $\mathcal{C}$. 
This is achieved in three steps.
First find the stabilizer circuit's general form circuit $\mathcal{C}_{\textrm{gen}}$ with $o \mapsto M o + v_0 $.
Second apply the compression map algorithm and find compression matrix $F_{o \rightarrow o'}$.
Third, find $M^{(-1)}$, the left inverse of $M$, that is a rectangular matrix such that $M^{(-1)} M = I$. 
This is always possible because $M$ has full column rank.
An outcome compression map for stabilizer circuit $\mathcal{C}$ with outcome vector $v$ is then $v \mapsto F_{o \rightarrow o'} M^{(-1)}(v-v_0)$.
The outcome compression map computes outcome $o'$ of the compressed general form corresponding to the circuit outcome $v$.
Computing the compression map is useful when checking equivalence of the action of two stabilizer circuits discussed in the next section.

The \runtime{} of \cref{alg:compress-randomness} is
\begin{equation} \label{eq:compression-map-runtime}
    O(\nr \no \min(\nr,\no)^{\omega-2}).
\end{equation}
This is because the \runtime{} of the row reduced echelon form calculation is exactly \cref{eq:compression-map-runtime}, 
as discussed in \cref{app:procedures},~\cref{eq:rref-complexity}.
The matrix $F^T$ is sparse with a dense sub-matrix of size $\nr \times O(\no)$, so there is no $O(\nr^2)$ contribution to the \runtime{} when $\nr > O(\no)$.

\newpage
\section{Stabilizer circuit verification}
\label{sec:equality-of-general-forms}

Given two stabilizer circuits, it is useful to know if they have equivalent action or not.
For example, we might want to verify that a circuit that meets certain hardware-related connectivity requirements performs the same action as a (simpler) reference circuit.
This motivates the following problem.
\begin{problem}[Stabilizer circuit comparison]
\label{prob:stabilizer-circuit-comparison}
Given two stabilizer circuits $\mathcal{C}_1$ and $\mathcal{C}_2$ with outcome vectors $v_1$ and $v_2$, find if they have equivalent action, i.e, if their corresponding quantum instruments are equivalent according to \cref{def:instrument-equality}.
If their actions are equivalent, find matrices $M_1$ and $M_2$ and vectors $u_1$ and $u_2$ such that:
(i) the action of $\mathcal{C}_1$ given outcome $v_1$ is the same as the action of $\mathcal{C}_2$ given outcome $v_2$ if and only if $M_1(v_1 + u_1) = M_2(v_2 + u_2)$, and
(ii) the action of $\mathcal{C}_j$ given outcome $v_j$ is the same as the action of $\mathcal{C}_j$ given outcome $v'_j$ if and only if $M_j(v_j - v'_j) = 0$.
\end{problem}

First let us understand this problem statement in the context of the quantum instrument equivalence \cref{def:instrument-equality}.
For two quantum instruments to be equivalent, there must be a bijection $f$ between outcomes of the outcome-compressed versions of the two instruments such that their action on any quantum state is the same for outcomes related by the bijection.
In \cref{prob:stabilizer-circuit-comparison}, we are tasked with finding the maps $v_1 \mapsto M_1(v_1 + u_1)$ and $v_2 \mapsto M_2(v_2 + u_2)$ such that when the output of these maps match, the circuits $\mathcal{C}_1$ and $\mathcal{C}_2$ have identical action on any input. 
These maps are in effect implementing the outcome compression for both circuits, but also an outcome relabeling that implements the bijection $f$.

Next, it can useful to consider a special set of scenarios in which not only are two input circuits equivalent, but where we can find an explicit map from the outcomes of one circuit to the other.
This is the case for example if the reference circuit $\mathcal{C}_2$ has no redundancy in its measurement outcomes,
then $\mathrm{ker}(M_2)$ is trivial making it possible to construct a map from the outcomes of $\mathcal{C}_1$ to the outcomes of $\mathcal{C}_2$. 
In this case there exists a left inverse $M^{(-1)}_2$ of $M_2$ and we can write $v_2 = v_0 + M v_1$
for $v_0 = u_2 + M_2^{(-1)}u_1$ and $M = M_2^{(-1)} M_1$.

A naive strategy aiming to solve \cref{prob:stabilizer-circuit-comparison} would be to construct general form circuits for $\mathcal{C}_1$ and $\mathcal{C}_2$ and to check equality of the objects comprising the general forms, namely their left and the right Clifford unitaries and the condition matrices.
However, this strategy would fail because general form circuits are not unique and therefore even if two circuits have identical action, the general form circuits we find for each may be different.
For this reason we develop an algorithm in \cref{sec:comparing-gen-form} to compare two general form circuits, thereby solving a special case of~\cref{prob:stabilizer-circuit-comparison} when $\mathcal{C}_1$ and $\mathcal{C}_2$ are general form circuits.
The solution to the general case of \cref{prob:stabilizer-circuit-comparison} reduces to this special case as shown below. 

\begin{algorithm}[\texttt{Stabilizer circuit comparison}] 
\label{alg:stabilizer-circuit-comparison}
\begin{algorithmic}[1]
\Blank
\Input 
\begin{itemize}
    \item Stabilizer circuits $\mathcal{C}_1$, $\mathcal{C}_2$ with the same number of input and output qubits.
\end{itemize}
\Output One of the following:
\begin{itemize}
    \item \texttt{False} (the stabilizer circuits are not equivalent).
    \item \texttt{True} (the stabilizer circuits are equivalent), matrices $M_1,M_2$ and vectors $u_1,u_2$ such that conditions in \cref{prob:stabilizer-circuit-comparison} are satisfied.
\end{itemize}
\State Find a general form circuit $\mathcal{C}_{\mathrm{gen},j}$ for $\mathcal{C}_j$ and outcome map $o_j \mapsto \tilde u_j + \tilde M_j o_j$ 
from the outcomes of $\mathcal{C}_{\mathrm{gen},j}$ to the outcomes of $\mathcal{C}_j$, for $j \in [2]$. \Comment \cref{alg:general-from-choi}
\State Solve the stabilizer circuit comparison~\cref{prob:stabilizer-circuit-comparison} for $\mathcal{C}_{\mathrm{gen},1}$ and $\mathcal{C}_{\mathrm{gen},2}$. \label{line:general-form-comparison}
\Comment{\cref{alg:general-form-circuit-comparison}}
\If{ $\mathcal{C}_{\mathrm{gen},1}$ and $\mathcal{C}_{\mathrm{gen},2}$ are equivalent, with matrices $\hat M_1,\hat M_2$ and vectors $\hat u_1, \hat u_2$ \\ ~~~~~~~~ defining the equivalent outcomes,}
\State\label{line:generic-comparison-matrices} $M_j \leftarrow \hat M_j \tilde M_j^{(-1)},~~u_j \leftarrow \hat u_j + \tilde M_j^{(-1)} \tilde u_j$ for $j \in [2]$ \Comment $\tilde M_j^{(-1)}$ is a left-inverse of $\tilde M_j$
\State \Return \texttt{True}, $M_1,M_2,u_1,u_2$
\Else 
\State \Return \texttt{False}
\EndIf
\end{algorithmic}
\end{algorithm}

We briefly discuss the correctness of \cref{alg:stabilizer-circuit-comparison}.
By definition, each input circuit is equivalent to its general form circuit, and therefore if one observes that the two general form circuits are inequivalent, this implies the two input circuits are also inequivalent.
When the input circuits are equivalent, we need to establish that the matrices $M_1,M_2$ and vectors $u_1,u_2$ 
have the required properties. 
This immediately follows from the fact that there is an affine one-to-one correspondence 
between a circuit's outcomes and its general form outcomes. 

To analyze the \runtime{} of \cref{alg:stabilizer-circuit-comparison} we assume that the stabilizer circuits $\mathcal{C}_1$ and $\mathcal{C}_2$
use bounded-weight Pauli operators and have the same properties as in our analysis of the general form \cref{alg:general-from-choi} \runtime{}
in~\cref{eq:general-form-runtime}. 
Let the parameters of the input circuit $\mathcal{C}_j$ be as in \cref{sec:circuit-parameters} but with superscript $(j)$.
The \runtime{} of \cref{alg:stabilizer-circuit-comparison} is 
\begin{equation}\label{eq:circuit-comparison-runtime}
    O\left(\sum_{j \in \{1,2\}}  \nmax^{(j)}\cdot\left(\Nu^{(j)}+(\Ncnd^{(j)}+\nM^{(j)})(\nr^{(j)}+\nm^{(j)} + 1) + \nmax^{(j)}(\nmax^{(j)} + \nr^{(j)})\right)\right).
\end{equation}
The \runtime{} of \cref{alg:stabilizer-circuit-comparison} is dominated by the cost of finding general forms for the input circuits rather than the cost of comparing those general forms.
The \runtime{} of \cref{line:generic-comparison-matrices} is also negligible because the left inverse of matrices $\tilde M_j$ is one-sparse.

When we find that two circuits are inequivalent, sometimes it is possible to find a ``correction'' unitary that can be appended to one of the circuits to achieve equivalence.
We have designed \cref{alg:stabilizer-circuit-comparison} in such a way that the correction unitary can be easily found when it exists~(\cref{tab:equality-correction-unitaries}).
We either find a Clifford correction unitary or a Pauli correction unitary conditioned on the circuit's general form outcomes.
The map $o_1= \tilde M_1^{(-1)}(v_1 + \tilde u_1)$ from the circuit's outcomes to its general form outcomes allows us to easily convert Pauli correction unitaries $X^{\ip{c_x,o_1}} Z^{\ip{c_z o_1}}$ 
conditioned on the general form's outcomes $o_1$ to the Pauli correction unitaries conditioned on the circuit's outcomes $v_1$:
\begin{equation}
X_j^{\ip{c_x,o_1}} Z_j^{\ip{c_z,o_1}} = X_j^{\ip{(\tilde M_1^{(-1)})^T c_x, v_1 + \tilde u_1}} Z_j^{\ip{(\tilde M_1^{(-1)})^T c_z, v_1 + \tilde u_1}}.
\end{equation}

The rest of the section is dedicated to solving~\cref{prob:stabilizer-circuit-comparison} for two general form circuits~(\cref{line:general-form-comparison}, \cref{{alg:stabilizer-circuit-comparison}}). 
This special case is also relevant for verifying the logical action of logical operation circuits, discussed in \cref{sec:logical-general-form}.

\subsection{Relating (un)encoding circuits}
\label{sec:compare-encoding-unencoding}

As discussed in~\cref{sec:stabilizer-circuit-general-form} and highlighted in~\cref{fig:general-form}, a general form circuit consists of three steps: 
(1) an unencoding circuit, (2) Pauli unitaries conditioned on both random bits and the syndrome measured by the unencoding circuit and (3) an encoding circuit. 
To compare two general form circuits, we first consider how the encoding and unencoding circuits of one general form can be related to those of the other general form.

\begin{figure}[htp]
    \centering
     \begin{subfigure}[b]{\textwidth}
        \centering
        \loadfig{fig-unencoding-circuit-compare}
        \caption{The relation between unencoding circuits for equivalent codes.}
        \label{fig:unencoding-circuits-compare}
    \end{subfigure}
    \begin{subfigure}[b]{\textwidth}
        \centering
        \loadfig{fig-encoding-circuit-compare}
        \caption{The relation between encoding circuits for equivalent codes.}
        \label{fig:encoding-circuits-compare}
    \end{subfigure}
    \begin{subfigure}[b]{\textwidth}
        \centering
        \includegraphics[scale=0.3]{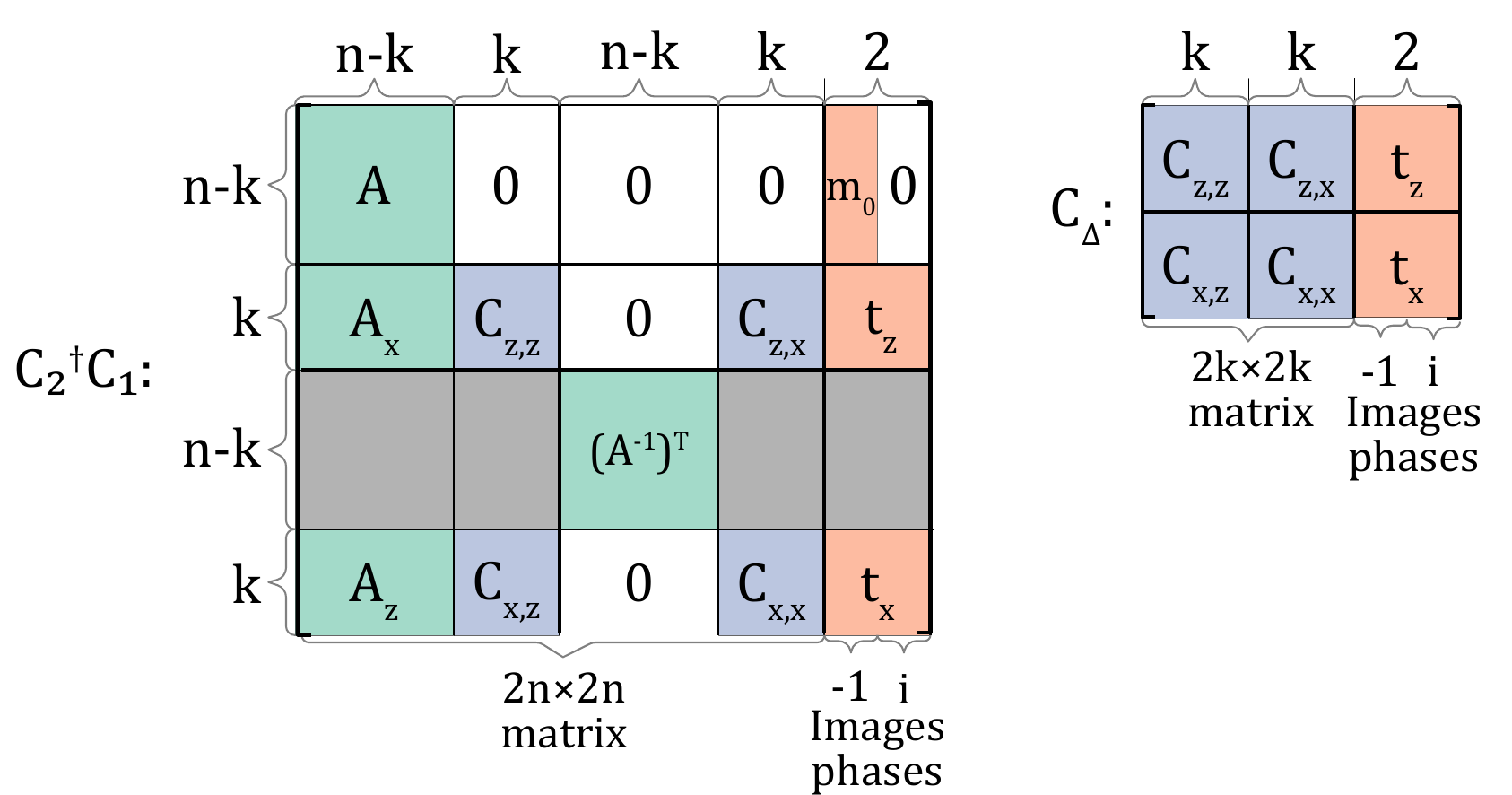}
        \caption{Deriving Clifford unitary $C_\Delta$, condition matrices $A,A_x,A_z$ and vector $m_0$ from $C_2^\dagger C_1$. See~\cref{prop:encoding-comparison-solution}
        for the correctness proof.} 
        \label{fig:encoding-comparison-solution}
    \end{subfigure}
    \caption[Comparing encoding and unencoding circuits]{Relating (un)encoding circuits. 
    When codes $\comm{n,k,C_1}$, $\comm{n,k,C_2}$ are equivalent,
        there exist Clifford unitary $C_\Delta$, matrices $A,A_x,A_z$ and vector $m_0$ such that the circuits in subfigures (a),(b) have 
        equivalent action. The map between outcomes is $m_1 = m_0 + A m_2$.}
    \label{fig:endcoding-unencoding-circuits}
\end{figure}

Recall that encoding and unencoding circuits for the code $\comm{n,k,C}$ are fully specified by the Clifford unitary $C$ and integers $n,k$.
We define the encoding circuit for $\comm{n_1,k_1,C_1}$ to be the circuit in the left in \cref{fig:encoding-circuits-compare}, similarly we define the unencoding circuit for $\comm{n_1,k_1,C_1}$ to be the circuit in the left in \cref{fig:encoding-circuits-compare}. 
We say that a pair of codes are equivalent\footnote{This definition of code equivalence is motivated by stabilizer instrument equivalence since the action of any circuit that measures any complete set of generators of $S_1$ is equivalent to the action of any circuit that measures any complete set of generators of $S_2$ if and only if the two codes are equivalent under this definition.} if their stabilizer groups $S_1$ and $S_2$ satisfy $S_1 \subset (S_2 \cup -S_2)$ and $S_2 \subset (S_1 \cup -S_1)$.
We use this notion of code equivalence to relate two encoding circuits (or two unencoding circuits) as follows.

\begin{problem}[Relating (un)encoding circuits]
\label{prob:compare-encoding-unencoding-circuits}
Consider two (un)encoding circuits defined by $n_1,k_1,C_1$ and $n_2,k_2,C_2$ respectively.
First find if the (un)encoding circuits are related, which is the case if and only if the corresponding stabilizer codes $\comm{n_1,k_1,C_1}$ and $\comm{n_2,k_2,C_2}$ are equivalent.
If the (un)encoding circuits are related, find a Clifford unitary $C_L$, condition matrices $A_x,A_z,A$ and a vector $m_0$ that relates the second circuit to the first according to \cref{fig:endcoding-unencoding-circuits}.
\end{problem}

Note that for fixed outcomes $m_1$, $m_2$ the equalities~(up to a global phase) in \cref{fig:encoding-circuits-compare,fig:unencoding-circuits-compare} are equivalent.
Indeed, \cref{fig:encoding-circuits-compare} can be obtained from \cref{fig:unencoding-circuits-compare} by applying the conjugate-transpose to the product 
of the linear maps corresponding to the circuits, and vice versa.
A solution to \cref{prob:compare-encoding-unencoding-circuits} is given in~\cref{proc:compare-encoding-unencoding-circuits}. 
The correctness proof is~\cref{prop:encoding-comparison-solution} with $C = C_2^\dagger C_1$, the runtime is
\begin{equation}\label{eq:unencoding-difference-runtime}
O(n^\omega) 
\end{equation}
and is dominated by computing $C_2^\dagger C_1$.

\begin{figure*}[htp]
\begin{procedure}[\texttt{Relating (un)encoding circuits}] 
\label{proc:compare-encoding-unencoding-circuits}
\begin{algorithmic}[1]
\Blank
\Input 
\begin{itemize}
    \item Unitaries $C_1$ and $C_2$ and integers $n_1$, $n_2$, $k_1$ and $k_2$ that specify (un)encoding circuits for $\comm{n_1,k_1,C_1}$ and $\comm{n_2,k_2,C_2}$.
\end{itemize}

\Output One of the following:
\begin{itemize}
    \item \texttt{Not related} ($\comm{n_1,k_1,C_1}$ and $\comm{n_2,k_2,C_2}$ are not equivalent).
    \item \texttt{Related} ($\comm{n_1,k_1,C_1}$ and $\comm{n_2,k_2,C_2}$ are equivalent), Clifford unitary $C_\Delta$, condition matrices $A,A_x,A_z$ 
    and vector $m_0$ which relate the circuits as in \cref{prob:compare-encoding-unencoding-circuits} are satisfied.
\end{itemize}
\If{$n_1 \ne n_2 $ or $k_1 \ne k_2$} \Return \texttt{False}.  \Comment Code dimensions do not match \EndIf
\State Set $C \leftarrow C_2^\dagger C_1$, $n \leftarrow n_1$, $k \leftarrow k_1$.
\State Set $A \leftarrow 0_{(n-k)\times(n-k)}, A_x \leftarrow 0_{ k\times(n-k)}, A_z \leftarrow 0_{ k\times(n-k)}$, $m_0 = 0^{n-k}$.
\For{$P,j \in \{ (C Z_{j'} C^\dagger,j') : j' \in [n-k] \}$ } \Comment for $C_1 Z_{j} C_1^\dagger$, stabilizers of $\comm{n_1,k_1,C_1}$
\If{$x(P) \ne 0 $ or $ z(P)_{[n-k+1,n]} \ne 0$} 
\State \Return \texttt{Not related} \Comment $\pm C_1 Z_{j} C_1^\dagger$ is not a stabilizer of $\comm{n_2,k_2,C_2}$
\Else 
\State $A_j = z(P)_{[n-k]}$, $(m_0)_j = s(P)/2$
\EndIf
\EndFor
\State Compute $A_x,A_z, C_\Delta$ from the equations $C Z_{j+n-k} C^\dagger = Z^{(A_x)_j} \otimes  C_\Delta Z_j C_\Delta^\dagger$,
\Statex $C X_{j+n-k} C^\dagger = Z^{(A_z)_j} \otimes  C_\Delta X_j C_\Delta^\dagger,~j \in [k]$
\State \Return \texttt{Related}, $C_\Delta$, $A$, $A_x$, $A_z$, $m_0$ 
\Blank \Comment See \cref{fig:encoding-comparison-solution} for an illustration of relation to $C = C_2^\dagger C_1$
\end{algorithmic}
\end{procedure}
\end{figure*}

\cref{proc:compare-encoding-unencoding-circuits} is useful outside of the context of this section. For example, it can be used for general from circuit classification, 
that is determining if an action of the circuit corresponds to some common circuit family such as measurement of a set of commuting Pauli operators.

\subsection{Comparing general form circuits}
\label{sec:comparing-gen-form}

Here we provide the general form circuit comparison~\cref{alg:general-form-circuit-comparison} that solves the circuit comparison~\cref{prob:stabilizer-circuit-comparison} for general form circuits.
We first give a high-level overview of the algorithm, then discuss its correctness and explain how data produced by the algorithm can be used to find a correction unitary to append to the first circuit to achieve equivalence.
At the end of this section we discuss the algorithm's \runtime{}.

\cref{alg:general-form-circuit-comparison} consists of two main steps.
The first step checks if the left and right Clifford unitaries are consistent with the two circuits having equivalent action.
If they are, the algorithm proceeds to the second step, which checks if the condition matrices are consistent with the two circuits having equivalent action.
The algorithm returns \texttt{True} if the checks in both of these steps are passed, and \texttt{False} otherwise.
In the next two paragraphs we give a high-level overview of these two steps.

The first step of \cref{alg:general-form-circuit-comparison} makes use of the fact that equivalence of $\comm{\ki,k_1,\Ci_1}$, $\comm{\ki,k_2,\Co_2}$ and $\comm{\ko,k_1,D_1}$, $\comm{\ko,k_2,D_2}$ is a necessary condition for the two general form circuits to have equivalent circuit action, so the algorithm returns \texttt{False}
if the codes are not equivalent.
If the codes are equivalent, the algorithm transforms the first general form circuit $\mathcal{C}_1$ (see \cref{fig:general-form-circuit-comparison}(a)) into an equivalent circuit $\mathcal{C}'_1$
(see \cref{fig:general-form-circuit-comparison}(b)) using the relating (un)encoding~\cref{proc:compare-encoding-unencoding-circuits} from~\cref{sec:compare-encoding-unencoding}.
The circuit $\mathcal{C}'_1$ more closely resembles the second general form circuit $\mathcal{C}_2$ -- in particular $\mathcal{C}'_1$ has the same left and right Clifford unitaries as $\mathcal{C}_2$.
At this point, it is clear that circuits $\mathcal{C}'_1$ and $\mathcal{C}_2$ are equivalent up to a conditional Pauli when $C_L$ is equal to $C_R$ up to a Pauli.
If this this the case we further transform $\mathcal{C}'_1$ into the general form circuit $\mathcal{C}''_1$ (see \cref{fig:general-form-circuit-comparison}(c)) by consolidating the various conditional Paulis, making use of the compression map~\cref{alg:compress-randomness} from~\cref{sec:outcome-compression-map} to do so.
We then further transform $\mathcal{C}''_1$ to $\mathcal{C}'''_1$ (see \cref{fig:general-form-circuit-comparison}(d)) by pulling the Clifford unitary $C_L^\dagger$ through the conditional Paulis to meet $C_R$, transforming the conditional Paulis it is pulled through.

\begin{figure}[htb]
    \loadfig{fig-general-form-comparison-1}
    \loadfig{fig-general-form-comparison-2}
    \loadfig{fig-general-form-comparison-3}
    \loadfig{fig-general-form-comparison-4}
    \caption[General form circuit comparison. Transformation step.]{
    To verify that general form circuits $\mathcal{C}_1$ and $\mathcal{C}_2$ are equivalent, \cref{alg:general-form-circuit-comparison} passes through a sequence of equivalent circuits: (a) $\mathcal{C}_1$, (b) $\mathcal{C}'_1$, (c) $\mathcal{C}''_1$ and (d) $\mathcal{C}'''_1$.
    (b) If code pairs $\comm{\ki,k_1,\Ci_1}$, $\comm{\ki,k_1,\Co_2}$ and $\comm{\ko,k_1,D_1}$,$\comm{\ko,k_1,D_2}$ are equivalent, we form the circuit $\mathcal{C}'_1$ with unencoding and encoding circuits related by $C_L,A^L,A_x^L,A_z^L,m_0^L$~(\cref{fig:unencoding-circuits-compare}) and $C_R,A^R,A_x^R,A_z^R,m_0^R$~(\cref{fig:encoding-circuits-compare}) respectively.
    The action of  $\mathcal{C}'_1$ matches that of $\mathcal{C}_1$ when $m_1 = (m_0^L + A^L m'_1)$.
    (c) If $C_R \simeq C_L X^{s^0_x} Z^{s^0_z}$, we form $\mathcal{C}''_1$ by consolidating the conditional Paulis in $\mathcal{C}'_1$.
    (d) We form $\mathcal{C}'''_1$ by transforming the conditional Paulis in $\mathcal{C}''_1$ by the Clifford $C_L$.
    }
    \label{fig:general-form-circuit-comparison}
\end{figure}

The second step of \cref{alg:general-form-circuit-comparison}, beginning on \cref{line:comparison-outcome-compress-b}, compares the condition matrices of $\mathcal{C}'''_1$ and $\mathcal{C}_2$.
To do so, we split each condition matrix into two sub-matrices for each of $\mathcal{C}'''_1$ and $\mathcal{C}_2$.
The first sub-matrix determines the dependence of each conditional Pauli on the measurement outcomes, and these sub-matrices must be equal for $\mathcal{C}'''_1$ and $\mathcal{C}_2$ if the circuits have equivalent action.
The second sub-matrix determines the dependence of each conditional Pauli on the random bits and do not need to be equal for $\mathcal{C}'''_1$ and $\mathcal{C}_2$ to have equivalent equivalent, but must satisfy a weaker equivalence condition which we check using the compression map~\cref{alg:compress-randomness}.

\begin{figure*}[p]
\begin{algorithm}[\texttt{General form circuit comparison}] 
\label{alg:general-form-circuit-comparison}
\begin{algorithmic}[1]
\Blank
\Input 
\begin{itemize}
    \item Two general form circuits $\mathcal{C}_1,\mathcal{C}_2$ with $\ki$ input qubits, $\ko$ output qubits, 
    $n^{(1)}_r$, $n^{(2)}_r$ random bits,
    left Clifford unitaries $\Ci_1,\Ci_2$, right Clifford unitaries $\Co_1,\Co_2$,
    $k_1$, $k_2$ inner qubits,
    condition matrices $A^{(1)},A^{(1)}_x,A^{(1)}_z$ and $A^{(2)},A^{(2)}_x,A^{(2)}_z$.
\end{itemize}
\Output One of the following:
\begin{itemize}
    \item \texttt{False} (the general form circuits are not equivalent)
    \item \texttt{True} (the general form circuits are equivalent), matrices $M_1,M_2$ and vectors $u_1,u_2$ such that conditions in \cref{prob:stabilizer-circuit-comparison} are satisfied.
\end{itemize}
\Blank \hrulefill \Comment Check equivalence of $\comm{\ki,k_1,\Ci_1}$, $\comm{\ki,k_2,\Co_2}$ and of $\comm{\ko,k_1,D_1}$, $\comm{\ko,k_2,D_2}$
\State\label{line:comparison-unencoding}Relate the unencoding circuits for $\comm{\ki,k_1,\Ci_1}$, $\comm{\ki,k_2,\Co_2}$  \Comment \cref{proc:compare-encoding-unencoding-circuits}
\If{ $\comm{\ki,k_1,\Ci_1}$, $\comm{\ki,k_2,\Co_2}$ are not equivalent}
\label{line:comparison-left-stab-fail}\Return \texttt{False} 
\Else
\Comment See \cref{fig:unencoding-circuits-compare}
\label{line:comparison-unencoding-diff}\State Let $C_L, A^L,A_x^L,A_z^L, m_0^L $ relate the unencoding circuits for $\comm{\ki,k_1,\Ci_1}$, $\comm{\ki,k_2,\Co_2}$ 
\EndIf
\State\label{line:comparison-encoding}Relate the encoding circuits for $\comm{\ko,k_1,D_1}$, $\comm{\ko,k_2,D_2}$  \Comment \cref{proc:compare-encoding-unencoding-circuits}
\If{ $\comm{\ko,k_1,D_1}$, $\comm{\ko,k_2,D_2}$ are not equivalent} 
\label{line:comparison-right-stab-fail}\Return \texttt{False} 
\Else
\Comment See \cref{fig:encoding-circuits-compare}
\label{line:comparison-encoding-diff}\State Let $C_R, A^R,A_x^R,A_z^R, m_0^R$ relate the encoding circuits for $\comm{\ko,k_1,D_1}$, $\comm{\ko,k_2,D_2}$ 
\EndIf
\Blank \hrulefill \Comment Check equivalence up to Pauli unitaries
\If{ $C_R C_L^\dagger$ is not a Pauli unitary }
\label{line:comparison-clifford-fail}\Return \texttt{False} \Comment \cref{tab:equality-correction-unitaries}
\Else
\State\label{line:comparison-pauli-shift}Find $s^0_x,s_z^0$ from equation $ X^{s_x^0} Z^{s_z^0} \simeq C_L^\dagger C_R$
\EndIf
\Blank \hrulefill \Comment Simplify the unitary acting on the $k_1$ inner qubits in $\mathcal{C}'_1$
\State\label{line:comparison-b-matrix}Let $\Delta s = (A^{R})^{-1}(A^{(1)}(0^{n_r^{(1)}}\oplus m_{0}^{L})+m_{0}^{R})$ and $B = (A^R)^{-1} A^{(1)} ( I_{n^{(1)}_r} \oplus A^L)$.
\State\label{line:comparison-bxt-matrix}For $\sigma\in\{x,z\}$, $\tilde B_\sigma \leftarrow \left(\boldsymbol{0}_{k\times n_{r}^{(1)}}|A_{\sigma}^{L}\right) + A_{\sigma}^{(1)}\left(I_{n_{r}^{(1)}}\oplus A^{L}\right) + A_{\sigma}^{R}B$.
\State\label{line:comparison-delta-s}For $\sigma \in \{x,z\}$, let $\Delta \tilde s_\sigma \leftarrow s_\sigma^0 + A_{\sigma}^{(1)}(0^{n_r^{(1)}}\oplus m_{0}^{L}) + A_{\sigma}^{R} \Delta s $
\State\label{line:comparison-bx-matrix}Find $B_x$, $B_z, \Delta s_x, \Delta s_z$ from
\Blank~~~
$
 X^{(B_x|\Delta s_x) s} Z^{(B_z|\Delta s_z) s} \simeq C_L X^{ (\tilde B_x|\Delta \tilde s_x) s} Z^{ (\tilde B_z | \Delta \tilde s_z) s} C_L^\dagger ,\text{for all } s
$ \Comment \cref{prop:batch-pauli-images}
\Blank \hrulefill \Comment Check equivalence for a fixed outcome
\State\label{line:comparison-outcome-compress-b}Find outcome compression matrix $F_1$, condition matrices $B',B'_x,B'_z$ for a general form circuit with $k$ input qubits, 
condition matrices $B_{\ast,[n_r^{(1)}]}, (B_x)_{\ast,[n_r^{(1)}]}, (B_z)_{\ast,[n_r^{(1)}]}$. \Comment{\cref{alg:compress-randomness}}

\State\label{line:comparison-random-shift}Let $r_0$ be a solution to linear equations $ \left(\begin{array}{c} B' \\ \hline  B'_x \\ \hline  B'_z \end{array}\right) r_0 = \Delta s \oplus \Delta s_x \oplus \Delta s_z $
\If{there is no $r_0$ exists}
\label{line:comparison-pauli-fail}\Return\texttt{False} \Comment \cref{tab:equality-correction-unitaries}
\EndIf
\Blank \hrulefill \Comment Check equivalence for all measurement outcomes
\State\label{line:comparison-ams}Let $A^{(2,m)} \leftarrow A^{(2)}_{\ast,[n_r^{(2)}+1,\cdot]}$, $B^{(m)} \leftarrow  B_{\ast,[n_r^{(1)}+1,\cdot]}$, similarly define $A^{(2,m)}_x,A^{(2,m)}_z,B^{(m)}_x,B^{(m)}_z$
\If{$B^{(m)} \ne A^{(2,m)}$ or $B_x^{(m)} \ne A_x^{(2,m)}$ or $B_z^{(m)} \ne A_z ^{(2,m)}$ }
\label{line:comparison-conditional-pauli-on-meas-fail}\Return \texttt{False} \Comment \cref{tab:equality-correction-unitaries}
\EndIf
\Blank \hrulefill \Comment Check equivalence for all outcomes
\State\label{line:comparison-outcome-compress-a}Find outcome compression matrix $F_2$, condition matrices $A',A'_x,A'_z$ for a general form circuit with $k$ input qubits, 
condition matrices $A^{(2)}_{\ast,[n_r^{(2)}]}, (A_x^{(2)})_{\ast,[n_r^{(2)}]}, (A_z^{(2)})_{\ast,[n_r^{(2)}]}$. \Comment{\cref{alg:compress-randomness}}
\If{$ \left(\begin{array}{c} B' \\ \hline  B'_x \\ \hline  B'_z \end{array}\right) \ne 
      \left(\begin{array}{c} A' \\ \hline  A'_x \\ \hline  A'_z \end{array}\right)$}
\label{line:comparison-conditional-pauli-on-random-fail}\Return \texttt{False}  \Comment \cref{tab:equality-correction-unitaries}
\EndIf
\State\label{line:comparison-ok}\Return \texttt{True}, $M_1 = F_1 \oplus (A^L)^{-1}, M_2 = F_2 \oplus I_{\ki-k},~u_1 = F_1^{(-1)} r_0 \oplus s_0^L$, $u_2 = 0^{n_r^{(2)} + \ki - k}$
\end{algorithmic}
\end{algorithm}
\end{figure*}

In what follows, we show the correctness of~\cref{alg:general-form-circuit-comparison} by considering the following sequence of conditions which are checked in order by the algorithm:
\begin{enumerate}[noitemsep]
    \item[(A)] equivalence of $\comm{\ki,k_1,\Ci_1}$, $\comm{\ki,k_2,\Ci_2}$ and of $\comm{\ko,k_1,\Co_1}$, $\comm{\ko,k_2,\Co_2}$~(\cref{line:comparison-left-stab-fail,line:comparison-right-stab-fail}),
    \item[(B)] equivalence up to Pauli unitaries~(\cref{line:comparison-clifford-fail}),
    \item[(C)] equivalence for a fixed outcome vector~(\cref{line:comparison-pauli-fail}),
    \item[(D)] equivalence for all measurement outcomes but fixed random bit vector~(\cref{line:comparison-conditional-pauli-on-meas-fail}),
    \item[(E)] equivalence (equivalence for all outcomes, including random bits~(\cref{line:comparison-conditional-pauli-on-random-fail})).
\end{enumerate}
We show that each condition must hold if the two input circuits have equivalent action, and that the conditions become progressively stricter down the list, with the last condition being that the circuits are equivalent. 
To show the correctness of \cref{alg:general-form-circuit-comparison}, we verify that it sequentially checks these conditions, returning \texttt{False} if it encounters a condition that is not satisfied, and returns \texttt{True} if and only if all of the conditions hold.
While discussing each of these conditions, we also build up the contents of \cref{tab:equality-correction-unitaries}, which specifies a modification that can be applied to the first input circuit to make it equivalent to the second input circuit.  

(A) Let us show that the circuit equivalence of $\mathcal{C}_1$ and $\mathcal{C}_2$ implies the code equivalence of $\comm{\ni,k,\Ci_1}$ and $\comm{\ni,k,\Ci_2}$.
This follows from two observations.
First, all the Choi states of equivalent general form circuits are equal up to Pauli unitaries
and the Choi state's stabilizer groups are equal up to signs.
This is because, for fixed $j \in [2]$, \cref{fig:general-form-choi} implies that all the Choi states $\ket{\Psi_{o_j}}$ corresponding to $\mathcal{C}_j$'s stabilizer instrument are equal 
up to Pauli unitaries, that is $\ket{\Psi_{o_j}}$ equals $\ket{\Psi_{o'_j}}$ up to Pauli unitaries for any outcomes $o_j,o'_j$.
Second, the fact that the stabilizer groups of the Choi circuits of $\mathcal{C}_1$ and $\mathcal{C}_2$ match up to signs implies that the stabilizer groups of $\comm{\ni,k,\Ci_1}$ and $\comm{\ni,k,\Ci_2}$ match up to signs too (which is the defining property of code equivalence).
This is because the stabilizer group of $\comm{\ni,k,\Ci_j}$ is equal up to signs to the subgroup of the stabilizer group of $\ket{\Psi_{o_j}}$ supported on last $\ki$ qubits,
for all outcomes $o_j$~(see stabilizer generators in~\cref{fig:general-form-choi}).
Similarly, the circuit equivalence of $\mathcal{C}_1$ and $\mathcal{C}_2$ implies the code equivalence of $\comm{\no,k,\Co_1}$ and $\comm{\no,k,\Co_2}$.
We check the equivalence of these stabilizer groups using~\cref{proc:compare-encoding-unencoding-circuits} in~\cref{line:comparison-unencoding,line:comparison-encoding} and the algorithm returns \texttt{Fail} if it is not the case.

\begin{table}[htp]
    \centering
    \begin{tabular}{|l|c|c|}
    \hline 
    Reason & \multirow{2}{*}{\rotatebox[origin=c]{-90}{Line}} & Correction unitary $U$ to achieve equivalence \\
    for inequivalence &      & via appending $U$ to general form circuit $\mathcal{C}_1$ \\
    \hline 
    \hline 
    Circuits differ by & \multirow{2}{*}{\labelcref{line:comparison-clifford-fail}} & \multirow{2}{*}{$U = \Co_2 \left( I_{\ko-k_1} \otimes \left(C_L  C_R^\dagger \right)\right) \Co^\dagger_2$} \\
    a Clifford unitary & & \\
    \hline 
    Circuits differ by & \multirow{2}{*}{\labelcref{line:comparison-pauli-fail}} & \multirow{2}{*}{$U = \Co_2 \left( X^{\Delta s} \otimes X^{\Delta s_x} Z^{\Delta s_z} \right) \Co^\dagger_2$} \\
    a Pauli unitary & & \\
    \hline 
    Circuits differ by & \multirow{4}{*}{\labelcref{line:comparison-conditional-pauli-on-meas-fail}} & {Let $\Delta A= B^{(m)} - A^{(2,m)}, \Delta A_\sigma = B_\sigma^{(m)} - A_\sigma^{(2,m)}$} \\
    a Pauli unitary & & for $\sigma \in \{x,z\}$, $f(m_1) = (A^L)^{-1}(m_1 + s_0^L)$ \\
    conditioned on & &  \multirow{2}{*}{$U(m_1) = \Co_2 \left( X^{\Delta A f(m_1) } \otimes X^{\Delta A_x f(m_1) } Z^{\Delta A_z f(m_1)} \right) \Co^\dagger_2$} \\ 
    measurements & & \\
    \hline 
    Circuits differ by & \multirow{4}{*}{\labelcref{line:comparison-conditional-pauli-on-random-fail}} & Let $\tilde n_r = \mathrm{ncols}(A')$, $l = n_r^{(1)}-\tilde n_r$, for $\sigma \in \{x,z\}$ \\
    a Pauli unitary & & $\Delta B = B_{\ast,[n^{(1)}_r]} + (A'|\mathbf{0}_{(\ko-k)\times l}), \Delta B_\sigma = (B_\sigma)_{\ast,[n^{(1)}_r]} + (A'_\sigma|\mathbf{0}_{k \times l})$ \\
    conditioned on & &  \multirow{2}{*}{$U(r_1) = \Co_2 \left( X^{\Delta B r_1} \otimes X^{ \Delta B_x r_1} Z^{ \Delta B_z r_1 } \right) \Co^\dagger_2$} \\
    random bits & & \\
    \hline 
    \end{tabular}
    \caption[General form circuit comparison recovery from fails]{\label{tab:equality-correction-unitaries}
             Cases when the comparison of general form circuits $\mathcal{C}_1$, $\mathcal{C}_2$ using \cref{alg:general-form-circuit-comparison} returns \texttt{False}
             and a correction unitary exists that can be appended to $\mathcal{C}_1$ to achieve circuit equivalence.
             We use the notation $r_1$ for random bit vector and $m_1$ for measurement outcome vector of $\mathcal{C}_1$ when defining conditional correction Pauli unitaries.
             Variables in the equations for the correction unitaries are defined in the pseudo-code of \cref{alg:general-form-circuit-comparison}.
             Note that it is not possible to correct Pauli unitaries conditioned on random bits by modifying $\mathcal{C}_1$ when number of columns $\mathrm{ncols}(A')$ of $A'$
             is greater than $n_r^{(1)}$; in this case $\mathcal{C}_2$ must be modified instead.
            } 
\end{table}

(B) Given that $\comm{\ni,k,\Ci_1},\comm{\ni,k,\Ci_2}$ are equivalent and $\comm{\no,k,\Co_1},\comm{\no,k,\Co_2}$ are equivalent since condition (A) has passed, we compute the relation between the
corresponding (un)encoding circuits~(\cref{line:comparison-unencoding-diff,line:comparison-encoding-diff})
and transform the first general form circuit $\mathcal{C}_1$ into an equivalent circuit $\mathcal{C}'_1$ as shown in~\cref{fig:general-form-circuit-comparison}(a).
If we ignore all the Pauli unitaries, $\mathcal{C}'_1$ is almost the same circuit as the second general form circuit $\mathcal{C}_2$,
except the product $C_R C_L^\dagger $ on the inner qubits. 
The stabilizer groups of the Choi states $\ket{\Psi_{o_1}}$ and $\ket{\Psi_{o_2}}$ match up to sign (which must be the case if $\mathcal{C}_1$ and $\mathcal{C}_2$ are equivalent) if and only if $C_L^\dagger C_R$ is a Pauli unitary.
We check this in~\cref{line:comparison-clifford-fail}, and the algorithm returns \texttt{Fail} if it is not the case.
We can fix this by appending a correction unitary to $\mathcal{C}_1$ from~\cref{tab:equality-correction-unitaries} that effectively 
cancels $C_R C_L^\dagger $ on the inner qubits.

Given that the algorithm has not yet returned \texttt{False}, we can assume from here onward that $C_R C_L^\dagger$ is a Pauli unitary.
Before moving on to condition (C), we first simplify the unitary acting on the inner qubits of $\mathcal{C}'_1$ to form the circuit $\mathcal{C}''_1$ as shown in \cref{fig:general-form-circuit-comparison}(c) to make the next steps more streamlined.
To see why $B$ and $\Delta s$ are set as specified in \cref{line:comparison-b-matrix}, note that their defining relation is $s_1' = B o_1' + \Delta s$, and make use of the encoding and unencoding circuit relations in going from $\mathcal{C}_1$ to $\mathcal{C}_1'$ are $s_1' = m_0^R + A^R s_1$ and $m_1' = m_0^L + A^L m_1$ respectively, where $s_1 = A^{(1)} o_1$ (and making use of the definitions $o_1 = r_1 \oplus m_1$, $o_1' = r_1 \oplus m_1'$). 
To see why $\tilde B_x$, $\tilde B_z$, $\Delta \tilde s_x$ and $\Delta \tilde s_z$ are set as specified on \cref{line:comparison-bxt-matrix,line:comparison-delta-s}, note that they must satisfy $\tilde B_\sigma o_1' + \Delta \tilde s_\sigma = A_\sigma^L m_1' + A_\sigma^{(1)} \tilde o_1 + A_\sigma^R s_1' + s_\sigma^0$ to ensure that $\mathcal{C}''_1$ has the same action as $\mathcal{C}'_1$, and then make use of the fact that $\tilde o_1 = r_1 \oplus (A^L m'_1 + s_0^L)$ and $s_1' = B o_1' + \Delta s$.

By the end of \cref{line:comparison-bx-matrix}, we have the circuit $\mathcal{C}'''_1$ (see \cref{fig:general-form-circuit-comparison}(d)), which is very close to the general form circuit $\mathcal{C}_2$, except that the Paulis are conditioned on affine rather than linear outcome maps (i.e. the presence of the $\Delta s_x$, $\Delta s_z$ and $\Delta s$ terms). 
Due to these similarities, and in particular since both circuits have the same left and right Cliffords $\Ci_2$ and $\Co_2$, the stabilizer generators of the Choi states (see \cref{fig:general-form-choi}) of both $\mathcal{C}'''_1$ and $\mathcal{C}_2$ can be expressed as:
\begin{eqnarray*}
& \Co_2 Z_{j_\out} \Co_2^\dagger \otimes (-1)^{s_{j_\out}}I_{\ki}, & j_\out \in [\ko-k], \\
& \Co_2 Z_{\ko - k + j} \Co_2^\dagger \otimes (-1)^{s_{j}} (\Ci_2 Z_{\ki - k + j} \Ci_2^\dagger)^\ast, & j \in [k],\\
& I_{\ko} \otimes  (-1)^{s_{j_\inn}} (\Ci_2 Z_{j_\inn} \Ci_2^\dagger)^\ast, & j_\inn \in [\ki - k], \\
& \Co_2 X_{\ko - k + j} \Co_2^\dagger \otimes  (-1)^{s_{j}} (\Ci_2 X_{\ki - k + j} \Ci_2^\dagger)^\ast,& j\in [k],
\end{eqnarray*}
where the only difference is captured by the sign bits,
\begin{equation}
\label{eq:sign-vectors}
s(\mathcal{C}_1''') = \left(\begin{array}{c} B \\ \hline  B_x \\ \hline (\mathbf{0}|I_{\ki - k}) \\ \hline B_z \end{array}\right)(r_1 \oplus m'_2) + 
  \left(\begin{array}{c} \Delta s \\ \hline \Delta s_x \\ \hline 0  \\ \hline \Delta s_z \end{array}\right), 
s(\mathcal{C}_2) = \left(\begin{array}{c} A^{(2)} \\ \hline  A_x^{(2)} \\ \hline (\mathbf{0}|I_{\ki - k}) \\ \hline A^{(2)}_z \end{array}\right)(r_2 \oplus m_2). 
\end{equation}
For the linear maps enacted by $\mathcal{C}'''_1$ and $\mathcal{C}_2$ upon outcomes $o'_1$ and $o_2$ 
to be equivalent\footnote{Linear maps are equivalent when they are equal up to a global phase},
the values of the sign vectors $s(\mathcal{C}_1''')$ and $s(\mathcal{C}_2)$ must be equal.
From the third block of these matrix equations, we see that it is necessary that $m_2 = m'_2$.

(C) Next we check the equivalence for a fixed outcome. 
We check if there exists an outcome $m'_2 \oplus r'_1$ of $\mathcal{C}'''_1$ upon which it enacts the same linear map (up to a global phase) as $\mathcal{C}_2$ upon outcome $0^{n_r^{(2)} + \ki - k}$.
By \cref{def:instrument-equality} of instrument equivalence, such an outcome of $\mathcal{C}'''_1$ must exist if $\mathcal{C}_1$ and $\mathcal{C}_2$ are equivalent.
Since $m_2 = m'_2$, the equality of the sign vectors $s(\mathcal{C}_1''')$, $s(\mathcal{C}_2)$ in \cref{eq:sign-vectors}
simplifies to 
\begin{equation}
\label{eq:random-shift}
 \left(\begin{array}{c} B_{\ast,[n_r^{(1)}]} \\ \hline  (B_x)_{\ast,[n_r^{(1)}]} \\ \hline  (B_z)_{\ast,[n_r^{(1)}]} \end{array}\right) r'_1 + 
 \left(\begin{array}{c} \Delta s \\ \hline \Delta s_x \\ \hline \Delta s_z \end{array}\right) = 0
\end{equation}
We check if such an $r'_1$ exists using the compression map~\cref{alg:compress-randomness} in~\cref{line:comparison-outcome-compress-b}.
The compression map replaces the outcome vector with a compressed outcome vector, and replaces the matrix in the equation with a matrix in reduced row echelon form.
We solve the resulting equation in~\cref{line:comparison-random-shift} via back substitution.
The vector $r'_1 $ is then $F_1^{(-1)} r_0$ where $F_1^{(-1)}$ is the right inverse of the outcome compression matrix $F_1$. 
The matrix $F_1^{(-1)}$ corresponds to the embedding matrix computed by the compression map~\cref{alg:compress-randomness}.
If no solution exists, the circuits are not equivalent~(\cref{line:comparison-pauli-fail}).
The correction unitary in~\cref{tab:equality-correction-unitaries} appended to the end of $\mathcal{C}_1$ has the effect of 
setting $\Delta s$, $\Delta s_x$, $\Delta s_z$ to zero and ensuring that $r'_1 = 0$ is a solution to \cref{eq:random-shift}.

(D) Next we check equivalence for all measurement outcomes. 
Consider sign vector $s(\mathcal{C}_2)$ related to the linear map enacted by $\mathcal{C}_2$ upon $0^{n^{(2)}_r} \oplus m_2$.
It must be equal to $s(\mathcal{C}'''_1)$ upon outcome $r'_1 \oplus m_2$. 
Therefore, for all $m_2$:
\begin{equation}
\label{eq:measurement-phases}
  A^{(2,m)}m_2 = B^{(m)}m_2,~A_x^{(2,m)}m_2 = B_x^{(m)}m_2,~A_z^{(2,m)}m_2 = B_z^{(m)}m_2.
\end{equation}
for the above matrices defined in \cref{line:comparison-ams}.
We check that this is the case in \cref{line:comparison-conditional-pauli-on-meas-fail}.
If this is not the case, correction unitary in~\cref{tab:equality-correction-unitaries} appended to the end of $\mathcal{C}_1$ has the effect of 
setting $B^{(m)}$, $B_x^{(m)}$, $B_z^{(m)}$ to $A^{(2,m)}$, $A_x^{(2,m)}$, $A_z^{(2,m)}$.

(E) It remains to check equivalence for all outcomes, including random bit vectors $r_1,r_2$.
Assuming \cref{eq:random-shift,eq:measurement-phases} and introducing $r = r_1 + r'_1$, the equality 
of sign vectors  $s(\mathcal{C}_1''')$, $s(\mathcal{C}_2)$ simplifies to 
$$
  \left(\begin{array}{c} B_{\ast,[n_r^{(1)}]} \\ \hline  (B_x)_{\ast,[n_r^{(1)}]} \\ \hline  (B_z)_{\ast,[n_r^{(1)}]} \end{array}\right) r = 
  \left(\begin{array}{c} A^{(2)}_{\ast,[n_r^{(1)}]} \\ \hline  (A_x^{(2)})_{\ast,[n_r^{(1)}]} \\ \hline  (A^{(2)}_z)_{\ast,[n_r^{(1)}]} \end{array}\right) r_2
$$
for all possible values of $r,r_2$. In other words, the image spaces of the stacked matrices must be the same. 
We check for this by using the compression map~\cref{alg:compress-randomness} in~\cref{line:comparison-outcome-compress-a}.
We then compare the results of applying the compression maps in~\cref{line:comparison-conditional-pauli-on-random-fail}.
If the compression maps are applied to the general form circuits with condition matrices $B_{\ast,[n_r^{(1)}]}, (B_x)_{\ast,[n_r^{(1)}]},(B_z)_{\ast,[n_r^{(1)}]}$,
$A^{(2)}_{\ast,[n_r^{(1)}]}, (A_x^{(2)})_{\ast,[n_r^{(1)}]},(A^{(2)}_z)_{\ast,[n_r^{(1)}]}$ result in the same new condition matrices, we use compression matrices $F_1,F_2$
to establish map correspondence between $r$ and $r_2$, that is $F_1 r = F_2 r_2$. We use this to construct $M_1, M_2$ in \cref{line:comparison-ok}.
When the equality in~\cref{line:comparison-conditional-pauli-on-random-fail} is false, the effect of the correction unitary is to 
replace matrices $B_{\ast,[n_r^{(1)}]}, (B_x)_{\ast,[n_r^{(1)}]},(B_z)_{\ast,[n_r^{(1)}]}$ with matrices $A',A'_x,A'_z$ padded to match the size
of the matrices they are replacing. 
After step (E) we have ensured that signs in~\cref{eq:sign-vectors} match and so the algorithm returns \texttt{True}.

We analyze the \runtime{} of the general form circuit comparison~\cref{alg:general-form-circuit-comparison} in terms of $n = \max(\ki,\ko)$
and the number of random bits of the general form circuits being compared $\nr^{(1)}, \nr^{(2)}$.
We show below that the algorithm runtime is 
\begin{equation} \label{eq:general-form-comparison-runtime}
O(n\cdot (n^{\omega-1} + \nr^{(1)} + \nr^{(2)}) ).
\end{equation}
The stabilizer group equivalence check for $\comm{\ki,k_1,\Ci_1}$ and $\comm{\ki,k_2,\Co_2}$ and also for $\comm{\ko,k_1,D_1}$ and $\comm{\ko,k_2,D_2}$ relies on~\cref{proc:compare-encoding-unencoding-circuits} and 
contributes $O(n^\omega)$. 
The equivalence check up to Pauli unitaries \cref{line:comparison-clifford-fail,line:comparison-pauli-shift} contributes $O(n^\omega)$.

Simplifying the unitary acting on the inner qubits contributes $O(n_r^{(1)}n^{\omega -1} + n^{\omega})$.
This is because computing matrix products of rectangular and square matrices in \cref{line:comparison-b-matrix,line:comparison-bxt-matrix}
contributes $O(n_r^{(1)}n^{\omega -1} + n^{\omega})$~(see~\cref{eq:rectangular-matrix-multiplication-copmplexity}).
The \runtime{} of \cref{line:comparison-delta-s} is the  \runtime{} of matrix vector multiplication and is negligible in comparison 
with matrix-matrix multiplications.
The \runtime{} of \cref{line:comparison-bx-matrix} is of finding square and rectangular matrix products and is $O(n_r^{(1)}n^{\omega -1} + n^{\omega})$.

Computing the outcome compression maps in \cref{line:comparison-outcome-compress-a,line:comparison-outcome-compress-b} is $O(n_r^{(2)}n^{\omega -1})$
and $O(n_r^{(1)}n^{\omega -1})$ according to~\cref{eq:compression-map-runtime}.
The rest of the steps are matrix comparisons~\cref{line:comparison-conditional-pauli-on-meas-fail,line:comparison-conditional-pauli-on-random-fail}, and solving a linear equation for a matrix in reduced
row echelon form~\cref{line:comparison-random-shift}, with \runtime{} negligible compared with the other steps.
In summary, the \runtime{} of the general form comparison algorithm is given by \cref{eq:general-form-comparison-runtime} as required.

    
    

\newpage
\section{Logical action of stabilizer circuits}
\label{sec:logical-general-form}

Given a circuit that implements a logical operation on encoded information, it can be useful to identify what that logical action is. 
It may also be desirable to verify that the logical action matches that of some known reference circuit. 
For example, one may wish to check that a circuit performs some lattice surgery operations on two patches of surface code implements the joint $XX$ or $ZZ$ measurement of the logical qubits.

In \cref{sec:general-form-circuit-for-logical-action} we provide \cref{alg:logical-general-form} which finds a general form circuit with action matching the logical action of an input logical operation circuit.
This algorithm also flags when the input circuit is not a logical operation circuit, and we explain how to identify a correction unitary to append to the input circuit to rectify this.
Then, in \cref{sec:logical-action-verification} we provide \cref{alg:logical-action-verificiaton} which checks if an input logical operation circuit $\mathcal{C}$ implements the same action as a given reference circuit $\mathcal{C}_\text{ref}$.
This makes use of \cref{alg:logical-general-form} and \cref{alg:general-from-choi} by first finding general form circuits equivalent to $\mathcal{C}_\text{ref}$ and to the logical action of $\mathcal{C}$, before using \cref{alg:general-form-circuit-comparison} to compare these two general form circuits.

\subsection{General form circuit for logical action}
\label{sec:general-form-circuit-for-logical-action}

Building on the definitions in \cref{sec:encoding-circuits}, we provide a formal definition of the problem we wish to solve:

\begin{problem}[Logical action]
\label{prob:general-form-for-logical-action}
Consider a stabilizer circuit $\mathcal{C}$ with outcome vector $v$, $\ni$ input qubits and $\no$ output qubits.
Further consider the stabilizer codes $\comm{\ki,k_\inn,C_\inn}$ and $\comm{\ko,k_\out,C_\out}$.
Check if $\mathcal{C}$ is a logical operation circuit with input and output codes $\comm{\ki,k_\inn,C_\inn}$ and $\comm{\ko,k_\out,C_\out}$.
If $\mathcal{C}$ is a logical operation circuit, find a general form circuit $\mathcal{C}_\mathcal{L}$ with outcome vector $o_\mathcal{L}$, full column rank matrix $M_\mathcal{L}$ and vector $v_{\mathcal{L},0}$
defining the equivalent outcomes such that the logical action of $\mathcal{C}$ upon outcome $v = v_{\mathcal{L},0} + M_\mathcal{L} o_\mathcal{L}$ has the same as action as $\mathcal{C}_\mathcal{L}$ upon outcome $o_\mathcal{L}$.
\end{problem}

Next we describe \cref{alg:logical-general-form}, which solves this problem in three main steps.
In the first step, the algorithm computes a general form circuit $\mathcal{C}'_\text{gen}$ for the circuit $\mathcal{C}' = \mathcal{E}(k_\inn,C_\inn) \circ \mathcal{C} \circ \mathcal{E}^\dagger(k_\out,C_\out)$, which encodes into the input code, applies the circuit, and then unencodes from the output code (see \cref{fig:logical-action-general-form}(a)).
The second step analyzes $\mathcal{C}'_\text{gen}$ to check that the circuit $\mathcal{C}$ is indeed a logical operation circuit. 
We show that this check can be performed by inspecting the matrix $M'$ which specifies the map between the outcomes of $\mathcal{C}'$ and the equivalent general form circuit $\mathcal{C}'_\text{gen}$.
At this point, it may seem that the problem has been solved since we have a general form circuit and a map specifying its equivalence to $\mathcal{C}'$, which above we claimed which encodes into the input code, applies the circuit, and then unencodes from the output code. 
However, the encoding circuit at the start of $\mathcal{C}'$ more precisely encodes into a code \emph{equivalent to}  the input code $\comm{\ki,k_\inn,C_\inn}$, but with a random syndrome $s_\inn$.
We seek a general form circuit $\mathcal{C}_\mathcal{L}$ which has equivalent action to $\mathcal{C}'$ when $s_\inn$ is trivial, which corresponds to the case where $\mathcal{C}'$ begins by encoding into precisely the code $\comm{\ki,k_\inn,C_\inn}$.
Below we will see that the incorporation of $s_\inn$ strengthens the power of the \cref{alg:logical-general-form} considerably.

To find a general form circuit $\mathcal{C}_\mathcal{L}$ equivalent to the logical action of $\mathcal{C}$, the third step of \cref{alg:logical-general-form} is to split the matrix $M'$ and the condition matrices $A',A'_x,A'_z$ of $\mathcal{C}'_\text{gen}$ into two components (see \cref{fig:logical-action-general-form}(b)). 
The set of first components of these matrices specifies the matrices needed to define $\mathcal{C}_\mathcal{L}$, while the set of second components of these matrices specifies the difference between the action of $\mathcal{C}'$ when the syndrome $s_\inn$ is zero (corresponding to the logical action of $\mathcal{C}$) and when it is not zero (see \cref{fig:logical-action-general-form}(c)).
The first two outputs of \cref{alg:logical-general-form}, marked as (1) and (2), are the solution to \cref{prob:general-form-for-logical-action} and the last two outputs, marked as (3) and (4), describe how the action is modified when $s_\inn$ is non-zero.
Note that the information in (3) and (4) is not explicitly called for by \cref{prob:general-form-for-logical-action} but it can be very useful to know the logical action of a circuit for any input code syndrome $s_\inn$, for example to analyze its fault-tolerance properties.

\begin{figure}[htp]
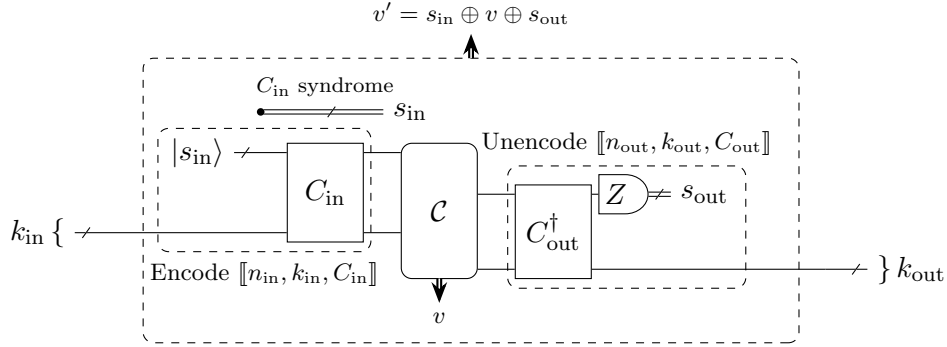
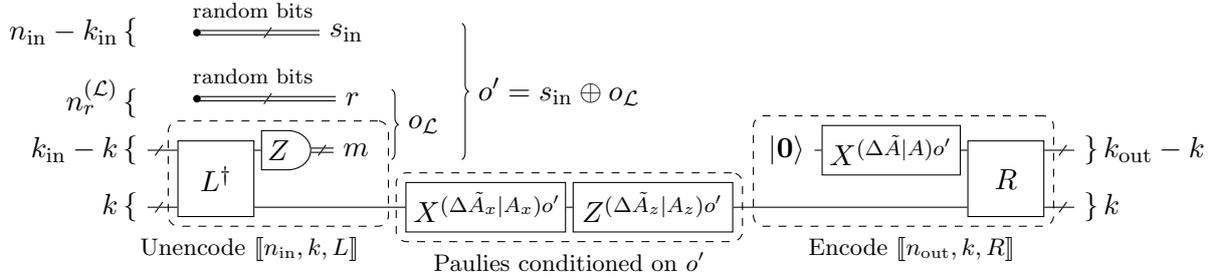
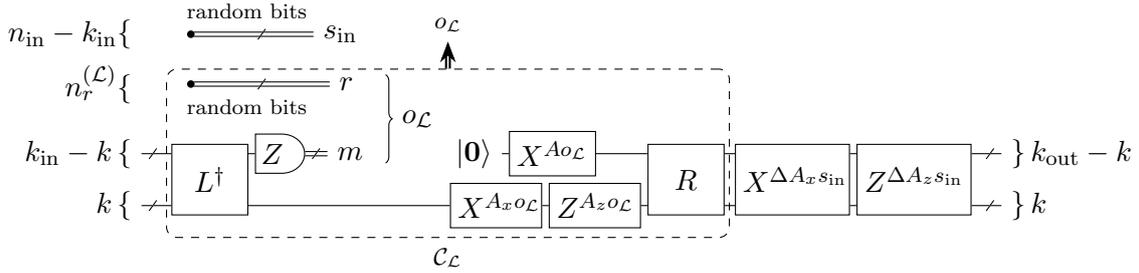

    \centering
     \begin{subfigure}[b]{\textwidth}
        \centering
        \loadfig{fig-logical-instrument-conjugation}
        \caption{Circuit $\mathcal{C}'$, defining the logical action of $\mathcal{C}$ for input and output codes $\comm{\ni,k_\inn,C_\inn}$, $\comm{\no,k_\out,C_\out}$.}
    \end{subfigure}
    \vskip 7mm
    \begin{subfigure}[b]{\textwidth}
        \centering
        \loadfig{fig-general-form-prime}
        \caption{A general form circuit $\mathcal{C}'_\text{gen}$, equivalent to $\mathcal{C}'$ with outcome map $v' = M' o' + v'_0$. 
        For convenience, we split the random bit register as $r' = s_\inn \oplus r$ and split the condition matrices as $A' = (\Delta \tilde A | A)$, $A'_x = (\Delta \tilde A_x | A_x)$ and $A'_Z = (\Delta \tilde A_z | A_z)$, into $(\ni-k_\inn)$-column and  $(n^{(\mathcal{L})}_r + (k_\inn - k))$-column components.
        }
    \end{subfigure}
    \vskip 7mm
    \begin{subfigure}[b]{\textwidth}
        \centering
         \loadfig{fig-general-form-logical}
        \caption{A circuit which is equivalent to $\mathcal{C}'$. 
        This circuit is built from a general form circuit $\mathcal{C}_\mathcal{L}$, which is equivalent to $\mathcal{C}'$ when $s_\inn$ is trivial, followed by Paulies conditioned on $s_\inn$ with outcome vector $s_\inn \oplus o_\mathcal{L}$.
        }
    \end{subfigure}
    \caption[Logical action general form circuit]{The circuits in (a) and (c) are equivalent with outcome map $s_\inn \oplus v \oplus s_\out = s_\inn \oplus (M_\mathcal{L} o_\mathcal{L} + v_{\mathcal{L},0} + \Delta M s_\inn) \oplus \Delta A s_\inn$ provided that $\mathcal{C}$ is a logical operation circuit.
    }
    \label{fig:logical-action-general-form}
\end{figure}
\begin{figure*}[htp]
\begin{algorithm}[\texttt{Logical action}] \label{alg:logical-general-form}
\begin{algorithmic}[1]
\Blank
\Input 
\begin{itemize}[noitemsep]
    \item A stabilizer circuit $\mathcal{C}$ with $\ki$ input qubits, $\ko$ output qubits and $\nM$ outcomes,
    \item Stabilizer codes $\comm{\ki,k_\inn,C_\inn}$and $\comm{\ko,k_\out,C_\out}$.
\end{itemize}
\Output One of the following:
\begin{itemize}[noitemsep]
    \item \texttt{False} (the circuit $\mathcal{C}$ is not a logical operation circuit with input code $\comm{\ki,k_\inn,C_\inn}$ and output code $\comm{\ko,k_\out,C_\out}$).
    \item \texttt{True} (the circuit is a logical operation circuit) and the following:
    \begin{itemize}[noitemsep]
        \item[(1)] General form circuit $\mathcal{C}_\mathcal{L}$ equivalent to  $\mathcal{C}' = \mathcal{E}(k_\inn,C_\inn) \circ \mathcal{C} \circ \mathcal{E}^\dagger(k_\out,C_\out)$ when the random outcome of the encoding circuit $\mathcal{E}(k_\inn,C_\inn)$ is $s_\inn = 0$,
        \item[(2)] Logical outcome map $o_\mathcal{L} \mapsto M_\mathcal{L} o_\mathcal{L} + v_{\mathcal{L},0} $ from outcomes of $\mathcal{C}_\mathcal{L}$ to outcomes of $\mathcal{C}$,
        \item[(3)] Difference operator: $k_\out$-qubit conditional Pauli unitary $X^{(\Delta A_x) s_\inn} Z^{(\Delta A_z) s_\inn}$,
        \item[(4)] Difference map: $\Delta v = \Delta M s_\inn$, output syndrome map $s_\out = (\Delta A) s_\inn$.
    \end{itemize}
\end{itemize}
Upon the output \texttt{True}, the outputs of the algorithm satisfy conditions in \cref{fig:logical-action-general-form}.
Additionally, the matrix $M_\mathcal{L}$ has full column rank and is in $(\nr^{(\mathcal{L})},\nm^{(\mathcal{L})})$-split reduced form for $\nr^{(\mathcal{L})},\nm^{(\mathcal{L})}$ being the 
number of random bit and measurement outcomes of $\mathcal{C}_\mathcal{L}$.
The entries of $v_{\mathcal{L},0}$ corresponding to the row rank profile of the left and right parts of $M_\mathcal{L}$ are zero.
\Blank
\State Construct stabilizer circuit $\mathcal{C}' = \mathcal{E}(k_\inn,C_\inn) \circ \mathcal{C} \circ \mathcal{E}^\dagger(k_\out,C_\out)$.
\State\label{line:logical-operation-general-form}Find a general form $\mathcal{C}'_\mathrm{gen}$ of $\mathcal{C}'$ (\cref{alg:general-from-choi})
(that is find left and right Clifford unitaries $\Ci,\Co$, condition matrices $A',A'_x,A'_z$), matrix $M'$ in $(\nr',\nm')$-split reduced echelon form
and vector $v'_0$ which determine the mapping between outcomes of $\mathcal{C}'_\mathrm{gen}$ and  $\mathcal{C}'$.
\Blank \hrulefill \Comment Check if $\mathcal{C}$ is a logical operation circuit
\For{ $j \in [\no-k_\out]$ } \Comment Check if $s_\out$ are redundant outcomes dependent only on $s_\inn$
\If{$(\ni-k_\inn) + \nM + j$ is in row rank profile of $M'_{*,[\nr']}$}\label{line:logical-operation-random-outcome-check}
\State\label{line:logical-operation-random-outcome-fail}\Return \texttt{False} \Comment Syndrome outcome $(s_\out)_j$ is a random outcome
\EndIf
\If{$(\ni-k_\inn) + \nM + j$ is in row rank profile of $M'_{*,[\nr'+1,\nr' + \nm']}$}\label{line:logical-operation-inp-dep-outcome-check}
\State\label{line:logical-operation-inp-dep-outcome-fail}\Return \texttt{False} \Comment Syndrome outcome $(s_\out)_j$ is an input-dependent outcome
\EndIf
\If{ $ M'_{(\ni-k_\inn) + \nM + j,[\nr'-(\ni-k_\inn)+1,\nr']}  \ne 0 $}\label{line:logical-operation-random-outcome-dependency-check}
\State\label{line:logical-operation-random-outcome-dependency-fail}\Return \texttt{False} \Comment Syndrome outcome $(s_\out)_j$ depends on random outcomes of $\mathcal{C}$
\EndIf
\EndFor
\If{ $(v'_0)_{(\ni-k_\inn) + \nM + [\no-k_\out]} \ne 0 $}\label{line:logical-operation-non-zero-check}
\State\label{line:logical-operation-non-zero-fail}\Return \texttt{False} \Comment Syndrome $s_\out$ is not zero when syndrome $s_\inn$ is zero
\EndIf
\State\label{line:logical-operation-input-output-syndrome-map}$\Delta A \leftarrow M'_{(\ni-k_\inn) + \nM + [\no-k_\out],[\ni-k_\inn]}$
\Blank \hrulefill \Comment Construct outputs (1)-(4)
\State\label{line:logical-operation-condition-matrices}Let $\mathcal{C}_\mathcal{L}$ be a general form circuit with left and right Clifford unitaries $\Ci,\Co$ 
and condition matrices $A = A'_{\ast,[\ni-k_\inn+1,\cdot]}$, $A_\sigma = (A'_\sigma)_{\ast,[\ni-k_\inn+1,\cdot]}$ for $\sigma \in \{x,z\}$.
\State\label{line:logical-operation-pauli-difference}Find $\Delta A_x, \Delta A_z$ from the equality \Comment{\cref{prop:batch-pauli-images}} 
$$
 X^{(\Delta A_x) s} Z^{(\Delta A_z) s} \simeq \Co \left( X^{(\Delta \tilde A) s} \otimes \left( X^{(\Delta \tilde A_x) s }  Z^{(\Delta \tilde A_z) s }  \right) \right) \Co^\dagger,
$$
where $\Delta \tilde A = A'_{\ast,[\ni-k_\inn]}$, $\Delta \tilde A_x = (A'_x)_{\ast,[\ni-k_\inn]}$, and $\Delta \tilde A_z = (A'_z)_{\ast,[\ni-k_\inn]}$.
\State\label{line:logical-operation-outcome-map}$M_\mathcal{L} \leftarrow M'_{[\ni - k_\inn + 1,\ni - k_\inn + \nM],[\ni - k_\inn + 1,\cdot]}$, $v_{L,0} = (v'_0)_{[\ni - k_\inn + 1,\ni - k_\inn+\nM]}$
\State\label{line:logical-operation-outcome-difference}$\Delta M \leftarrow M'_{[\ni - k_\inn + 1,\ni - k_\inn + \nM],[\ni - k_\inn]}$
\State \Return \texttt{True}, (1) $\mathcal{C}_\mathcal{L}$, (2) $M_\mathcal{L},v_{\mathcal{L},0}$, (3) $\Delta A_x, \Delta A_z$, (4) $\Delta A, \Delta M$.
\end{algorithmic}
\end{algorithm}
\end{figure*}

Next we explain the correctness of~\cref{alg:logical-general-form}.
While working through this explanation, we also describe how to find a correction unitary as specified in~\cref{tab:logical-action-correction-unitaries} which can be appended to the input circuit $\mathcal{C}$ when the algorithm returns \texttt{False} (due to $\mathcal{C}$ not being a logical operation circuit) in order to modify $\mathcal{C}$ to form a logical operation circuit.

Let us first discuss in more detail how to check that $\mathcal{C}$ is a logical operation circuit  using the general form circuit $\mathcal{C}'_\text{gen}$ and the outcome mapping defined by the matrix $M'$ for the circuit $\mathcal{C}' = \mathcal{E}(k_\inn,C_\inn) \circ \mathcal{C} \circ \mathcal{E}^\dagger(k_\out,C_\out)$ (see \cref{fig:logical-action-general-form}(a)).
Consider a general form circuit $\mathcal{C}''_\text{gen}$ of $\mathcal{C}'' = \mathcal{E}(k_\inn,C_\inn) \circ \mathcal{C}$ (which is the sub-circuit of $\mathcal{C}'$ consisting of $\mathcal{E}(k_\inn,C_\inn)$ followed by $\mathcal{C}$). 
For the circuit $\mathcal{C}$ to be a logical operation circuit with input code $\comm{\ni,k_\inn,C_\inn}$ 
and output code $\comm{\no,k_\out,C_\out}$, it is necessary that the left stabilizer group of $\mathcal{C}''_\text{gen}$
contains $\comm{\no,k_\out,C_\out}$ up to signs.
In other words, it is necessary that each outcome bit in $s_\out$ of $\mathcal{C}'$ is a redundant outcome.
We check for this necessary condition in \cref{line:logical-operation-random-outcome-check,line:logical-operation-inp-dep-outcome-check}
by ensuring that each outcome bit in $s_\out$ is neither random, nor input-dependent, making use of the fact that $M'$ is in split reduced echelon form (see \cref{def:split-echelon-form} and \cref{fig:outcome-mapping-structure}). 

The logical operation circuit~\cref{def:logical-operation-circuit} requires that the circuit output is encoded in $\comm{\no,k_\out,C_\out}$ when the circuit input is encoded in $\comm{\ni,k_\inn,C_\inn}$. 
To ensure this requirement, it is necessary that outcomes in $s_\out$ do not depend on the random outcomes of circuit $\mathcal{C}$.
We check this in \cref{line:logical-operation-random-outcome-dependency-check}.
We can also amend $\mathcal{C}$ if this necessary condition is not satisfied by including the conditional Pauli correction at the end of $\mathcal{C}$ as shown in \cref{tab:logical-action-correction-unitaries}, resulting in a new circuit which satisfies the condition.

Even when all the aforementioned necessary conditions are met, it is possible that $\mathcal{C}$ is not a logical operation circuit as it could map states encoded in $\comm{\ni,k_\inn,C_\inn}$ to states encoded in a code which is equivalent to but not equal to $\comm{\no,k_\out,C_\out}$ (i.e. with a fixed non-zero syndrome).
We check that this is not the case in \cref{line:logical-operation-non-zero-check}.
It this check fails, we can correct for it by appending the Pauli unitaries shown in \cref{tab:logical-action-correction-unitaries} to the end of $\mathcal{C}$.
When all the necessary conditions we have discussed are met, it must be that $\mathcal{C}$ is a logical operation circuit and the output code syndrome $s_\out$ must be a linear function 
of the input code syndrome $s_\inn$, that is $s_\out = (\Delta A) s_\inn$.

The matrix $\Delta A$ is extracted from $M'$ in \cref{line:logical-operation-input-output-syndrome-map}.
This is possible because the matrix $M'$ is in split reduced echelon form and those entries of $v'_0$ which correspond to random outcomes are zero and so the first $\ni - k_\inn$ bits of the outcome vector of $\mathcal{C}'$ that are equal to the input code syndrome $s_\inn$ are also equal to the first $\ni - k_\inn$ random bits of the general form circuit $\mathcal{C}'_\text{gen}$.
For the same reason, the first $\ni - k_\inn$ columns of the condition matrices $A',A'_x,A'_z$ determine the dependence of the action of $\mathcal{C}'$ 
on the input code syndrome $s_\inn$. 
We use this to compute the difference Pauli operators in~\cref{line:logical-operation-pauli-difference}.
The remaining columns of $A',A'_x,A'_z$ are precisely the condition matrices of $\mathcal{C}_\mathcal{L}$~(\cref{line:logical-operation-condition-matrices}).

The mapping between the outcomes of the general form circuit $\mathcal{C}_\mathcal{L}$ and the outcomes of $\mathcal{C}$ is given by a sub-matrix $M_L$ of $M'$ in 
\cref{line:logical-operation-outcome-map}. 
This is because circuit $\mathcal{C'}$ starts with allocating $\ni-k_\inn$ random bits and top-left $(\ni-k_\inn)\times(\ni-k_\inn)$ block is identity.
This structure and the fact that all outcomes in $s_\out$ are redundant also implies that the sub-matrix $M_L$ is in split reduced echelon form.
Similarly, the outcome difference map $\Delta M$ for when input code syndrome is non-trivial is captured by the first $\ni-k_\inn$ columns of $M'$ as in \cref{line:logical-operation-outcome-difference}.

\begin{table}[htp]
    \centering
    \begin{tabular}{|l|c|c|}
    \hline 
    Case when $\mathcal{C}$  & \multirow{3}{*}{\rotatebox[origin=c]{-90}{Line}} & Correction unitary $U$ to ensure  \\
    is not a logical         &      & that $\mathcal{C}$ is a logical operation circuit \\
    operation circuit        &      & via appending $U$ to $\mathcal{C}$ \\
    \hline 
    \hline 
    Output syndrome $s_\out$  & \multirow{4}{*}{\labelcref{line:logical-operation-random-outcome-dependency-fail}} & {Let $\Delta a= M'_{(\ni-k_\inn) + \nM + j,[\nr'-(\ni-k_\inn)+1,\nr']}$~(\cref{line:logical-operation-random-outcome-dependency-check})} \\
    depends on random & &  $U = C_\out^\dagger X_j C_\out$ conditioned on the random outcomes of \\
    outcomes of $\mathcal{C}$ & & $\mathcal{C}$ indicated by $\Delta a$ \\ 
    \hline 
    Output syndrome $s_\out$ & \multirow{3}{*}{\labelcref{line:logical-operation-non-zero-fail}} & Let $\Delta v = (v'_0)_{(\ni-k_\inn) + \nM + [\no-k_\out]}$ ~(\cref{line:logical-operation-non-zero-check})\\
    when input syndrome & & $U = C_\out^\dagger( X^{\Delta v} \otimes I_{k_\out}  )C_\out$ \\
    $s_\inn$ is zero & &   \\
    \hline 
    \end{tabular}
    \caption[Logical operation general form recovery from fails]{\label{tab:logical-action-correction-unitaries}
             Cases when computing the logical action of stabilizer circuit $\mathcal{C}$ using \cref{alg:logical-general-form} fails a check and returns \texttt{False}
             and there exists a correction unitary to append to $\mathcal{C}$ to ensure that check is passed.
             Variables in the equations for the correction unitaries are defined in the pseudo-code of \cref{alg:logical-general-form}.
            } 
\end{table}

Next we discuss the \runtime{} of  \cref{alg:logical-general-form}.
The \runtime{} of \cref{alg:logical-general-form} is limited by the \runtime{} of \cref{line:logical-operation-general-form}:
\begin{equation}
\label{eq:logical-operation-general-form-runtime}
O\left(\nmax\cdot\left(\Nu+(\Ncnd+\nM)(\nr+\nm + (\ni - k_\inn) + (\no-k_\out) + 1) + \nmax(\nmax + \nr)\right)\right).    
\end{equation}
This equation follows from \cref{eq:general-form-runtime} and the following bounds on the parameters of $\mathcal{C}'$.
The circuit $\mathcal{C'}$ has $\nM + \no - k_\out$ outcomes, of which at most $n_r + \ni - k_\inn + \no - k_\out$ are random and at most $\nm + \no - k_\out$ are input-dependent.
We also include the \runtime{} $O(\nmax^3)$ of simulating the unitaries $C_\inn$ and $C_\out^\dagger$ when computing the general form $\mathcal{C}'_\text{gen}$.

\cref{alg:logical-general-form} has a number of applications.
For example, we can use the additional outputs (3) and (4) of the algorithm to determine which input code syndromes do not cause logical errors and classify all syndromes according to the logical errors they cause.
First, code syndromes that do not cause logical errors must not cause Pauli X or Z errors on the logical qubits, so such $s_\inn$ must belong to the intersection of kernels of $\Delta A_x$,  $\Delta A_z$ returned as output (3).
Additionally, these syndromes must not flip the logical measurement outcomes,
that is they should belong to the kernel of $M_L^{(-1)} (\Delta M ) s_\inn$, where $\Delta M$ is returned as part of output (4) of the algorithm.
The syndromes of the input code that do not cause logical errors form a linear space $L_0$.
Taking the quotient of the space of all possible syndromes by $L_0$ lets us partition 
syndromes into the sets that cause the same logical error.

\subsection{Logical action verification}
\label{sec:logical-action-verification}

It may be desirable to verify that the logical action of given logical operation circuit matches the action of some known reference circuit. 
We can state the logical operation circuit comparison problem in a form that is similar to the stabilizer circuit comparison \cref{prob:stabilizer-circuit-comparison}:

\begin{problem}[Logical circuit comparison]
\label{prob:logical-action-verification}
Consider:
\begin{itemize}[noitemsep]
    \item a stabilizer circuit $\mathcal{C}$ with outcome vector $v$, $\ni$ input qubits and $\no$ output qubits,
    \item an input code $\comm{\ni,k_\inn,C_\inn}$ and an output code $\comm{\ko,k_\out,C_\out}$,
    \item a reference stabilizer circuit $\mathcal{C}_\text{ref}$ with outcome vector $v_\text{ref}$, $k_\inn$ input and $k_\out$ output qubits.
\end{itemize}

Find if $\mathcal{C}$ is a logical operation circuit~(\cref{def:logical-operation-circuit}) with input code $\comm{\ni,k_\inn,C_\inn}$ and output code $\comm{\ko,k_\out,C_\out}$.
If it is, find if the logical action~(\cref{def:logical-action}) of $\mathcal{C}$ is equivalent to the action of $\mathcal{C}_\text{ref}$.
If the actions are equivalent, find matrices $M,M_\text{ref}$ and vectors $u,u_\text{ref}$ such that
the logical action of $\mathcal{C}$ upon outcome $v$ is the same as the action of $\mathcal{C}_\text{ref}$ upon outcome $v_\text{ref}$ if and only if $M(v + u) = M_\text{ref}(v_\text{ref} + u_\text{ref})$.
\end{problem}

The main difference between the above problem and the stabilizer circuit comparison \cref{prob:stabilizer-circuit-comparison} 
is that in \cref{prob:stabilizer-circuit-comparison} we must first check if $\mathcal{C}$ is a logical operation circuit and then analyze its logical action with respect to the input and output codes.
Similarly to the stabilizer circuit comparison \cref{prob:stabilizer-circuit-comparison}, when the circuits are equivalent and the matrix $M_\text{ref}$ has a left inverse, we can find a map between the outcome vector $v$ of $\mathcal{C}$ and the outcome vector $v_\text{ref}$ of $\mathcal{C}_\text{ref}$ as
$$
 v_\text{ref} = u_\text{ref} + M^{(-1)}_\text{ref}M(v+u).
$$
Consider an example in which one wishes check that a circuit which performs a lattice surgery operation on two patches of surface code implements the joint $XX$ measurement of the logical qubits encoded in the patches.
In this case, the input circuit can be highly complex, producing a long output vector $v$ containing a long list of measurement outcomes.
The reference circuit on the other hand consists of an $XX$ measurement on a pair of qubits, which produces a single output bit $v_\text{ref}$, and as such it must be that $M_\text{ref} = 1$. 
This trivially has a left inverse $M^{(-1)}_\text{ref} = I_1$, and the above map will provide the lattice surgery circuit outcomes corresponding the logical $XX$ or $ZZ$ measurement outcome.

Below we present a logical action verification \cref{alg:logical-action-verificiaton}, which solves the logical action verification \cref{prob:logical-action-verification} in three main steps.
The first step is to verify that the circuit circuit $\mathcal{C}$ is a logical operation circuit and find a general form circuit $\mathcal{C}_\mathcal{L}$ equivalent to its logical action~(\cref{def:logical-action}).
The second step is to find a general form circuit equivalent to $\mathcal{C}_\text{ref}$ using \cref{alg:general-from-choi}.
The third step is to compare the logical action general form circuit $\mathcal{C}_\mathcal{L}$ and the general form circuit equivalent to $\mathcal{C}_\text{ref}$ 
using the general form comparison~\cref{alg:general-form-circuit-comparison}.

\begin{algorithm}[\texttt{Logical action verification}]
\label{alg:logical-action-verificiaton}
\begin{algorithmic}[1]
\Blank
\Input 
\begin{itemize}[noitemsep]
    \item A stabilizer circuit $\mathcal{C}$ with $\ni$ input qubits, $\no$ output qubits.
    \item Stabilizer codes $\comm{\ni,k_\inn,C_\inn}$,  $\comm{\no,k_\out,C_\out}$.
    \item A stabilizer circuit $\mathcal{C}_\text{ref}$ with $k_\inn$ input qubits, $k_\out$ output qubits.
\end{itemize}
\Output One of the following:
\begin{itemize}[noitemsep]
    \item \texttt{False} (the circuit is not a logical operation circuit with input code $\comm{\ki,k_\inn,C_\inn}$ and output code $\comm{\ko,k_\out,C_\out}$ equivalent to $\mathcal{C}_\text{ref}$).
    \item \texttt{True} (the circuit is a logical operation circuit equivalent to $\mathcal{C}_\text{ref}$) and matrices $M,M_\text{ref}$, vectors $u,u_{\text{ref}}$ 
    defining the equivalent outcomes as in~\cref{prob:logical-action-verification}.
\end{itemize}
\State Solve logical action \cref{prob:general-form-for-logical-action} for  $\mathcal{C}$, $\comm{\ni,k_\inn,C_\inn}$,  $\comm{\no,k_\out,C_\out}$
\Blank \Comment \cref{alg:logical-general-form}
\If{$\mathcal{C}$ is a not a logical operation circuit}
\State\label{line:logical-action-verification-logical-action-fail}\Return \texttt{False} \Comment \cref{tab:logical-action-correction-unitaries}
\Else 
    \State Let $\mathcal{C}_\mathcal{L}$ be a general form circuit for logical action of $\mathcal{C}$ 
    \Statex[2] with the outcome mapping between $\mathcal{C}$ and $\mathcal{C}_\mathcal{L}$ defined by $M_L$, $v_{0,L}$
\EndIf
\State Let $\mathcal{C}_\text{gen}$ be a general form circuit equivalent to $\mathcal{C}_\text{ref}$, \Comment \cref{alg:general-from-choi}
\Statex[1]  with the outcome mapping defined by $M_\text{gen},v_{0,\text{gen}}$.
\State Compare general form circuits $\mathcal{C}_\mathcal{L}$ and $\mathcal{C}_\text{gen}$ \Comment \cref{alg:stabilizer-circuit-comparison}
\If{$\mathcal{C}_\mathcal{L}$ and $\mathcal{C}_\text{gen}$ are equivalent with matrices $\tilde M_L, \tilde M_\text{gen}$ and vectors $\tilde u_L, \tilde u_\text{gen}$ \\ ~~~~~~defining the equivalent outcomes } 
\State\label{line:logical-action-verification}$M \leftarrow \tilde M_L M_L^{(-1)}$,
$u \leftarrow \tilde u_L + M^{(-1)}_L v_{0,L}$,
$M_\text{ref} \leftarrow \tilde M_\text{gen} M_\text{gen}^{(-1)}$,
$u_\text{ref} \leftarrow \tilde u_\text{gen} + M^{(-1)}_\text{gen} v_{0,\text{gen}}$
\State \Return \texttt{True}, $M$, $M_\text{ref}$, $u$, $u_\text{ref}$
\Else 
\State\label{line:logical-action-verification-inequivalence-fail}\Return \texttt{False} \Comment \cref{tab:equality-correction-unitaries}
\EndIf
\end{algorithmic}
\end{algorithm}

Now we briefly discuss the correctness of \cref{alg:logical-action-verificiaton}. 
The logical action of $\mathcal{C}$ is equivalent to the action of the reference circuit if and only if corresponding general form circuits are equivalent.
The expressions for the matrices $M,M_\text{ref}$ and vectors $u,u_\text{ref}$ follow from: (1) expressing the outcome vector $o_L$ of $\mathcal{C}_\mathcal{L}$
in terms of the outcome vector $v$ of $\mathcal{C}$ as 
$
 o_L = M_L^{(-1)}(v+v_{0,L})
$, 
(2) expressing the outcome vector $o$ of $\mathcal{C}_\text{gen}$
in terms of the outcome vector $v_\text{ref}$ of $\mathcal{C}_\text{ref}$ as 
$
o = M_\text{gen}^{(-1)}(v_\text{ref}+v_{0,\text{gen}})
$ and (3) substituting expressions for $o$ and $o_L$ into the condition $\tilde M_L(o_l + \tilde u_l) =  \tilde M_\text{gen}(o + \tilde u_\text{gen})$
for when the action of $\mathcal{C}_\mathcal{L}$ upon $o_L$ is equivalent to the action of $\mathcal{C}_\text{gen}$ upon $o$.

The \runtime{} of our logical action verification \cref{alg:logical-action-verificiaton} is dominated by the general form circuit computation subroutines. 
We parameterize $\mathcal{C}$ as specified in \cref{sec:circuit-parameters}, and use the same notation to parameterize the reference circuit $\mathcal{C}_\text{ref}$ but with the superscript $^\text{(ref)}$ on each parameter.

The \runtime{} analysis of our logical action~\cref{alg:logical-general-form} presented in the previous subsection~(\cref{eq:logical-operation-general-form-runtime})
and the \runtime{} analysis of our general form \cref{alg:general-from-choi}~(\cref{eq:general-form-runtime}) and general form comparison 
\cref{alg:general-form-circuit-comparison}~(\cref{eq:general-form-comparison-runtime}) imply that the runtime of our logical action verification \cref{alg:logical-action-verificiaton} is:
\begin{equation}
    \begin{array}{c}
    O\left( \nmax\cdot\left(\Nu+(\Ncnd+\nM)(\nr+\nm + (\ni - k_\inn) + (\no-k_\out) + 1) + \nmax(\nmax + \nr)\right) + \right. \\ 
    \left. \nmax^\text{(ref)} \cdot\left(\Nu^\text{(ref)}+(\Ncnd^\text{(ref)}+\nM^\text{(ref)})(\nr^\text{(ref)}+\nm^\text{(ref)} + 1) + \nmax(\nmax^\text{(ref)} + \nr^\text{(ref)})\right) \right).
    \end{array}
\end{equation}

When the logical action verification \cref{alg:logical-action-verificiaton} fails a check and returns \texttt{False}, it can be useful to find a ``correction'' unitary that can be appended to $\mathcal{C}$
ensure that test is passed.
The algorithm returns \texttt{False} in two cases.
In the first case~(\cref{line:logical-action-verification-logical-action-fail}), the circuit $\mathcal{C}$ is not a logical operation circuit.
In the second case~(\cref{line:logical-action-verification-inequivalence-fail}), the logical action is not equivalent to the action of the reference circuit.
In the first case, it is sometimes possible to ensure that $\mathcal{C}$ is a logical operation by appending conditional Pauli unitaries as computed within \cref{alg:logical-general-form} and provided in \cref{tab:logical-action-correction-unitaries}.
In the second case, it is sometimes possible to ensure the action equivalence by appending Clifford unitaries or conditional Pauli unitaries to 
the general form circuit $\mathcal{C}_\mathcal{L}$ as described in \cref{tab:equality-correction-unitaries}.
These unitaries act on the logical qubits of the output code $\comm{\no,k_\out,C_\out}$ and are conditioned on outcomes $o_L$ of $\mathcal{C}_\mathcal{L}$. 
They can be easily translated to unitaries to append to the end of circuit $\mathcal{C}$ to ensure the circuit equivalence. 
For example, the conditional Pauli unitary  $X_j^{\ip{c_x,o_L}} Z_j^{\ip{c_z,o_L}}$ becomes 
\begin{equation}
    C_\out \left( X_j^{\ip{(M_L^{(-1)})^T c_x,v + v_{0,L}}} Z_j^{\ip{(M_L^{(-1)})^T c_z,v + v_{0,L}}} \right) C_\out^\dagger.
\end{equation}

\newpage
\newpage
\section{Logical action of code deformation circuits (analytic)}
\label{sec:code-deformation-action}

Here we consider the logical action of the class of \emph{code deformation circuits}, which measure a set of commuting Pauli operators and then apply conditional Paulis. 
Such circuits form the basis of many fault-tolerance schemes, for instance they can be used to sequentially deform stabilizer codes, to perform syndrome extraction for stabilizer and Floquet codes~\cite{Hastings2021}, and to implement lattice surgery operations. 
In this section, we first formulate a precise problem logical action of code deformation \cref{prob:logical-measure}.
While this is a special case of the logical action~\cref{prob:general-form-for-logical-action} which is solved by \cref{alg:logical-general-form}, here we present \cref{thm:logical-action-from-common-symplectic-basis} which provides an analytic solution to \cref{prob:logical-measure}.
We then provide an example of how this can be applied to analyze an infinite family of lattice surgery circuits in \cref{sec:lattice-surgery-example}.
In \cref{sec:general-form-two-groups}, we find a general form circuit equivalent to measuring two stabilizer groups, which is a result that may be of independent interest, and which we leverage in our proof of \cref{thm:logical-action-from-common-symplectic-basis} which is given in \cref{sec:two-stab-groups-gen-form}.

\begin{problem}[Logical action of code deformation]
\label{prob:logical-measure}
Consider the $n$-qubit stabilizer group $S_\inn$ and a circuit $\mathcal{M}$ that first measures a set of commuting Pauli operators that generate an $n$-qubit stabilizer group $M$, and then applies some Pauli unitaries conditioned on the measurement outcomes.
Additionally consider stabilizer codes $\comm{n,k_\inn,C_\inn}$ and $\comm{n,k_\out,C_\out}$ 
with stabilizer groups $S_\inn$ and $S_\out \subset M \cdot (S_\inn \cap M^\perp)$.

Check whether $\mathcal{M}$ is a logical operation circuit with input and output codes $\comm{n,k_\inn,C_\inn}$ and $\comm{n,k_\out,C_\out}$.
If so, find a general form circuit $\mathcal{M}_\mathcal{L}$ equivalent to the logical action of $\mathcal{M}$ (\cref{def:logical-action}) and an associated relation between the outcomes of $\mathcal{M}$ and $\mathcal{M}_\mathcal{L}$.
\end{problem}

Let us clarify what the associated relation is between the outcomes of $\mathcal{M}$ and $\mathcal{M}_\mathcal{L}$ mentioned above.
When the logical action of $\mathcal{M}$ is equivalent to a general form circuit $\mathcal{M}_\mathcal{L}$
there exists a relation $\mathcal{R}$ between the outcomes $v_M$ of $\mathcal{M}$ and the outcomes $m$ of $\mathcal{M}_\mathcal{L}$
such that the following property holds: 
The action of  $\mathcal{E}(k_\inn,C_\inn) \circ \mathcal{M} \circ \mathcal{E}^\dagger(k_\out,C_\out)$ 
upon outcome $\textbf{0} \oplus v_M \oplus \textbf{0}$ is equivalent to the action of $\mathcal{M}_\mathcal{L}$ upon outcome $m$ if and only if the relation $\mathcal{R}$ is true for $v_M$ and $m$.
We say that $\mathcal{R}$ is a relation associated with the equivalence of the logical action of the circuit $\mathcal{M}$ and the circuit $\mathcal{M}_\mathcal{L}$.

\begin{figure}[ht]
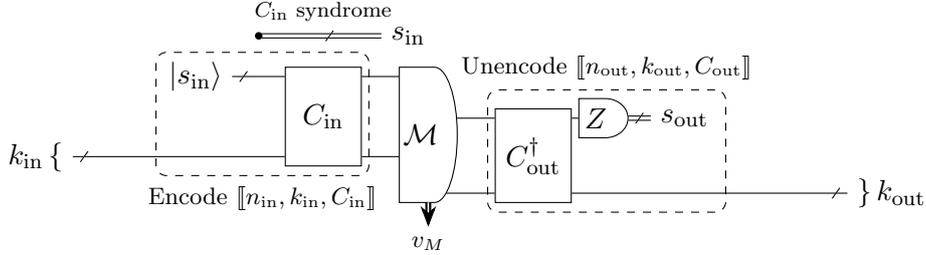

    \centering
    \loadfig{fig-logical-action-measure}
    \caption[Circuit defining logical action of measuring a stabilizer group]{\label{fig:logical-instrument-measure}
    The circuit $\mathcal{E}(k_\inn,C_\inn) \circ \mathcal{M} \circ \mathcal{E}^\dagger(k_\out,C_\out)$ with outcome vector $v' = s_\inn \oplus v_M \oplus s_\out$ used for defining the logical action of a logical operation circuit $\mathcal{M}$ 
    with outcome vector $v_M$, input code $\comm{\ni, k_\inn, C_\inn}$ and output code $\comm{\no, k_\out, C_\out}$~(\cref{prob:logical-measure}).
    }
\end{figure}

It is useful to define a symplectic basis that clarifies how any two stabilizer groups relate to each other.
In \cref{sec:sim-diag-of-groups} we  prove any two $n$-qubit stabilizer groups have a common symplectic basis, defined as:
\begin{definition}[Common symplectic basis of two stabilizer groups]
\label{def:common-symplectic-basis}
A common symplectic basis $\mathcal{B}$ for two $n$-qubit stabilizer groups $S_\inn$ and $M$ is a symplectic basis that can be grouped into ten ordered sets $\Zdelta,\Xdelta,\Zcap,\Xcap,\Zs,\Xs,\Zm,\Xm,\Z,\X$ as 
in \cref{fig:common-symplectic-basis}, and such that $S_\inn = \ip{ \Zdelta \cup \Zcap \cup \Zs }$ and  $M = \ip{ \Xdelta \cup \Zcap \cup \Zm }$.
\end{definition}

\begin{figure}[htp]
    \centering
     \begin{subfigure}[b]{\textwidth}
        \centering
        \includegraphics[scale=0.39]{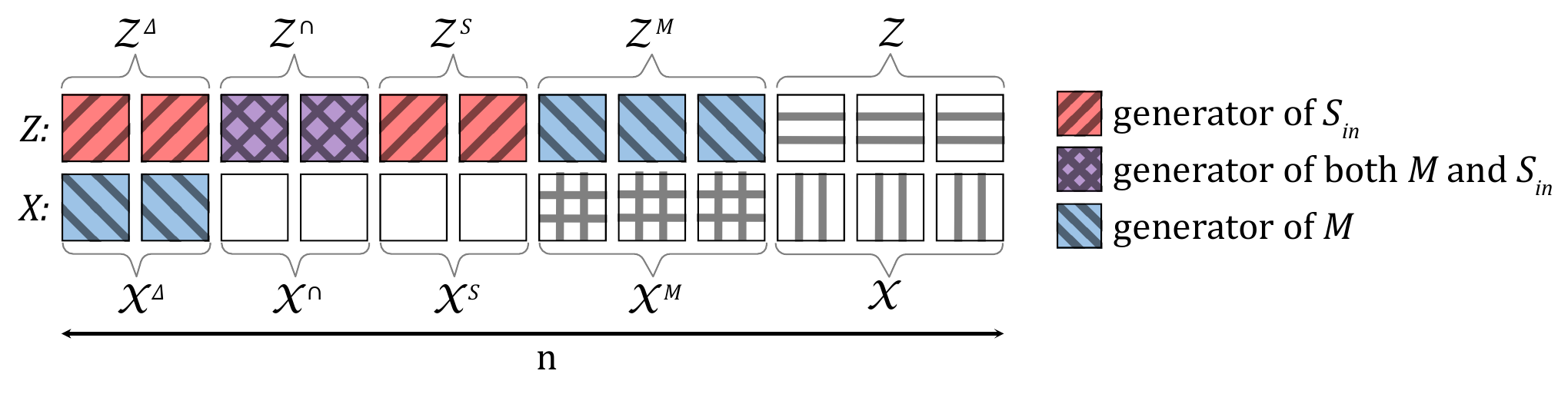}
        \caption{A common symplectic basis $\mathcal{B}$ of $n$-qubit stabilizer groups $S_\inn$ and $M$~(\cref{def:common-symplectic-basis})
        consists of $2n$ Pauli operators.
        Boxes correspond to basis elements of $\mathcal{B}$, only vertically adjacent elements anti-commute.}
        \label{fig:common-symplectic-basis}
    \end{subfigure}
    \vskip 1cm 
    \begin{subfigure}[b]{\textwidth}
        \centering
        \includegraphics[scale=0.39]{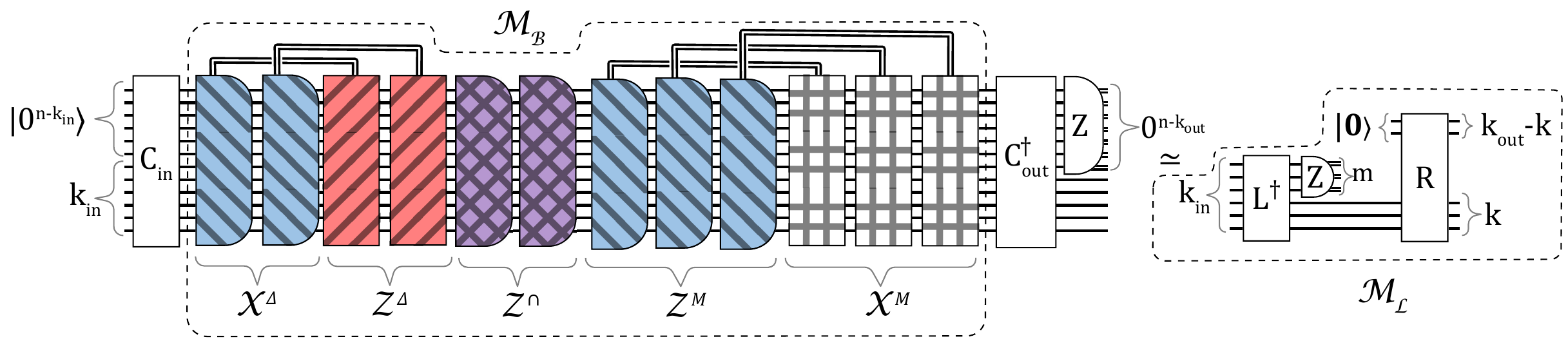}
        \caption{
        A general form circuit $\mathcal{M}_\mathcal{L}$ equivalent to the logical action of circuit $\mathcal{M}_\mathcal{B}$ with respect to input and output codes $\comm{n,k_\inn,C_\inn}$ and $\comm{n,k_\out,C_\out}$, with stabilizer groups $S_\inn$
        and $S_\out \subset M \cdot(S_\inn \cap M^\perp)$ as described in \cref{thm:logical-action-from-common-symplectic-basis}.
        The circuit $\mathcal{M}_\mathcal{B}$ measures a complete set of generators of $M$ and applies conditional Paulis
        determined by the common symplectic basis $\mathcal{B}$ of $S_\inn$ and $M$ (see in~\cref{def:common-symplectic-basis-circuit}).
        }
        \label{fig:common-symplectic-basis-circuit}
    \end{subfigure}
    \vskip 1cm
    \begin{subfigure}[b]{\textwidth}
        \centering
        \includegraphics[scale=0.39]{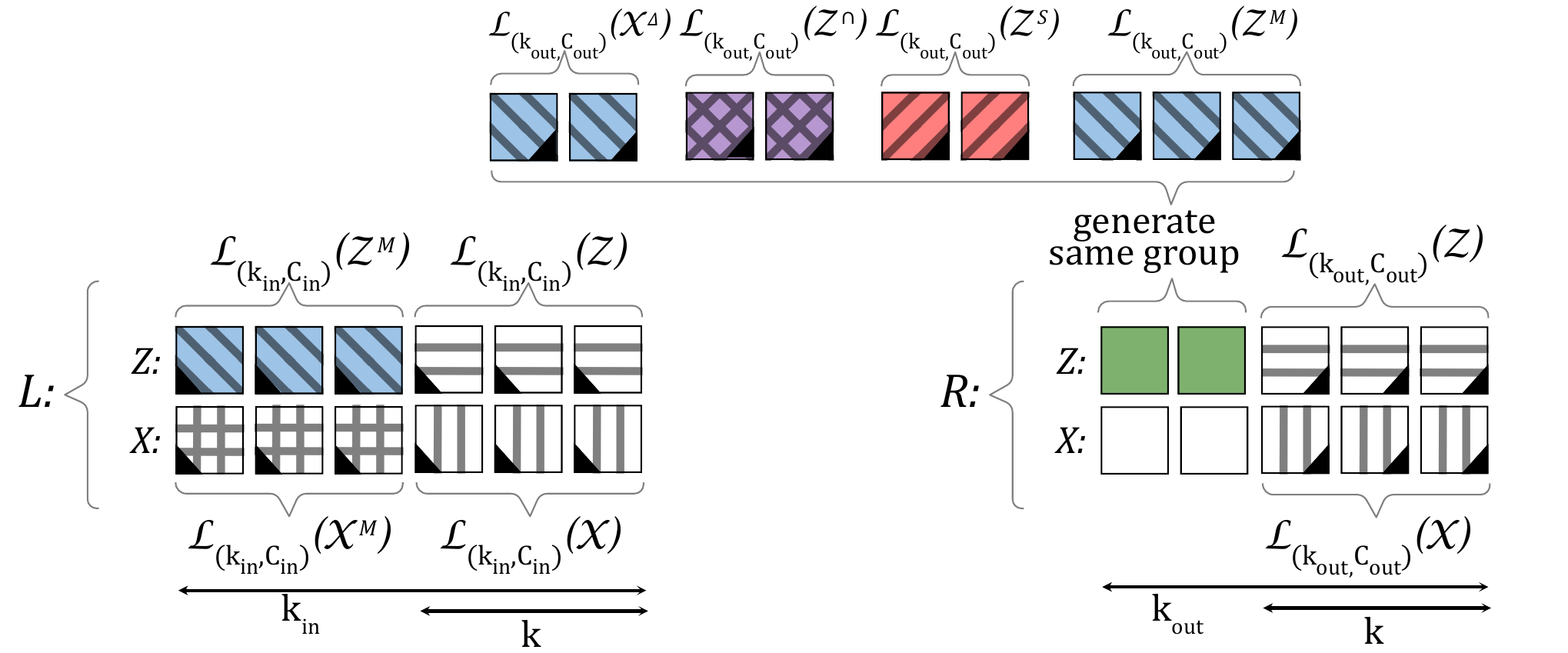}
        \caption{
        The left and right Clifford unitaries $L$ and $R$ of the general form circuit $\mathcal{M}_\mathcal{L}$, where $\mathcal{M}_\mathcal{L}$ has $k$ inner qubits, zero random bits and all-zero condition matrices $A,A_x,A_z$.
        To specify the $k_\inn$-qubit and $k_\out$-qubit Clifford unitaries $L$ and $R$, we show the images of the standard $k_\inn$-qubit and $k_\out$-qubit Pauli bases under each.
        Most of these images correspond to images of specific elements of the symplectic basis $\mathcal{B}$ under the logical operator maps $\mathcal{L}_{(k_\inn,C_\inn)}$ and $\mathcal{L}_{(k_\out,C_\out)}$ (which map $n$-qubit logical operators representatives for codes $\comm{n,k_\inn,C_\inn}$ and $\comm{n,k_\out,C_\out}$ to elements of the $k_\inn$-qubit and $k_\out$-qubit logical Pauli group
        as described in \cref{eq:logical-operator-map}).
        Those images of Pauli basis elements under $R$ marked in green are found however from the combination of specific elements of the symplectic basis $\mathcal{B}$ under  $\mathcal{L}_{(k_\out,C_\out)}$.
        } 
        \label{fig:common-symplectic-basis-logical-action}
    \end{subfigure}
    \caption[Common symplectic basis and logical action of code deformation]{
    \label{fig:logical-action-from-common-symplectic-basis}
    Analyzing the logical action with respect to input and output codes $S_\inn$ and $ S_\out \subset M\cdot(S_\inn \cap M^\perp)$ of a code deformation circuit which measures the stabilizer group $M$.
    }
    
\end{figure}

We restrict our attention in \cref{prob:logical-measure} to symplectic-basis code deformation circuits defined as follows (note that any code deformation circuit $\mathcal{M}$ that measures a group $M$ is equivalent up to conditional Paulis to a symplectic code deformation circuit $\mathcal{M}_\mathcal{B}$ that measures $M$).
\begin{definition}[Symplectic-basis code deformation circuit]
\label{def:common-symplectic-basis-circuit}
Given a common symplectic basis $\mathcal{B}$ of two $n$-qubit stabilizer groups $S_\inn$ and $M$, we define a stabilizer circuit $\mathcal{M}_\mathcal{B}$~(\cref{fig:common-symplectic-basis-circuit}) with outcome vector $v_M$:
\begin{equation}
    \begin{array}{cl}
        (i) & \text{measure } \Xdelta, \Zcap, \Zm \text{(a full set of generators of }M), \\
        (ii) &  \text{apply Pauli unitaries }\Zdelta_j \text{ conditioned on measuring }\Xdelta_j\text{ being one, for }j \in [|\Zdelta|],\\
        (iii) &  \text{apply Pauli unitaries }\Xm_j\text{ conditioned on measuring }\Zm_j\text{ being one, for } j \in [|\Zm|]. \\
    \end{array} \nonumber
\end{equation}
\end{definition}
By construction, the choice of the corrections in $\mathcal{M}_\mathcal{B}$ ensures that it is a logical operation circuit for 
any output code that satisfies the condition of \cref{prob:logical-measure}, that is $S_\out \subset M \cdot (S_\inn \cap M^\perp)$.
To see that this is the case requires showing that $s_\out = 0$ when $s_\inn =0$ for $\mathcal{E}(k_\inn,C_\inn) \circ \mathcal{M}_\mathcal{B} \circ \mathcal{E}^\dagger(k_\out,C_\out)$. 
First note that the state prior to the measurement in \cref{fig:common-symplectic-basis-circuit}, which corresponds to the state after applying the encoding circuit $\mathcal{E}(k_\inn,C_\inn)$ with $s_\inn =0$, is stabilized by all elements of $S$. 
From the common symplectic basis \cref{fig:common-symplectic-basis}, it is clear that the expectation value of all elements of $S_\inn \cap M^\perp = \Zcap \cup \Zs$ are unchanged by measuring $M$ and from applying corrections in $\Zdelta \cup \Xm$.
The corrections ensure that in addition to $S_\inn \cap M^\perp$, all elements of $M$ are also stabilized by the state which is fed into the encoding circuit $\mathcal{E}^\dagger(k_\out,C_\out)$, such that $s_\out =0$.

A solution to  \cref{prob:logical-measure} with $\mathcal{M}=\mathcal{M}_{\mathcal{B}}$ is given in \cref{thm:logical-action-from-common-symplectic-basis} as follows:

\begin{theorem}[Logical action of code deformation]
\label{thm:logical-action-from-common-symplectic-basis}
Consider the $n$-qubit stabilizer groups $S_\inn$ and $M$ with a common symplectic basis $\mathcal{B} =  \ip{\Zdelta,\Xdelta,\Zcap,\Xcap,\Zs,\Xs,\Zm,\Xm,\Z,\X}$ (\cref{def:common-symplectic-basis}, \cref{fig:common-symplectic-basis}), 
and input and output codes $\comm{n,k_\inn,C_\inn}$ and $\comm{n,k_\out,C_\out}$ with stabilizer groups $S_\inn$ and $S_\out \subset M \cdot (S_\inn \cap M^\perp)$.

The symplectic-basis code deformation circuit $\mathcal{M}_{\mathcal{B}}$~(\cref{def:common-symplectic-basis-circuit}, \cref{fig:common-symplectic-basis-circuit}) is a logical operation circuit with input and output codes $\comm{n,k_\inn,C_\inn}$ and $\comm{n,k_\out,C_\out}$.
The logical action of $\mathcal{M}_{\mathcal{B}}$ is equivalent to the general form circuit
$\mathcal{M}_\mathcal{L}$ with zero random bits, $k = |\Z|$ inner qubits,
left $k_\inn$-qubit Clifford unitary $\Ci$, right $k_\out$-qubit  Clifford unitary $\Co$ defined using
using maps $\mathcal{L}_{(k_\inn,C_\inn)}$ and $\mathcal{L}_{(k_\out,C_\out)}$~\cref{eq:logical-operator-map},
(\cref{fig:common-symplectic-basis-logical-action}):
\begin{equation}
\label{eq:sm-left-right}
\arraycolsep=1pt
\begin{array}{rclcrclrcl}
 \langle \Co Z_j \Co^\dagger &:& j \in [k_\out-k]  \rangle &~=~~& \langle   \mathcal{L}_{(k_\out,C_\out)}(P) &:& P \in \Xdelta \cup \Zcap \cup \Zs \cup \Zm \rangle & & \\
 \Ci Z_{k_\inn - k + j} \Ci^\dagger&=&\mathcal{L}_{(k_\inn,C_\inn)}(\Z_j),& ~~~~& \Co Z_{k_\out - k + j} \Co^\dagger&=&\mathcal{L}_{(k_\out,C_\out)}(\Z_j), & ~~j  &\in& [k], \\
 \Ci Z_j \Ci^\dagger &=& \mathcal{L}_{(k_\inn,C_\inn)}(\Zm_j),&~~~~&  & & & j& \in  & [k_\inn - k], \\
 \Ci X_{k_\inn - k + j} \Ci^\dagger &=& \mathcal{L}_{(k_\inn,C_\inn)}(\X_j),& ~~~~&\Co X_{k_\out - k + j} \Co^\dagger &=& \mathcal{L}_{(k_\out,C_\out)}(\X_j),& ~~j  &\in& [k], 
\end{array}
\end{equation}
All condition matrices of $\mathcal{M}_\mathcal{L}$ are zero.
Measurement outcomes of $\Zm_1,\ldots, \Zm_{k-k_\inn}$ in $\mathcal{M}_\mathcal{B}$ are input-dependent and are equal to measurement outcome vector $m$ of $\mathcal{M}_\mathcal{L}$, 
measurement outcomes of $\Xdelta$ in $\mathcal{M}_\mathcal{B}$ are random, 
and the measurement outcomes of $\Zcap$ of $\mathcal{M}_\mathcal{B}$ are zero.
\end{theorem}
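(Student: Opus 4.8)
The plan is to establish the two assertions — that $\mathcal{M}_\mathcal{B}$ is a logical operation circuit, and that its logical action equals the stated general form circuit $\mathcal{M}_\mathcal{L}$ — by direct analysis in the stabilizer formalism, exploiting the combinatorial structure of the common symplectic basis. First I would record the structural consequences of \cref{def:common-symplectic-basis} and \cref{fig:common-symplectic-basis}: since only vertically adjacent basis elements anti-commute, $S_\inn \cap M^\perp = \langle \Zcap \cup \Zs\rangle$, the product $M\cdot(S_\inn\cap M^\perp)$ equals $\langle \Xdelta\cup\Zcap\cup\Zs\cup\Zm\rangle$ (a stabilizer group with $n-k$ generators, where $k=|\Z|=|\X|$), and hence $k_\inn-k=|\Zm|=|\Xm|$ and $k_\out\ge k$. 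I would also note that each of $\Zm,\Xm,\Z,\X$ lies in $S_\inn^\perp$, that the images $\mathcal{L}_{(k_\inn,C_\inn)}$ of these elements form a symplectic basis of $\mathcal{P}_{k_\inn}$, and that $\Xdelta,\Zcap,\Zs,\Zm,\Z,\X$ lie in $S_\out^\perp$ whenever $S_\out\subset M\cdot(S_\inn\cap M^\perp)$. These facts are precisely what makes the definitions of $\Ci,\Co$ in \cref{eq:sm-left-right} and \cref{fig:common-symplectic-basis-logical-action} well posed.

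Second, to show $\mathcal{M}_\mathcal{B}$ is a logical operation circuit I would run the stabilizer formalism on an input encoded in $\comm{n,k_\inn,C_\inn}$ with $s_\inn=0$, so the state is stabilized by $S_\inn=\langle\Zdelta\cup\Zcap\cup\Zs\rangle$ with $+$ signs. Measuring $\Zcap$ gives outcome $0$ deterministically and leaves the state unchanged; measuring each $\Xdelta_j$ has a uniformly random outcome $r_j$ (it anti-commutes with $\Zdelta_j\in S_\inn$ and commutes with every other generator), after which $\Zdelta_j$ is replaced by $(-1)^{r_j}\Xdelta_j$; measuring each $\Zm_j$ is input-dependent (it commutes with the post-measurement group but is not contained in it). Tracking signs via \cref{prop:measure-as-exp}, the conditional correction $\Zdelta_j$ applied iff $r_j=1$ restores $+\Xdelta_j$, the correction $\Xm_j$ applied iff $m_j=1$ restores $+\Zm_j$, and none of these corrections disturbs the $\Zcap,\Zs$ generators. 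Hence after $\mathcal{M}_\mathcal{B}$ the state is stabilized with $+$ signs by $\langle\Xdelta\cup\Zcap\cup\Zs\cup\Zm\rangle\supseteq S_\out$, so it is encoded in $\comm{n,k_\out,C_\out}$ with $s_\out=0$; this gives the logical-operation-circuit property and fixes the outcome relation $s_\out=0$ when $s_\inn=0$.

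Third, to identify the logical action I would compute, for each outcome $v_M=(r,0,m)$, the Choi state of $\mathcal{E}(k_\inn,C_\inn)\circ\mathcal{M}_\mathcal{B}\circ\mathcal{E}^\dagger(k_\out,C_\out)$ restricted to $s_\inn=0$, by propagating the stabilizer group of $|\mathrm{Bell}_{k_\inn}\rangle$ through the circuit acting on the first register. Writing the Bell stabilizers in the symplectic basis $\mathcal{L}_{(k_\inn,C_\inn)}(\Zm_j),\mathcal{L}_{(k_\inn,C_\inn)}(\Xm_j),\mathcal{L}_{(k_\inn,C_\inn)}(\Z_j),\mathcal{L}_{(k_\inn,C_\inn)}(\X_j)$, applying $\mathcal{E}(k_\inn,C_\inn)$ turns the first-register factors into the $n$-qubit representatives $\Zm_j,\Xm_j,\Z_j,\X_j$ (modulo $S_\inn$, which acts trivially since $S_\inn\otimes I$ is also a stabilizer) and adjoins $S_\inn$; then the measurement-and-correction analysis of the previous paragraph runs on the first register; and finally $\mathcal{E}^\dagger(k_\out,C_\out)$ acts via \cref{eq:encoding-conjugation-action} with $s_\out=0$. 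The resulting stabilizer generators are $\mathcal{L}_{(k_\out,C_\out)}(P)\otimes I$ for $P\in\Xdelta\cup\Zcap\cup\Zs\cup\Zm$, the generators $(-1)^{m_j}\,I\otimes(\mathcal{L}_{(k_\inn,C_\inn)}(\Zm_j))^\ast$, and $\mathcal{L}_{(k_\out,C_\out)}(\Z_j)\otimes(\mathcal{L}_{(k_\inn,C_\inn)}(\Z_j))^\ast$ and $\mathcal{L}_{(k_\out,C_\out)}(\X_j)\otimes(\mathcal{L}_{(k_\inn,C_\inn)}(\X_j))^\ast$. Comparing with the Choi state of a general form circuit in \cref{fig:general-form-choi} with zero random bits, $k$ inner qubits, and the left/right Cliffords of \cref{eq:sm-left-right}, these generators match term by term exactly when all three condition matrices vanish, which identifies $\mathcal{M}_\mathcal{L}$ (up to the $X$-image freedom, which does not affect the action). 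Finally I would read off the outcome relation — the $\Zcap$ outcomes are always $0$, the $\Xdelta$ outcomes are random and the action is independent of them, and the $\Zm$ outcomes equal the measurement outcome vector $m$ of $\mathcal{M}_\mathcal{L}$ — and apply the compression map of \cref{def:outcome-compression} to both instruments to obtain a bijection of compressed outcomes under which the actions agree, i.e.\ equivalence in the sense of \cref{def:instrument-equality}.

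The step I expect to be the main obstacle is the third one: carefully tracking signs through the interleaved non-destructive measurements and conditional Paulis on the Choi state, and correctly relating the Heisenberg action of $\mathcal{E}(k_\inn,C_\inn)$ and $\mathcal{E}^\dagger(k_\out,C_\out)$ to the symplectic-basis representatives (which are only defined modulo $S_\inn$ and $S_\out$), including the element-wise complex-conjugation factors inherited from $|\mathrm{Bell}_{k_\inn}\rangle$. An alternative that trades this bookkeeping for more machinery is to first establish a general form for the circuit that measures the two stabilizer groups $S_\inn$ and $M$ (the route of \cref{sec:general-form-two-groups} and \cref{sec:two-stab-groups-gen-form}) and then compose with (un)encoding circuits, invoking \cref{thm:general-form} to keep every intermediate circuit in general form.
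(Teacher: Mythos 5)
Your proposal is correct, but it takes a genuinely different route from the paper's proof. You propose to fix $s_\inn=0$ from the outset and propagate the Bell-state stabilizers of the Choi circuit directly through the encoder, the measurements of $\Xdelta,\Zcap,\Zm$, the conditional corrections, and the unencoder, arriving at the generator list that the paper records as \cref{eq:sm-stab} and then matching it against \cref{fig:general-form-choi}. The paper instead reaches \cref{eq:sm-stab} by a detour: it inserts the redundant syndrome-extraction circuit $\mathcal{S}_\mathcal{B}$ after the encoder, rewrites the random-syndrome preparation as Bell pairs followed by deferred destructive measurements, and reorders operations to exhibit the composite as $\text{Choi}(\mathcal{S}_\mathcal{B}\circ\mathcal{M}_\mathcal{B})$ followed by two unencoding maps, so that the Choi stabilizers can simply be imported from \cref{cor:common-symplectic-basis-choi} (itself a consequence of \cref{lem:general-form-two-stab-groups}) and pushed through \cref{eq:encoding-conjugation-action}. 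You correctly identify this as the ``more machinery'' alternative in your closing remark. What the paper's route buys is that all the delicate sign-tracking through interleaved non-destructive measurements and conditional Paulis is quarantined inside the self-contained \cref{lem:general-form-two-stab-groups} (whose proof is a short chain of circuit rewrites rather than by-hand Heisenberg propagation), and that the intermediate circuit keeps every input syndrome $s_\inn$ alive, which is what supports the paper's closing remark on extending the theorem to nonzero $s_\inn$ via the maps $\mathcal{G}_{\comm{n,k_\inn,C_\inn}}$, $\mathcal{G}_{\comm{n,k_\out,C_\out}}$. What your route buys is directness: you avoid introducing $\mathcal{C}_2$ and $\mathcal{C}_3$ and justifying their equivalence, at the cost of carrying out the measurement-and-correction bookkeeping twice (once on a code state for the logical-operation-circuit claim, once on the Choi state) --- a cost you explicitly acknowledge as the main obstacle. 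Your identification of which outcomes are random, input-dependent, and redundant, and your handling of the ``modulo $S_\inn$'' ambiguity of the logical representatives, match the paper's conclusions, so I see no gap; only note that the first line of \cref{eq:sm-left-right} is a group equality, which your term-by-term matching accommodates because the corresponding output stabilizers all carry $+$ signs.
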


\subsection{Example: lattice surgery for repetition codes}
\label{sec:lattice-surgery-example}

Here we provide an example demonstrating how the analytic results of \cref{thm:logical-action-from-common-symplectic-basis} can be used to extract the logical action of an infinite family of code deformation circuits. 

In this example, we consider $2 d$ qubits laid out in two rows, where $ d \in \mathbb{N}$.
Initially two logical qubits are each encoded in a $d$-qubit repetition code (one code in each row of qubits) which together have the stabilizer group $S_{d}$~(left of \cref{fig:example-sm}). 
Then a two-step code deformation circuit $\mathcal{C}_d$ performs lattice surgery by measuring the generators of a $2d$-qubit coupled repetition code with stabilizer group $M_d$~(right of \cref{fig:example-sm}), before returning to the initial code by measuring the generators of $S_{d}$.
We assume that there are Pauli corrections after each of the two code deformation step such that they are symplectic-basis code deformation circuits.
Then $\mathcal{C}_d = \mathcal{M}_{\mathcal{B}_d} \circ \mathcal{M}_{\mathcal{B}'_d}$, where $\mathcal{B}_d$ is a common symplectic basis for $S_{d}$ and $M_d$ (see \cref{fig:common-symplectic-basis-example}) and $\mathcal{B}'_d$ is a common symplectic basis for $M_d$ and $S_{d}$.
Our goal in this example is to find the logical action of $\mathcal{C}_d$ with respect to input and output codes which are both equal to $\comm{2d,2,C_{S_{d}}}$ for an encoding unitary $C_{S_{d}}$ for $S_{d}$ which is specified in \cref{fig:example-c}.

\begin{figure}[htp]
    \centering
     \begin{subfigure}[b]{\textwidth}
        \centering
        \includegraphics[width=\textwidth]{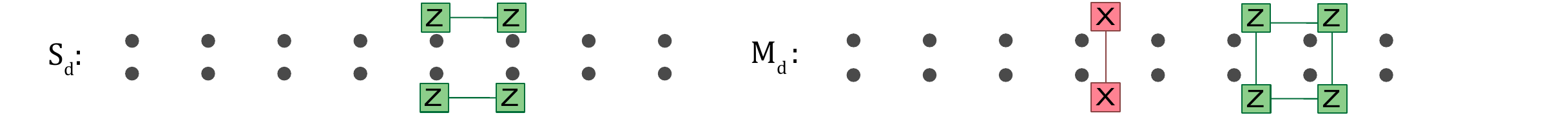}
        \caption{
        The stabilizer group $S_{d}$ for a pair of $d$-qubit repetition codes (left) and the stabilizer group $M_d$ of a coupled repetition code (right).
        $S_{d}$ is generated by horizontal neighbor pairs $Z_{(i,j)}Z_{(i,j+1)}$ with $j \in [d-1]$ on each row $i \in [2]$.
        $M_d$ is generated by vertical neighbor pairs $X_{(1,j)}X_{(2,j)}$ with $j \in [d]$, and squares $Z_{(1,j)}Z_{(1,j+1)} Z_{(2,j)}Z_{(2,j+1)}$ with $j \in [d-1]$.
        Logical operator representatives are fixed by subfigure (c) for $S_{d}$ to be $\{\bar{X}_i = \prod_j X_{(i,j)} ,\bar{Z}_i = Z_{(i,d)}\}$ for $i \in [2]$, and for $M_d$ to be $\{\bar{X} = \prod_j X_{(1,j)}, \bar{Z} = Z_{(1,d)}Z_{(2,d)}\}$. 
        }
        \label{fig:example-sm}
    \end{subfigure}
    \begin{subfigure}[b]{\textwidth}
        \centering
        \includegraphics[width=\textwidth]{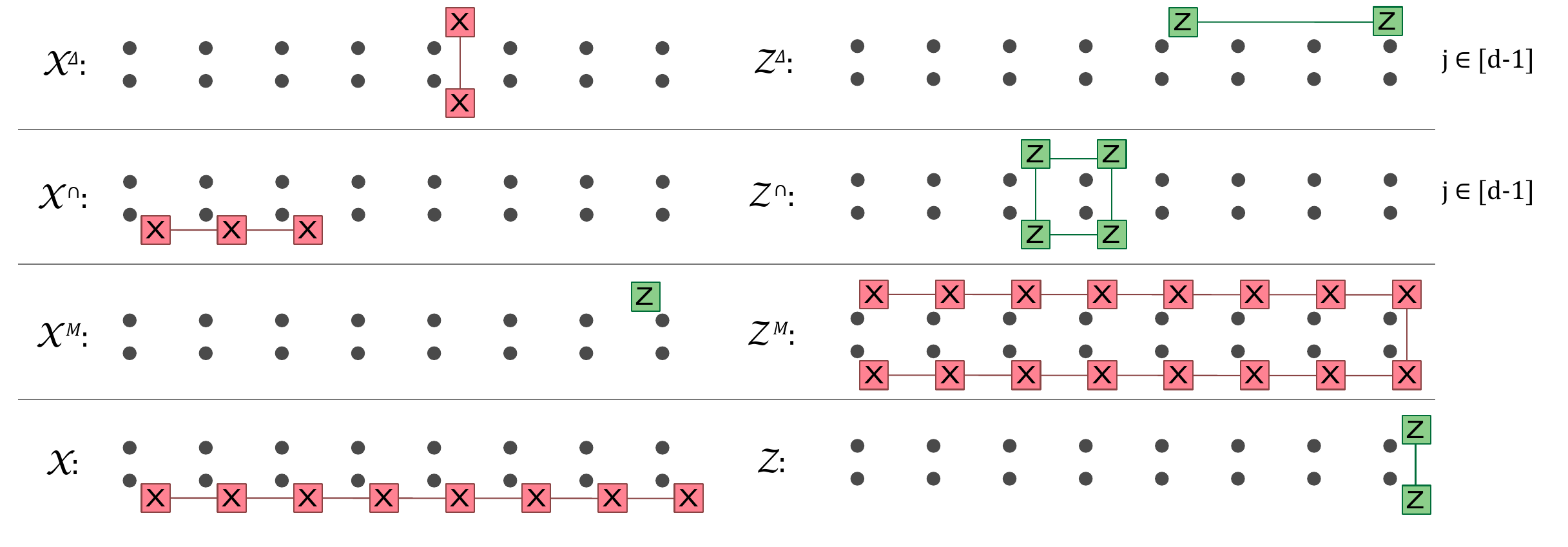}
        \caption{
        The common symplectic basis $\mathcal{B}_d$ for stabilizer groups $S_{d}$ and $M_d$, where $|\Zdelta| = |\Xdelta| = |\Zcap| = |\Xcap| = d-1$, 
        $|\Zs| = |\Xs| = 0$, $|\Zm| = |\Xm| = |\Z| = |\X| = 1$.
        The common symplectic basis $\mathcal{B}'_d$ for stabilizer groups $M_d$ and $S_{d}$ is obtained from $\mathcal{B}_d$ interchanging $\Zdelta \leftrightarrow  \Xdelta$ and $\Zs \leftrightarrow \Zm$, $\Xs \leftrightarrow \Xm$.
        }       \label{fig:common-symplectic-basis-example}
    \end{subfigure}
    \begin{subfigure}[b]{\textwidth}
        \centering
        \includegraphics[width=\textwidth]{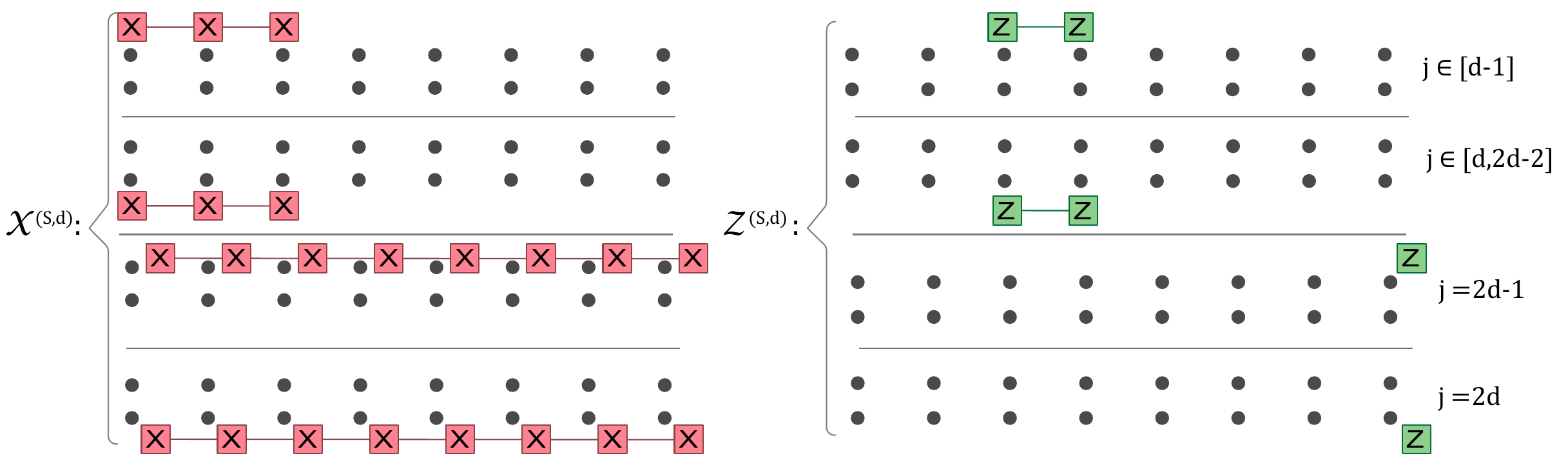}
        \includegraphics[width=\textwidth]{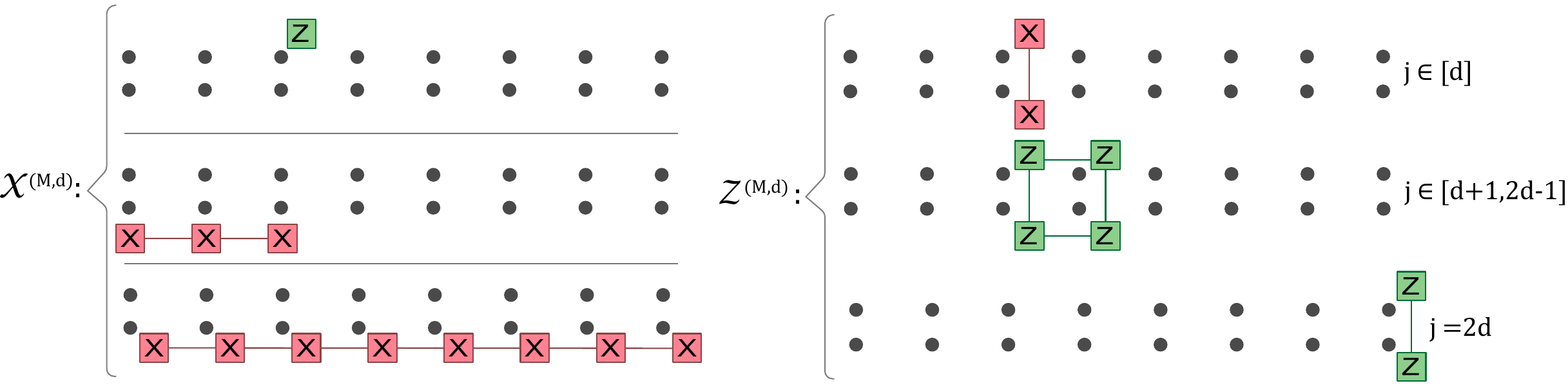}
        \caption{ Encoding unitaries $C_{S_{d}}, C_{M_{d}}$ for codes $\comm{2d,2,C_{S_{d}}}$, $\comm{2d,1,C_{M_{d}}}$ with stabilizer groups $S_{d}$ and $M_d$,
        specified by their action on basis Paulis such that $\mathcal{X}^{\inn}_j = C_{S_{d}} X_j (C_{S_{d}})^{\dagger}$, $\mathcal{Z}^{\inn}_j = C_{S_{d}} Z_j (C_{S_{d}})^{\dagger}$ and $\mathcal{X}^{\out}_j = C_{M_{d}} X_j (C_{M_{d}})^{\dagger}$, $\mathcal{Z}^{\out}_j = C_{M_{d}} Z_j (C_{M_{d}})^{\dagger}$.
        Note that for $j \in [2d - 2]$, the images under $C_{S_{d}}$ are stabilizers of $S_d$, while for $j = 2d-1,2d$ they are logical operators of $S_d$.
        Similarly, for $j \in [d - 1]$, images under $C_{M_{d}}$ are stabilizers of $M_d$, while for $j = 2d$ they are logical operators of $S_d$.
        }
        \label{fig:example-c}
    \end{subfigure}
    \caption[Common symplectic basis example]{
    \label{fig:common-symplectic-basis-example-all}
    Stabilizer groups, symplectic Pauli basis and encoding unitaries to analyze the example of lattice surgery for repetition codes on $2d$ qubits arranged in two rows using \cref{thm:logical-action-from-common-symplectic-basis}.
    To clarify any ambiguity in this figure, explicit definitions are provided in~\cref{app:sm-equations}.
    }
    
\end{figure}

To find the logical action of $\mathcal{C}_{d}$, we first find and then compose the logical actions of the circuits $\mathcal{M}_{\mathcal{B}_d}$ and $\mathcal{M}_{\mathcal{B}'_d}$.
To do so, we define encoding unitaries $C_{S_{d}}$ and $C_{M_{d}}$ for codes $\comm{2d,2,C_{S_{d}}}$ and $\comm{2d,1,C_{M_{d}}}$ with stabilizer groups $S_{d}$ and $M_d$ as specified in \cref{fig:example-c}.  
Applying \cref{thm:logical-action-from-common-symplectic-basis} to the first step yields a general form circuit $\mathcal{M}_\mathcal{L}$ with action equivalent to the logical action of $\mathcal{M}_{\mathcal{B}_d}$ with respect to input and output codes $\comm{2d,2,C_{S_{d}}}$ and $\comm{2d,1,C_{M_{d}}}$.
The circuit $\mathcal{M}_\mathcal{L}$
has $k=1$ inner qubit, a $k_\inn=2$-qubit left Clifford $L$ and a $k_\out=1$-qubit right Clifford $R$ which are defined by the equations:
\begin{equation}
\begin{array}{rclcrclcrcl}
L Z_1 L^\dagger & = & X_1 X_2,&~~& L Z_2 L^\dagger &=& Z_1 Z_2,&~~& L X_2 L^\dagger &=& X_2, \\
 & &                          &~~& R Z_1 R^\dagger & = & Z_1 ,&~~& R X_1 R^\dagger &=& X_1. \\
\end{array}
\end{equation}
These equations can be identified from the specification of $L$ and $R$ in terms of the logical operator maps $\mathcal{L}_{(k_\inn,C_{S_{d}})}(P)$ and $\mathcal{L}_{(k_\out,C_{M_{d}})}(P)$ (according to \cref{thm:logical-action-from-common-symplectic-basis}), which in turn can be read off \cref{fig:example-c} by considering the pre-images of the logical representatives given by the common symplectic basis of $S_d$ and $M_d$ under $C_{S_{d}}$ and $C_{M_{d}}$ respectively (according to~\cref{eq:logical-operator-map}).
We can check that $R=I$ and $L = \mathrm{H}_1 \mathrm{CX}_{1,2}$.

Similarly, applying \cref{thm:logical-action-from-common-symplectic-basis} to the second step yields a general form circuit $\mathcal{M}_\mathcal{L'}$ with action equivalent to the logical action of $\mathcal{M}_{\mathcal{B'}_d}$ with respect to input and output codes $\comm{2d,1,C_{M_{d}}}$ and $\comm{2d,2,C_{S_{d}}}$.
The circuit $\mathcal{M}_\mathcal{L'}$
has $k'=1$ inner qubit, a $k'_\inn=1$-qubit left Clifford $L'$ and a $k'_\out=2$-qubit right Clifford $R'$ which are defined by the equations:
\begin{equation}
\begin{array}{rclcrclcrcl}
 & &                          &~~& L' Z_1 (L')^\dagger & = & Z_1 ,&~~& L' X_1 (L')^\dagger &=& X_1. \\
R' Z_1 (R')^\dagger & = & X_1 X_2,&~~& R' Z_2 (R')^\dagger &=& Z_1 Z_2,&~~& R' X_2 (R')^\dagger &=& X_2, \\
\end{array}
\end{equation}
We have $L'=I$ and $R' = \mathrm{H}_1 \mathrm{CX}_{1,2}$.

Using these results, we see that  circuit $\mathcal{C}_d = \mathcal{M}_{\mathcal{B}_d} \circ \mathcal{M}_{\mathcal{B}'_d}$
has logical action $\mathcal{M}_\mathcal{L} \circ \mathcal{M}_\mathcal{L'}$ of measuring $X \otimes X$ followed by $Z\otimes I$
conditioned on the outcome $1$ of $X \otimes X$ with input and output code $S_{d}$.
This follows from the following circuit identities: 
\begin{equation}
\begin{tikzpicture}
\begin{yquant}
qubit {} q1;
qubit {} q2;
nobit  sp;
qubit {} q3;
qubit {} q4;
[name=c1] cnot q2 | q1;
[name=c2] H q1;
[name=c3] dmeter {$Z$} q1; 
[name=c4] inspect {$m$} q1;
discard q1;
hspace {1mm} q1,q2;
[name=d1] init {$\ket{0}$} q1;
[name=d2] H q1;
[name=d3] cnot q2 | q1;
hspace {1mm} q1,q2;
discard q1,q2;
text {$\simeq$} q1;
hspace {-1mm} q1,q2;
init {} q1;
init {} q2;
cnot q2 | q1;
H q1;
dmeter {$Z$} q1; 
inspect {$m$} q1;
discard q1;
init {$\ket{m}$} q1;
box {$X^m$} q1;
H q1;
cnot q2 | q1;
hspace {1mm} q1,q2;
discard q1,q2;
text {$\simeq$} q1;
hspace {-1mm} q1,q2;
discard q3,q4;
hspace {25mm} q3,q4;
text {$\simeq$} q3;
align q3,q4;
init {} q3;
init {} q4;
cnot q4 | q3;
H q3;
[shape=yquant-dmeter, name=mz] box {$Z$} q3; 
H q3;
cnot q4 | q3;
box {$Z^m$} q3;
hspace {1mm} q3,q4;
discard q3,q4;
hspace {1mm} q3,q4;
text {$\simeq$} q3;
hspace {1mm} q3,q4;
init {} q3;
init {} q4;
[shape=yquant-dmeter, name=mxx] box {$XX$} (q3,q4); 
box {$Z^m$} q3;
hspace {1mm} q3,q4;
discard q3,q4;

\node[draw, dashed,rounded corners,fit=(c1) (c2) (c3) (c4), inner xsep=0.5mm, label={above:{\footnotesize $\mathcal{M}_\mathcal{L}$}}] (circ2) {};
\node[draw, dashed,rounded corners,fit=(d1) (d2) (d3), inner xsep=0.5mm, label={above:{\footnotesize $\mathcal{M}_\mathcal{L'}$}}] (circ2) {};
\node[fit=(mz),inner ysep = 3pt, label={above:{\footnotesize $m$}}] (circ3) {};
\node[fit=(mxx),inner ysep = 3pt, label={above:{\footnotesize $m$}}] (circ4) {};
\draw[thick,double] (mz.north)--(circ3.north);
\draw[thick,double] (mxx.north)--(circ4.north);
\end{yquant}
\end{tikzpicture}
\end{equation}

The outcome $m$ of the logical $XX$ measurement is equal to the outcome of $\Zm_1$ in $\mathcal{M}_{\mathcal{B}_d}$.
The rest of the outcomes in the circuits $\mathcal{M}_{\mathcal{B}_d}$ and $\mathcal{M}_{\mathcal{B}'_d}$
are either zero or random. 
The actions of the circuits $\mathcal{M}_{\mathcal{B}_d}$ and $\mathcal{M}_{\mathcal{B}'_d}$
do not depend on the random outcomes because of the choice of conditional Pauli unitaries.

\newpage
\subsection{General form for measuring two stabilizer groups}
\label{sec:general-form-two-groups}

Here we consider a circuit which sequentially measures two stabilizer groups, and find an equivalent general form circuit~(\cref{lem:general-form-two-stab-groups}) and find the stabilizers of the circuit's Choi states~(\cref{cor:common-symplectic-basis-choi}).
To state these results more precisely, it is useful to first define the syndrome extraction circuit in \cref{def:common-symplectic-basis-circuit-no-corrrections}.

\begin{definition}[Syndrome extraction circuit]
\label{def:common-symplectic-basis-circuit-no-corrrections}
Given a common symplectic basis $\mathcal{B}$ of two $n$-qubit stabilizer groups $S_\inn$ and $M$, we define the circuit $\mathcal{S}_\mathcal{B}$ with outcome vector $v_S$:
\begin{equation}
    \begin{array}{cl}
        (i) & \text{measure } \Zdelta, \Zcap, \Zs \text{(a full set of generators of }S_\inn).
    \end{array} \nonumber
\end{equation}
\end{definition}

In \cref{lem:general-form-two-stab-groups}, we derive a general form circuit equivalent to $\mathcal{S}_\mathcal{B} \circ \mathcal{M}_\mathcal{B}$, where the syndrome extraction circuit $\mathcal{S}_\mathcal{B}$~(\cref{def:common-symplectic-basis-circuit-no-corrrections}) measures stabilizer group $S_\inn$ and symplectic-basis code deformation circuit $\mathcal{M}_\mathcal{B}$~(\cref{def:common-symplectic-basis-circuit}) measures stabilizer group $M$, and where $\mathcal{B}$ is a common symplectic basis $\mathcal{B}$ of $S_\inn$ and $M$.

\begin{lemma}[General form circuit for measuring two stabilizer groups]
\label{lem:general-form-two-stab-groups}
Consider two $n$-qubit stabilizer groups $S_\inn$ and $M$ with a common symplectic basis $\mathcal{B}$.
Additionally, consider circuits $\mathcal{S}_\mathcal{B}$ and $\mathcal{M}_\mathcal{B}$
defined from $\mathcal{B}$ in \cref{def:common-symplectic-basis-circuit-no-corrrections,def:common-symplectic-basis-circuit}.

Then the circuit $\mathcal{S}_\mathcal{B} \circ \mathcal{M}_\mathcal{B}$ is equivalent to 
a general form circuit $\mathcal{C}_\text{gen}$ with:
\begin{itemize}[noitemsep]
    \item $k = |\mathcal{Z}|$ inner qubits and zero random bits. 
    \item Left and right Clifford unitaries $L =  C_\mathcal{B}$ and $R = C_\mathcal{B}~(H^{\otimes |\Zdelta| } \otimes I_{n - |\Zdelta|} )$, where is the Clifford unitary that maps the standard Pauli basis to $\mathcal{B}$. 
    \item Condition matrices $A_x$ and $A_z$ are zero, while $A$ is an $(n-k)\times(n-k)$ matrix equal to $\textbf{0}_{|\Zdelta|\times|\Zdelta|} \oplus I_{|\Zcap|+|\Zs|} \oplus \textbf{0}_{|\Zm|\times|\Zm|}$.
\end{itemize}
Circuits $\mathcal{S}_\mathcal{B}$ and $\mathcal{C}_\text{gen}$ have the same action when their outcome vectors $v_S \oplus v_M$ and $m$ satisfy:
$$m = (v_S(\Zdelta_j))_{j \in |\Zdelta|} \oplus (v_S(\Zcap_j))_{j \in |\Zcap|} \oplus (v_S(\Zs_j))_{j \in |\Zs|} \oplus (v_M(\Zm_j))_{j \in |\Zm|}.$$
The outcomes $(v_M(\Xm_j))_{j \in |\Zm|}$ are random and the action of $\mathcal{S}_\mathcal{B} \circ \mathcal{M}_\mathcal{B}$ does 
not depend on them. 
The outcomes $(v_M(\Zcap_j))_{j \in |\Zcap|}$ are redundant and equal to $(v_S(\Zcap_j))_{j \in |\Zcap|}$. 
\end{lemma}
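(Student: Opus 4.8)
The plan is to analyze the circuit $\mathcal{S}_\mathcal{B}\circ\mathcal{M}_\mathcal{B}$ directly by tracking stabilizer observables through the two rounds of measurement, using the commutation structure encoded in the common symplectic basis $\mathcal{B}$ (\cref{fig:common-symplectic-basis}). First I would fix the Clifford $C_\mathcal{B}$ that maps the standard Pauli basis $X_1,Z_1,\ldots,X_n,Z_n$ to $\mathcal{B}$ in the ordering $(\Zdelta,\Xdelta,\Zcap,\Xcap,\Zs,\Xs,\Zm,\Xm,\Z,\X)$; this exists because $\mathcal{B}$ is a symplectic basis. Conjugating everything by $C_\mathcal{B}^\dagger$ reduces the problem to the ``standard position'' case: $S_\inn$ becomes $\ip{Z_1,\ldots,Z_{|\Zdelta|},Z_{|\Zdelta|+|\Xdelta|+1},\ldots}$ (the $Z$'s indexed by the $\Zdelta,\Zcap,\Zs$ slots) and $M$ becomes generated by $X$'s in the $\Xdelta$ slots together with $Z$'s in the $\Zcap,\Zm$ slots. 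So it suffices to prove the lemma when $\mathcal{B}$ is the standard basis and $C_\mathcal{B}=I$, then transport the result back.

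In this standard position, the analysis is a direct application of the induction machinery behind \cref{thm:general-form} (specifically the cases \textbf{(F1)}, \textbf{(F2)}, \textbf{(F3)} of \cref{lem:general-form-induction}), or equivalently of \cref{alg:outcome-complete-stab-sim}, run on $\mathcal{M}_\mathcal{B}$ followed by $\mathcal{S}_\mathcal{B}$ starting from the empty circuit. Measuring the generators of $M$ in round one: the $\Xdelta$ measurements are each anticommuting with a fresh $\Z$-type partner among the as-yet-untouched observables (actually with the corresponding $\Zdelta$ element), so each is a \emph{random} outcome handled via \cref{prop:measure-as-exp} — the conditional Pauli corrections $\Zdelta_j$ applied in step $(ii)$ of \cref{def:common-symplectic-basis-circuit} exactly cancel this randomness dependence. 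The $\Zcap$ and $\Zm$ measurements in round one are deterministic on the empty-input stabilizer state (they commute with everything measured so far), and the step-$(iii)$ corrections $\Xm_j$ are the anticommuting partners that get consumed. Then in round two $\mathcal{S}_\mathcal{B}$ remeasures $\Zdelta,\Zcap,\Zs$: the $\Zdelta$ outcomes are now random (their $\Xdelta$ partners were measured), the $\Zcap$ outcomes are \emph{redundant} and equal to the round-one $\Zcap$ values (nothing anticommuting with $\Zcap$ was measured — $\Xcap$ is untouched, but $\Xcap$ commutes with the post-$\mathcal{M}_\mathcal{B}$ state's stabilizers relevant here), and the $\Zs$ outcomes become the input-dependent outcomes. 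Bookkeeping the Hadamard-conjugation on the $\Zdelta$ block (swapping $X\leftrightarrow Z$ there) gives exactly $R = C_\mathcal{B}(H^{\otimes|\Zdelta|}\otimes I_{n-|\Zdelta|})$, while $L=C_\mathcal{B}$, $k=|\Z|$, and the condition matrix $A = \mathbf{0}_{|\Zdelta|\times|\Zdelta|}\oplus I_{|\Zcap|+|\Zs|}\oplus\mathbf{0}_{|\Zm|\times|\Zm|}$ record which syndrome bits are forwarded to the encoding step and which are erased.

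The main obstacle I anticipate is the careful sign- and phase-tracking needed to verify that $A_x$ and $A_z$ are \emph{zero} and that $A$ has precisely the stated block-diagonal form with those specific identity and zero blocks — i.e., confirming that the conditional Pauli corrections in $\mathcal{M}_\mathcal{B}$ really do remove \emph{all} dependence on the random $\Xdelta$ and $\Xm$ outcomes, and that the only surviving dependence of the output state's stabilizer signs is the linear one on $\Zdelta,\Zcap,\Zs,\Zm$ outcomes captured by $A$. This requires repeatedly invoking \cref{eq:exponentiated-pauli-image}, \cref{prop:measure-as-exp}, and the commutation table of \cref{fig:common-symplectic-basis} to check, element by element, that after applying the correction $\Zdelta_j$ the stabilizer generators of the intermediate state are sign-independent of $v_M(\Xdelta_j)$, and similarly for $\Xm_j$ versus $v_M(\Zm_j)$ in the crossover with round two. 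Once those cancellations are established, the outcome relation $m = (v_S(\Zdelta_j))\oplus(v_S(\Zcap_j))\oplus(v_S(\Zs_j))\oplus(v_M(\Zm_j))$ and the claims that $v_M(\Xm_j)$ are random/irrelevant and $v_M(\Zcap_j)=v_S(\Zcap_j)$ are redundant follow by reading off the $M$ matrix produced by the induction; I would present this as the final assembly step rather than re-deriving the general-form structure from scratch.
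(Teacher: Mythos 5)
Your high-level strategy (conjugate by $C_\mathcal{B}$ to put everything in standard position, then track the measurements one by one through the general-form machinery) is viable and is close in spirit to the paper's proof, which likewise replaces the group measurements by $C_\mathcal{B}^\dagger$, single-qubit measurements and $C_\mathcal{B}$, cancels the adjacent $C_\mathcal{B} C_\mathcal{B}^\dagger$, and converts the non-destructive single-qubit measurements into destructive measurements plus allocations (\cref{fig:measure-two-stab-groups}). However, there is a concrete error that derails your analysis: you process the circuit in the order ``$\mathcal{M}_\mathcal{B}$ followed by $\mathcal{S}_\mathcal{B}$'', whereas in this paper $\mathcal{S}_\mathcal{B} \circ \mathcal{M}_\mathcal{B}$ means $\mathcal{S}_\mathcal{B}$ is applied \emph{first} (compare $\mathcal{E}(k_\inn,C_\inn)\circ\mathcal{C}\circ\mathcal{E}^\dagger(k_\out,C_\out)$ in \cref{fig:logical-instrument-conjugation}, and the proof of \cref{thm:logical-action-from-common-symplectic-basis}, where $\mathcal{S}_\mathcal{B}$ is inserted \emph{before} $\mathcal{M}_\mathcal{B}$ and is redundant there). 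The order is not cosmetic: it determines which outcomes are random, which are input-dependent, and which are redundant. With the correct order, the round-one measurements of $\Zdelta,\Zcap,\Zs$ on an arbitrary input are input-dependent and supply three of the four blocks of the general form's measurement vector $m$; the $\Xdelta$ measurements in $\mathcal{M}_\mathcal{B}$ are random \emph{because} the state is by then stabilized by $\pm\Zdelta_j$; and it is the \emph{second} set of $\Zcap$ outcomes, $v_M(\Zcap_j)$, that is redundant and equal to $v_S(\Zcap_j)$. Your version reverses all of these (you make $v_S(\Zdelta_j)$ random and $v_S(\Zcap_j)$ redundant), which contradicts the outcome relation $m = (v_S(\Zdelta_j))\oplus(v_S(\Zcap_j))\oplus(v_S(\Zs_j))\oplus(v_M(\Zm_j))$ you are trying to prove, since every bit of $m$ must be an input-dependent syndrome bit of $\comm{n,k,L}$.

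A second, related slip: you classify the first round's measurements as ``random'' ($\Xdelta$) or ``deterministic on the empty-input stabilizer state'' ($\Zcap,\Zm$). The circuit has $n$ input qubits in an arbitrary state; in the induction of \cref{lem:general-form-induction} the base case has the \emph{trivial} stabilizer group $\comm{n,n,I}$, so the first measurement of any Pauli falls into case (F2) (input-dependent), never (F1) or (F3) --- randomness arises only for a measurement that anticommutes with an observable already fixed by earlier measurements, which is exactly why the composition order matters here. Fixing the order and rerunning your case analysis --- $\Zdelta,\Zcap,\Zs$ via (F2); then $\Xdelta$ via (F3), with the $\Zdelta_j$ corrections cancelling the random-bit dependence via (D); $\Zcap$ again via (F1); $\Zm$ via (F2), with the $\Xm_j$ corrections zeroing the corresponding block of $A$ --- does yield the stated $L$, $R$, $k$, condition matrices and outcome relation, so the repair is mechanical but necessary.
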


\begin{figure}[ht]
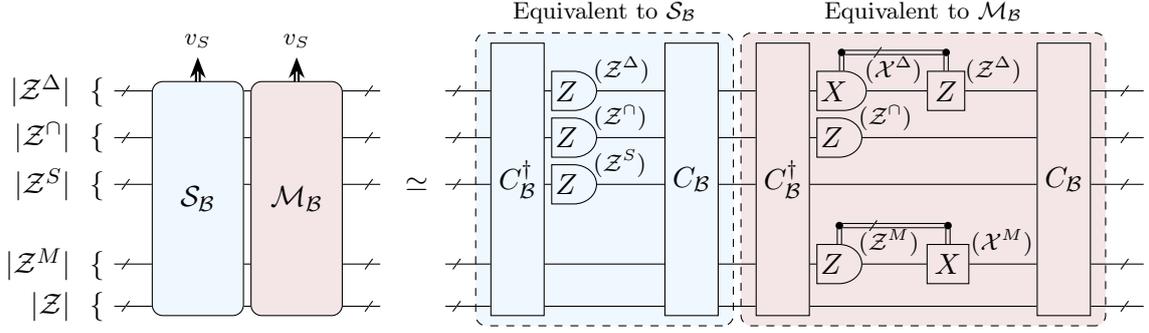
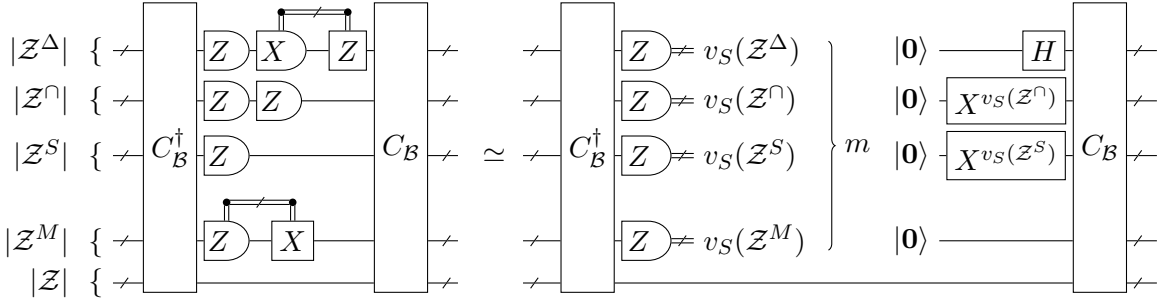

    \centering
    \begin{subfigure}[b]{\textwidth}
        \centering
            \loadfig{fig-measure-two-stab-groups}
            \caption{ \label{fig:measure-two-stab-groups-clifford} Replacing $\mathcal{S}_\mathcal{B}$ and $\mathcal{M}_\mathcal{B}$ by equivalent circuits using Clifford unitary $C_\mathcal{B}$ and single-qubit measurements. Clifford  unitary $C_\mathcal{B}$ is the Clifford unitary that maps the standard Pauli basis to $\mathcal{B}$.}
    \end{subfigure}
    \vskip 0.5cm
    \begin{subfigure}[b]{\textwidth}
        \centering
            \loadfig{fig-measure-two-stab-groups-2}
            \caption{\label{fig:measure-two-stab-groups-destructive}Replacing non-destructive measurements with destructive measurements and qubit allocations.}
    \end{subfigure}
    \caption[Reducing the measurement of two stabilizer groups to a special case]{
        The circuit $\mathcal{S}_\mathcal{B} \circ \mathcal{M}_\mathcal{B}$ and equivalent circuits, for syndrome extraction circuit $\mathcal{S}_\mathcal{B}$~(\cref{def:common-symplectic-basis-circuit-no-corrrections}) and symplectic-basis code deformation circuit $\mathcal{M}_\mathcal{B}$~(\cref{def:common-symplectic-basis-circuit}) given a common symplectic basis $\mathcal{B}$ of 
        groups $S_\inn$ and $M$. 
        Other than $C_\mathcal{B}$, all operations in the equivalent circuits act on single qubits.
        On $Z$ measurements, the superscript $^{(\Zdelta)}$ indicates that the outcomes are those that would be obtained for measuring $\Zdelta$, while on conditional Pauli $Z$s, the superscript $^{(\Zdelta)}$ indicates they are equivalent to conditionally applying elements of $\Zdelta$.
        }
    \label{fig:measure-two-stab-groups}
\end{figure}

\begin{proof}
The proof strategy is to transform the circuit $\mathcal{S}_\mathcal{B} \circ \mathcal{M}_\mathcal{B}$ to the general form 
circuit  $\mathcal{C}_\text{gen}$ via a series of equivalence-preserving transformations. 
The first step is to replace both $\mathcal{S}_\mathcal{B}$ and $\mathcal{M}_\mathcal{B}$ with equivalent circuits that use the Clifford unitary $C_\mathcal{B}$, single-qubit non-destructive $X$ and $Z$ measurements and conditional Paulis as shown in \cref{fig:measure-two-stab-groups-clifford}.
The second step is to remove the adjacent $C_\mathcal{B}$ and $C^\dagger_\mathcal{B}$ from the right circuit 
in \cref{fig:measure-two-stab-groups-clifford} to obtain the left circuit in \cref{fig:measure-two-stab-groups-destructive}.
The third step is to replace all non-destructive single-qubit measurements with equivalent destructive measurements and qubit allocations as in \cref{fig:measure-two-stab-groups-destructive}.
Finally, we see that the right circuit in  \cref{fig:measure-two-stab-groups-destructive} is equivalent to $\mathcal{C}_\text{gen}$ 
with the required outcome relation.
\end{proof}

For our proof of \cref{lem:general-form-two-stab-groups} in \cref{sec:two-stab-groups-gen-form} we will find the following \cref{cor:common-symplectic-basis-choi} of \cref{lem:general-form-two-stab-groups}  useful:

\begin{corollary}[Choi state of measuring two stabilizer groups]
\label{cor:common-symplectic-basis-choi}
Consider $n$-qubit stabilizer groups $S_\inn$ and $M$ with a common symplectic basis $\mathcal{B}$~(\cref{def:common-symplectic-basis}, \cref{fig:common-symplectic-basis}).
Additionally, consider stabilizer circuit $\mathcal{M}_\mathcal{B}$~(\cref{def:common-symplectic-basis-circuit}, \cref{fig:common-symplectic-basis-circuit})
that measures a full set of generators of $M$ and applies conditional Pauli unitaries (defined by $\mathcal{B}$),
and stabilizer circuit $\mathcal{S}_\mathcal{B}$~(\cref{def:common-symplectic-basis-circuit-no-corrrections}) that measures a full set of generators of $S_\inn$ (defined by $\mathcal{B}$).
The stabilizers of the Choi circuit of $\mathcal{S}_\mathcal{B} \circ \mathcal{M}_\mathcal{B}$ upon outcome $v_S \oplus v_M$ are:
\begin{equation}
\arraycolsep=1pt
\begin{array}{rclcrcll}
P & \otimes&  (-1)^{v_S(P)} I, & ~~ & ~~P  &\in& \Zcap \cup \Zs, & \text{(output stabilizers)}, \\
 P & \otimes& I, & ~~ & ~~P  &\in& \Xdelta \cup \Zm, & \text{(output stabilizers)}, \\
 I & \otimes&  (-1)^{v_S(P)} P^\ast, & ~~ & ~~P  &\in& \Zdelta \cup \Zcap \cup \Zs, & \text{(measured observables)}, \\
 I & \otimes&  (-1)^{v_M(P)} P^\ast, & ~~ & ~~P  &\in& \Zm, &  \text{(measured observables)},  \\
P & \otimes&  P^\ast, & ~~ & ~~P  &\in& \X \cup \Z, & \text{(preserved observables)}.
\end{array}
\end{equation}
The action of the circuit $\mathcal{S}_\mathcal{B} \circ \mathcal{M}_\mathcal{B}$ and the signs of the Choi state stabilizers do not depend on the outcomes $v_M(P)$ for $P \in \Xdelta$, these outcomes are random.
\end{corollary}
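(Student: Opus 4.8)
The plan is to derive \cref{cor:common-symplectic-basis-choi} from \cref{lem:general-form-two-stab-groups} together with the explicit description of the Choi state of a general form circuit in \cref{fig:general-form-choi}. First I would note that equivalent stabilizer instruments (\cref{def:instrument-equality}) have the same Choi states for corresponding outcomes: the outcome compression step only sums completely positive maps whose Kraus operators are mutually proportional, so the normalized Choi state of \cref{def:instrument-choi} is unchanged, and the relabeling is exactly the outcome correspondence already recorded in \cref{lem:general-form-two-stab-groups}. Hence it suffices to write down the stabilizer generators of the Choi state of the general form circuit $\mathcal{C}_\text{gen}$ supplied by \cref{lem:general-form-two-stab-groups} and re-express them using that outcome correspondence.

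Next I would substitute the parameters of $\mathcal{C}_\text{gen}$ given by \cref{lem:general-form-two-stab-groups} into the four families of generators of \cref{fig:general-form-choi}: $k = |\Z|$ inner qubits; zero random bits, so the general form outcome is just the measurement vector $m$; $A_x = A_z = \mathbf{0}$; $A = \mathbf{0}_{|\Zdelta|\times|\Zdelta|}\oplus I_{|\Zcap|+|\Zs|}\oplus\mathbf{0}_{|\Zm|\times|\Zm|}$; left Clifford $\Ci = C_\mathcal{B}$; and right Clifford $\Co = C_\mathcal{B}(H^{\otimes|\Zdelta|}\otimes I_{n-|\Zdelta|})$. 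Using that $C_\mathcal{B}$ sends $Z_j$ and $X_j$ to the $j$-th $Z$-type and $X$-type elements of $\mathcal{B} = \ip{\Zdelta,\Xdelta,\Zcap,\Xcap,\Zs,\Xs,\Zm,\Xm,\Z,\X}$ in the order fixed by the lemma, and that the extra $H^{\otimes|\Zdelta|}$ interchanges the $X$- and $Z$-images only on the first $|\Zdelta|$ qubits, I would evaluate each family in turn. The output stabilizers come from $\Co Z_{j_\out}\Co^\dagger \otimes (-1)^{(A m)_{j_\out}}I$ for $j_\out \in [n-k]$: the images are $\Xdelta$ on the first $|\Zdelta|$ qubits and $\Zcap,\Zs,\Zm$ on the rest, and since $A = \mathbf{0}\oplus I\oplus\mathbf{0}$ the sign $(Am)_{j_\out}$ is zero on the $\Xdelta$- and $\Zm$-blocks and equals the corresponding entry of $m$ on the $\Zcap\cup\Zs$-block, which the lemma's outcome map $m = (v_S(\Zdelta_j))\oplus(v_S(\Zcap_j))\oplus(v_S(\Zs_j))\oplus(v_M(\Zm_j))$ identifies with $v_S(\cdot)$. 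The measured observables come from $I\otimes(-1)^{m_{j_\inn}}(\Ci Z_{j_\inn}\Ci^\dagger)^\ast$, i.e.\ $(\Zdelta\cup\Zcap\cup\Zs\cup\Zm)^\ast$ with signs $v_S$ on the first three blocks and $v_M$ on $\Zm$, which is exactly the split into the two ``$I\otimes$''-lines of the corollary. Finally the preserved observables come from the two remaining families, $\Co Z_{n-k+j}\Co^\dagger\otimes(\Ci Z_{n-k+j}\Ci^\dagger)^\ast = \Z_j\otimes\Z_j^\ast$ and $\Co X_{n-k+j}\Co^\dagger\otimes(\Ci X_{n-k+j}\Ci^\dagger)^\ast = \X_j\otimes\X_j^\ast$, with no extra sign because $A_x = A_z = \mathbf{0}$.

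The remaining claim, that the action and the Choi-state signs do not depend on the outcomes $v_M(P)$ for $P\in\Xdelta$ and that these are random, is inherited from \cref{lem:general-form-two-stab-groups}: these are precisely the $|\Xdelta|$ outcomes of $\mathcal{M}_\mathcal{B}$'s measurements of $\Xdelta$, they do not appear in the outcome map above, and $\mathcal{C}_\text{gen}$ has no random bits, so nothing in its Choi state (nor its action) can depend on them.

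I expect the only nontrivial part to be the index bookkeeping in the middle paragraph: tracking the ordering of the ten blocks of $\mathcal{B}$ under $C_\mathcal{B}$, checking that the single layer $H^{\otimes|\Zdelta|}$ converts the $Z$-images of $\Co$ from $\Zdelta$ to $\Xdelta$ on exactly the first $|\Zdelta|$ qubits while leaving the others $Z$-type, and verifying that $A$ together with the lemma's outcome map reproduces exactly the claimed sign pattern ($v_S$ on $\Zdelta\cup\Zcap\cup\Zs$, $v_M$ on $\Zm$, and none on $\Xdelta$ or on the preserved pairs). This is routine but easy to get wrong, so I would carry it out block by block against \cref{fig:common-symplectic-basis}.
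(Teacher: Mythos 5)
Your proposal is correct and follows exactly the route the paper intends: the corollary is stated without proof as a consequence of \cref{lem:general-form-two-stab-groups}, and the intended derivation is precisely your substitution of the lemma's parameters ($k=|\Z|$, no random bits, $A=\mathbf{0}\oplus I\oplus\mathbf{0}$, $A_x=A_z=\mathbf{0}$, $L=C_\mathcal{B}$, $R=C_\mathcal{B}(H^{\otimes|\Zdelta|}\otimes I)$) into the generic Choi-state generators of \cref{fig:general-form-choi}, together with the lemma's outcome map. Your block-by-block sign bookkeeping matches the stated generators, and your closing observation about the $\Xdelta$ outcomes is the correct reading of the lemma's final claim (which, as written there, refers to $\Xm$ but plainly means the measured $\Xdelta$ outcomes, since $\Xm$ is not measured by $\mathcal{M}_\mathcal{B}$).
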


\subsection{Proof of logical action theorem}
\label{sec:two-stab-groups-gen-form}

Here we prove \cref{thm:logical-action-from-common-symplectic-basis}.

\begin{figure}[htp]
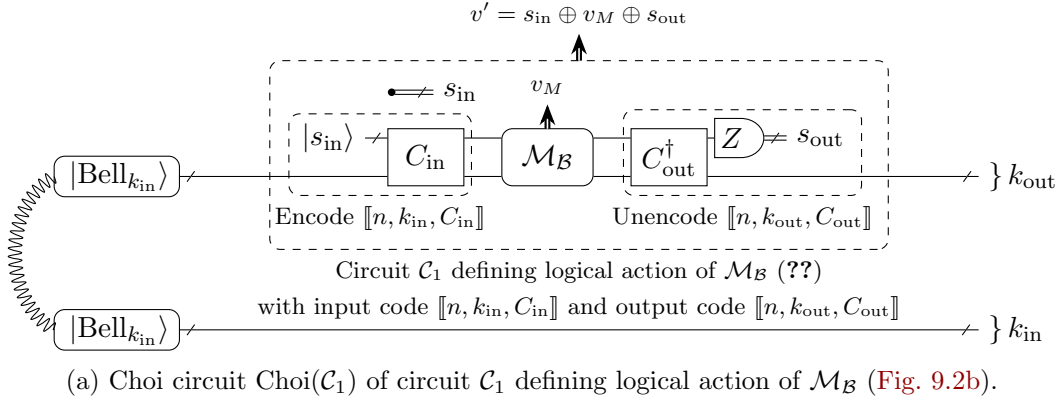
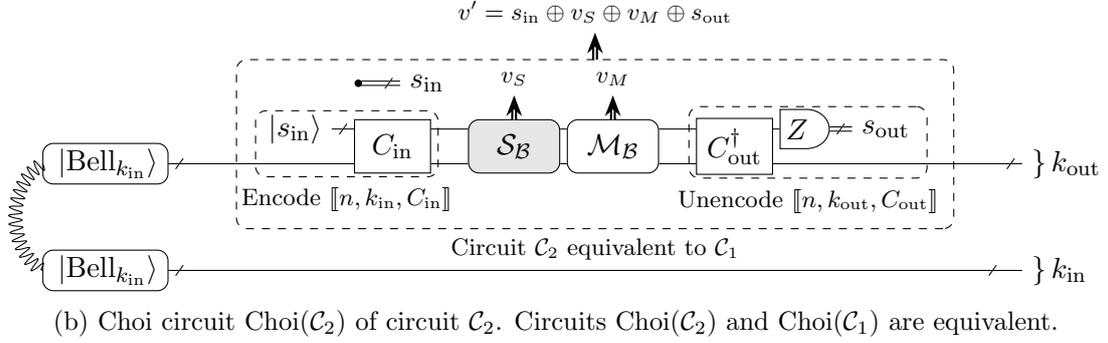
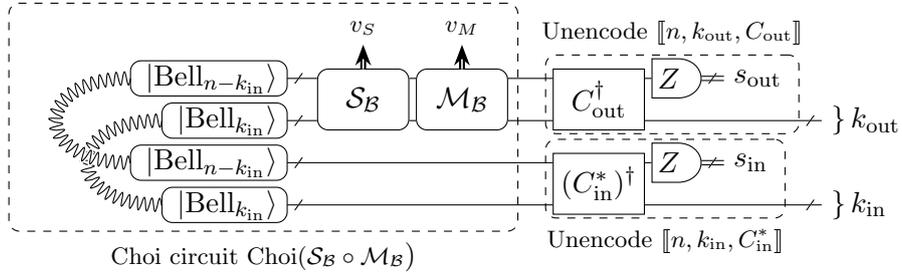

    \centering
         \begin{subfigure}[b]{\textwidth}
        \centering
        \loadfig{fig-logical-action-mb}
        \caption{Choi circuit $\text{Choi}(\mathcal{C}_1)$ of circuit $\mathcal{C}_1$ defining logical action of $\mathcal{M}_\mathcal{B}$~(\cref{fig:common-symplectic-basis-circuit}).}
        \label{fig:logical-action-m}
    \end{subfigure}
    \vskip 0.2cm
     \begin{subfigure}[b]{\textwidth}
        \centering
        \loadfig{fig-logical-action-sm-choi-state}
        \caption{
        Choi circuit $\text{Choi}(\mathcal{C}_2)$ of circuit $\mathcal{C}_2$. Circuits $\text{Choi}(\mathcal{C}_2)$ and $\text{Choi}(\mathcal{C}_1)$ are equivalent. 
         }
        \label{fig:logical-action-sm-choi-state}
    \end{subfigure}
    \vskip 0.2cm
    \begin{subfigure}[b]{\textwidth}
        \centering
        \loadfig{fig-action-sm-choi-state}
        \caption{
        Circuit $\mathcal{C}_3$, which is equivalent to circuits $\text{Choi}(\mathcal{C}_1)$ and $\text{Choi}(\mathcal{C}_2)$.
        }
        \label{fig:action-sm-choi-state}
    \end{subfigure}
    \caption[Deriving logical action of measuring $M$ from common symplectic basis]{Equivalent circuits $\text{Choi}(\mathcal{C}_1)$ (a), $\text{Choi}(\mathcal{C}_2)$ (b) and $\mathcal{C}_3$ (c) used to derive the logical action~(\cref{fig:common-symplectic-basis-logical-action}) of the circuit $\mathcal{M}_\mathcal{B}$~(\cref{fig:common-symplectic-basis-circuit}) specified by the common symplectic basis $\mathcal{B}$~(\cref{fig:common-symplectic-basis}) of stabilizer groups $S_\inn$ and $M$.
    The codes $\comm{n,k_\inn,C_\inn}$ and $\comm{n,k_\out,C_\out}$ have stabilizer groups $S_\inn$ and $S_\out$, where $S_\out$ is a subgroup of $M \cdot (S \cap M^\perp)$.
    Given $\mathcal{B}$, the circuit $\mathcal{S}_\mathcal{B}$~(\cref{def:common-symplectic-basis-circuit-no-corrrections}) measures a full set of generators of $S_\inn$, while the circuit $\mathcal{M}_\mathcal{B}$~(\cref{def:common-symplectic-basis-circuit}) measures a full set of generators of $M$ and applies conditional Paulies.
    } 
    \label{fig:action-to-logical-action}
\end{figure}

\begin{proof}[Proof of \cref{thm:logical-action-from-common-symplectic-basis}]
Our goal is to prove that $\mathcal{C}_1 = \mathcal{E}(k_\inn,C_\inn) \circ \mathcal{M}_\mathcal{B} \circ \mathcal{E}^\dagger(k_\out,C_\out)$ (with trivial input syndrome) is equivalent to the general form circuit $\mathcal{M}_\mathcal{L}$ with the parameters claimed in \cref{thm:logical-action-from-common-symplectic-basis}.
To achieve this goal, our strategy is to derive complete sets of stabilizer generators of $\text{Choi}(\mathcal{C}_1)$ (\cref{fig:logical-action-m}) and compare these with the complete sets of stabilizer generators of $\text{Choi}(\mathcal{M}_\mathcal{L})$, which we know in terms of the parameters of the general form circuit $\mathcal{M}_\mathcal{L}$ from our earlier analysis in \cref{fig:general-form-choi}. 
There is freedom in the choice of generator sets.
Given the set of Choi states produced by $\text{Choi}(\mathcal{C}_1)$, to aid comparison we seek sets of generators which differ only by their phase for different Choi states in the set, and where the generators of the set fall into three disjoint subsets: (1) generators supported on the first $k_\out$ qubits, (2) generators supported on the last $k_\inn$ qubits, and (3) generators which form a symplectic basis when restricted to first $k_\out$ qubits.

The key to finding stabilizers of $\text{Choi}(\mathcal{C}_1)$ in satisfying conditions (1), (2) and (3) is to relate them to the stabilizers of $\text{Choi}(\mathcal{S}_\mathcal{B} \circ \mathcal{M}_\mathcal{B})$ shown in \cref{fig:action-sm-choi-state}, described in \cref{cor:common-symplectic-basis-choi}.
The relation between the stabilizers of
$\text{Choi}(\mathcal{C}_1)$ and $\text{Choi}(\mathcal{S}_\mathcal{B} \circ \mathcal{M}_\mathcal{B})$ follows from the fact that circuits $\text{Choi}(\mathcal{C}_1)$~(\cref{fig:logical-action-m}) and $\mathcal{C}_3$~(\cref{fig:action-sm-choi-state}) are equivalent as specified in \cref{fig:action-to-logical-action}. 
Let us postpone for now the proof that $\text{Choi}(\mathcal{C}_1)$ and $\mathcal{C}_3$ are equivalent (which we will see holds for all syndromes) and first prove that the theorem is true given that this equivalence holds.

Now let us derive the stabilizer generators of $\text{Choi}(\mathcal{C}_1)$ upon the outcome $s_\inn = 0$ to find the general form circuit $\mathcal{M}_\mathcal{L}$. 
By the equivalence of $\text{Choi}(\mathcal{C}_1)$ and $\mathcal{C}_3$, we can do this by finding the stabilizer generators of $\mathcal{C}_3$ upon the outcome $s_\inn = 0$.
The circuit $\mathcal{C}_3$ is constructed by first applying $\text{Choi}(\mathcal{S}_\mathcal{B} \circ \mathcal{M}_\mathcal{B})$ (for which we know the stabilizer generators from \cref{cor:common-symplectic-basis-choi}) and then applying the unencoding circuits for
$\comm{n,k_\inn,C_\inn}$ and $\comm{n,k_\out,C_\out}$.
Recall that \cref{eq:encoding-conjugation-action} specifies the action of an unencoding map given a specific outcome, which implies that when $s_\inn = s_\out = 0$,
the state output by $\mathcal{C}_3$ is stabilized by:
\begin{equation}
\label{eq:sm-stab}
\arraycolsep=1pt
\begin{array}{rclcrcll}
 \Lout(P) & \otimes&  I, & ~~ & ~~P  &\in& \Zcap \cup \Zs, & \text{(output stabilizers)}, \\
 \Lout(P) & \otimes& I, & ~~ & ~~P  &\in& \Xdelta \cup \Zm, & \text{(output stabilizers)}, \\
   I & \otimes& \Lin(P^\ast), & ~~ & ~~P  &\in& \Zdelta \cup \Zcap \cup \Zs, & \text{(measured observables)}, \\
  (-1)^{v_M(P)} I & \otimes&\Lin(P^\ast), & ~~ & ~~P  &\in& \Zm, &  \text{(measured observables)},  \\
 \Lout(P) & \otimes& \Lin( P^\ast ), & ~~ & ~~P  &\in& \X \cup \Z, & \text{(preserved observables)}.
\end{array}
\end{equation}
It is not hard to check that the map $\Lin$ is applied to Paulies in its domain $S_\inn^\perp$ 
and that $\Lin(P^\ast)=I$ for $P \in  \Zdelta \cup \Zcap \cup \Zs \subset S_\inn $.
We need to check that $\Lout$ is applied to Paulies in its domain $S_\out^\perp$ .
For the stabilizer group $S_\out \subset M \cdot (S_\inn \cap M^\perp)$ 
we have $(M \cdot (S_\inn \cap M^\perp))^\perp \subset S_\out^\perp$.
Note that $\ip{\Zcap \cup \Zs \cup \Xdelta \cup \Zm} = M \cdot (S_\inn \cap M^\perp)$
and so $(M \cdot (S_\inn \cap M^\perp))^\perp = (M \cdot (S_\inn \cap M^\perp)) \cdot \ip{\X \cup \Z}$,
and therefore $\ip{\Zcap \cup \Zs \cup \Xdelta \cup \Zm \cup \X \cup \Z} \subset S_\out^\perp$,
where the equalities can be seen to be true from the symplectic basis definition in \cref{fig:common-symplectic-basis}. 
We conclude that $\Lout$ is applied to Paulies in its domain.
Note that map $\mathcal{L}$ preserves the commutation: 
$$
\comm{ \mathcal{L}_{\comm{n,k,C}}(P),\mathcal{L}_{\comm{n,k,C}}(Q) } = \comm{P,Q}.
$$
The Paulis in \cref{eq:sm-stab} are stabilizer generators of $\mathcal{C}_3$, and by equivalence are also stabilizer generators of $\text{Choi}(\mathcal{C}_1)$.
Moreover, they satisfy conditions (1), (2) and (3) such that they can be matched to the stabilizer generators of a general form circuit with appropriate choices of the left and the right Clifford unitaries $L$ and $R$.
We use \cref{fig:general-form-choi} to find the expression for the left and the right Clifford unitaries of $\mathcal{M}_\mathcal{L}$ in \cref{eq:sm-left-right}
from \cref{eq:sm-stab}.
This also shows that measurement outcome vector $m$ of $\mathcal{M}_\mathcal{L}$
is $( v(\Zm_j) )_{j \in [\Zm] }$.
The outcomes $v(P)$ for $P \in \Xdelta$ of $\mathcal{M}_\mathcal{B}$ in $\mathcal{C}_1$ and $\text{Choi}(\mathcal{C}_1)$ are random, 
because this is the case for outcomes $v(P)$ for $P \in \Xdelta$ in $\text{Choi}(\mathcal{S}_\mathcal{B} \circ \mathcal{M}_\mathcal{B})$ and $\mathcal{C}_3$
according to \cref{cor:common-symplectic-basis-choi}.
 
It remains to prove that $\text{Choi}(\mathcal{C}_1)$ and $\mathcal{C}_3$ are equivalent, which we do in two steps by showing that both $\text{Choi}(\mathcal{C}_1)$ and $\mathcal{C}_3$ are equivalent to another circuit $\text{Choi}(\mathcal{C}_2)$~(\cref{fig:logical-action-sm-choi-state}).

Circuits $\text{Choi}(\mathcal{C}_1)$ and $\text{Choi}(\mathcal{C}_2)$ are equivalent because their sub-circuits $\mathcal{C}_1$
and $\mathcal{C}_2$ are equivalent. 
Indeed $\mathcal{C}_2$ is obtained from $\mathcal{C}_1$ by inserting $\mathcal{S}_\mathcal{B}$~(highlighted in grey) between 
$\mathcal{E}(k_\inn,C_\inn)$ and $\mathcal{M}_\mathcal{B}$.
Before $\mathcal{S}_\mathcal{B}$ is applied the state is already stabilized by a stabilizer group equal 
to $S_\inn$ up to signs. 
The measurement outcome $v_S$ is redundant and is a linear function of $s_\inn$ and the action of $\mathcal{S}_\mathcal{B}$ on the state is trivial.

Circuit $\mathcal{C}_3$ is equivalent to $\text{Choi}(\mathcal{C}_2)$ because it is obtained from $\text{Choi}(\mathcal{C}_2)$
via a series of three circuit transformations preserving the circuit equivalence $ \text{Choi}(\mathcal{C}_2) \rightarrow \mathcal{C}_{3,1} \rightarrow \mathcal{C}_{3,2}  \rightarrow \mathcal{C}_3$ .
First we obtain $\mathcal{C}_{3,1}$ by replacing the part of $ \text{Choi}(\mathcal{C}_2)$ which
allocates the random bits $s_\inn$ and initializes the ($n-k_\inn$)-qubit register in $\ket{s_\inn}$
with the initialization of a pair of ($n-k_\inn$)-qubit registers in $\ket{\text{Bell}_{(n-k_\inn)}}$
and performing destructive $Z$ measurements on the second ($n-k_\inn$)-qubit register with the outcome $s_\inn$.
Second we obtain $\mathcal{C}_{3,2}$ by pulling the unitary $C_\inn$ through the Bell state $\ket{\text{Bell}_{(n)}}$ it is acting on, 
using the identity $(C_\inn\otimes I_n)\ket{\text{Bell}_{(n)}} = (I_n \otimes C^T_\inn)\ket{\text{Bell}_{(n)}}$.
Third, we change the order in which the destructive $Z$ measurements with outcome $s_\inn$ and the sub-circuit $\mathcal{S}_\mathcal{B}$ appear in the circuit to obtain $\mathcal{C}_3$. 
We move the destructive $Z$ measurements with outcome $s_\inn$ to the end of the circuit such that the outcome $s_\inn$ becomes redundant and the
outcome $v_S$ becomes random.
\end{proof}
 
\cref{thm:logical-action-from-common-symplectic-basis} can be extended to apply to the case when $s_\inn$ is non-zero, resulting in the inclusion of $s_\inn$-dependent condition matrices in the equivalent general form circuit. 
In the proof, we derived \cref{eq:sm-stab} assuming that $s_\inn$ and $s_\out$ are zero, resulting in simplified expressions for the phases of stabilizers, but relaxing this assumption will result in additional phases
from the maps $\mathcal{G}_{\comm{n,k_\inn,C_\inn}}$, $\mathcal{G}_{\comm{n,k_\out,C_\out}}$~(\cref{eq:logical-operator-map}).

\section*{Acknowledgements}
Special thanks to Aarthi Sundaram and Nicolas Delfosse for helpful feedback, and guidance regarding existing literature.

\newpage

\DeclareEmphSequence{\itshape,\bfseries,\mdseries}

\renewcommand{\listtheoremname}{List of Algorithms}
\listoftheorems[ignoreall,show=algorithm]

\renewcommand{\listtheoremname}{List of Problems}
\listoftheorems[ignoreall,show=problem]

\renewcommand{\listtheoremname}{List of Definitions}
\listoftheorems[ignoreall,show=definition]

\renewcommand{\listtheoremname}{List of Theorems and Lemmas}
\listoftheorems[ignoreall,show=theorem,show=lemma,show=corollary]

\listoffigures
\listoftables

\renewcommand{\listtheoremname}{List of Procedures}
\listoftheorems[ignoreall,show=procedure]

\renewcommand{\listtheoremname}{List of Propositions}
\listoftheorems[ignoreall,show=proposition]

\bibliographystyle{plain}
\bibliography{references}

\DeclareEmphSequence{\bfseries,\mdseries}

\appendix


\newpage





\section{Additional mathematical material}
\label{app:mathematical-material}

\subsection*{Clifford and Pauli unitaries}


We have a more efficient algorithm for composing a Clifford unitary with a Pauli exponent $\exp(i \pi / 4)$ because of the following equation: 

\begin{equation}
\label{eq:pauli-exp}
e^{i \nicefrac{\pi}{4}P }Q e^{-i \nicefrac{\pi}{4}P }
=
\frac{I + i P}{\sqrt2} \cdot Q \cdot  \frac{I - i P}{\sqrt2}
=
\left\{ 
\begin{array}{cc}
    Q, & \text{ when } \comm{P,Q} = 0 \\
    iPQ, & \text{ when } \comm{P,Q} = 1 
\end{array}
\right.
\end{equation}
The following proposition allows  us to further reduce amount of calculations we need to perform to compose a Clifford unitary with a Pauli exponent $\exp(i \pi P/ 4)$
and are used to justify correctness of \cref{alg:compose-exp}.

\begin{proposition}[Preimage and sign]
\label{prop:preimage-sign}
Let $C$ be a Clifford unitary and $P$ a Pauli operator described as in \cref{def:desc-pauli},
then $s(P \cdot C Z_k C^\dagger) = s(P) + s(C Z_k C^\dagger)  + 2 \comm{C^\dagger X^{x(P)} C, Z_k }$
and $s(P \cdot C X_k C^\dagger) = s(P) + s(C X_k C^\dagger)  + 2 \comm{C^\dagger X^{x(P)} C, X_k }$.
\end{proposition}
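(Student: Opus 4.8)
The plan is to prove the two claimed identities by a direct but careful computation with phase bits, using the power encoding of Pauli unitaries from \cref{def:desc-pauli} together with the identity $PQ = (-1)^{\comm{P,Q}}QP$. It suffices to treat the case of $C Z_k C^\dagger$; the case of $C X_k C^\dagger$ is entirely analogous (replace $Z_k$ by $X_k$ throughout). Write $P = i^{s(P)} Z^{z(P)} X^{x(P)}$ and $R := C Z_k C^\dagger = i^{s(R)} Z^{z(R)} X^{x(R)}$, so that $s(R) = s(C Z_k C^\dagger)$. The goal is to show that the phase of the product $P R$ equals $s(P) + s(R) + 2\comm{C^\dagger X^{x(P)} C, Z_k}$ modulo $4$.

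First I would reduce the commutator appearing on the right-hand side to something local. Since $C$ is Clifford, $C^\dagger X^{x(P)} C$ is a Pauli unitary whose $x$ and $z$ bits are linear (over $\f_2$) in $x(P)$; more directly, conjugation preserves the symplectic form, so $\comm{C^\dagger X^{x(P)} C, Z_k} = \comm{X^{x(P)}, C Z_k C^\dagger} = \comm{X^{x(P)}, R}$. Because $X^{x(P)}$ has trivial $z$-bits, $\comm{X^{x(P)}, R} = \ip{x(P), z(R)}$ (using the commutator formula $\comm{P,Q} = \ip{x(P),z(Q)} + \ip{z(P),x(Q)}$ from \cref{sec:data-structures}). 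So the claim becomes: the phase of $PR$ is $s(P) + s(R) + 2\ip{x(P), z(R)} \pmod 4$.

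Next I would compute the phase of $PR$ directly. We have
\begin{align*}
PR &= i^{s(P)+s(R)} Z^{z(P)} X^{x(P)} Z^{z(R)} X^{x(R)}.
\end{align*}
Moving $X^{x(P)}$ past $Z^{z(R)}$ costs a sign $(-1)^{\ip{x(P),z(R)}}$, after which $Z^{z(P)} Z^{z(R)} = Z^{z(P)\oplus z(R)}$ and $X^{x(P)} X^{x(R)} = X^{x(P)\oplus x(R)}$ up to no extra phase (powers of the same single-qubit Pauli on each qubit combine with no phase since $Z^2 = X^2 = I$). Hence $PR = i^{s(P)+s(R)} (-1)^{\ip{x(P),z(R)}} Z^{z(P)\oplus z(R)} X^{x(P)\oplus x(R)} = i^{s(P)+s(R)+2\ip{x(P),z(R)}} Z^{z(P)\oplus z(R)} X^{x(P)\oplus x(R)}$, which by \cref{def:desc-pauli} means $s(PR) = s(P) + s(R) + 2\ip{x(P),z(R)} \pmod 4$. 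Combining with the reduction of the previous paragraph gives $s(P \cdot C Z_k C^\dagger) = s(P) + s(C Z_k C^\dagger) + 2\comm{C^\dagger X^{x(P)} C, Z_k}$, as required.

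The only mild subtlety — and the step I would be most careful about — is the bookkeeping of the factor $2$ in the exponent of $i$ versus signs $(-1)^{(\cdot)}$, i.e.\ keeping everything consistently modulo $4$ rather than modulo $2$; one must check that the only source of an odd-order phase is the single reordering of $X^{x(P)}$ past $Z^{z(R)}$ and that combining equal single-qubit Paulis ($Z\cdot Z$, $X\cdot X$) genuinely introduces no further phase. There is no real obstacle here, just care; the identity $\comm{C^\dagger X^{x(P)} C, Z_k} = \ip{x(P), z(C Z_k C^\dagger)}$ is the one line worth stating explicitly so the reader sees why the right-hand side of the proposition is computed from already-available data (the $k$-th column of $M(C)$).
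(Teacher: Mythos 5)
Your proposal is correct and follows essentially the same route as the paper's proof: expand the product in the power encoding, commute $X^{x(P)}$ past the $Z$-part at the cost of a sign, and identify that sign with $\comm{C^\dagger X^{x(P)} C, Z_k}$ via conjugation-invariance of the commutator. The only cosmetic difference is that the paper commutes $X^{x(P)}$ past the whole operator $C Z_k C^\dagger$ before expanding it, whereas you expand first; the bookkeeping is identical.
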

\begin{proof}
The result follows from the following equations: 
\begin{align*}
P \cdot C Z_k C^\dagger 
& = i^{s(P)} Z^{z(P)} X^{x(P)} C Z_k C^\dagger = \\
& = i^{s(P)}(-1)^{\comm{X^{x(P)}, C Z_k C^\dagger }} Z^{z(P)} C Z_K C^\dagger X^{x(P)} = \\
& = i^{s(P)+s(C Z_k C^\dagger)}(-1)^{\comm{ C^\dagger X^{x(P)} C, Z_k }}  Z^{z(P)+z(C Z_k C^\dagger)} X^{x(P) + x(C Z_k C^\dagger)}
\end{align*}
We conclude that $s(P \cdot C Z_k C^\dagger ) = s(P) + s(C Z_k C^\dagger) + 2\comm{C^\dagger X^{x(P)} C, Z_k}$.
Result for $s(P \cdot C X_k C^\dagger)$ follows similarly.
\end{proof}

The proposition below is used to justify correctness of \cref{alg:general-from-choi}.
\begin{proposition}[CSS Clifford action]
\label{prop:css-cliiford-action}
\label{prop:linear-reversible-clifford}
Let $A$ be an invertible $n \times n$ matrix over $\f_2$.
For any computational basis state $|a\rangle$ defined by $n$-bit vector $a$, let unitary $U_A |a\rangle = | A a \rangle$.
Then the following identities hold 
\begin{align}
    U_A Z^a U_A^\dagger = Z^{A^{-T} a} \\
    U_A X^a U_A^\dagger = X^{Aa}
\end{align}
and $U_A$ is a Clifford unitary with description described by the following $(2n+2)\times(2n+2)$ matrix 
as discussed in \cref{def:desc-clifford} and \cref{fig:desc-pauli-and-clifford}:
$$
\left(\begin{array}{c|c|c}
A^{-1} & 0 & 0\\
\hline 0 & A^{T} & 0\\
\hline 0 & 0 & 0
\end{array}\right)
$$
\end{proposition}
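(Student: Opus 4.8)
The plan is to verify the two conjugation identities directly from the definition $U_A \ket{a} = \ket{Aa}$, and then read off the claimed binary-symplectic description of $U_A$ as a bookkeeping consequence. First I would establish the $X$-identity: for any computational basis vector $\ket{a}$ we have $U_A X^b U_A^\dagger \ket{a} = U_A X^b \ket{A^{-1}a} = U_A \ket{A^{-1}a + b} = \ket{a + Ab} = X^{Ab}\ket{a}$, where I use that $X^b$ acts on computational basis states by $\ket{c} \mapsto \ket{c+b}$ over $\f_2$ and that $A$ is linear. Since this holds on a basis, $U_A X^b U_A^\dagger = X^{Ab}$. For the $Z$-identity I would argue similarly using $Z^b\ket{c} = (-1)^{\ip{b,c}}\ket{c}$: compute $U_A Z^b U_A^\dagger \ket{a} = U_A Z^b \ket{A^{-1}a} = (-1)^{\ip{b, A^{-1}a}} U_A \ket{A^{-1}a} = (-1)^{\ip{b,A^{-1}a}}\ket{a}$, and then observe $\ip{b, A^{-1}a} = \ip{A^{-T}b, a}$ (since $\ip{b, A^{-1}a} = b^T A^{-1} a = (A^{-T}b)^T a$), so $U_A Z^b U_A^\dagger = Z^{A^{-T}b}$. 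In particular, taking $b = e_j$ gives $U_A X_j U_A^\dagger = X^{A_{*,j}} = X^{(A^T)_j{}^T}$ and $U_A Z_j U_A^\dagger = Z^{(A^{-T})_{*,j}} = Z^{(A^{-1})_j{}^T}$, and all these images are genuine Pauli operators, so $U_A$ is Clifford.

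Having the images, I would then translate them into the $(2n+2)\times(2n+2)$ matrix $M(U_A)$ of \cref{def:desc-clifford}. Recall that row $k \in [n]$ records $z(P_k)\oplus x(P_k)\oplus s(P_k)$ for $P_k = U_A Z_k U_A^\dagger$, and rows $k\in[n+1,2n]$ record the analogous data for $U_A X_k U_A^\dagger$; the columns record the preimages, which for $U_A$ are governed by $A^{-1}$ since $U_A^\dagger = U_{A^{-1}}$. From $U_A Z_k U_A^\dagger = Z^{(A^{-1})_k{}^T}$ we get $z$-bits equal to the $k$th row of $A^{-1}$ and $x$-bits zero; from $U_A X_k U_A^\dagger = X^{(A^T)_k{}^T}$ we get $x$-bits equal to the $k$th row of $A^T$ and $z$-bits zero. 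All signs vanish because $U_A$, being a permutation-type (linear-reversible) unitary, maps each $Z_k$ and $X_k$ to a positive Pauli — I would note this follows since the images computed above carry no phase factor. Assembling these into the block layout of \cref{fig:desc-pauli-and-clifford} (the top-left $n\times n$ block being the $z$-bits of the $Z$-images, i.e. $A^{-1}$; the middle $n\times n$ diagonal block being the $x$-bits of the $X$-images, i.e. $A^T$; all off-diagonal blocks and the phase rows/columns being zero) yields exactly the stated matrix.

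I do not expect a genuine obstacle here; the only mild care needed is keeping the transpose-versus-inverse-transpose conventions straight and checking consistency with the symplectic-inverse formula \cref{eq:symplectic-matrix-inverse}, which I would do as a sanity check: the top-left $2n\times 2n$ block $\mathrm{diag}(A^{-1}, A^T)$ should be its own symplectic inverse, and indeed applying \cref{eq:symplectic-matrix-inverse} with $A_{z,x} = A^{-1}$, $A_{x,z} = A^T$ and $A_{x,x}=A_{z,z}=0$ returns $\mathrm{diag}((A^T)^T, (A^{-1})^T)^{\!-\!1}$-style entries that reduce back to $\mathrm{diag}(A^{-1},A^T)$, confirming internal consistency. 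The write-up would therefore be short: two displayed conjugation computations on basis states, the remark that images are phase-free hence $U_A$ is Clifford, and one paragraph matching bits to the blocks of $M(U_A)$.
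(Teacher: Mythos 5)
Your proof is correct and follows essentially the same route as the paper's: a direct computation of the conjugation action on computational basis states, using $X^b\ket{c}=\ket{c+b}$ and $Z^b\ket{c}=(-1)^{\ip{b,c}}\ket{c}$ together with $\ip{b,A^{-1}a}=\ip{A^{-T}b,a}$ (the paper writes the operators as sums of outer products rather than acting pointwise, but the computation is identical). You additionally spell out the assembly of $M(U_A)$ and the symplectic-inverse sanity check, which the paper leaves implicit; that part is also correct, modulo the small slip that applying \cref{eq:symplectic-matrix-inverse} to $\mathrm{diag}(A^{-1},A^T)$ yields its inverse $\mathrm{diag}(A,A^{-T})$ rather than the matrix itself, which is still the consistency you wanted.
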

\begin{proof}
Let us start with notation:
\begin{align*}
Z = |0\rangle \langle 0 | + (-1) |1\rangle \langle 1 |  = \sum_{z \in \{0,1\}} (-1)^z |z\rangle \langle z|, &
\quad X = |0\rangle \langle 1 | + |1\rangle \langle 0 | = \sum_{z \in \{0,1\}} |z\rangle \langle z + 1|, \\
Z^b = Z_1^{b_1} \otimes \ldots \otimes Z_l^{b_l} = \sum_{c \in \{0,1\}^l} (-1)^{\ip{c,b}} |c\rangle \langle c|, &
\quad X^b = X_1^{b_1} \otimes \ldots \otimes X_l^{b_l} = \sum_{c \in \{0,1\}^l} |c\rangle \langle c + b|.
\end{align*}
We observe that the image $U_A Z^b U_A^\dagger$ is:
\begin{align*}
& \sum_{c \in \{0,1\}^l} (-1)^{\ip{c,b}} |Ac\rangle \langle Ac|  
=  \sum_{c' \in \{0,1\}^l} (-1)^{\ip{A^{-1}c',b}} |c'\rangle \langle c'| = \\
& =  \sum_{c' \in \{0,1\}^l} (-1)^{\ip{c',A^{-T}b}} |c'\rangle \langle c'|  = Z^{A^{-T}b} \text{ where } A^{-T} = (A^{-1})^T.
\end{align*}
Similarly we observe that image of $U_A X^b U_A^\dagger$ is:
\begin{align*}
& \sum_{c \in \{0,1\}^l} |Ac\rangle \langle A(c + b)| 
=  \sum_{c' \in \{0,1\}^l}  |c'\rangle \langle c' + Ab|  
=  X^{Ab} .
\end{align*}
\end{proof}

The following is useful for general form circuit comparison~\cref{alg:general-form-circuit-comparison}:

\begin{proposition}[Batch Pauli conjugation]
\label{prop:batch-pauli-images}
Let $C$ be a Clifford unitary, denote the top-left $2n\times 2n$ binary-symplectic part of the description $M(C)$~(\cref{def:desc-clifford}) as 
$$
\left(
\begin{array}{c|c}
      A_{z,x} & A_{x,x} \\ \hline
      A_{z,z} & A_{x,z}
\end{array}
\right).
$$
Then for any vector $v$ and matrix $\hat A$ we have $C X^{\hat A v} C^\dagger \simeq X^{\hat A_x v} Z^{\hat A_z v}$ where $\hat A_x = A^T_{x,z} \hat A$, $\hat A_z =  A^T_{z,z} \hat A$.
Similarly for any vector $v$ and matrix $\tilde A$ we have $C Z^{\tilde A v} C^\dagger \simeq X^{\tilde A_x v} Z^{\tilde A_z v}$ where $\tilde A_x = A^T_{x,x} \tilde A$ and $\tilde A_z =  A^T_{z,x} \tilde A$.
\end{proposition}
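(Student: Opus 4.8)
The plan is to reduce the statement to a single-column identity and then extend by linearity. First I would recall that for a basis vector $e_k$, the product $X^{\hat A e_k}$ equals $X^{\hat A_{\ast,k}}$, i.e., the Pauli $X$-string indexed by the $k$th column of $\hat A$. Since any vector $v$ is a sum of basis vectors over $\f_2$ and conjugation by $C$ is multiplicative, it suffices to prove $C X^{a} C^\dagger \simeq X^{A_{x,z}^T a} Z^{A_{z,x}^T a}$ for an arbitrary $\f_2$-vector $a$ (then apply with $a = \hat A v$), and likewise $C Z^{a} C^\dagger \simeq X^{A_{x,x}^T a} Z^{A_{z,x}^T a}$; the matrix versions follow immediately by setting $a = \hat A v$ and using $A_{x,z}^T \hat A v = \hat A_x v$ etc.

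Next I would handle the single-string case. Write $X^a = \prod_{k : a_k = 1} X_k$, so $C X^a C^\dagger \simeq \prod_{k : a_k = 1} C X_k C^\dagger$ up to phase. Now \cref{def:desc-clifford} and the block form in \cref{eq:symplectic-matrix-inverse} tell us precisely how the images of the generators are encoded: the image $C X_k C^\dagger$ has $z$-bits given by column $k$ of the $A_{x,z}$ block region and $x$-bits given by column $k$ of the $A_{x,x}$ block (reading off the appropriate rows of $M(C)$ corresponding to $X$ images). I would make this indexing explicit: if $C X_k C^\dagger \simeq Z^{(A_{x,z})_{\ast,k}} X^{(A_{x,x})_{\ast,k}}$, then multiplying these over $k \in \mathrm{supp}(a)$ and collecting $Z$ and $X$ parts (which is valid up to a phase since Paulis commute up to sign) gives $C X^a C^\dagger \simeq Z^{\sum_k a_k (A_{x,z})_{\ast,k}} X^{\sum_k a_k (A_{x,x})_{\ast,k}} = Z^{A_{x,z} a} X^{A_{x,x} a}$. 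This is not quite the claimed form, so I would then transpose: the point is that the map $a \mapsto (\text{image of } X^a)$ is the transpose-conjugate of the symplectic action, so one actually needs $A_{x,z}^T a$ and $A_{z,z}^T a$ — here care is needed about whether $M(C)$ stores images in rows or columns. Concretely, by \cref{def:desc-clifford} the $k$th column of $M(C)$ encodes the \emph{preimage} $C^\dagger X_k C$, and using \cref{prop:preimage-sign}-style bookkeeping plus \cref{eq:symplectic-matrix-inverse} one identifies that the $x$-bits of $C X^a C^\dagger$ are $(A_{x,z})^T a$ and the $z$-bits are $(A_{z,z})^T a$. The $Z$-case is entirely analogous, reading off the $Z$-image blocks $A_{x,x}$ and $A_{z,x}$ (for $C Z_k C^\dagger$).

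The main obstacle will be bookkeeping the transpose and the precise block-to-image correspondence in \cref{def:desc-clifford}: the data structure stores both images and preimages, and \cref{eq:symplectic-matrix-inverse} relates the preimage blocks (which are the literal columns of $M(C)$) to the image blocks (the rows). I would carefully trace through which of $A_{z,x}, A_{x,x}, A_{z,z}, A_{x,z}$ is the $z$- or $x$-part of which image or preimage, confirm the transpose appears because applying $C$ to $X^a = \sum_k a_k X_k$ is a linear map whose matrix is the transpose of the one acting on individual generators (equivalently, $(X^a)$'s symplectic vector is $(0 \oplus a)$, and conjugation by $C$ acts on symplectic vectors by the symplectic matrix, but here we track how \emph{column} indices of $\hat A$ map to the output string, which brings in the transpose). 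The phase is irrelevant throughout since the statement is only up to $\simeq$. Once the indexing is pinned down, the proof is a two-line linear-algebra computation.
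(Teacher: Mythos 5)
Your proposal is correct and takes essentially the same route as the paper's proof: read the single-generator images $C Z_j C^\dagger$ and $C X_j C^\dagger$ off the rows of $M(C)$, extend to $Z^a$ and $X^a$ by multiplicativity up to phase (which is exactly where the transpose enters, since summing rows of a block weighted by $a$ gives $A^T a$), and then substitute $a = \tilde A v$ or $\hat A v$. The only slip is a typo in your first paragraph, where the $Z$-part of $C X^a C^\dagger$ is written as $A_{z,x}^T a$ instead of $A_{z,z}^T a$; you state the correct blocks later, so this is cosmetic.
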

\begin{proof}
Recall that for the standard basis vectors $e_j$ we have $Z^{e_j} = Z_j$. 
According to \cref{def:desc-clifford} we have 
$$
C Z^{e_j} C^\dagger = C Z_j C^\dagger \simeq X^{(A_{x,x})_j} Z^{(A_{z,x})_j} = X^{A^T_{x,x} e_j} Z^{(A^T_{z,x}) e_j}
$$
By linearity, for any $\tilde v$ we have: 
$$
C Z^{\tilde v} C^\dagger \simeq X^{A^T_{x,x} \tilde v} Z^{(A^T_{z,x}) \tilde v}
$$
Now using $\tilde v = \tilde Av$ we have $\tilde A_x = A^T_{x,x} \tilde A$ and $\tilde A_z =  A^T_{z,x} \tilde A$.

Similarly, according to \cref{def:desc-clifford} we have 
$$
C X^{e_j} C^\dagger = C X_j C^\dagger \simeq X^{(A_{x,z})_j} Z^{(A_{z,z})_j} = X^{A^T_{x,z} e_j} Z^{(A^T_{z,z}) e_j}
$$
and therefore $\hat A_x = A^T_{x,z} \hat A$ and $\hat A_z =  A^T_{z,z} \hat A$.
\end{proof}

The following proposition helps establish correctness of deallocation step in \cref{alg:outcome-complete-stab-sim}.

\begin{proposition}[Generalized controlled Pauli conjugation]
\label{prop:pauli-power-conjuagtion-by-ctrl-pauli}
For integer $j$, matrix $A$, vectors $a,r$ with entries in $\f_2$ such that $a_j = 0$ the following identity holds: 
$$
 \Lambda(X_j,Z^a) X^{Ar} \Lambda(X_j,Z^a) = X^{ (I + e_j a^T) A r}.
$$
Multiplying $A$ by $I + e_j a^T$ is equivalent to adding the sum of rows of $A$ indicated by $a$ to row $j$.
\end{proposition}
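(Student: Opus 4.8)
The plan is to verify the operator identity $\Lambda(X_j,Z^a)\,X^{Ar}\,\Lambda(X_j,Z^a) = X^{(I+e_j a^T)Ar}$ directly, using the conjugation rule for generalized controlled Paulis from \cref{eq:controlled-pauli-image}, and then to observe separately that left-multiplication by $I + e_j a^T$ is exactly the row operation described. Since both sides are linear in the columns of $A$ and in the entries of $r$ in an appropriate sense, it suffices to handle a single Pauli $X^b$ with $b = Ar$ and $a_j = 0$, which is the key technical content.

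First I would recall that $\Lambda(X_j, Z^a)$ is well-defined because $X_j$ and $Z^a$ commute: we have $\comm{X_j, Z^a} = \ip{e_j, a} = a_j = 0$ by hypothesis, so $\Lambda(X_j,Z^a)$ is a unitary Hermitian operator by the remark following \cref{eq:controlled-pauli}. Then I would apply \cref{eq:controlled-pauli-image} with $P_1 = X_j$, $P_2 = Z^a$, and $Q = X^b$:
$$
\Lambda(X_j, Z^a)\, X^b\, \Lambda(X_j, Z^a) = X_j^{\comm{Z^a, X^b}}\, X^b\, (Z^a)^{\comm{X_j, X^b}}.
$$
Now $\comm{X_j, X^b} = 0$ since any two $X$-type Paulis commute, so the rightmost factor is the identity. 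For the leftmost factor, $\comm{Z^a, X^b} = \ip{a, b}$. Hence the conjugated operator is $X_j^{\ip{a,b}} X^b = X^{b + \ip{a,b}\,e_j}$. Writing $\ip{a,b}\,e_j = e_j (a^T b) = (e_j a^T) b$, this is precisely $X^{(I + e_j a^T) b}$. Substituting $b = Ar$ gives the claimed identity. The only subtlety to note is that $\ip{a,b}$ depends only on the components of $b$ outside position $j$ is not actually needed — but the hypothesis $a_j = 0$ is what guarantees $\Lambda(X_j, Z^a)$ is a valid (commuting) controlled Pauli in the first place, which is the real role of that assumption; without it the left-hand side would not even be defined as a controlled Pauli.

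For the final sentence, I would simply note that $(I + e_j a^T)$ acting on the left of $A$ adds, to row $j$ of $A$, the vector $a^T A = \sum_{i : a_i = 1} A_i$ (the sum of those rows $A_i$ of $A$ with $a_i = 1$), and leaves all other rows unchanged, since $(e_j a^T A)_{k} = 0$ for $k \ne j$ and equals $a^T A$ for $k = j$. I do not expect any real obstacle here; the main point of care is getting the transpose conventions straight (the paper uses $\ip{v_1,v_2} = v_1^T v_2$ on column vectors, and $e_j a^T$ is the rank-one matrix with $j$th row $a^T$), and confirming that the hypothesis $a_j = 0$ is used exactly where claimed, namely to make $\Lambda(X_j, Z^a)$ a legitimate controlled-Pauli operator.
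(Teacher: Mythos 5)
Your proof is correct and follows essentially the same route as the paper's: both apply \cref{eq:controlled-pauli-image} to get $X_j^{\comm{Z^a,X^{Ar}}}X^{Ar}$ and then rewrite $\ip{a,Ar}\,e_j$ as $(e_ja^T)Ar$ via the outer-product identity. Your version is marginally more explicit in noting that the $(Z^a)^{\comm{X_j,X^{Ar}}}$ factor vanishes and that $a_j=0$ is what makes $\Lambda(X_j,Z^a)$ well-defined, but this is a refinement of the same argument, not a different one.
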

\begin{proof}
Recall that according to \cref{eq:controlled-pauli-image} we have: 
$$
 \Lambda(X_j,Z^a) X^{Ar} \Lambda(X_j,Z^a) = X^{Ar} X_j^{\comm{Z^a,X^{Ar}}} = X^{Ar} X_j^{\ip{Ar,a}} = X^{Ar + e_j \ip{A^T a,r}} 
$$
Now, using identity $v\ip{u,w} = \ip{v u^T, w}$ for column vectors $v,u,W$ we have 
$$
e_j \ip{A^T a,r} = \ip{e_j (A^T a)^T, r} = \ip{(I+e_j a^T)A}.
$$
Finally, note that $(A^T a)$ is the sum of rows of $A$ indicated by $a$ and $e_j (A^T a)^T$ is a 
matrix with all rows being zero, except row $j$ that is equal to $A^T a$.
\end{proof}

\subsection*{Encoding and unencoding circuits}

The following proposition helps establish correctness of encoding and unencoding circuit comparison algorithm 
\cref{proc:compare-encoding-unencoding-circuits}.

\begin{proposition}[Encoding circuit decomposition]
\label{prop:encoding-comparison-solution}
Let $C$ be a Clifford unitary such that code $\comm{n,k,C}$ has a stabilizer group 
equivalent to the stabilizer group of $\comm{n,k,I_n}$, that is $C$ maps $Z_j$ for $j \in [n-k]$
to products of $Z_j$ for $j \in [n-k]$ up to a sign. 

Then images under conjugation by $C$ are of the following form: 
\begin{align}
C Z_j C^\dagger = (-1)^{(m_0)_j} Z^{A_j} \otimes I_k, & ~~ j \in [n-k], \\
C Z_{j+n-k} C^\dagger = Z^{(A_x)_j} \otimes  C_\Delta Z_j C_\Delta^\dagger, & ~~ j \in [k], \\
C X_{j+n-k} C^\dagger = Z^{(A_z)_j} \otimes  C_\Delta X_j C_\Delta^\dagger, & ~~ j \in [k] 
\end{align}
for some vector $m_0$, matrices $A,A_x,A_z$ and Clifford unitary $C_\Delta$.
Additionally, the Choi state of an encoding circuit with syndrome $m$ implementing the linear map
$
 \ket{\phi} \mapsto C (\ket{m} \otimes  \ket{\phi})
$~(parameterized by $m$)
is equal to the Choi state of the map $\ket{\phi} \mapsto \ket{m'} \otimes C_\Delta X^{A_x m'} Z^{A_z m'} \ket{\phi}$ (parameterized by $m'$)
for $m' = A^{-1} (m_0 + m)$.
\end{proposition}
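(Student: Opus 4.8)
\textbf{Proof proposal for Proposition~\ref{prop:encoding-comparison-solution}.}

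The plan is to split the argument into two parts: first establishing the claimed block structure of the images $C Z_j C^\dagger$, $C Z_{j+n-k} C^\dagger$, $C X_{j+n-k} C^\dagger$, and second deriving the Choi-state identity from that structure.

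For the block structure, I would proceed as follows. The hypothesis says that $C Z_j C^\dagger$ for $j \in [n-k]$ is, up to sign, a product of $Z_1,\ldots,Z_{n-k}$; in particular $C Z_j C^\dagger = (-1)^{(m_0)_j} Z^{A_j} \otimes I_k$ for some bit $(m_0)_j$ and some bit-vector $A_j$ supported on the first $n-k$ coordinates, which defines $m_0$ and the $(n-k)\times(n-k)$ matrix $A$. Since $C$ is Clifford and $\{Z_1,\ldots,Z_n, X_{n-k+1},\ldots,X_n\}$ are sent to a valid (anticommutation-preserving) generating set, the matrix $A$ must be invertible: if it were singular, some nontrivial product of the $C Z_j C^\dagger$, $j\le n-k$, would be $\pm I$, contradicting that $C$ maps an independent set to an independent set. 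Next, $C Z_{j+n-k} C^\dagger$ for $j\in[k]$ must commute with every $C Z_i C^\dagger$, $i\le n-k$, hence (since those span all $Z$-type Paulis on the first $n-k$ qubits after the change of basis given by $A$) its $X$-part on the first $n-k$ qubits is trivial; so it has the form $Z^{b} \otimes Q_j$ for some $(n-k)$-bit vector $b =: (A_x)_j$ and some $k$-qubit Pauli $Q_j$. The same commutation argument applied to $C X_{j+n-k}C^\dagger$ gives $Z^{(A_z)_j}\otimes Q'_j$. Finally, the Paulis $Q_j, Q'_j$ inherit exactly the commutation relations of $Z_j, X_j$ on $k$ qubits (because conjugation by $C$ preserves $\llbracket\cdot,\cdot\rrbracket$ and the $Z^{(A_x)_j}, Z^{(A_z)_j}$ factors all commute with one another), so by the characterization of Clifford unitaries via images of a symplectic basis there is a $k$-qubit Clifford $C_\Delta$ with $C_\Delta Z_j C_\Delta^\dagger = Q_j$ and $C_\Delta X_j C_\Delta^\dagger = Q'_j$; phases can be absorbed into $C_\Delta$. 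This yields all three displayed equations.

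For the Choi-state identity, recall that the encoding circuit with syndrome $m$ enacts the linear map $Q_m : \ket{\phi}\mapsto C(\ket{m}\otimes\ket{\phi})$, equivalently $Q_m = C (\ket{m}\otimes I_k)$ as a map from $k$ qubits to $n$ qubits. Its Choi state is $(Q_m \otimes I_k)\ket{\mathrm{Bell}_k}$, a stabilizer state, so it suffices to show the two maps in question have the same stabilizer generators for their Choi states once $m$ and $m'$ are related by $m' = A^{-1}(m_0+m)$. I would compute the stabilizer group of $\Choi{Q_m}$ directly: it is generated by the images under $(Q_m\otimes I)(\cdot)(Q_m\otimes I)^\dagger$-type conjugation together with the constraint $\ket{m}$. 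Concretely, $\ket{m}\otimes\ket{\phi}$ is stabilized by $(-1)^{m_i}Z_i$ for $i\in[n-k]$ and by the Bell stabilizers $P^\ast\otimes P$; pushing through $C\otimes I$ and using the block formulas above, one gets generators of the form $(-1)^{m_i} C Z_i C^\dagger \otimes I$ and $(C P C^\dagger)\otimes_{\text{(appropriate)}} P^\ast$. Substituting the block structure, the first family becomes $(-1)^{m_i}( (-1)^{(m_0)_i} Z^{A_i}\otimes I_k)\otimes I_k$ on the output register, which coincides with $(-1)^{m'_i} Z_i\otimes I$ precisely when $A^T$-type bookkeeping gives $m' = A^{-1}(m_0+m)$ — this is the one genuinely computational spot. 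The remaining generators match those of the Choi state of $\ket{\phi}\mapsto \ket{m'}\otimes C_\Delta X^{A_x m'} Z^{A_z m'}\ket{\phi}$ after the same substitution, using $U_{A^{-1}}$-conjugation identities from \cref{prop:css-cliiford-action} to move the $X^{A_x m'}, Z^{A_z m'}$ factors onto the output register correctly. Matching all generators (including signs) finishes the proof.

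The main obstacle I expect is the sign/phase bookkeeping in the last step: getting the affine relation $m' = A^{-1}(m_0+m)$ to come out exactly, including verifying that the $X^{A_x m'}, Z^{A_z m'}$ conditional Paulis carry the right phases, requires carefully tracking how the sign vector $m_0$ and the change of basis $A$ interact when conjugating $Z^{A_i}$ through, and how $\mathcal{G}$-type phase contributions (cf.\ \cref{eq:logical-operator-map,eq:encoding-conjugation-action}) are absorbed. The structural part (blocks, invertibility of $A$, existence of $C_\Delta$) is routine given the Clifford-image characterization and the commutation-preservation of conjugation.
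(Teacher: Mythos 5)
Your proposal is correct and follows essentially the same route as the paper's proof: derive the block structure from the fact that the images $C Z_{j+n-k} C^\dagger$, $C X_{j+n-k} C^\dagger$ must commute with the $Z$-images spanning $\langle Z_1,\ldots,Z_{n-k}\rangle$ up to signs, then match the Choi-state stabilizer generators under the substitution $m' = A^{-1}(m_0+m)$. The one detail to make explicit in the sign bookkeeping you defer is that the generators $(-1)^{(m+m_0)_i} Z^{A_i}\otimes I_k$ do not individually coincide with $(-1)^{m'_i} Z_i\otimes I_k$; rather one takes products of them indicated by the rows of $(A^T)^{-1}$, yielding phases $(-1)^{\langle A^{-1}(m+m_0),\,e_i\rangle}$, after which the $Z^{(A_x)_j}$ and $Z^{(A_z)_j}$ factors in the remaining generators are absorbed as the phases $(-1)^{\langle A_x m', e_j\rangle}$ and $(-1)^{\langle A_z m', e_j\rangle}$ exactly as you predict (no $\mathcal{G}$-type contributions are needed).
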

\begin{proof}
We first establish that images of $X$ and $Z$ under conjugation by $C$ have required form.
Images $C Z_{j+n-k} C^\dagger$, $C X_{j+n-k} C^\dagger$ for $j \in [k]$ must commute with $Z_j$ for $j \in [n-k]$
and are, therefore, products of Pauli $Z$ over qubits $[n-k]$ up to signs.
This is because images $C Z_j C^\dagger$ for $j \in [n-k]$ generate $\ip{Z_1,\ldots,Z_{n-k}}$ up to signs, by the equivalence of 
the stabilizer groups of $\comm{n,k,C}$ and $\comm{n,k,I_n}$.

Next we establish the following equations for the Choi state stabilizer generators:
\begin{align}
     \label{eq:code-stab} (-1)^{\ip{m', e_j }} Z_j \otimes I_k \otimes I_k, & ~~ j \in [n-k], \\
     \label{eq:cl-x-stab} (-1)^{\ip{A_x m', e_{j}}} I_{n-k} \otimes \left( C_\Delta Z_{j} C^\dagger_\Delta \right) \otimes Z_{j}, & ~~ j \in [k], \\
     \label{eq:cl-z-stab} (-1)^{\ip{A_z m',  e_{j}}} I_{n-k} \otimes \left( C_\Delta X_{j} C^\dagger_\Delta \right) \otimes X_{j}, & ~~ j \in [k].
\end{align}
We gradually replace stabilizer generators with equivalent ones to arrive at the required result.
By the definition of the Choi state for linear map 
$
 \ket{\phi} \mapsto C (\ket{m} \otimes  \ket{\phi})
$
the stabilizer generators consist of the three sets:
\begin{align*}
     \{ (-1)^{m_j} C Z_j C^\dagger \otimes I_k, & ~~ j \in [n-k] \}, \\ 
     \{ C X_{j+n-k} C^\dagger \otimes X_{j}, & ~~ j \in [k] \}, \\
     \{ C Z_{j+n-k} C^\dagger \otimes Z_{j}, & ~~ j \in [k] \}
\end{align*}

We first multiply the generators from the first set with each other, so that after the modifications
the stabilizers from the first set are $Z_1,\ldots,Z_{n-k}$ up to phases.
Using inner-product notation, the first set of generators can be rewritten as 
$$
(-1)^{\ip{m + m_0,e_j }} Z^{A^T e_j} \otimes I_k,~~ j \in [n-k]. \\
$$
This is because when syndrome is $m$, we have stabilizer $(-1)^{m_j} C Z_j C^\dagger = (-1)^{(m_0+m)_j} Z^{A_j}$, 
using $C Z_j C^\dagger = (-1)^{(m_0)_j} Z^{A_j} \otimes I_k$.
Now we use $(m_0 + m)_j = \ip{m_0 + m, e_j}$ and $A_j = A^T e_j$.
This implies that the Choi state is stabilized by 
$$
(-1)^{\ip{ m + m_0, (A^T)^{-1} e_j }} Z^{A^T (A^T)^{-1} e_j} \otimes I_k = (-1)^{\ip{ A^{-1} (m + m_0), e_j }} Z^{e_j} \otimes I_k,~~ j \in [n-k], \\
$$
because each $(A^T)^{-1} e_j$ is just a sum of $e_{j'}$ for some set of $j'$.
The above stabilizer are the same as those in~\cref{eq:code-stab} because $m' = A^{-1} (m_0 + m)$.

Next we multiply the stabilizers from the second and third sets by the generators from the first set, 
so the generators from the second and third sets are now supported on the last $k$ qubits.
The generators from the second set must be multiplied by $Z^{(A_x)_j}$ up to phases, 
so $Z^{(A_x)_j}$ gets replaced by the phase 
$$
(-1)^{\ip{ m', (A_x)_j }} = (-1)^{\ip{ m', A_x^T e_j }} = (-1)^{\ip{ A_x m', e_j }}
$$
Similarly, $Z^{(A_z)_j}$ in the generators from the third set is replaced by phase
$$
(-1)^{\ip{ m', (A_z)_j }} = (-1)^{\ip{ A_z m', e_j }}
$$
The phase dependent on $m'$ appears in \cref{eq:cl-x-stab,eq:cl-z-stab}.
Equations  \cref{eq:code-stab,eq:cl-x-stab,eq:cl-z-stab} match the Choi state stabilizer generators of 
 $\ket{\phi} \mapsto \ket{m'} \otimes C_\Delta X^{A_x m'} Z^{A_z m'} \ket{\phi}$, as required.
\end{proof}

To connect \cref{prop:encoding-comparison-solution} to  \cref{proc:compare-encoding-unencoding-circuits}, set the unitary $C$ in \cref{prop:encoding-comparison-solution} to be $C = C_2^\dagger C_1$.

\newpage

\subsection*{Stabilizer instruments}

Given a set of stabilizer states $\{ \ket{\psi}_r \}_{r \in R}$ we say that Pauli operator $P$ is a \emph{phase operator} of $\{ \ket{\psi}_r \}_{r \in R}$
when $P \ne \pm I$ and for all $r \in R : P \ket{\psi}_r  = \pm \ket{\psi}_r$. 
For any phase operator $P$ we define the corresponding \emph{phase function} $f_P$ via 
$
 P\ket{\psi}_r = (-1)^{f_P(r)}\ket{\psi}_r.
$
The set of Choi states corresponding to a stabilizer circuit and the related quantum instrument motivates the following:
\begin{definition}[Linear family of stabilizer states]
\label{def:linear-family}
Consider a family of $n$-qubit stabilizer states  $\{ \ket{\psi}_r \}_{r \in R}$, 
we say that it is a linear family of stabilizer states if 
\begin{itemize}[noitemsep]
    \item[(i)] there exist a set of $n$ independent commuting phase operators of $\{ \ket{\psi}_r \}_{r \in R}$
    \item[(ii)] for any $m$ phase operators $P_1,\ldots,P_l$ set $\{ (f_{P_1}(r),\ldots,f_{P_l}(r)) : r \in R \}$ is a coset\footnote{Coset of a vector space $V$ is a subset $v + L \subset V$ where $v \in V$ is a vector and $L \subset V$ is a subspace of $V$.} of $\f_2^l$
\end{itemize}
\end{definition}
A corollary of \cref{thm:general-form} and \cref{fig:general-form-choi} is that the set of Choi states of any stabilizer circuit is a linear family.
A natural question to ask is that if any linear family of stabilizer states corresponds to a stabilizer circuit.
We will show that to answer this question affirmatively, the linear family must have an additional property. 
To state this property we need the following definition: 

\begin{definition}[Phase-complete family of stabilizer states]
\label{def:phase-complete}
Consider a linear family of $n$-qubit stabilizer states $\{ \ket{\psi_r} \}_{r \in R}$ and set $K \subset [n]$. 
Consider the set $\mathcal{F}_K$ of all phase operators of $\{ \ket{\psi_r} \}_{r \in R}$ supported on $K$.
We call $\{ \ket{\psi_r} \}_{r \in R}$ phase-complete with respect to $K$
when $-I \in \ip{\mathcal{F}_K}$ and for any $P_1,\ldots,P_l$ such that $-I,P_1,\ldots,P_l$ are independent generators of $\ip{\mathcal{F}_K}$  
set $\{ (f_{P_1}(r),\ldots,f_{P_l}(r)) : r \in R \} = \f_2^l$.
\end{definition}

The following describes which sets of stabilizer states correspond to stabilizer circuits:

\begin{lemma}[Phase completeness]
\label{lem:choi-state-criteria}
A set of $n$-qubit stabilizer states $\{ \ket{\psi_r} \}_{r \in R}$ is a set of Choi states 
of some stabilizer circuit $\mathcal{C}$ with $\ki$ input qubits and $\ko = n - \ki$ output qubits
if and only if $\{ \ket{\psi_r} \}_{r \in R}$ is phase-complete linear family of stabilizer states with respect to set $K = [\ko+1,\ko+\ki]$.
\end{lemma}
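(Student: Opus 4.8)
The statement is an ``if and only if'' characterizing which families of stabilizer states arise as Choi-state families of stabilizer circuits. I would prove the two directions separately, with the forward direction (Choi states $\Rightarrow$ phase-complete linear family) being largely a matter of reading off \cref{fig:general-form-choi}, and the reverse direction (phase-complete linear family $\Rightarrow$ realizable by a stabilizer circuit) requiring an explicit construction of a general form circuit from the given family.

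For the forward direction, suppose $\{\ket{\psi_r}\}_{r\in R}$ is the set of Choi states of a stabilizer circuit $\mathcal{C}$. By \cref{thm:general-form}, $\mathcal{C}$ is equivalent to a general form circuit, and \cref{fig:general-form-choi} gives explicit stabilizer generators of the Choi states parameterized by the outcome vector $o$. The key observations are: (i) the stabilizer group of each $\ket{\psi_r}$ shares a common set of $n$ generators up to sign --- namely those built from $\Co,\Ci$ and the Bell-pair structure --- so these $n$ generators furnish $n$ independent commuting phase operators, establishing part (i) of \cref{def:linear-family}; (ii) the signs of those generators are affine-linear functions of $o$ (through the condition matrices $A,A_x,A_z$), and since the reachable set of outcome vectors $o$ is itself an affine subspace (all of $\f_2^{n_O}$ in the general form), the image under any tuple of phase functions is a coset, giving part (ii). For phase-completeness with respect to $K=[\ko+1,\ko+\ki]$, I would argue that the phase operators supported on the last $\ki$ qubits are exactly (up to sign) the stabilizer generators $\Ci(I_{n_2-k}\otimes P)\Ci^\dagger$-type operators together with $\Ci Z_j \Ci^\dagger$ from \cref{fig:general-form-choi}; the group they generate contains $-I$ because the input-dependent/measured observables $\Ci Z_j \Ci^\dagger$ for $j\in[\ki-k]$ have signs depending on the \emph{measurement} outcomes $m$, which range freely, so the corresponding phase functions jointly surject onto $\f_2^{\ki-k}$. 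This uses the split echelon structure of $M$ from \cref{thm:general-form} to see that the measurement outcomes are genuinely free coordinates.

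For the reverse direction, given a phase-complete linear family $\{\ket{\psi_r}\}_{r\in R}$ with respect to $K=[\ko+1,\ko+\ki]$, I would construct a general form circuit whose Choi states match. The plan is: first, apply the bipartite normal form for a single stabilizer state (\cref{thm:bipartition}, \cref{prob:partition-state}) --- or more precisely the family version \cref{prob:partition} --- to a representative $\ket{\psi_{r_0}}$ over the bipartition $[\ko]\,\sqcup\,[\ko+1,\ko+\ki]$, yielding an integer $k$ and Clifford unitaries $\Co,\Ci$. This identifies the left and right Cliffords of the candidate general form circuit, and by the linear family property these are common (up to Pauli corrections) to all $\ket{\psi_r}$. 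Second, I would use phase-completeness with respect to $K$ to extract the structure of the left code $\comm{\ki,k,\Ci}$: the phase operators supported on $K$ whose signs vary over the family correspond to the measured observables, and phase-completeness guarantees these signs jointly cover all of $\f_2^{\ki-k}$, which is exactly what is needed for the unencoding circuit $\mathcal{E}^\dagger(k,\Ci)$ to produce an arbitrary syndrome. Third, the remaining sign dependence --- the signs of the output stabilizers $\Co Z_{j}\Co^\dagger$ and of the ``preserved observable'' generators --- is, by the linear family coset property, an affine-linear function of the free parameters; I would read off the condition matrices $A,A_x,A_z$ and vector $v_0$ from these linear dependencies. Assembling $\mathcal{E}^\dagger(k,\Ci)$, the conditional Paulis $X^{A_x o}Z^{A_z o}$, and $\mathcal{E}(k,\Co)$ with appropriate conditional Pauli $X^{Ao}$ gives a general form circuit (\cref{fig:general-form}), whose Choi circuit (\cref{fig:general-form-choi}) I would then verify produces precisely $\{\ket{\psi_r}\}_{r\in R}$ by comparing stabilizer generators.

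\textbf{Main obstacle.} The delicate part is the reverse direction, specifically showing that the phase-completeness condition with respect to $K$ is exactly the right hypothesis --- neither too weak nor too strong --- to guarantee realizability. One must check that: the group $\langle \mathcal{F}_K\rangle$ of phase operators supported on $K$ has rank exactly $\ki-k+1$ (including $-I$), matching the number of stabilizers of $\comm{\ki,k,\Ci}$ plus the sign generator; that the surjectivity of the joint phase functions onto $\f_2^{\ki-k}$ lines up with the measurement outcomes being free; and that the \emph{complementary} signs (those of output stabilizers and of the $k$ preserved-observable pairs) are determined affinely by the same parameters without introducing inconsistencies. Getting the bookkeeping right between ``phase operators supported on $K$'' versus ``phase operators supported on $K^c$'' versus ``generic phase operators'', and matching each class to the corresponding piece of the general form circuit (measured observables, output stabilizers, preserved observables), is where the real work lies. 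I would lean heavily on the explicit stabilizer-generator list in \cref{fig:general-form-choi} and the three-group decomposition in \cref{eq:three-stab-groups} to organize this matching.
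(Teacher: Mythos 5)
Your proposal is correct and follows essentially the same route as the paper: the forward direction reads the phase operators and their affine sign dependence off the general-form Choi stabilizers of \cref{fig:general-form-choi}, and the reverse direction builds a general form circuit by first observing all states share a common unsigned stabilizer group (so $\ket{\psi_r}\simeq C\ket{A'r'}$ after relabelling $R$ to $\f_2^{n'_r}$ via the coset property) and then invoking the bipartite-normal-form machinery of \cref{alg:general-from-choi}. The two bookkeeping points you flag as the "main obstacle" are exactly where the paper does its remaining work — the coset condition of \cref{def:linear-family} handles the relabelling of the abstract set $R$, and \cref{prop:relabelling} is precisely the statement that phase-completeness with respect to $K$ lets the measurement-outcome block of the condition matrix be brought to $(\mathbf{0}\,|\,I_{\ki-k})$ — so your plan closes as intended.
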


Before we proceed with the proof of~\cref{lem:choi-state-criteria} note that it allows us to give an alternative definition of a 
stabilizer instrument without the need to mention a quantum circuit: a quantum instrument is a stabilizer instrument if 
its Choi states are a phase-complete linear family of stabilizer states.
When completing this work we became aware of an equivalent definition of a stabilizer instrument, the Definition~6 in \cite{LogicalBlocks}.
Results of this section can be re-derived using this alternative definition.
\begin{proof}[Proof of \cref{lem:choi-state-criteria}]
We first show that the set of Choi states of any stabilizer circuit $\mathcal{C}$ is 
a phase-complete linear family of stabiliser states.
This is achieved in three steps:
(A1) construct independent commuting phase operators
(A2) show that image of several phase functions together is an image of an affine map, 
which is a coset of a linear space
(A3) show phase completeness property by expressing image of the related phase functions as an image of a full row rank linear operator.
Second, we show that any phase-complete linear family of stabilizer states $\{ \ket{\psi_r} \}_{r \in R}$ can be associated with a general form circuit. 
This is also achieved in three steps:
(B1) show that the set of outcomes $R$ can be replaced by $R' = \f_2^{n'_r}$ for some integer $n'_r$,
(B2) show that all $\ket{\psi_r} \simeq C\ket{A'r'}$ for $r' \in \f_2^{n'_r}$ for some Clifford unitary $C$ and matrix $A'$
(B3) use the second stage of \cref{alg:general-from-choi} to construct the general form circuit.

(A1) \cref{thm:general-form} and \cref{fig:general-form-choi} imply that the set of Choi state of $\mathcal{C}$
is $\{ \ket{\phi_o} \}_{o \in \f_2^{n_O}}$, where $\ket{\phi_o}$ is a Choi state of a linear map enacted by the general form circuit 
upon outcome $o$ described in~\cref{fig:general-form-choi}.
Removing phases from the stabiliser generators in the right half of~\cref{fig:general-form-choi} gives us a set of 
$n = \ko + \ki$ independent commuting phase operators $Q_1,\ldots,Q_n$ of $\ket{\phi_o}$:
\begin{alignat*}{3}
\Co Z_{j_\out} \Co^\dagger & \otimes I_{\ki}, & j_\out \in [\ko-k] \\
\Co Z_{\ko - k + j} \Co^\dagger & \otimes (\Ci Z_{\ki - k + j} \Ci^\dagger)^\ast, & j \in [k] \\
I_{\ko} & \otimes (\Ci Z_{j_\inn} \Ci^\dagger)^\ast,& j_\inn \in [\ki-k] \\
\Co X_{\ko - k + j} \Co^\dagger & \otimes (\Ci X_{\ki - k + j} \Ci^\dagger)^\ast, & j \in [k] 
\end{alignat*}

(A2) We start by relating a phase function to an image affine map.
Every phase operator $P$ of $\{ \ket{\phi_o} \}_{o \in \f_2^{n_O}}$ is the product of these operators up to a sign,
that is for every phase operator $P$ there exist vector $a_P$ and constant $s_P$ such that: 
$$
    P = (-1)^{s_P} Q_1^{(a_P)_1} \ldots Q_n^{(a_P)_n}.
$$
Phase function $f_P$ is then given by
$$
f_P = s_P + (a_P)_1 (f_{Q_1}) + \ldots + (a_P)_n f_{Q_n} = s_P + (a_P)^T ({f_Q}_1,\ldots,{f_Q}_n)
$$
We introduce matrix $A'$ to relate $({f_Q}_1,\ldots,{f_Q}_n)$ to outcome $o$ of the general form:
$$
 A' = \left(\begin{array}{c} A \\ \hline  A_x \\ \hline (\mathbf{0}_{\ki - k \times n_r}|I_{\ki - k}) \\ \hline  A_z \end{array}\right),
$$
and have $({f_Q}_1(o),\ldots,{f_Q}_n(o)) = A' o$. For any $l$ phase functions we have:
$$
(f_{P_1}(o),\ldots,f_{P_l}(o)) = (s_{P_1},\ldots,s_{P_l}) + \left(\begin{array}{c} a^T_{P_1} \\ \hline  \vdots \\ \hline a^T_{P_l} \end{array}\right) A o,
$$
which immediately implies that image of $(f_{P_1},\ldots,f_{P_l})$ is a coset of $\f^l_2$.

(A3) We show the phase-completeness property for a convenient set of phase operators $P_1,\ldots,P_l$
$$
I_{\ko}  \otimes (\Ci Z_{j_\inn} \Ci^\dagger)^\ast, j_\inn \in [\ki-k]
$$
first, and then use an argument similar to one in (A2) to show that the phase complete property holds 
for any set of phase operators $-I,P'_1,\ldots,P'_l$ such that $-I,P'_1,\ldots,P'_l$ generate $\mathcal{F}_{[\ko+1,n]}$~(see~\cref{def:phase-complete}).
For the chosen $P_1,\ldots,P_l$ we have 
$$
(f_{P_1}(o),\ldots,f_{P_l}(o)) = (\mathbf{0}_{\ki - k \times n_r}|I_{\ki - k}) o,
$$
and so the image is $\f_2^l$. Similarly to the argument in (A2), we have 
$$
(f_{P'_1}(o),\ldots,f_{P'_l}(o)) = 
(s_{P_1},\ldots,s_{P_l}) + 
\hat A 
(\mathbf{0}_{(\ki - k) \times n_r}|I_{\ki - k}) o,~~
\hat A = \left(\begin{array}{c} (a_{P'_1})^T_{[\ko+1,n-k]} \\ \hline  \vdots \\ \hline (a_{P'_l})^T_{[\ko+1,n-k]} \end{array}\right)
$$
Matrix $\hat A$ is an invertible matrix because $-I,P'_1,\ldots,P'_l$ generate $\mathcal{F}_{[\ko+1,n]}$, 
therefore matrix $\hat A (\mathbf{0}_{(\ki - k) \times n_r}|I_{\ki - k})$ has column rank $l$ and the image of $(f_{P'_1}(o),\ldots,f_{P'_l}(o))$ is $\f_2^l$.

Now we show that any phase-complete linear family of stabilizer states corresponds to a general form circuit following the approach outlined above.
(B1) Let $Q_1,\ldots,Q_n$ be some independent commuting phase operators of $\{ \ket{\psi_r} \}_{r \in R}$. 
Image of $(f_{Q_1},\ldots,f_{Q_n})$ is a coset $v + L$. 
We first replace $Q_1,\ldots,Q_n$ by $(-1)^{v_1} Q_1,\ldots,(-1)^{v_n} Q_n$ so that image of the corresponding phase functions is $L$.
Let us now set $n'_r$ to the rank of $L$ and let $A'$ be a basis matrix of $L$, so that $L = A' \f_2^{n'_r}$.
We can label $\ket{\psi}_r$ by $r' = (A')^{(-1)} r$ and define relabelled set of states $\{ \ket{\phi}_{r'} \}_{r' \in \f_2^{n'_r}}$.
(B2) Let $C$ be any Clifford unitary such that $C Z_j C^\dagger = Q_j$. 
Note that $\ket{\phi}_{r'}$ is stabilized by $(-1)^{(A'r')_j} CZ_j C^\dagger$ and so $\ket{\phi}_{r'} \simeq C\ket{A'r'}$.
(B3) We apply the second part of~\cref{alg:general-from-choi} to the set of states $\{ \ket{\phi}_{r'} \}_{r' \in \f_2^{n'_r}}$
described by Clifford unitary $C$ and matrix $A'$. We can always find matrix $\tilde A_m$ in \cref{line:general-form-outcome-relation}
according to~\cref{prop:relabelling}~(discussed below)
because $\{ \ket{\phi}_{r'} \}_{r' \in \f_2^{n'_r}}$ is phase-complete with respect to $[\ko+1,\ko+\ki]$.
\end{proof}

The following proposition is used in the proof above and is also used to justify correctness of~\cref{alg:general-from-choi}.

\begin{proposition}[Outcome relabelling for phase complete sets of stabilizer states]
\label{prop:relabelling}
Consider $(\ko+\ki)$-qubit s set of stabilizer state $\{ \tilde C \ket{A r} \}_{r \in \f_2^{n_r}}$, where $\tilde C$ has a property described by \cref{fig:bipartition-family-equality}
for integer $k$, Clifford unitaries $B,D$ and $n_1 = \ko, n_2 = \ki$.
Suppose that the set of stabilizer states is phase-complete with respect to $[\ko+1,\ko+\ki]$, then 
there exist invertible matrix $\tilde A$ such that $(A \tilde A)_{[\ko+1,\ko-k]} =  (\mathbf{0}_{(\ki - k) \times n_r}|I_{\ki - k})$.
\end{proposition}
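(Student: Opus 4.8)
\textbf{Proof plan for Proposition~\ref{prop:relabelling}.}

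The plan is to analyze the structure of phase operators supported on the last $\ki$ qubits of the family $\{\tilde C\ket{Ar}\}_{r\in\f_2^{n_r}}$ and to extract the desired matrix $\tilde A$ from a basis-change on $\f_2^{n_r}$. First I would write down an explicit complete set of $(\ko+\ki)$ commuting phase operators for the family. Since $\tilde C\ket{Ar}$ is stabilized by $(-1)^{(Ar)_j}\tilde C Z_j\tilde C^\dagger$ for $j\in[\ko+\ki]$, the operators $Q_j=\tilde C Z_j\tilde C^\dagger$ are phase operators with phase functions $f_{Q_j}(r)=(Ar)_j=A_j r$. Using the block structure of $\tilde C$ from \cref{fig:bipartition-family-equality,fig:bipartition-family-blocks}, the operators $Q_{\ko+1},\ldots,Q_{\ko+k_2}$ (with $k_2=\ki-k$) are exactly those images $\tilde C Z_{\ko+j}\tilde C^\dagger$ that are supported on the complement $[\ko+1,\ko+\ki]$; these generate (together with $-I$) a large portion of the group $\mathcal{F}_{[\ko+1,\ko+\ki]}$ of phase operators supported on $K=[\ko+1,\ko+\ki]$. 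More precisely, I would argue that $\langle -I, Q_{\ko+1},\ldots,Q_{\ko+k_2}\rangle$ equals $\langle \mathcal{F}_K\rangle$ up to the operators coming from the $k$ Bell-pair ``preserved observable'' block, which are not independent phase operators on their own — so in fact $-I, Q_{\ko+1},\ldots,Q_{\ko+k_2}$ are independent generators of $\langle\mathcal{F}_K\rangle$.

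Next I would invoke the phase-completeness hypothesis. By \cref{def:phase-complete}, applied to this particular generating set $P_1=Q_{\ko+1},\ldots,P_{k_2}=Q_{\ko+k_2}$ of $\langle\mathcal{F}_K\rangle$, the set $\{(f_{P_1}(r),\ldots,f_{P_{k_2}}(r)):r\in\f_2^{n_r}\}$ equals all of $\f_2^{k_2}$. But $f_{P_i}(r) = A_{\ko+i}\, r$, so this says precisely that the $k_2\times n_r$ submatrix $A_{[\ko+1,\ko+k_2]}$ has full row rank $k_2$. Once this rank statement is in hand, the conclusion is elementary linear algebra: there is an invertible $n_r\times n_r$ matrix $\tilde A$ such that $A_{[\ko+1,\ko+k_2]}\tilde A = (\mathbf{0}_{k_2\times(n_r-k_2)}\mid I_{k_2})$, obtained by column operations bringing the full-rank block into this canonical form (for instance, pick $k_2$ columns of $A_{[\ko+1,\ko+k_2]}$ that form an invertible submatrix, permute them to the right, and clear the remaining columns). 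Here I should reconcile the index convention: the statement writes $(A\tilde A)_{[\ko+1,\ko-k]}$, which I read as the rows indexed $\ko+1$ through $\ko+k_2 = \ko + (\ki-k)$, i.e. the $k_2$ rows of $A$ corresponding to the $Z$-preimages on the last register that are disentangled across the cut.

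The main obstacle I anticipate is the bookkeeping in the first step: correctly identifying which rows of $A$ correspond to phase operators genuinely supported on $K$, and verifying that $-I$ together with exactly those $k_2$ operators form an independent generating set of $\langle\mathcal{F}_K\rangle$ (as opposed to a proper subgroup or a non-minimal set). This requires carefully tracking the block decomposition of $\tilde C$ in \cref{fig:bipartition-family-blocks} and checking that the ``inner'' Bell-pair observables $\tilde C Z_{\ko-k+j}\tilde C^\dagger, \tilde C X_{\ko-k+j}\tilde C^\dagger$ are \emph{not} supported on $K$ alone (they straddle the bipartition), while the ``left'' block $\tilde C Z_1\tilde C^\dagger,\ldots$ is supported on $[\ko]$ and hence irrelevant to $\mathcal{F}_K$. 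Modulo that structural verification, the rest of the argument is a direct translation of \cref{def:phase-complete} into a rank condition followed by a standard column-reduction, so the proof should be short once the setup is pinned down.
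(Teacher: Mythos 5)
Your proposal is correct and follows essentially the same route as the paper: identify the phase operators supported on the last $\ki$ qubits as the images $\tilde C Z_{\ko+1}\tilde C^\dagger,\ldots,\tilde C Z_{\ko+\ki-k}\tilde C^\dagger$ (equivalently $I_{\ko}\otimes(\Ci Z_j\Ci^\dagger)^\ast$), observe that these together with $-I$ generate $\ip{\mathcal{F}_K}$ so that phase-completeness forces the image of $r\mapsto A_{[\ko+1,\ko+\ki-k]}r$ to be all of $\f_2^{\ki-k}$, deduce full row rank of that block, and finish by column reduction (the paper invokes \cref{proc:block-reshape} for this last step). Your reading of the index range $[\ko+1,\ko-k]$ as $[\ko+1,\ko+\ki-k]$ matches the paper's own proof.
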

\begin{proof}
First note that Phase operators $$
I_{\ko}  \otimes (\Ci Z_{j} \Ci^\dagger)^\ast, j  \in [\ki-k]
$$
generate $\mathcal{F}_{[\ko+1,n]}$~(see~\cref{def:phase-complete}). 
The image of the corresponding phase functions is $A_{[\ko+1,\ko+\ki-k]} r$ and is $\f_2^{\ki - k}$ according to the phase-completeness.
For this reason,  $A_{[\ko+1,\ko+\ki-k]}$ must have rank $\ki-k$ and so \cref{proc:block-reshape} described below can be applied to $A_{[\ko+1,\ko+\ki-k]}$.
\end{proof}

\subsection*{Linear algebra}

The following is used to justify correctness of ~\cref{alg:general-from-choi}.
\begin{proposition}[Product of split reduced echelon form matrices] \label{prop:matrix-structure-propagation}
Consider $n\times n$ matrix $R$ in $(n_r,n_m)$-split reduced echelon form~(\cref{def:split-echelon-form}) and $m \times n$ matrix $M$ in $(n,0)$-split echelon form,
then the product $MR$ is in $(n_r,n_m)$-split reduced echelon form.
\end{proposition}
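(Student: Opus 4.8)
The plan is to unpack the three defining conditions of $(n_r,n_m)$-split reduced echelon form for the product $MR$ by using the structure of each factor separately. Recall the hypotheses: $R$ is $n \times n$ in $(n_r,n_m)$-split reduced echelon form, so $R^T$ has full column rank $n$, the first $n_r$ rows of $R^T$ form a reduced row echelon block, the next $n_m$ rows form a reduced row echelon block (since we assume the stronger ``reduced'' version), and whenever $j$ is a leading column of the left block the entries $R_{j,[n_r+1,n_r+n_m]}$ vanish. Meanwhile $M$ is $m \times n$ in $(n,0)$-split echelon form, which by \cref{def:split-echelon-form} just means $M$ has full column rank $n$ and $M^T$ (all $n$ of its rows) is in reduced row echelon form, with the third condition being vacuous since $n_m = 0$ there. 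The first thing I would observe is that $(MR)^T = R^T M^T$ and that $MR$ has full column rank $n$ because both $M$ and $R$ are injective as linear maps, giving the rank requirement immediately.

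The key structural step is to understand how $R^T M^T$ relates to $M^T$ row-block by row-block. I would write $R^T$ in block form according to the split: its top $n_r$ rows $R^T_{[n_r]}$ and its bottom $n_m$ rows $R^T_{[n_r+1,n_r+n_m]}$. Then the top $n_r$ rows of $(MR)^T$ are $R^T_{[n_r]} M^T$ and the bottom $n_m$ rows are $R^T_{[n_r+1,n_r+n_m]} M^T$. The crucial lemma I need is: if $N$ is a matrix in reduced row echelon form and $E$ is a matrix whose columns are the standard basis columns selected in increasing order by $E$'s leading-one structure (i.e. $E$ in reduced row echelon form with full column rank --- $M^T$ here), then $N E$ remains in reduced row echelon form, because right-multiplication by such an $E$ amounts to selecting and reordering a subset of columns while preserving the relative order of pivot columns; reduced row echelon form is preserved under deletion of non-pivot-adjacent columns only when done carefully, so I need to argue this via the fact that $M^T$ in reduced row echelon form of full column rank acts on column indices as a strictly monotone embedding of pivot positions. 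I expect this to be the main obstacle: one must verify that pivots of $N E$ are exactly the images under this embedding of the pivots of $N$, and that the zeros above pivots are inherited. I would handle it by writing $M^T$ explicitly: its columns are indexed $1,\dots,n$, it has $n$ pivots (full column rank), so its pivot rows $i_1 < \dots < i_n$ determine a monotone map, and $N M^T$ has its $j$th column equal to a combination of columns of $N$; the pivot structure of $N$ transports through because the pivot columns of $M^T$ hit each row of $N$ at most once.

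With that lemma in hand, the proof concludes quickly. Applying it with $N = R^T_{[n_r]}$ gives that the left part $(MR)^T_{[n_r]}$ is in reduced row echelon form; applying it with $N = R^T_{[n_r+1,n_r+n_m]}$ gives that the right part $(MR)^T_{[n_r+1,n_r+n_m]}$ is in reduced row echelon form, which is exactly the ``reduced'' strengthening of the second condition. For the third condition, I would let $j$ be a leading column of $(MR)^T_{[n_r]}$; by the pivot-transport property this means $j$ is in the image of the monotone embedding induced by $M^T$ of some leading column $j'$ of $R^T_{[n_r]}$. Then row $(MR)_{j,[n_r+1,n_r+n_m]}$ equals the appropriate linear combination of rows of $R_{j',[n_r+1,n_r+n_m]}$ dictated by $M$, which is $0^{n_m}$ by the third condition on $R$. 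Hence $(MR)_{j,[n_r+1,n_r+n_m]} = 0^{n_m}$, establishing the final condition. I would close by noting the only genuinely delicate point is the pivot-transport lemma for right-multiplication by a full-column-rank reduced-row-echelon matrix, and everything else is bookkeeping with block decompositions.
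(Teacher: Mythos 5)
Your proposal is correct and follows essentially the same route as the paper: both arguments reduce to showing that the leading-one columns of each block of $(MR)^T = R^T M^T$ are the images of the leading-one columns of the corresponding block of $R^T$ under the monotone map determined by the pivots of $M^T$ (the paper tracks this with the index sequences $s_j$ and $l_j$), and that the third condition transfers because the rows of $M$ at those pivot positions are standard basis rows, so the corresponding rows of $MR$ are single rows of $R$ whose right parts vanish. One caution: your first gloss of the key lemma as ``selecting and reordering a subset of columns'' is inaccurate, since the non-pivot columns of $M^T$ genuinely produce linear combinations of columns of $N$ (and $NM^T$ has more columns than $N$, not fewer); your subsequent formulation via pivot transport and inherited zeros is the correct one and is what actually needs to be verified.
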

\begin{proof}
The result follows from two observations.
First, row $j$ of $M R$ is equal to the linear combination of rows of $R$ with the coefficients given by row $j$ of $M$.
Second, matrix $B$ is in reduced row echelon form if and only if it consists of leading one columns and all other column are 
linear combinations of previous leading one columns.

First check that $(MR)^T_{[n_r]}$ is in reduced row echelon form.
We derive locations of leading one columns of $(MR)^T_{[n_r]}$.
For $j \in [n]$, let $s_j$ be maximum index of a non-zero entry of first $j$ rows of $R_{[n_r]}$.
Sequence $s_j$ is monotonically non-decreasing, because $R_{[n_r]}$ is in row reduced echelon form.
Additionally, $s_{j-1} + 1 = s_{j}$  when $j$ is a leading one column of $R^T_{[n_r]}$ 
and $s_{j-1} = s_{j}$ otherwise.
Similarly, for $j \in [n]$, let $l_j$ be maximum index of a non-zero entry of first $j$ rows of $M_{[n_r]}$.
The maximum index of a non-zero entry of row $j$ of $(MR)_{[n_r]}$ is at most $s_{l_j}$ 
because the row $(MR)_{j,[n_r]}$ is a linear combination of first $l_j$ rows of $R_{n[r]}$.

Leading one columns of $(MR)^T_{[n_r]}$ are exactly those $j'$ for which $s_{l_{j'-1}} + 1 = s_{l_{j'}}$.
This is because equality $s_{l_{j'-1}} + 1 = s_{l_{j'}}$ is true if and only if $l_{j'-1} + 1 = l_{j'}$ and $s_{l_{j'}-1} + 1 = s_{l_{j'}}$.
In other words, $l_{j'}$ and $s_{l_{j'}}$ are leading one columns of $M^T_{[n_r]}$ and $R^T_{[n_r]}$
and so $(MR)_{j'} = 0^{s_{l_{j'}-1}}1 0^{n - s_{l_{j'}}}$.
Rows $j''$ of $(MR)_{[n_r]}$ for which $s_{l_{j''-1}} = s_{l_{j''}}$ are linear combinations 
of previous leading one rows of $(MR)_{[n_r]}$.

Observation $(MR)_{j'} = 0^{s_{l_{j'}-1}}1 0^{n - s_{l_{j'}}}$ implies that $(MR)_{j',[n_r+1,n_m]}$ is $0^{n_m}$ for $j'$ being leading one columns of $(MR)^T_{[n_r]}$.
Finally, matrix $(MR)^T_{[n_r+1,n_m+n_r]}$ is in reduced row echelon form similarly to how $(MR)^T_{[n_r]}$ is in reduced row echelon form because 
$(MR)^T_{[n_r+1,n_m+n_r]} = (M R_{[n_r+1,n_m+n_r]})^T$.
\end{proof}

\subsection*{General form circuits}

A corollary of the general circuit comparison~\cref{alg:general-form-circuit-comparison} is the following 
\begin{lemma}[General form equivalence]
Let $\mathcal{C}_1$, $\mathcal{C}_2$ be two equivalent general form circuits with $\ki$ inputs, $\ko$ outputs and $n_r^{(1)},n^{(2)}_r$ random bits,
then the general form circuits have the same number of inner qubits $k$.
Additionally, there exist:
\begin{itemize}[noitemsep]
    \item full row rank matrices $F_1$, $F_2$ with $n_r^{(1)},n^{(2)}_r$ columns
    \item an invertible $(\ki -k)\times(\ki -k)$ matrix $A_m$
    \item vectors $\Delta r, \Delta m$ of sizes $n_r^{(1)},\ki - k$
\end{itemize}
such that the general form circuits $\mathcal{C}_1,\mathcal{C}_2$ implement the same linear map upon outcomes $o_1,o_2$ if and only if 
$$
 (F_1 \oplus A_m) (o_1 + \Delta r \oplus \Delta m) = (F_2 \oplus I_{\ki - k}) o_2. 
$$
\end{lemma}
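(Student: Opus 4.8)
The plan is to extract this lemma directly from the already-established correctness of the general form circuit comparison \cref{alg:general-form-circuit-comparison}, specializing to the case where both input circuits $\mathcal{C}_1, \mathcal{C}_2$ are themselves general form circuits that are \emph{equivalent}. First I would invoke the correctness analysis of \cref{alg:general-form-circuit-comparison}: since $\mathcal{C}_1$ and $\mathcal{C}_2$ are equivalent, every check performed by the algorithm (conditions (A)--(E) in the correctness discussion) must pass, so the algorithm returns \texttt{True} together with matrices $M_1, M_2$ and vectors $u_1, u_2$ satisfying the conclusions of \cref{prob:stabilizer-circuit-comparison}. The key structural facts to mine from that analysis are: (i) condition (A) forces the codes $\comm{\ki,k_1,\Ci_1}$ and $\comm{\ki,k_2,\Ci_2}$ to be equivalent, and likewise for the output codes, which in particular (via \cref{proc:compare-encoding-unencoding-circuits}, whose inputs require matching dimensions) forces $k_1 = k_2 =: k$; (ii) the matrices $M_1, M_2$ produced on \cref{line:comparison-ok} have the explicit block-diagonal form $M_1 = F_1 \oplus (A^L)^{-1}$ and $M_2 = F_2 \oplus I_{\ki - k}$, where $F_1, F_2$ are the outcome compression matrices (hence full row rank) returned by \cref{alg:compress-randomness} with $n_r^{(1)}, n_r^{(2)}$ columns respectively.

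Next I would identify the remaining objects claimed in the lemma with quantities appearing in the algorithm. The invertible $(\ki - k) \times (\ki - k)$ matrix $A_m$ is $(A^L)^{-1}$, which is invertible because $A^L$ is the invertible condition matrix relating the two equivalent unencoding circuits, as produced by \cref{proc:compare-encoding-unencoding-circuits} (see \cref{fig:unencoding-circuits-compare}). The vectors $\Delta r, \Delta m$ come from unpacking the vector $u_1 = F_1^{(-1)} r_0 \oplus s_0^L$ on \cref{line:comparison-ok} into its two blocks — the $n_r^{(1)}$-dimensional part is $\Delta r = F_1^{(-1)} r_0$ and the $(\ki - k)$-dimensional part is $\Delta m = s_0^L$ — while $u_2 = 0^{n_r^{(2)} + \ki - k}$ contributes nothing. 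The equivalence condition from \cref{prob:stabilizer-circuit-comparison}(i), namely $M_1(o_1 + u_1) = M_2(o_2 + u_2)$, then reads $(F_1 \oplus (A^L)^{-1})(o_1 + \Delta r \oplus \Delta m) = (F_2 \oplus I_{\ki-k}) o_2$, which is exactly the displayed identity after relabeling $A^L \mapsto A_m^{-1}$; and \cref{prob:stabilizer-circuit-comparison}(i) guarantees this holds \emph{if and only if} the two general form circuits enact the same linear map on outcomes $o_1, o_2$.

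The one thing requiring a small argument beyond citation is the claim $k_1 = k_2$ for the number of inner qubits. I would argue this from condition (A) of the comparison algorithm's correctness proof: the stabilizer group of $\comm{\ki,k_j,\Ci_j}$ equals (up to signs) the maximal subgroup of any Choi state $\ket{\Psi_{o_j}}$ of $\mathcal{C}_j$ supported on the last $\ki$ qubits (see \cref{fig:general-form-choi}), and equivalence of $\mathcal{C}_1, \mathcal{C}_2$ forces all these Choi states to agree up to Pauli corrections, so the two subgroups coincide up to signs; a stabilizer group on $\ki$ qubits with this support structure has $\ki - k_j$ generators, so $\ki - k_1 = \ki - k_2$ hence $k_1 = k_2$. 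An identical argument with the output codes is consistent. The main (very mild) obstacle is purely bookkeeping: carefully matching the block decompositions of $u_1$ and the direct-sum structure of $M_1, M_2$ in the algorithm's notation to the cleaner notation $(F_1 \oplus A_m)(o_1 + \Delta r \oplus \Delta m) = (F_2 \oplus I_{\ki-k})o_2$ in the lemma statement, and confirming that the full-row-rank property of $F_1, F_2$ is exactly what \cref{alg:compress-randomness} guarantees (its output specification states the first $n'_r$ rows of the stacked condition matrix are in reduced row echelon form with full rank, and $F_{o \to o'}$ inherits full row rank). No genuinely new computation is needed; the lemma is a repackaging of the comparison algorithm's output in the equivalent-circuits case.
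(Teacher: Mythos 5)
Your proposal is correct and takes exactly the route the paper intends: the paper states this lemma with no written proof beyond the remark that it is ``a corollary of the general circuit comparison \cref{alg:general-form-circuit-comparison}'', and your argument simply unpacks that corollary by reading off $M_1 = F_1 \oplus (A^L)^{-1}$, $M_2 = F_2 \oplus I_{\ki-k}$, $u_1 = F_1^{(-1)}r_0 \oplus s_0^L$, $u_2 = 0$ from \cref{line:comparison-ok} and identifying $A_m$, $\Delta r$, $\Delta m$ accordingly. Your added justification that $k_1 = k_2$ via the maximal subgroup of the Choi-state stabilizer supported on the last $\ki$ qubits is consistent with the paper's condition (A) analysis and is a reasonable filling-in of a detail the paper leaves implicit.
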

 
\newpage

\section{Additional algorithmic material}

\label{app:procedures}

Here we provide a description of some procedures that we make use of. 

\subsection*{Bit-level procedures}

In \cref{tab:bit string-procedures}, we list procedures involving bit strings directly.

\begin{table}[htp]
\centering
    \begin{tabular}{|c|c|c|c|}
    \hline 
    \multirow{2}{*}{Procedure}  & \multirow{2}{*}{Mathematical action} & \multirow{2}{*}{Run time} & \multirow{2}{*}{Bit-string run time} \\
    & & & \\
    \hline 
    \hline 
    \texttt{xor(}$a,b$\texttt{)} &  $a+b$ & $O(n)$ & $O(1)$ \\
    \hline 
    \texttt{and(}$a,b$\texttt{)} &  $a \wedge b$ & $O(n)$  & $O(1)$\\
    \hline 
    \texttt{weight(}$a$\texttt{)} & $\mathrm{wt}(a)$  & $O(n)$  & $O(1)$  \\
    \hline 
    \texttt{set\_bit(}$a,c,j$\texttt{)}  &  $a_j \leftarrow c $ & $O(1)$  & $O(1)$ \\
    \hline 
    \texttt{check\_bit(}$a,j$\texttt{)} &  Is $a_j = 1$ ? & $O(1)$  & $O(1)$   \\
    \hline 
    \texttt{nonzero\_bit(}$a$\texttt{)}  &  $j$ where $a_j = 1 $ & $O(n)$  & $O(1)$  \\
    \hline 
    \texttt{is\_zero(}$a$\texttt{)}&  Is $a = 0\ldots0$ ?  & $O(n)$  & $O(1)$  \\
    \hline
    \end{tabular}
    \caption[Bit-wise operations]{
    Mathematical action and run time complexities of basic bit string procedures.
    Note that the \bitruntime{} is $O(1)$ for all.
    The procedure inputs are as follows: 
    $a,b$ are bit strings of size $n$.
    $c$ is an individual bit 
    and $j$ is a bit index.
    }
    \label{tab:bit string-procedures}
    \label{tab:bit-string-procedures}
\end{table}

\subsection*{Linear algebra}

The following procedure is used in \cref{alg:general-from-choi} in \cref{line:general-form-outcome-relation}.

\begin{procedure}[\texttt{block-reshape}]
\label{proc:block-reshape}
\begin{algorithmic}[1]
\Blank
\Input $m\times n$ matrix $A$ over $\f_2$ with rank $m$
\Output $n \times n$ invertible matrix $R$, $m\times m$ invertible matrix $F$ such that  $A R= (\mathbf{0}_{m \times (n-m)} | F )$ and $R$ 
is in $(n-m,m)$-split reduced echelon form~(\cref{def:split-echelon-form}).
\State\label{line:block-reshape-rref} Find $m\times m$ invertible matrix $F'$ such that $(F')^{-1} A'$ is in reduced row echelon form
\State\label{line:block-reshape-leading-ones} Set $R$ to $I_{n\times n}$, let $L = l_1,\ldots,l_m$ be the increasing sequence of leading one columns of $F'A'$
\For{ $j \in [n]$ } 
\If{column $j$ of $((F')^{-1}A')$ does not contain leading $1$} 
\For{$k \in [m]$} $R_{n+1-l_k,n+1-j} \leftarrow ((F')^{-1}A')_{k,j}$ \EndFor
\EndIf
\EndFor
\State\label{line:block-reshape-permute} Permute columns of $R$ so that columns $n+1-l_m,\ldots,n+1-l_1$ become last $m$ columns, that is for increasing sequence $j_1,\ldots,j_{n-m}$
of integers in $[n] \setminus L$ apply column permutation $\sigma$ defined as 
$$
 \sigma(k) = n+1-j_{n-m+1-k}, k \in [n-m],~\sigma(k+m) = n+1-l_{m+1-k}, k \in [m]
$$
\State $F \leftarrow $ matrix $F'$ with reversed columns 
\State \Return $R$, $F$
\end{algorithmic}
\end{procedure}

The key idea behind \cref{proc:block-reshape} is to pick $m$ linear-independent columns of $A$ with an additional property
that every other column that is linear dependent, is linear dependent only on the columns to the right from it.
To illustrate the procedure, consider an example matrix $A$ of rank $m=3$ with columns $1,4,7$ being the linear independent ones:
$$
\left(
\begin{array}{cccccccc}
 \mathbf{k_1} & e f_1+g h_1 & c f_1+d h_1 & \mathbf{h_1} & b f_1 & a f_1 & \mathbf{f_1} & 0 \\
 \mathbf{k_2} & e f_2+g h_2 & c f_2+d h_2 & \mathbf{h_2} & b f_2 & a f_2 & \mathbf{f_2} & 0 \\
 \mathbf{k_3} & e f_3+g h_3 & c f_3+d h_3 & \mathbf{h_3} & b f_3 & a f_3 & \mathbf{f_3} & 0 \\
\end{array}
\right)
$$ 
Finding such set of column is achieved by computing row-reduced echelon form in \cref{line:block-reshape-rref} of matrix $A'$ equal to matrix $A$ with reversed columns: 
$$
A' = F'B,~F' = \left(
\begin{array}{ccc}
 f_1 & h_1 & k_1 \\
 f_2 & h_2 & k_2 \\
 f_3 & h_3 & k_3 \\
\end{array}
\right),~B = 
\left(
\begin{array}{cccccccc}
 0 & \mathbf{1} & a & b & \mathbf{0} & c & e & \mathbf{0} \\
 ~ & \mathbf{~} & ~ & ~ & \mathbf{1} & d & g & \mathbf{0} \\
 ~ & \mathbf{~} & ~ & ~ & \mathbf{~} & ~ & ~ & \mathbf{1} \\
\end{array}
\right)
$$ 
Columns with leading ones correspond to the $m=3$ linear independent columns of $A$,
and the values in the other columns of $B$ give the coefficients to express linear dependent columns using the linear independent ones.
Sequence $L$ in~\cref{line:block-reshape-leading-ones} is $2,5,8$ in our example.
Before the column permutation in~\cref{line:block-reshape-permute}, matrix $R$ constructed by the procedure is lower-triangular 
and product $RA$ consist of reversed columns of $F$ separated by some zero columns
$$
R = \left(
\begin{array}{cccccccc}
 \mathbf{1} & ~ & ~ & \mathbf{~} & ~ & ~ & \mathbf{~} & ~ \\
 \mathbf{~} & 1 & ~ & \mathbf{~} & ~ & ~ & \mathbf{~} & ~ \\
 \mathbf{~} & 0 & 1 & \mathbf{~} & ~ & ~ & \mathbf{~} & ~ \\
 \mathbf{~} & g & d & \mathbf{1} & ~ & ~ & \mathbf{~} & ~ \\
 \mathbf{~} & 0 & 0 & \mathbf{~} & 1 & ~ & \mathbf{~} & ~ \\
 \mathbf{~} & 0 & 0 & \mathbf{~} & 0 & 1 & \mathbf{~} & ~ \\
 \mathbf{~} & e & c & \mathbf{~} & b & a & \mathbf{1} & ~ \\
 \mathbf{~} & 0 & 0 & \mathbf{~} & 0 & 0 & \mathbf{~} & 1 \\
\end{array}
\right),~ A R = 
\left(
\begin{array}{cccccccc}
 k_1 & 0 & 0 & h_1 & 0 & 0 & f_1 & 0 \\
 k_2 & 0 & 0 & h_2 & 0 & 0 & f_2 & 0 \\
 k_3 & 0 & 0 & h_3 & 0 & 0 & f_3 & 0 \\
\end{array}
\right)
$$
Column $j$ of $R$ can be interpreted as coefficients of a linear combinations of columns of $A$ needed to obtain column $j$ of $AR$.
Matrix $R$ is lower-triangular because each linear dependent column of $A$ depends on the linear independent columns to the right of it.
After the column permutation in~\cref{line:block-reshape-permute} matrix $R$ becomes: 
$$
R = \left(
\begin{array}{ccccc|ccc}
 ~ & ~ & ~ & ~ & ~ & \mathbf{1} & \mathbf{~} & \mathbf{~} \\
 1 & ~ & ~ & ~ & ~ & \mathbf{~} & \mathbf{~} & \mathbf{~} \\
 ~ & 1 & ~ & ~ & ~ & \mathbf{~} & \mathbf{~} & \mathbf{~} \\
 g & d & ~ & ~ & ~ & \mathbf{~} & \mathbf{1} &  \mathbf{~} \\
 ~ & ~ & 1 & ~ & ~ & \mathbf{~} & \mathbf{~} & \mathbf{~} \\
 ~ & ~ & ~ & 1 & ~ & \mathbf{~} & \mathbf{~} & \mathbf{~} \\
 e & c & b & a & ~ & \mathbf{~} & \mathbf{~} & \mathbf{1} \\
 ~ & ~ & ~ & ~ & 1 & \mathbf{~} & \mathbf{~} & \mathbf{~} \\
\end{array}
\right),
AR = \left(
\begin{array}{cccccccc}
 0 & 0 & 0 & 0 & 0 & k_1 & h_1 & f_1 \\
 0 & 0 & 0 & 0 & 0 & k_2 & h_2 & f_2 \\
 0 & 0 & 0 & 0 & 0 & k_3 & h_3 & f_3 \\
\end{array}
\right)
$$
and has the required shape.

Runtime of the \cref{proc:block-reshape} is
\begin{equation}
\label{eq:block-reshape-runtime}
O(nm^{\omega -1})
\end{equation}
dominated by reduced echelon form computation and is $O(nm^{\omega -1})$ using Algorithm 8 in~\cite{FastGaussianElimination}.

\subsubsection*{Complexity of basic linear algebra operations}

\paragraph{Matrix multiplication}
This is used to analyze many cases when matrix multiplications are required, for example, to derive \cref{eq:bulk-deallocation-complexity}
used in complexity analysis of \cref{alg:outcome-complete-stab-sim}.
Complexity of matrix multiplication of square $n \times n$ matrices over $\f_2$ is a well-studied problem.
In practical implementations, such as \cite{ABH}, the complexity of matrix multiplication is $O(n^\omega)$ for $\omega = \log_2 7$.
Note that multiplication of matrices over $\f_2$ has been optimized taking into account the architecture of modern CPUs and is very efficient in practice, 
see benchmark results in \cite{ABH}.
We reduce the multiplication of rectangular matrices to multiplication of square matrices by subdividing rectangular matrices into square blocks.
For example, complexity of multiplying $m\times n$ by $n \times n$ matrix is 
\begin{equation}
\label{eq:rectangular-matrix-multiplication-copmplexity}
O(n\max(m,n)\min(m,n)^{\omega-2}) = O(n(m+n)\min(m,n)^{\omega-2})
\end{equation}
Indeed, if $m > n$, we multiply $\lceil m/n \rceil$ $n\times n$ blocks in $O(n\cdot m \cdot n^{\omega-2})$.
If $n > m$, we multiply $1 \times \lceil n/m \rceil$ matrix of $n \times n$ blocks by $\lceil n/m \rceil \times \lceil n/m \rceil$ matrix of $n \times n$ blocks in $O(\lceil n/m \rceil^2 m^\omega) = O( n \cdot n \cdot m^{\omega-2})$. 
We use naive matrix multiplication for matrices of $n \times n$ blocks.

\paragraph{Reduced row echelon form}
This is a core subroutine to solve many linear algebra problems and its complexity used for 
complexity analysis of \cref{proc:block-reshape} in \cref{line:block-reshape-rref}, that is used in \cref{alg:general-from-choi}.
Additionally it is indirectly used in \cref{proc:support-subgroup}, \cref{proc:partition}.
Given a matrix $m\times n$ matrix $A$ of rank $r$, we can represent $A = BR$ for invertible square matrix $B$ and matrix $R$ in reduced-row echelon form.
Using Algorithm 8 in \cite{FastGaussianElimination} we can find $B$ and $R$ in $O(mnr^{\omega -2})$.
Matrix $B$ constructed by Algorithm 8 in \cite{FastGaussianElimination} has certain block structure up to column permutations, and so finding its inverse requires only $O(r^\omega)$ operations.
For this reason, we can find $B$, $B^{-1}$ and $R$ in 

\begin{equation}
\label{eq:rref-complexity}
O(mnr^{\omega -2})
\end{equation}

\paragraph{Multiplication by a matrix in reduced and split reduced echelon forms}

Multiplication of an $m \times n$ matrix $A$ by a $n \times (l_1 + l_2)$ matrix $B$ in $(l_1,l_2)$-split reduced echelon form~(\cref{def:split-echelon-form}) is used in 
\cref{alg:general-from-choi} in \cref{line:general-form-matrix-assignment}.
A matrix in split reduced echelon form is sparse, with first $l_1$ columns of weight at most $n - l_1 + 1$ and the last $l_2$ columns of weight one.
Multiplying $m \times n$ matrix by a column of weight $w$ has \runtime{} $O(mw)$, so runtime of $AB$ is 
\begin{equation}
\label{eq:split-reduced-echelon-mult-complexity}
O(m \cdot ( l_1 (n - l_1) + l_2))
\end{equation}
When $A^T$ is in reduced row-echelon form matrix $A$ is equal to $(\frac{I}{A'})$ up to row permutations, for some dense matrix $A'$.
In this case complexity of computing the product is even lower and equals to 
\begin{equation}
\label{eq:reduced-split-reduced-echelon-mult-complexity}
O((m-n) \cdot ( l_1 (n - l_1) + l_2))
\end{equation}
Above equation is also useful for complexity analysis of \cref{line:general-form-matrix-assignment} in \cref{alg:general-from-choi}.

\paragraph{Finding a basis of the kernel of a matrix}

This is used to justify runtime of \cref{proc:support-subgroup}.
Kernel of the matrix $A$ can be computed by first decomposing matrix into product $BR$ of a square invertible matrix $B$
and matrix $R$ in reduced row-echelon form.
Kernel basis of $R$ can be easily constructed as illustrated by an example below.
We show how basis $b_{(2)},b_{(4)},b_{(5)},b_{(7)}$ of the kernel of a $3\times7$ matrix $R$ in reduced row-echelon form. 
\begin{equation}
\label{fig:kernel-basis-from-reduced-row-echelon-form}
\begin{array}{rc|lcccccr|}
     &  & \boldsymbol{1} & {\ca a_{1,2}} & \boldsymbol{0} & {\ca a_{1,4}} & {\ca a_{1,5}} & \boldsymbol{0} & {\ca a_{1,7}}\\
    R & = & \boldsymbol{0} & {\cb 0} & \boldsymbol{1} & {\cb a_{2,4}} & {\cb a_{2,5}} & \boldsymbol{0} & {\cb a_{2,7}}\\
     &  & \boldsymbol{0} & {\cc 0} & \boldsymbol{0} & {\cc 0} & {\cc 0} & \boldsymbol{1} & {\cc a_{3,7}}\\
    \hline
    b_{(2)} & = & \boldsymbol{\ca a_{1,2}} & 1 & \boldsymbol{\cb 0} & 0 & 0 & \boldsymbol{\cc 0} & 0\\
    b_{(4)} & = & \boldsymbol{\ca a_{1,4}} & 0 & \boldsymbol{\cb a_{2,4}} & 1 & 0 & \boldsymbol{\cc 0} & 0\\
    b_{(5)} & = & \boldsymbol{\ca a_{1,5}} & 0 & \boldsymbol{\cb a_{2,5}} & 0 & 1 & \boldsymbol{\cc 0} & 0\\
    b_{(7)} & = & \boldsymbol{\ca a_{1,7}} & 0 & \boldsymbol{\cb a_{2,7}} & 0 & 0 & \boldsymbol{\cc a_{3,7}} & 1 \\
\end{array}
\end{equation}
Kernel bases of $A$ and $R$ are the same, therefore for $m\times n$ matrix $A$ of rank $r$, the complexity of finding kernel basis is 
\begin{equation} \label{eq:kernel-basis-complexity}
O(mnr^{\omega -2} + nr).
\end{equation}
This is because kernel basis matrix found in this way is has $n-r$ columns of weight one and only has an $(n-r) \times r$
dense sub-matrix.

\paragraph{Completing a matrix to a full rank matrix}

This is used to justify runtime of \cref{proc:partition} in \cref{line:partition-full-rank-completion}.
An $m \times n$ matrix $A$ can be completed to a full rank matrix by decomposing matrix into product $BR$ of a square invertible matrix $B$
and matrix $R$ in reduced row-echelon form.
Next matrix $R$ can be easily completed to a full-rank matrix:

\begin{equation}
\label{fig:complete-basis}
\begin{array}{rc|lcccccr|}
     &  & \boldsymbol{1} & { a_{1,2}} & \boldsymbol{0} & { a_{1,4}} & { a_{1,5}} & \boldsymbol{0} & { a_{1,7}}\\
    R & = & \boldsymbol{0} & { 0} & \boldsymbol{1} & { a_{2,4}} & { a_{2,5}} & \boldsymbol{0} & { a_{2,7}}\\
     &  & \boldsymbol{0} & { 0} & \boldsymbol{0} & { 0} & { 0} & \boldsymbol{1} & { a_{3,7}}\\
    \hline
    ~& ~ & 0 & 1 & 0 & 0 & 0 & 0 & 0 \\
\hat R & = & 0 & 0 & 0 & 1 & 0 & 0 & 0\\
    ~& ~ & 0 & 0 & 0 & 0 & 1 & 0 & 0\\
    ~& ~ & 0 & 0 & 0 & 0 & 0 & 0 & 1 \\
\end{array}
\end{equation}

The full-rank completion of $A$ is $\tilde A = (\frac{A}{\hat R})$.
The bottleneck for the completion procedure is computing row-echelon form in $O(mn\min(m,n)^{\omega -2})$
using estimate $\min(m,n)$ for rank of $A$ and \cref{eq:rref-complexity}.
Because completion matrix is $1$-sparse we conclude that full-rank completion complexity is 

\begin{equation}
\label{eq:full-rank-completion-complexity}
O(mn\min(m,n)^{\omega -2})
\end{equation}

\subsection*{Procedures for Pauli and Clifford unitary manipulation}

Here we justify \runtime{} and \bitruntime{} for procedures in \cref{tab:data-structure-requirements}. 
We omit justification of \runtime{} of the following procedures: 
\begin{itemize}[noitemsep]
    \item \texttt{init\_pauli}, \texttt{x\_bits}, \texttt{z\_bits}, \texttt{xz\_phase},
    \item \texttt{set\_x\_bits}, \texttt{set\_z\_bits}, \texttt{mult\_phase},
    \item \texttt{copy\_cliff}, \texttt{num\_qubits}, \texttt{image},
    \item \texttt{right\_mult\_swap}, \texttt{tensor\_prod}, \texttt{add\_qubits}, \texttt{remove\_qubits}.
\end{itemize}

The \runtime{} and \bitruntime{} of \texttt{disentange} and \texttt{preimage} has been justified in \cref{sec:pauli-and-clifford-procedures}.

\subsubsection*{Hermitian conjugate, commutator and product of Pauli unitaries }

The \runtime{} and \bitruntime{} of 
\texttt{is\_hermitian}, \texttt{commutator} and \texttt{prod\_pauli} in \cref{tab:data-structure-requirements} follows from three key observations below 
and \cref{tab:bit-string-procedures} of \runtime{} of bit string procedures.
First, the Hermitian conjugate of a Pauli unitary can be written as:
$$
P^\dagger = (i^{s(P)} Z^{x(P)} X^{x(P)})^\dagger = i^{s(P)}(-1)^{1+\mathrm{wt}(x(P) \wedge z(P))} Z^{x(P)} X^{x(P)}
$$
So checking if $P$ is hermitian requires checking if $\mathrm{wt}(x(P) \wedge z(P)) = 0 \text{ mod }2$.
Second, the commutator of two Pauli unitaries can be written as 
$$
\begin{array}{rcl}
  \comm{P,Q} & = & \comm{ Z^{x(P)} X^{x(P)}, Z^{x(Q)} X^{x(Q)} } \\
  & = & \comm{ Z^{x(P)} , Z^{x(Q)} X^{x(Q)} } +  \comm{ X^{x(P)}, Z^{x(Q)} X^{x(Q)} } \\
  & = &  \mathrm{wt}(x(P) \wedge z(Q)) + \mathrm{wt}(z(P) \wedge x(Q)) \text{ mod } 2 
 \end{array}
$$
Third, the product of two Pauli unitaries can be written as: 
$$
PQ = i^{s(P)} Z^{x(P)} X^{x(P)} i^{s(Q)} Z^{x(Q)} X^{x(Q)} = i^{s(P)+s(Q)}(-1)^{\mathrm{wt}(x(P) \wedge z(Q))} Z^{z(P)+z(Q)} X^{x(P)+x(Q)} 
$$

\subsubsection*{Initialization, product and inverse for Clifford unitaries}

In this subsection we discuss  \runtime{} and \bitruntime{} of initialization procedures, product and inverse
as they are closely related.

\paragraph{\texttt{init\_cliff}}
During the initialization we are given images and their phases. 
Those just get stored in $M(C)$ in $O(n^2)$.
Additionally we need to compute phases of pre-images, which can be achieved by computing the inverse of $C$ without using the pre-image phases information. 
The inverse can also be computed by  using matrix-matrix and matrix-vector multiplications~\cite{DehaeneMoor2003}. 
For this reason, the \runtime{} is $O(n^\omega)$ and \bitruntime{} is $O(n^2)$.

\paragraph{\texttt{init\_cliff\_css}} 
Initialization of a CCS Clifford is essentially copying entries of inputs $A$ and $B$ into appropriate locations in $M(C)$.
The exact relation between $A$, $B$ and $C$ is described by \cref{prop:css-cliiford-action}.
For this reason  \runtime{} and \bitruntime{}  are $O(n^2)$.

\paragraph{\texttt{inverse\_cliff}}
The inverse of Clifford unitary has  \runtime{} and \bitruntime{}  of $O(n^2)$ because it consists of two simple steps: 
rearranging of binary symplectic part of $M(C)$ as in \cref{eq:symplectic-matrix-inverse} and exchanging image phases for preimage phases.
Both of the steps can be done in $O(n^2)$ in both complexity models.

\paragraph{\texttt{prod\_cliff}}
The \runtime{} for the product of Clifford unitaries is $O(n^\omega)$. 
This is because binary symplectic part of $M(C_1 C_2)$ and phases of images can be found using matrix-matrix and matrix-vector multiplications~\cite{DehaeneMoor2003}. 
The fact that computing inverses is $O(n^2)$, implies that phases of preimages can also be computed in $O(n^\omega)$ be relating them 
to phases of images of $C^\dagger_2 C^\dagger_1$.

The \bitruntime{} for the product of Clifford unitaries is  $O(n^\omega)$. 
This is because \bitruntime{} of matrix-matrix multiplication is $O(n^2)$ and \bitruntime{} of matrix-vector multiplication is $O(n)$
when matrix columns or rows are stored as bit strings.

\subsubsection*{Left multiplication of Clifford unitaries \texttt{left\_mult\_$\ast$} }

\label{app:pauli-and-clifford-manipulation}
\begin{figure*}[ht]
\begin{procedure}[\texttt{left\_mult\_exp$^*$(}$C,P$\texttt{)}] \label{alg:compose-exp} 
\begin{algorithmic}[1]
\Blank
\Input Clifford unitary $C$, Pauli operator $P$.
\Output $C \leftarrow e^{i\pi/4 P} C$
\State Let $K$ be indices of qubits on which $P$ is supported, $k = |K|$
\State $\tilde P = C^\dagger P C$, $\tilde P_x = C^\dagger X^{x(P)} C$.
\Blank \Comment Update image phases in Clifford unitary representation $M(C)$
\For{$j$ in $[n]$}
\Blank \Comment Below we interpret integers mod $4$ as bit vectors of length $2$
\If{\label{line:exp-x-im-phase}$x(\tilde P)_j = 1$}  $M(C)_{j,[2n+1,2n+2]} \leftarrow s(C Z_k C^\dagger) + 1 + s(P) + 2 x(\tilde P_x)_j \text{ mod }4$ \EndIf
\If{\label{line:exp-z-im-phase}$z(\tilde P)_j = 1$} $M(C)_{n+j,[2n+1,2n+2]} \leftarrow s(C X_k C^\dagger) + 1 + s(P) + 2 z(\tilde P_x)_j \text{ mod }4$ \EndIf 
\EndFor
\Blank \Comment Update pre-images in Clifford unitary representation $M(C)$
\For{ $j$ in $[k]$ }
\Blank \Comment Below we interpret Pauli unitaries as bit vectors of length $2n+2$
\If{\label{line:exp-z-preimage}$\comm{Z_{K(j)},P} = 1$} $M(C)_{\ast,n + K(j)} \leftarrow i (C^\dagger Z_{K(j)} C) \tilde P$  \EndIf
\If{\label{line:exp-x-preimage}$\comm{X_{K(j)},P} = 1$} $M(C)_{\ast,K(j)} \leftarrow i (C^\dagger Z_{K(j)} C) \tilde P$  \EndIf
\EndFor
\end{algorithmic}
\end{procedure}
\end{figure*}

In this subsection we first focus on implementing \texttt{left\_mult\_exp$^*$(}$C,P$\texttt{)}~(\cref{alg:compose-exp}).
Left and right multiplication by any other Clifford unitary can be built using this procedure.
This is because any $n$-qubit Clifford unitary can be written as product of $O(n)$ exponents $e^{i\pi/4 P}$ \cite{CliffordExpDecomposition}.
For example, $CZ = \exp(i \pi \ket{11}\bra{11}) \simeq e^{-i \pi Z_1 /4 }e^{i \pi Z_1 /4 }e^{\pi Z_1 Z_2 /4 }$
and similarly generalized Controlled-Pauli $\Lambda(P,Q) \simeq e^{-i \pi P /4 }e^{i \pi Q /4 }e^{\pi P Q /4 }$.
Procedures
\begin{itemize}[noitemsep]
  \item \texttt{left\_mult\_swap}
  \item \texttt{left\_mult\_pauli},
  \item \texttt{left\_mult\_ctrl\_pauli},
\end{itemize}
can be implemented by 
using a constant number of calls to \texttt{left\_mult\_exp$^*$}.
Using identity $C e^{i\pi P/4 } C = e^{i\pi C P C^\dagger /4} C$ right multiplication 
my $e^{i\pi/4 P}$ can be implemented using the left multiplication by Pauli exponent.
In practice, more efficient implementations the reduction to \texttt{left\_mult\_exp$^*$(}$C,P$\texttt{)} are possible and preferable.
In the remainder of this section we discuss correctness of \cref{alg:compose-exp}, analyse its \runtime{}
and \bitruntime{}.
We conclude with the correctness proof of the procedure \texttt{disentangle}~(\cref{alg:disentanlgle}).

Next we briefly discuss correctness of the above \cref{alg:compose-exp}. 
First consider how images $C X_k c^\dagger$ and $C Z_k C^\dagger$ change after left-multiplication by $e^{i\pi P/4 }$. 
Using \cref{eq:pauli-exp} and $\comm{CPC^\dagger,Q} = \comm{P,C^\dagger Q,C}$ we have: 
\begin{equation}
\begin{array}{rcl}
e^{i \pi P/4} C Z_k C^\dagger e^{-i \pi P/4} & = & (iP)^{\comm{C^\dagger P C,Z_k}} C Z_k C^\dagger \\
e^{i \pi P/4} C X_k C^\dagger e^{-i \pi P/4} & = & (iP)^{\comm{C^\dagger P C,X_k}} C X_k C^\dagger \\
\end{array}
\end{equation}
In \cref{line:exp-x-im-phase}, condition $x(\tilde P)_j = 1$ is equivalent to $\comm{C^\dagger P C,Z_k} = 1$.
Similarly, in \cref{line:exp-z-im-phase}, condition $z(\tilde P)_j = 1$ is equivalent to $\comm{C^\dagger P C,X_k} = 1$.
We see that we update exactly those image phases that change after composing with $e^{i \pi P/4}$.
The equation for updating the image phases follows from \cref{prop:preimage-sign}
and observation that $x(\tilde P_x)_j = \comm{\tilde P_x,Z_j}$,  $z(\tilde P_x)_j = \comm{\tilde P_x,X_j}$.

Second we consider how preimages $C^\dagger X_k C$ and $C^\dagger Z_k C$ change after left-multiplication by $e^{i\pi P/4 }$. 
Using \cref{eq:pauli-exp} we have: 
\begin{equation}
\begin{array}{rcl}
C^\dagger e^{-i \pi P/4} Z_k e^{i \pi P/4} C & = & C^\dagger Z_k C (i C^\dagger P C)^{\comm{P,Z_k}}  \\
C^\dagger e^{-i \pi P/4} X_k e^{i \pi P/4} C & = & C^\dagger X_k C (i C^\dagger P C)^{\comm{P,X_k}}  \\
\end{array}
\end{equation}
In \cref{line:exp-z-preimage}, condition $x(P)_{K(j)} = 1$ is equivalent to $\comm{P,Z_{K(j)}} = 1$.
Similarly, in \cref{line:exp-x-preimage}, condition $z(P)_{K(j)} = 1$ is equivalent to $\comm{P,X_{K(j)}} = 1$.
We see that we update exactly those preimages that change after composing $C$ with $e^{i \pi P/4}$.

Next we analyze \runtime{} and \bitruntime{} of \cref{alg:compose-exp}
in terms of number of qubits $n$ on which $|C|$ is defined and weight $|P|$ of $P$.
We show that the \runtime{}  is $O(n\cdot|P|)$. 
The \runtime{} of computing preimage of $P$ and $X^x(P)$ is $O(n\cdot|P|)$.
The \runtime{} of updating phases of images is $O(|C|)$.
The \runtime{} of updating preimages is $O(n\cdot|P|)$.

Next we show that \bitruntime{} of \cref{alg:compose-exp} is $O(|P|)$.
Recall that \bitruntime{} of a product of two Pauli unitaries is $O(1)$. For this reason,
the \bitruntime{} of computing preimage of $P$ and $X^x(P)$ is $O(|P|)$.
Similarly, the \bitruntime{} of updating preimages is $O(|P|)$.
Next lest us show that \bitruntime{} of updating phases of images is $O(1)$.
Instead of updating image phases in a for loop we can use bit-wise operations as follows: 
\begin{algorithmic}[1]
\State $s_{\pm 1}, s_{(i)} = s(P) + 1 \text{ mod 4 }$  
\State $\Delta_{\pm 1} \leftarrow (x(\tilde P_x) \oplus z(\tilde P_x)) + s_{(i)}\cdot M(C)_{[2n],2n+2} + (s_{\pm 1})^{2n}$  \Comment{ $+$ is bit-wise XOR}
\State $M(C)_{[2n],2n+2} \leftarrow M(C)_{[2n],2n+2} + s_{(i)} \cdot (x(\tilde P) \oplus z(\tilde P))$ 
\State $M(C)_{[2n],2n+1} \leftarrow M(C)_{[2n],2n+1} +\Delta_{\pm 1} \wedge (x(\tilde P) \oplus z(\tilde P))$ \Comment{ $\wedge$ is bit-wise AND}
\end{algorithmic}
The correctness of above lines follows from separately considering cases $s_{(i)}= 0$  and $s_{(i)}= 1$
and the following expression for adding two two-bit integers modulo four: 
$$
 (a_1\cdot 2^1 + a_0) + (b_1 \cdot 2^1 + b_0) \text{ mod } 4 = ( ( a_1 + b_1 + a_0 b_0 ) \text{ mod }2)\cdot 2^1 + (a_0 + b_0\text{ mod }2 ).
$$

\subsubsection*{Correctness of \texttt{disentangle} }


\begin{proposition}[\texttt{disentangle} correctness]
\label{prop:disentangle}
Procedure \texttt{disentangle}~(\cref{alg:disentanlgle}) is correct.
\end{proposition}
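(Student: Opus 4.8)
The plan is to verify that the three successive modifications of $C$ performed in \cref{alg:disentanlgle} preserve the state $C|0^n\rangle$ while progressively forcing the $j$th qubit to decouple, and that at the end the resulting Clifford acts as $Z_j \mapsto Z_j$ and $X_j \mapsto X_j$, which (together with $Z_j C|0^n\rangle = C|0^n\rangle$) implies $C = C' \otimes_j I_1$ with $(C'\otimes_j I_1)|0^n\rangle = C|0^n\rangle$. The key tool throughout is the conjugation identity~\cref{eq:controlled-pauli-image} for generalized controlled-Paulis $\Lambda(P_1,P_2)$, together with~\cref{eq:controlled-pauli-stabilized}, which says $\Lambda(P_1,P_2)$ fixes any state stabilized by $P_1$ or $P_2$, and \cref{eq:exponentiated-pauli-image} for the exponential used in the ``else'' branch.

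First I would handle \cref{line:ds:z-image-fix}. Writing $Q = C^\dagger Z_j C$, by hypothesis $Z_j$ stabilizes $C|0^n\rangle$, so $Q$ stabilizes $|0^n\rangle$; hence $Q$ has trivial $x$-part, $Q = \pm Z^{z(Q)}$, and $j'$ (the first nonzero bit of $z(Q)$, computed on \cref{line:ds:non-zero-entry}) is well-defined. The claim is that after \cref{line:ds:z-image-fix} the new Clifford still maps $|0^n\rangle$ to the same state and now satisfies $C Z_j C^\dagger = Z_j$. For the state: right-multiplying by $\mathrm{SWAP}_{j,j'}$ does not change the image of $|0^n\rangle$ since $\mathrm{SWAP}_{j,j'}|0^n\rangle = |0^n\rangle$; and the left-multiplying controlled-Pauli $\Lambda(C X_{j'} C^\dagger, C Z_{j'} C^\dagger Z_j)$ fixes $C|0^n\rangle$ because $C Z_{j'} C^\dagger$ stabilizes $C|0^n\rangle$ (as $Z_{j'}$ stabilizes $|0^n\rangle$) and hence so does $C Z_{j'} C^\dagger Z_j$ (using $Z_j$ stabilizes $C|0^n\rangle$), so~\cref{eq:controlled-pauli-stabilized} applies. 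For the $Z_j$-image one conjugates $Z_j$ by the composite via~\cref{eq:controlled-pauli-image}: the $\mathrm{SWAP}$ moves $Z_j$ to $Z_{j'}$, and conjugating $C Z_{j'} C^\dagger$ by $\Lambda(C X_{j'}C^\dagger, C Z_{j'}C^\dagger Z_j)$ produces $Z_j$ by a direct commutator computation; I would spell out that $\comm{C X_{j'}C^\dagger, C Z_{j'}C^\dagger} = 1$ and that the leftover controlled-Pauli factor acts trivially. So after this step $CZ_jC^\dagger = Z_j$, equivalently $z(C^\dagger Z_j C) = e_j$, and $C|0^n\rangle$ is unchanged.

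Next I would handle the second block, \cref{line:ds:condition}--\cref{line:ds:x-image-fix-2}. At this point $C Z_j C^\dagger = Z_j$, and we additionally still know $Z_j$ stabilizes $C|0^n\rangle$. Consider $R = C^\dagger X_j C$; since $C$ is Clifford and $C Z_j C^\dagger = Z_j$, $R$ anticommutes with $C^\dagger Z_j C$, and in the natural symplectic coordinates $R$ has $x$-part equal to $e_j$ plus possibly other bits. The algorithm branches on bit $j$ of $z(R)$. In the ``if'' branch ($z(R)_j = 0$) the update is left-multiplication by $\Lambda(C Z_j C^\dagger, X_j\, C X_j C^\dagger) = \Lambda(Z_j, X_j\, C X_j C^\dagger)$. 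This fixes $C|0^n\rangle$ because $Z_j$ stabilizes it (\cref{eq:controlled-pauli-stabilized}); and conjugating $C X_j C^\dagger$ by it, via~\cref{eq:controlled-pauli-image}, cancels all the off-$j$ content so that the new $X_j$-image becomes $X_j$ (one checks the phase using that $z(R)_j = 0$ forces $Y_j$ does not appear, i.e.\ $C X_j C^\dagger$ restricted to qubit $j$ is exactly $X_j$). In the ``else'' branch ($z(R)_j = 1$), $C X_j C^\dagger$ restricted to qubit $j$ is $Y_j$; the extra factor $e^{i\frac{\pi}{4} C Z_j C^\dagger} = e^{i\frac\pi4 Z_j}$ rotates $Y_j \mapsto X_j$ on that qubit (via~\cref{eq:exponentiated-pauli-image}, noting $\comm{Z_j, Y_j} = 1$) while fixing $C|0^n\rangle$ since $Z_j$ stabilizes it, and it commutes with $C Z_j C^\dagger = Z_j$ so the $Z_j$-image is unharmed; then the controlled-Pauli $\Lambda(Z_j, X_j\, C\, iZ_jX_j\, C^\dagger)$ cleans up the remaining off-$j$ part of the $X_j$-image exactly as in the ``if'' branch. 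In both branches we conclude $C X_j C^\dagger = X_j$ and $C Z_j C^\dagger = Z_j$ with $C|0^n\rangle$ unchanged.

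Finally I would assemble: a Clifford $C$ with $C Z_j C^\dagger = Z_j$ and $C X_j C^\dagger = X_j$ acts as identity on qubit $j$ up to a Pauli phase, but since it fixes both $X_j$ and $Z_j$ exactly (no signs), $C$ factors as $C' \otimes_j I_1$ for a Clifford $C'$ on the other $n-1$ qubits; and since every modification preserved the image of $|0^n\rangle$, $(C'\otimes_j I_1)|0^n\rangle = C_{\mathrm{orig}}|0^n\rangle$, which is the postcondition. The main obstacle I anticipate is bookkeeping the phases/signs correctly in the conjugation identities --- in particular verifying in the ``else'' branch that the exponential followed by the controlled-Pauli yields $X_j$ with $+1$ sign rather than $-X_j$, and confirming that the controlled-Pauli arguments genuinely commute so that $\Lambda(\cdot,\cdot)$ is well-defined --- rather than the overall structure, which is a routine induction on the three steps.

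\begin{proof}[Proof of \cref{prop:disentangle}]
See the proof plan above; a detailed verification proceeds by checking that each of the three updates in \cref{alg:disentanlgle} preserves $C|0^n\rangle$ and establishes the stated relations on $C Z_j C^\dagger$ and $C X_j C^\dagger$, using \cref{eq:controlled-pauli-image,eq:controlled-pauli-stabilized,eq:exponentiated-pauli-image}.
\end{proof}
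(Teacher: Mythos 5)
Your proposal is correct and follows essentially the same route as the paper's proof: a step-by-step verification that each of the three updates preserves $C|0^n\rangle$ (via \cref{eq:controlled-pauli-stabilized}) and fixes the images of $Z_j$ and then $X_j$ (via \cref{eq:controlled-pauli-image,eq:exponentiated-pauli-image}), with the same case split on bit $j$ of $z(C^\dagger X_j C)$. The only addition you make is to spell out the final factoring step ($CZ_jC^\dagger=Z_j$ and $CX_jC^\dagger=X_j$ imply $C=C'\otimes_j I_1$), which the paper leaves implicit; the sign bookkeeping you flag as the main obstacle is exactly what the paper's direct calculations carry out.
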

\begin{proof}
To avoid confusion arising from the Clifford unitary $C$ being modified throughout the algorithm, we define the static Clifford unitary $C_0$ as the initial Clifford unitary that is fed into the algorithm, and other static Clifford unitaries throughout.
Note that $Z_jC_0\ket{0^n} = C_0\ket{0^n}$ implies $x(C_0^\dagger Z_j C_0)=0$.
At the end of step 2, $C$ is replaced by $C_1$, where $C_1 = \Lambda(C_0 X_{j'}C^\dagger,  C_0 Z_{j'} C_0^\dagger ~ Z_j) ~ C_0 ~ \mathrm{SWAP}_{j,j'}$.
To see that $C_1|0^n\rangle=C_0|0^n\rangle$, note that $C_0 Z_{j'} C_0^\dagger Z_j$ stabilizes $C_0 \mathrm{SWAP}_{j,j'}|0^n\rangle$ (using $x(C_0^\dagger Z_j C_0)=0$ and $\mathrm{SWAP}_{j,j'}|0^n\rangle = |0^n\rangle$), and then apply \cref{eq:controlled-pauli-stabilized}.
Moreover, we see using \cref{eq:controlled-pauli-image} that $C_1 Z_j C_1^\dagger$ is now $C_0~ \mathrm{SWAP}_{j,j'} Z_j \mathrm{SWAP}_{j,j'} C_0^\dagger \cdot C_0 Z_{j'} C_0^\dagger ~ Z_j = Z_j$ since $C_0~ \mathrm{SWAP}_{j,j'} Z_j \mathrm{SWAP}_{j,j'} ~C_0^\dagger$ commutes with $C_0 Z_{j'} C_0^\dagger ~Z_j$ (because $x(C_0^\dagger Z_j C_0)=0$) but anticommutes with $C_0X_{j'}C_0^\dagger$.
Note that $C_1 Z_j C_1^\dagger = Z_j$ implies the $j$th bit of $x(C_1^\dagger X_j C_1)$ is one (since $C_1^\dagger X_j C_1$ anticommutes with $C_1^\dagger Z_j C_1$).

In the second part of the algorithm, there are two cases.
The first case is when the $j$th bit of $z(C_1^\dagger X_j C_1)$ is zero, triggering the if condition and $C$ is then set to $C_2 = \Lambda(C_1 Z_j C_1^\dagger,C_1 X_j C_1^\dagger ~ X_j) ~ C_1$.
To see that $C_2|0^n\rangle = C_1|0^n\rangle$ note that $C_1 Z_{j} C_1^\dagger$ stabilizes $C_1|0^n\rangle$, and make use of \cref{eq:controlled-pauli-stabilized}.
To see that the image of $Z_j$ is unchanged, i.e., that $C_2 Z_j C_2^\dagger = Z_j$, apply \cref{eq:controlled-pauli-image} making use of the fact that the $j$th bit of $x(C_1^\dagger X_j C_1)$ is one.
Moreover, by \cref{eq:controlled-pauli-image}, image of $X_j$ becomes $C_2 X_j C_2^\dagger = C_1 X_j C_1^\dagger ~ C_1 X_j C_1^\dagger X_j = X_j$ since $C_1 X_j C_1^\dagger$ commutes with $C_1 X_j C_1^\dagger X_j$ (by the if condition) but anticommutes with $C_1 Z_{j}C_1^\dagger$.

The second case is when the $j$th bit of $z(C_1^\dagger X_j C_1)$ is one, triggering the else condition and $C$ is set to $C_3 = \Lambda(C_1 Z_j C_1^\dagger, X_j ~ C_1 ~iZ_j X_j ~ C_1^\dagger)~ e^{i \frac{\pi}{4} C_1 Z_j C_1^\dagger} ~ C_1$.
To see that $C_3|0^n\rangle = C_1|0^n\rangle$, first note that since $C_1 Z_j C_1^\dagger$ has $C_1\ket{0^n}$ as an eigenstate, so too does $e^{i \frac{\pi}{4} C_1 Z_j C_1^\dagger}$, such that $e^{i \frac{\pi}{4} C_1 Z_j C_1^\dagger} C_1\ket{0^n} =e^{i \frac{\pi}{4}} C_1\ket{0^n}$.
This implies that the state $e^{i \frac{\pi}{4} C_1 Z_j C_1^\dagger}C_1\ket{0^n}$ is stabilized by $C_1 Z_j C_1^\dagger$, which (using \cref{eq:controlled-pauli-stabilized}) tells us that $C_3 \ket{0^n} = e^{i\frac{\pi}{4}}C_1\ket{0^n}$, and thus $C_3|0^n\rangle = C_1|0^n\rangle$ (up to an unimportant phase).
To see that the image of $Z_j$ is unchanged, i.e., that $C_3 Z_j C_3^\dagger = Z_j$, one can implement a direct calculation making use of \cref{eq:exponentiated-pauli-image}, \cref{eq:controlled-pauli-image}, and that $C_1 Z_j C_1^\dagger = Z_j$.
To see that the image of $X_j$ becomes $C_3 X_j C_3^\dagger = X_j$, one can implement a direct calculation making use of \cref{eq:exponentiated-pauli-image}, \cref{eq:controlled-pauli-image}, and that $C_1 Z_j C_1^\dagger = Z_j$ and that the $j$th bit of $x(C_1^\dagger X_j C_1)$ and the $j$th bit of $z(C_1^\dagger X_j C_1)$ are both one.
\end{proof}


\section{Bulk deallocation of qubits in stabilizer simulation algorithms}
\label{app:bulk-deallocation-of-qubits}

We discuss how to reduce the cost of deallocation in outcome-complete stabilizer simulation,
a similar idea applies to outcome-specific stabilizer simulation.
\cref{tab:outcome-complete-simulation} shows that qubit deallocation is an expensive operation.
Instead of deallocating qubits every time the deallocation is requested, we can delay deallocation to the end of the simulation algorithm.
Next we describe an efficient approach to a bulk deallocation of qubits.

To deallocate many qubits in a stabilizer circuit that requires $n_\mathrm{max}$ qubits, has $\ko$ output qubits and $n_r$ random outcomes we use bipartite normal form 
for a family of stabilizer states described in \cref{sec:partition}. 
We apply the bipartite normal form~\cref{proc:partition} to the family of states $D\ket{Ar}$ describing the result of outcome complete simulation 
for bipartition of $n_\mathrm{max}$ qubits into $\ko$ and $n_\mathrm{max}-\ko$ qubits.
Because the output qubits should be disentangled by the end of the simulation, the number of Bell pairs across the partition is zero.
The bipartite normal form procedure finds Clifford unitaries $D_1$, $D_2$ on $\ko$ and $n_\mathrm{max}-\ko$ qubits
such that $D U_F\ket{a} \simeq (D_1 \otimes D_2) \ket{a}$ for all computational basis states $\ket{a}$, 
and so the family $D\ket{Ar}$ can be expressed as $(D_1 \otimes D_2)\ket{F^{-1} Ar}$.
Last $n_\mathrm{max}-\ko$ rows of $F^{-1}A$ must be zero and $D_2$ must map Pauli $Z$ to products of Pauli $Z$ operators,
because the qubits being deallocated must all be in zero state.
The result of outcome complete simulation is the family of states $D_1| (F^{-1} A)_{[\ko]} r\rangle $.

We conclude with runtime analysis of the bulk deallocation.
Complexity of computing basis change matrix $F$, its inverse and bipartite normal form is $n^\omega_\mathrm{max}$
because the number of Bell pairs across partition is zero.
Computing product $F^{-1} A$ is the complexity of multiplying $n_\mathrm{max} \times n_\mathrm{max}$ matrix by $n_\mathrm{max} \times n_r$ matrix 
over $\f_2$ and is $O(n_\mathrm{max} (n_\mathrm{max}+n_r) \min(n_\mathrm{max},n_r)^{\omega -2})$ according to \cref{eq:rectangular-matrix-multiplication-copmplexity}.
The overall complexity of the algorithm is
\begin{equation} \label{eq:bulk-deallocation-complexity}
    O(n_\mathrm{max}^{\omega-1}(n_r+n_\mathrm{max})).
\end{equation}

\newpage

\section{Pauli propagation} 

\begin{algorithm}[\texttt{Pauli propagation}]
\label{alg:pauli-propagation} 
\begin{algorithmic}[1]
\Blank
\Input 
\begin{itemize}[noitemsep,topsep=0pt]
\item A stabilizer circuit $\mathcal{C}$ with $\nM$ outcomes and $\no$ output qubits
\end{itemize}
\Output 
\begin{itemize}[noitemsep,topsep=0pt]
    \item A stabilizer circuit $\mathcal{C}'$ with $\nM$ outcomes  and $\no$ output qubits that has no conditional Pauli unitaries
    \item An $\nM \times \nM$ matrix $M$ and $\nM$-dimensional vector $v_0$
    \item An $\no \times \nM$ condition matrices $A_x$, $A_z$, $\no$ dimensional vectors $v_x,v_z$
\end{itemize}
such that $\mathcal{C}$ upon outcome vector $v$ followed by $X^{A_x v + v_x}Z^{A_z v + v_z}$ enacts the same ( up to a global phase ) linear map 
as $\mathcal{C}'$  upon outcome vector $v' = Mv + v_0$. 
Circuit $\mathcal{C}'$ is the circuit $\mathcal{C}$ with conditional Pauli unitaries removed. 
Matrix $M$ is a unit lower-triangular matrix and is the identity matrix when restricted to rows and columns corresponding 
to the random bit allocation operations.
\Blank
\State $\mathcal{C}_\text{rem} \leftarrow \mathcal{C}$,  
\State $\mathcal{C}' \leftarrow$ empty circuit,  $n'_O \leftarrow 0$ 
\State Let $A_x,A_z$ be $n \times n_O$ matrix, $v_x,v_z$ are $n$-dimensional zero vectors
\State Let $M$ be $n'_O \times n'_O$ matrix, and $v_0$ be $n'_O$ dimensional zero vector

\While{$\mathcal{C}_\text{rem}$ is not empty}
\State Let $g$ be the first operation in $\mathcal{C}_\text{rem}$, remove $g$ from $\mathcal{C}_\text{rem}$
\Blank  \hrulefill \Comment{allocation}
\If{$g$ allocates qubit $j$,\label{line:pauli-propagate-allocate}}
\State Append zero row to $A_x,A_z$ and zero entry to $v_x,v_z$ 
\State Append $g$ to $\mathcal{C'}$
\ElsIf{$g$ deallocates qubit $j$ \label{line:pauli-propagate-deallocate}} \Comment Assumes that the qubit is in zero
\State Remove row $j$ from $A_x,A_z$ and entry $j$ from $v_x,v_z$ 
\State Append $g$ to $\mathcal{C'}$
\ElsIf{$g$ allocates a random bit\label{line:pauli-propagate-random}}
\State Append zero column to $A_x,A_z,M$, add zero row to $M$, $n'_O \leftarrow n'_O + 1$, $M_{n'_O,n'_O} \leftarrow 1$
\State Append $g$ to $\mathcal{C'}$
\Blank \hrulefill \Comment{unitaries}
\ElsIf{$g$ is a unitary $U$\label{line:pauli-propagate-unitary}} 
\State Find $A'_x,A'_z,v'_x, v'_z$ from $ X^{A'_x v + v'_x } Z^{A'_z v + v'_z } \simeq U \left( X^{A_x v + v_x } Z^{A_z v + v_z } \right) U^\dagger $ for all $v$
\State Replace $A_x,A_z,v_x,v_z$ by $A'_x,A'_z,v'_x,v'_z$
\State Append $g$ to $\mathcal{C'}$
\ElsIf{$g$ applies a Pauli unitary $P$ if $\langle c\rangle = c_0$, \\$\quad\quad\quad\quad\quad$ where $\langle c\rangle$ is the parity of outcomes indicated by $c\in \mathbb{F}_2^{n_M}$,\label{line:pauli-propagate-conditional}}
\State $A_x \leftarrow A_x + x(P) c^T$,  $A_z \leftarrow A_z + z(P) c^T $ 
\State $v_x \leftarrow v_x + (c_0+1)\cdot x(P)$, $v_z \leftarrow v_z + (c_0+1) \cdot  z(P)$
\Blank \hrulefill\Comment{measurements}
\ElsIf{\label{line:pauli-propagate-measure}$g$ measures Pauli $P$ }
  \State Append row $A_x^T z(P) + A_z^T x(P)$ to $M$
  \State Append zero column to $M$, $n'_O \leftarrow n'_O + 1$,  $M_{n'_O,n'_O} \leftarrow 1$
  \State Append $\ip{x(P),v_z} + \ip{z(P),v_x}$ to $v_0$, append $g$ to $\mathcal{C'}$
\EndIf
\EndWhile
\State \Return $\mathcal{C'}$, $A_x,A_z,v_x,v_z$, $M,v_0$
\end{algorithmic}
\end{algorithm}

\section{Common symplectic basis of two stabilizer groups}
\label{sec:sim-diag-of-groups}

We show that any two stabilizer groups $S$ and $M$ have a common symplectic basis~(\cref{def:common-symplectic-basis})
by providing an efficient \cref{alg:common-symplectic-basis}.
Correctness of the algorithm is established by three key properties.
First, the number of independent generators of $S / (S \cap M^\perp)$
and $M / (M \cap S^\perp)$ computed in \cref{line:common-basis-m-cosets,line:common-basis-s-cosets} are the same. 
Second, matrix $A$ constructed in \cref{line:common-basis-diagonalize} is indeed invertible.
Third, $M$ and $S$ are isomorphic to the direct sum of $S \cap M$
and some quotient groups as follows: 
$$
\begin{array}{ccc}
  M & \simeq & \left( M / (M \cap S^\perp) \right) \oplus \left( (M \cap S^\perp) / (S \cap M) \right) \oplus S \cap M,  \\
  S & \simeq & \left( S / (S \cap M^\perp) \right) \oplus \left( (S \cap M^\perp) / (S \cap M) \right) \oplus S \cap M,
\end{array}
$$
so that $S = \ip{ \Zdelta \cup \Zs \cup \Zcap}$ and $M = \ip{ \Xdelta \cup \Zm \cup \Zcap}$.
\begin{figure*}
\begin{algorithm}[\texttt{Common symplectic basis}]
\label{alg:common-symplectic-basis}
\begin{algorithmic}[1]
\Blank
\Input $n$-qubit stabilizer groups $S$, $M$
\Output Common symplectic basis $\mathcal{B} = \ip{ \Zdelta,\Xdelta,\Zcap,\Xcap,\Zs,\Xs,\Zm,\Xm,\Z,\X }$ of $S$,$M$, that is
$S = \ip{ \Zdelta \cup \Zcap \cup \Zs }$, $M = \ip{ \Xdelta \cup \Zcap \cup \Zm }$ 
\State Let $\bar g_1,\ldots,\bar g_{k_\Delta}$ be independent generators of $S / (S \cap M^\perp)$ \label{line:common-basis-s-cosets}
\State Let $\bar h_1,\ldots,\bar h_{k_\Delta}$ be independent generators of $M / (M \cap S^\perp)$ \label{line:common-basis-m-cosets}
\State Let $g_j,h_j$ be representatives of cosets $\bar g_j, \bar h_j$ for $j \in [k_\Delta]$ 
\State Let $A_{i,j} = \comm{h_i, g_j}$, let $\Zdelta_j = g_j$, $\Xdelta_j = \prod_{i \in [k_\Delta]} h_i^{A^{-1}_{j,i}}$ \label{line:common-basis-diagonalize}
\State Let $\Zcap$ be independent generators of $S \cap M$ 
\State Let $\bar e_1, \ldots, \bar e_{k_S}$ be independent generators of $(S \cap M^\perp) / (S \cap M)$
\State Let $\Zs_j$ be a representative of coset $\bar e_j$ for $j \in k_S$   
\State Let $\bar f_1, \ldots, \bar f_{k_M}$ be independent generators of $(M \cap S^\perp) / (S \cap M)$
\State Let $\Zm_j$ be a representative of coset $\bar f_j$ for $j \in k_M$ 
\State Find $\Xcap,\Xs,\Xm,\Z,\X$ by completing the partially specified symplectic basis 
$\Zdelta,\Xdelta,\Zcap,\Zs,\Zm$ to a full symplectic basis. \label{line:common-basis-completion}
\State \Return $\ip{ \Zdelta,\Xdelta,\Zcap,\Xcap,\Zs,\Xs,\Zm,\Xm,\Z,\X }$
\end{algorithmic}
\end{algorithm}
\end{figure*}

The first and the second properties follow from the following proposition:
\begin{proposition}[Group's induced symplectic space] \label{prop:coset-structure}
Let $M$ and $S$ be two stabilizer groups,  
then $M / (M \cap S^\perp)$ and $S / (S \cap M^\perp)$ are 
commutative groups of the same size. 
For $\bar g$ from $M / (M \cap S^\perp)$ and $\bar h$ from $S / (S \cap M^\perp)$
the map $\comm{\bar g,\bar h}$ is well-defined as 
$$
  \comm{\bar g,\bar  h} = \comm{g,h} \text{ for any representatives } g,h \text{ of cosets } \bar g, \bar h. 
$$
The map $\comm{\bar g,\bar h}$ is non-degenerate, that is: 
\begin{align*}
    \text{if } \bar h \in S / (S \cap M^\perp) \text{ is such that } \forall \bar g \in M / (M \cap S^\perp) : \comm{\bar g,\bar h} = 0 & \text{, then } \bar h = M \cap S^\perp \\
    \text{if } \bar g \in M / (M \cap S^\perp) \text{ is such that } \forall \bar h \in S / (S \cap M^\perp) : \comm{\bar g,\bar h} = 0 & \text{, then } \bar g = S \cap M^\perp.
\end{align*}
\end{proposition}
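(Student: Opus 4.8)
\textbf{Proof plan for Proposition~\ref{prop:coset-structure}.}
The plan is to establish the proposition in four stages: (1) show the quotients are well-defined commutative groups, (2) show the bilinear pairing $\comm{\bar g,\bar h}$ is independent of the choice of coset representatives, (3) show non-degeneracy, and (4) deduce equality of sizes. The key observation underpinning everything is the standard fact that for any subgroup $G$ of the Pauli group, $G^{\perp\perp} = G \cdot \langle \text{phases}\rangle$, but since $S$ and $M$ are stabilizer groups (no phases, no $-I$), one has $(G^\perp)^\perp$ equal to $G$ extended by scalars; when intersected back with $S$ or $M$ the scalars drop out. I would set up notation treating Pauli operators modulo phase as vectors in $\f_2^{2n}$ equipped with the symplectic form $\comm{\cdot,\cdot}$, so that $S, M$ become isotropic subspaces and $S^\perp, M^\perp$ their symplectic complements; then $M \cap S^\perp$ is exactly the radical of the form $\comm{\cdot,\cdot}$ restricted to $M$ paired against $S$, and $S \cap M^\perp$ is the radical on the other side.

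First I would verify well-definedness of the pairing: if $g, g'$ are both in the coset $\bar g$, then $g g'^{-1} \in M \cap S^\perp$, so $\comm{g g'^{-1}, h} = 0$ for every $h \in S$, hence by bilinearity of $\comm{\cdot,\cdot}$ (which follows from $\comm{PQ,R} = \comm{P,R} + \comm{Q,R}$, already noted in the excerpt) we get $\comm{g,h} = \comm{g',h}$; symmetrically in the second argument. This also immediately shows the quotient groups $M/(M\cap S^\perp)$ and $S/(S\cap M^\perp)$ are commutative (they are quotients of commutative groups $M$, $S$), so $(M \cap S^\perp)$ and $(S\cap M^\perp)$ are automatically normal — nothing to check there beyond noting $M$, $S$ are abelian.

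For non-degeneracy: suppose $\bar h \in S/(S\cap M^\perp)$ satisfies $\comm{\bar g, \bar h} = 0$ for all $\bar g \in M/(M\cap S^\perp)$. Picking a representative $h \in S$, this says $\comm{g,h} = 0$ for all $g \in M$, i.e.\ $h \in M^\perp$. Combined with $h \in S$, we get $h \in S \cap M^\perp$, so $\bar h$ is the trivial coset. The symmetric statement is identical with the roles of $S, M$ swapped. This is essentially a one-line argument once the pairing is in place.

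Finally, equality of sizes follows from non-degeneracy of a bilinear pairing between two finite $\f_2$-vector spaces: a non-degenerate pairing $V \times W \to \f_2$ induces injections $V \hookrightarrow W^\ast$ and $W \hookrightarrow V^\ast$, forcing $\dim V = \dim W$, hence $|M/(M\cap S^\perp)| = |S/(S\cap M^\perp)|$. I expect the main (though still modest) obstacle to be purely bookkeeping: being careful that $\comm{\cdot,\cdot}$ descends correctly and that ``no $-I$'' is what kills the scalar ambiguity when asserting $h \in M^\perp$ and $h \in S$ together imply $h \in S \cap M^\perp$ (rather than $h \in (S \cap M^\perp)\cdot\langle iI\rangle$). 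Since $S$ and $M$ are stabilizer groups consisting of Pauli observables with no nontrivial phase, this subtlety is harmless, but it is worth stating explicitly.
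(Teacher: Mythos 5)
Your proposal is correct. The well-definedness and non-degeneracy arguments coincide with the paper's: both expand $\comm{gr,hs}$ (equivalently $\comm{g(g')^{-1},h}$) by bilinearity and kill the cross terms using $r\in M\cap S^\perp$, $s\in S\cap M^\perp$, and both prove non-degeneracy by the one-line observation that a representative $h\in S$ pairing trivially with all of $M$ lies in $M^\perp$, hence in $S\cap M^\perp$. Where you genuinely diverge is the equality of sizes: the paper proves $|S/(S\cap M^\perp)|\ge|M/(M\cap S^\perp)|$ (and the reverse) by an explicit inductive Gram--Schmidt-style construction of dual generating sets $\bar g_j,\bar h_i$ with $\comm{\bar h_i,\bar g_j}=\delta_{i,j}$, whereas you prove non-degeneracy first and then invoke the abstract fact that a non-degenerate $\f_2$-bilinear pairing induces injections $V\hookrightarrow W^\ast$ and $W\hookrightarrow V^\ast$. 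Your route is shorter and logically cleaner (the paper's construction in fact silently uses non-degeneracy each time it asserts an anticommuting partner exists), but the paper's explicit construction is not wasted effort: it is exactly the procedure \cref{alg:common-symplectic-basis} needs to build the matrix $A_{i,j}=\comm{h_i,g_j}$ and the dual pairs $\Zdelta_j,\Xdelta_j$, so the constructive proof doubles as the correctness argument for the algorithm. Two minor remarks: the $G^{\perp\perp}$ fact you flag as "the key observation underpinning everything" is never actually needed (your non-degeneracy step uses only the literal set intersection $S\cap M^\perp$, so the scalar ambiguity you worry about does not arise), and for the duality argument you should state explicitly that the quotients are elementary abelian $2$-groups, hence $\f_2$-vector spaces, which holds because every element of a stabilizer group squares to $I$.
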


Before proceeding with the proof, let us clarify how above proposition implies that matrix $A$ constructed in \cref{line:common-basis-diagonalize} is indeed invertible.
If square $k_\Delta \times k_\Delta$ matrix $A$ were not invertible, then there would exist 
$\alpha_1 ,\ldots, \alpha_{k_\Delta}$ from $\f_2$ that are not all zero but yield a trivial row sum
$$
    \sum_{ i \in [k_\Delta]}\alpha_i A_i = \mathbf{0}.
$$
In other words, following definition of $A$ in \cref{line:common-basis-diagonalize}, for all $j \in k_\Delta$ we have 
$$
\sum_{ i \in [k_\Delta]}\alpha_i \comm{h_i, \Zdelta_j} = \comm{ \sum_{ i \in [k_\Delta]}\alpha_i h_i, \Zdelta_j} = 0.
$$
Then for $h = \sum_{ i \in [k_\Delta]}\alpha_i h_i \notin M \cap S^\perp$ and for all $j \in k_\Delta$ we have $\comm{h,\Zdelta_j}$ = 0
which is a contradiction to $\comm{\bar g,\bar  h}$ being non-degenerate.

\begin{proof}[Proof of \cref{prop:coset-structure}]
Let us first show that for $\bar g$ from $M / (M \cap S^\perp)$ and for $\bar h$ from $S / (S \cap M^\perp)$
the map $\comm{\bar g,\bar h}$ is well-defined. 
Let $g$ and $h$ be representatives of cosets $\bar g,\bar h$. Then any other representative of the same coset 
can be written as $gr$ and $hs$ for $r$ from $M \cap S^\perp$ and $s$ from $S \cap M^\perp$. 
We have: 
$$
\comm{gr,hs} = \comm{g,hs} + \comm{r,hs} = \comm{g,h} + \comm{g,s} + \comm{r,h} + \comm{r,s}
$$
Now $\comm{g,s} = 0$ because $g$ is from $M$ and $s$ from $S \cap M^\perp \subset M^\perp$. 
Analogously, $\comm{h,r} = 0$ because $h\in S$, $r\in M \cap S^\perp$, and 
$\comm{r,s} = 0$ because $r\in M \cap S^\perp \subset M$, $s\in S \cap M^\perp$. We see that $\comm{gr,hs} = \comm{g,h}$.

Next we show that $M / (M \cap S^\perp)$ and $S / (S \cap M^\perp)$ have the same size.
First note that  $M / (M \cap S^\perp)$ has zero generators if and only if $S / (S \cap M^\perp)$ 
has zero generators. Indeed, $M / (M \cap S^\perp)$ has zero generators if and only if $ M \subset S^\perp$; 
$ M \subset S^\perp$ if and only if $  S \subset M^\perp$;  $  S \subset M^\perp$ if and only if 
$S / (S \cap M^\perp)$ has zero generators.
 
Now we show that if $M / (M \cap S^\perp)$ has $k \ge 1$ generators, then  $S / (S \cap M^\perp)$  has at least $k$ generators. 
Suppose $\bar g_1,\bar g_2,\ldots,\bar g_k$ are some generators of $M / (M \cap S^\perp)$. 
Let $\bar h_1$ be an element of  $S / (S \cap M^\perp)$ that anti-commutes with $\bar g_1$.
Such an $\bar h_1$ must exist because $\bar g_1$ represents a non-trivial coset of  $M / (M \cap S^\perp)$,
so it is not equal to $(M \cap S^\perp)$.
Adjust generators $\bar g_2,\ldots,\bar g_k$ so they commute with $\bar h_1$, multiplying anti-commuting generators by $\bar g_1$.
Let $\bar h_2$ be an element of  $S / (S \cap M^\perp)$ that anti-commutes with $\bar g_2$. 
It must exist, because $\bar g_2$ represents a non-trivial coset of  $M / (M \cap S^\perp)$.
Coset $\bar h_2$ is not equal to $\bar h_1$, because $\bar h_1$ commutes with $\bar g_2$ and the commutator map $\comm{,}$
is well-defined on the cosets.
We can also adjust $\bar h_2$ to commute with $\bar g_1$, by multiplying it with by $\bar h_1$, if necessary.

Suppose that we have constructed independent $\bar h_1,\ldots,\bar h_l$ such that $\comm{\bar h_i,\bar g_j} = \delta_{i,j}$
for $i = 1,\ldots,l$ and $j = 1,\ldots,k$.
Let $\bar h_{l+1}$ be an element of  $S / (S \cap M^\perp)$  that anti-commutes with $\bar g_{l+1}$.
It must exist, because $\bar g_{l+1}$ represents a non-trivial coset of  $M / (M \cap S^\perp)$.
It must be independent of  $\bar h_1,\ldots,h_l$, because they all commute with $\bar g_{l+1}$ and 
so does the product of any of their subset. 
Finally, multiply $\bar h_{l+1}$ by some of $\bar h_1,\ldots,h_l$
so $\comm{h_i,g_j} = \delta_{i,j}$ for  $i = 1,\ldots,l+1$ and $j = 1,\ldots,l+1$.
Adjust $\bar g_{l+2},\ldots,\bar g_{k}$ so they all commute with $h_{k+1}$. 
By induction we see that $|S / (S \cap M^\perp)| \ge |M / (M \cap S^\perp)|$.
Using the same argument we see that the number of generators of $|M / (M \cap S^\perp)| \ge |S / (S \cap M^\perp)|$.
 
Finally we can show that $\comm{\bar h,\bar g}$ is non-degenerate. 
Let us fix $\bar h$ and assume that for any $\bar g$, $\comm{\bar h,\bar g} = 0$.
For this reason all $h$ from $\bar h$ commute with all $g \in M $ and so $\bar h =( M \cap S^\perp)$.
A similar argument applies to $\bar g$ for which $\comm{\bar h,\bar g} = 0$ for all possible $\bar h$.
\end{proof}

\section{Equations for common symplectic basis example}

\label{app:sm-equations}

Consider groups $S_d, M_d$  defined on $2d$ qubits~(\cref{fig:example-sm}). For convenience we index qubits with two indices $(i,j)$, $i \in [2]$
and $j \in d$, with $i$ corresponding to the row index and $j$ corresponding to a column index $j$
of a qubit in~\cref{fig:common-symplectic-basis-example-all}.
\begin{align}
\label{eq:example-sm}
    S_d & = \ip{ Z_{(i,j)}Z_{(i,j+1)} : i \in [2], j \in [d-1] } \\
    M_d & = \ip{ Z_{(1,j)}Z_{(1,j+1)} Z_{(2,j)}Z_{(2,j+1)}, X_{(1,j')}X_{(2,j')} : i \in [2], j \in [d-1], j' \in [d] } 
\end{align}
A common symplectic basis $\mathcal{B}_d$~(\cref{fig:common-symplectic-basis-example}) for $S_d,M_d$ is given by the following equations: 
\begin{equation}
    \label{eq:common-symplectic-basis-example}
    \arraycolsep=1pt
\begin{array}{lclclcl}
    \mathcal{X}^{\Delta}_{j} & = &  X_{(1,j)}X_{(2,j)}, j \in [d-1],& ~~~~ & \mathcal{Z}^{\Delta}_{j} & = & Z_{(1,j)}Z_{(1,d)}, j \in [d-1]  \\
    \mathcal{X}^{\cap}_{j} & = & \prod_{i \in [j] } X_{(2,i)}, j \in [d-1],& ~~~~ & \mathcal{Z}^{\cap}_{j} & = & Z_{(1,j)}Z_{(1,j+1)} Z_{(2,j)}Z_{(2,j+1)}, j \in [d-1]  \\
    \mathcal{X}^{S} & =  & \varnothing & ~~~~ & \mathcal{Z}^{S} & = &  \varnothing \\
    \mathcal{X}^{M} & = & \{ Z_{1,d} \} & ~~~~ & \mathcal{Z}^{M} & = &  \{ \prod_{j \in d} X_{(1,j)}X_{(2,j)} \} \\
    \mathcal{X} & = & \{ \prod_{j \in d} X_{(2,j)} \} & ~~~~ & \mathcal{Z} & = &  \{  Z_{1,d}Z_{2,d} \} \\
\end{array}
\end{equation}
We consider Clifford unitaries $C_\inn^{(d)}$ such that $\comm{2d,2,C_\inn^{(d)}}$ has stabilizer group $S_d$ are given by the following equations,
via corresponding symplectic basis $\mathcal{X}^{\inn},\mathcal{Z}^{\inn}$~\cref{fig:example-c}:
\begin{equation}
    \label{eq:example-cin}
    \arraycolsep=1pt
\begin{array}{lclclcl}
    \mathcal{X}^{\inn}_{j} & = & \prod_{i \in [j] } X_{(1,i)}, j \in [d-1],& ~~~~ & \mathcal{Z}^{\inn}_{j} & = &   Z_{(1,j)}Z_{(1,j+1)}, j \in [d-1]  \\
    \mathcal{X}^{\inn}_{(d-1) + j} & = & \prod_{i \in [j] } X_{(2,i)}, j \in [d-1],& ~~~~ & \mathcal{Z}^{\inn}_{(d-1)+j} & = &   Z_{(2,j)}Z_{(2,j+1)}, j \in [d-1]  \\
    \mathcal{X}^{\inn}_{2d-1} & = & \prod_{i \in [d] } X_{(1,i)},& ~~~~ & \mathcal{Z}^{\inn}_{2d-1} & = & Z_{(1,d)}  \\
    \mathcal{X}^{\inn}_{2d} & = & \prod_{i \in [d] } X_{(2,i)},& ~~~~ & \mathcal{Z}^{\inn}_{2d} & = & Z_{(2,d)}  \\
\end{array}
\end{equation}
Clifford unitaries $C_\out^{(d)}$ such that $\comm{2d,1,C_\out^{(d)}}$ has stabilizer group $M_d$ are given by the following equations
via corresponding symplectic basis $\mathcal{X}^{\out},\mathcal{Z}^{\out}$~\cref{fig:example-c}:
\begin{equation}
    \label{eq:example-cout}
    \arraycolsep=1pt
\begin{array}{lclclcl}
    \mathcal{X}^{\out}_{j} & = & Z_{(1,j')}, j' \in [d],& ~~~~ & \mathcal{Z}^{\out}_{j} & = & X_{(1,j)}X_{(2,j)}, j' \in [d]  \\
    \mathcal{X}^{\out}_{d + j} & = & \prod_{i \in [j] } X_{(2,i)}, j' \in [d],& ~~~~ & \mathcal{Z}^{\out}_{d+j} & = & Z_{(1,j)}Z_{(1,j+1)} Z_{(2,j)}Z_{(2,j+1)}, j \in [d-1]  \\
     &   & & ~~~~ & \mathcal{Z}^{\out}_{2d} & = & Z_{(1,d)}Z_{(2,d)}  \\
\end{array}
\end{equation}

\end{document}